\theoremstyle{break}        
\newtheorem{Lemma}{Lemma}
\newtheorem{theorem}{Theorem}
\theoremstyle{plain}
\theoremstyle{plain}
\newcommand{\papertitle}{Optimizing digital quantum simulation of open quantum lattice models}
\newcommand{\vecket}[1]{|{#1} \rrangle}
\newcommand{\vecbra}[1]{\llangle {#1}|}
\newcommand{\vecbraket}[1]{\llangle {#1}\rrangle}
\theoremstyle{plain}
\newcommand{\abs}[1]{\left | {#1}\right |}
\newcommand{\norm}[1]{\left \Vert {#1}\right \Vert}
\begin{document}
\title{\papertitle}
\author{Xie-Hang Yu}
\affiliation{
Max-Planck-Institut f{\"{u}}r Quantenoptik, Hans-Kopfermann-Str. 1, 85748 Garching, Germany
}
\affiliation{
Munich Center for Quantum Science and Technology (MCQST), Schellingstr. 4, 80799 M{\"{u}}nchen, Germany
}

\author{Hongchao Li}
\affiliation{Department of Physics, The University of Tokyo, 7-3-1 Hongo, Tokyo 113-0033, Japan}


\author{J. Ignacio Cirac}
\affiliation{
Max-Planck-Institut f{\"{u}}r Quantenoptik, Hans-Kopfermann-Str. 1, 85748 Garching, Germany
}
\affiliation{
Munich Center for Quantum Science and Technology (MCQST), Schellingstr. 4, 80799 M{\"{u}}nchen, Germany
}

\author{Rahul Trivedi}
\email{rahul.trivedi@mpq.mpg.de}
\affiliation{
Max-Planck-Institut f{\"{u}}r Quantenoptik, Hans-Kopfermann-Str. 1, 85748 Garching, Germany
}
\affiliation{
Munich Center for Quantum Science and Technology (MCQST), Schellingstr. 4, 80799 M{\"{u}}nchen, Germany
}

\begin{abstract}
Many-body systems arising in condensed matter physics and quantum optics inevitably couple to the environment and need to be modelled as open quantum systems. While near-optimal algorithms have been developed for simulating many-body quantum dynamics, 
algorithms for their open system counterparts remain less well investigated. We address the problem of simulating geometrically local many-body open quantum systems interacting with a stationary Gaussian environment. Under a smoothness assumption on the system-environment interaction, we develop near-optimal algorithms that, for a model with $N$ spins and evolution time $t$, attain a simulation error $\delta$ in the system-state with $\mathcal{O}(Nt(Nt/\delta)^{o(1)})$ gates, $\mathcal{O}(t(Nt/\delta)^{o(1)})$ parallelized circuit depth and $\tilde{\mathcal{O}}(N(Nt/\delta)^{o(1)})$  ancillas. We additionally show that, if only simulating local observables is of interest, then the circuit depth of the digital algorithm can be chosen to be independent of the system size $N$. This provides theoretical evidence for the utility of these algorithms for simulating physically relevant models, where typically local observables are of interest, on pre-fault tolerant devices. Finally, for the limiting case of Markovian dynamics with commuting jump operators, we propose two algorithms based on sampling a Wiener process and on a locally dilated Hamiltonian construction, respectively. These algorithms reduce the asymptotic gate complexity on $N$ compared to currently available algorithms in terms of the required number of geometrically local gates.
\end{abstract}
\maketitle
\section{Introduction}
Quantum computers promise exponential-to-polynomial speed-ups for certain tasks that are otherwise considered to be hard on classical computers \citep{nielsen2002quantum}. Simulating many-body quantum systems, which is central to high and low-energy physics, is one of the tasks that quantum computers are naturally suited to solving \citep{Feynman_1982}. 
These systems are often described by geometrically local lattice models and display a rich variety
of collective behaviors \citep{Zhou2021Hightemperature,Stormer1999fractional,Broholm2020quantumspin,Qin2022Hubbard,Bulla2008numerical,Mitra2006nonequlibrium, spalek2007tj,mi2022time,Chen2022Errormitigated,Pal2010manybodylocalization,Vosk2015Theory,Finsterh2020nonequlibrium,Begg2024Quantum, Rothe_2012Lattice,Kogut1983lattice}.
Because such systems typically evolve into highly entangled states, simulating their time evolution with known classical algorithms requires resources that grow exponentially in both time and memory. Consequently, substantial effort has been put into designing quantum algorithms for simulating many-body systems.--- The state-of-the-art techniques, including the higher-order product Trotterization formulas \citep{childs2019nearly,Hatano2005Finding,Childs2021trotter}
or circuit approximation using Lieb-Robinson bounds together with quantum signal processing \citep{haah2021quantum}, now achieve provably nearly optimal scaling of both geometrically-local gate count and parallelized circuit depths with respect to system size and target precision.

However, in practice, most physical systems inevitably interact with their surrounding environment, and need to be described as an open quantum system \citep{Breuer2007theory}. The environment is often described by an infinite-dimensional Hilbert space, including the environment degrees of freedom substantially complicates both their classical and quantum simulation. The environment can even have a correlation time comparable to the time-scales of the many-body system, and introduce memory in the dynamics of the reduced state of the system \citep{Ferialdi2014general,Feruadku2016exact,trivedi2022descriptioncomplexitynonmarkovianopen}. Such memory effects have been observed to be of importance in solid-state physics\citep{Hadfield_Johansson_2016_superconducting, Chin2011generalized,Devega2008matter,groeblacher2015observation,Finsterh2020nonequlibrium, Leggett1987dynamics,Vojta2005quantum,Bulla2008numerical,Winter2009quantum}, quantum optics \citep{Calaj2019exciting,andersson2019non,gonzalez2019engineering,Leonforte2021vacancy} as well as quantum biology and chemistry \citep{Chin2012coherence,Ivanov2015extension,caycedo2022exact}. If the environment can be modelled as Markovian, i.e., its correlation time is significantly shorter than the intrinsic timescales of the system, the reduced dynamics of the system can be captured by a Lindbladian master equation \citep{Gorini1976completely,Gardiner2010quantumnoise} without tracking the environment degrees of freedom.

While it has been established that geometrically local open quantum systems, both in the Markovian and non-Markovian regime, can be simulated in time scaling at-most polynomially with the system size on quantum computers \citep{DissipativeTuring2011Kliesch,cleve2019efficientquantumalgorithmssimulating,Andrew2017sparseLindbladian}, there still remains a gap between the simulation time needed for open systems as compared to closed systems. In part, this is due to the fact that open-system dynamics are fundamentally irreversible i.e., while the superoperator describing the evolution of an open system forward in time is a valid quantum channel, its inverse (if it exists) is not necessarily a valid quantum channel. This creates a significant difficulty in adapting the existing algorithmic tools \cite{Efficient2006Osborne,childs2019nearly, Childs2021trotter, haah2021quantum} that yield near-optimal run-times for lattice Hamiltonians to the open-system setting since these tools often rely on carefully designing forward and backward evolution corresponding to parts of the lattice Hamiltonian.

Despite these difficulties, there has been recent progress on optimizing the run-time for simulating open systems. Markovian open systems, that can be modelled by a Lindblad master equation, have been shown to be quantum simulable in time that scales \emph{linearly} with the physical evolution time \citep{cleve2019efficientquantumalgorithmssimulating,li2023simulatingmarkovianopenquantum}, which is optimal due to no-fast forwarding theorems for quantum dynamics \cite{Berry2015nearlylinear}. 
However, these approaches rely on the access to oracles for both the system Hamiltonian and the
jump operators in order to use the quantum signal processing toolbox. For geometrically local models, this results in a large polynomial overhead in the required gate count with respect to the system size, especially if the gates are also restricted to be geometrically local \citep{li2023simulatingmarkovianopenquantum, childs2018toward}. Furthermore, for Markovian open systems, approaches such as Trotterization
with higher order product formula \cite{Han2021experimental, borras2024quantumalgorithmsimulatelindblad}, which for geometrically local lattice models implicitly yield algorithms with geometrically local gates, are limited to only the second order in time since at higher-order they necessarily require a backward time evolution \citep{Suzuki:1991jtk,Hatano2005Finding}. For non-Markovian models, where a closed form dynamical equation for the system evolution is generally not available beyond the perturbative regime \citep{Ivanov2015extension, Pereverzev2006Time}, we need to explicitly keep track of the full environment. Classical algorithms for simulating such systems have used either influence functional methods \citep{FEYNMAN1963theorygeneral,Breuer2007theory,strathearn2018efficient,Marten2022,Valentin2024,Gribben2022,zhang2024timeevolvingmatrixproduct}
which formally integrate out the environment but obtain a time non-local evolution, or the use of star-to-chain and pseudo-mode approximations \citep{Chin2010exact,trivedi2022descriptioncomplexitynonmarkovianopen,Garraway1997nonperturbative,Dalton2001pseudomodes,Trivedi2021convergence}
where only the ``important" environment modes are taken into account. While these methods have also been used to derive quantum algorithms for non-Markovian many-body systems \cite{trivedi2022descriptioncomplexitynonmarkovianopen, li2023succinct}, there is still a large gap between the resulting run-times and the run-times for algorithms available for simulating geometrically local closed systems.

In this paper, we present several results on simulating both geometrically local non-Markovian and Markovian open systems which obtain near-optimal run-times i.e., for geometrically local open systems on $N$ qubits evolved for time $t$, we provide algorithms to simulate them with close to $\mathcal{O}(Nt)$ geometrically local gates (corresponding to close to $\mathcal{O}(t)$ circuit depth) and with close to $\mathcal{O}(N)$ ancillary qubits. For non-Markovian models with Gaussian and stationary environments, our results are based on establishing a set of higher order Trotter formulae on the full system-environment Hamiltonian. We also specifically consider the task of simulating local observables instead of the full many-body state, and using Lieb-Robinson bounds for such models \cite{trivedi2024liebrobinsonboundopenquantum}, show that the circuit depth can even be chosen to be independent of the system size $N$ and dependent only on the evolution time $t$. For Markovian models, we consider the case of commuting jump operators (but with a possibly non-commuting local Hamiltonian) and develop two quantum algorithms --- the first one achieving the nearly optimal scaling of $\mathcal{O}(Nt)$ geometrically local gates when the jump operators are assumed to be Hermitian; and the second algorithm with non-Hermitian jump operators attaining the same
scaling as the third-order Trotterization and thus going beyond Trotter formulae that only require forward time evolution.

This paper is structured as follows: In section \ref{sec:Models},
we review the key concepts of open-system dynamics and
summarize our main results.
In section \ref{sec:Proof-outline-of}, we outline the proof strategy --- subsections \ref{subsec:Proof-of-Lemma}-\ref{subsec:Proof-of-Theorem-non-dissipative} address the algorithms for the non-Markovian models, in subsection \ref{sub:proof_of_error_robustness} we show how to optimize the algorithm runtime when only considering local observables, and in subsections \ref{subsec:Proof-of-Observation}-\ref{subsection:proof_markovian_non}
we describe the algorithms for the Markovian model.

\section{Summary of Results\label{sec:Models}}
\subsection{Model and preliminaries}

\begin{figure}
\includegraphics[width=0.8\columnwidth]{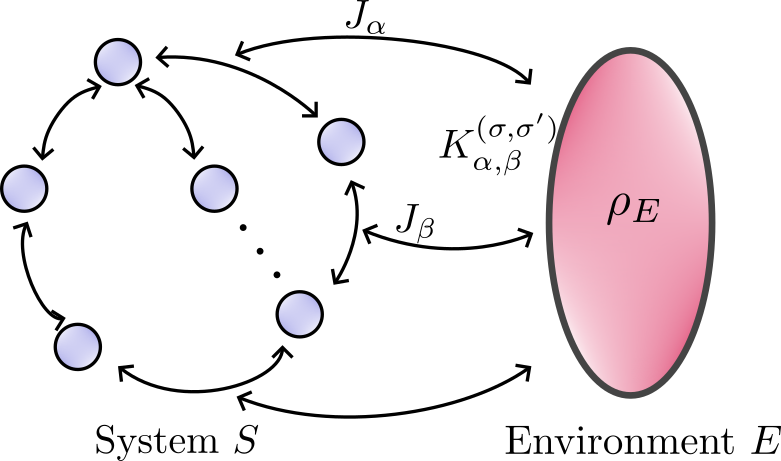}

\caption{Schematic of the general non-Markovian quantum dynamics model. The circle represents the physical sites. $J_\alpha, J_\beta$ are the jump operators and $K_{\alpha,\beta}^{(\sigma,\sigma')}$ is the two-point correlation function.}

\label{Fig:general_structure}
\end{figure}

A general model of an open system comprises of a system $S$ interacting with an environment $E$ as shown in Fig.~\ref{Fig:general_structure}. The joint system-environment dynamics can be modeled by a possibly time-dependent Hamiltonian expressed in the interaction picture with respect to the environment Hamiltonian:
\begin{equation}
H_{SE}(t)=H_{S}+V_{SE}(t),\label{eq:non_Markovian_general_begin}
\end{equation}
where $H_S$ is the Hamiltonian describing the internal dynamics of the system and $V_{SE}(t)$ is the Hamiltonian describing the system-environment interaction. The system-environment interaction Hamiltonian can generally be expressed as:
\begin{equation}
V_{SE}(t)=\sum_{\alpha}J^\dagger_{\alpha} A_{\alpha}(t) + \text{h.c.},
\end{equation}
where $J_{\alpha}$ and $A_{\alpha}(t)$ are system and environment
operators respectively. We will also make the physically reasonable assumption that at $t=0$, the system and environment are in a product state i.e., $\rho_{SE}(0)=\rho_S(0)\otimes\rho_{E}(0)$. In most physically-relevant scenarios, it is of interest to only monitor the system dynamics which are described by the channel $\mathcal{E}_S(t)$
\begin{align}\label{eq:channel}
    \mathcal{E}_S(t)(\cdot) = \text{Tr}_E\big(U_{SE}(t, 0)((\cdot) \otimes \rho_E(0))U_{SE}(0,t) \big),
\end{align}
where $U_{SE}(t, s) = \mathcal{T}\exp(-i\int_s^t H_{SE}(\tau) d\tau)$. It follows from the Dyson expansion for $U_{SE}(t, s)$ that $\mathcal{E}_S(t)$ is entirely determined by the multi-point correlation
functions of the environment \citep{huang2024unifiedanalysisnonmarkovianopen}\footnote{Throughout this paper, we will use the following convention for products:
\[
\prod_{i = 1}^n O_i = O_n O_{n - 1} \dots O_1 \text{ and } \prod_{i = n}^1 O_i = O_1 O_2 \dots O_n .
\]
}:
\begin{equation}
\begin{aligned}
\mathcal{C}^{(\sigma_1, \sigma_2 \dots \sigma_n)}_{\alpha_{1},\alpha_2\cdots \alpha_{n}}(t_{1}, t_2 \cdots t_{n})=\mathrm{Tr}_E\left(\rho_{E}(0) \prod_{i = n}^1 A_{\alpha_{i}}^{(\sigma_i)}(t_{i})\right)\label{eq:general_model_multi_correlators},
\end{aligned}
\end{equation}
where $\sigma_i \in \{+, -\}$ and, for an operator $O$, $O^{(-)} = O$, $O^{(+)} = O^\dagger$. For general environments, we would in principle need to know all the correlators in Eq. (\ref{eq:general_model_multi_correlators}) in order to describe and compute the system dynamics. However, for most physical systems, it is reasonable to make two simplifying assumptions on the environment:  it is \emph{stationary} and \emph{Gaussian}. Stationarity of the environment is equivalent to requiring that the correlators in Eq.~\ref{eq:general_model_multi_correlators} are time-translation independent i.e., $\forall \tau \in \mathbb{R}$:
\begin{align}
&\mathcal{C}^{(\sigma_1, \sigma_2 \dots \sigma_n)}_{\alpha_{1}, \alpha_2 \cdots \alpha_{n}}(t_{1},t_2 \cdots t_{n})\nonumber\\
&\qquad  =\mathcal{C}^{(\sigma_1, \sigma_2 \dots \sigma_n)}_{\alpha_{1},\alpha_2 \cdots \alpha_{n}}(t_{1}+\tau,t_2+\tau \cdots t_{n}+\tau).
\end{align}
Furthermore, Gaussianity of the environment requires that the correlators $\mathcal{C}^{(\sigma_1,\sigma_2 \dots \sigma_n)}_{\alpha_1, \alpha_2\dots \alpha_n}(t_1, t_2 \dots t_n)$ satisfy the Wick's theorem:
\begin{align}
&\mathcal{C}_{\alpha_{1}, \alpha_2\cdots\alpha_{m}}^{(\sigma_1, \sigma_2 \dots \sigma_m)}(t_{1}, t_2, \cdots t_{m}) =  \nonumber \\
&\qquad \qquad \sum_{\mathcal{S}} \prod_{(i, j) \in \mathcal{S}} \mathcal{C}^{(\sigma_i, \sigma_j)}_{\alpha_i, \alpha_j}(t_i, t_j) \prod_{k \in \mathcal{S}^c} \mathcal{C}^{(\sigma_k)}_{\alpha_k}(t_k), 
\label{eq:wick_contraction_expression}
\end{align}
where the sum is taken over all $\mathcal{S}\subseteq\{1, 2 \dots n\}$ with even number of elements that are divided into pairs $(i, j)$ with $i < j$. Furthermore, for stationary environments, $\mathcal{C}^{(\sigma)}_{\alpha}(t) = \mathcal{C}^{(\sigma)}_\alpha(0)$.--- Since we can always transform $H_S \to H_S + \sum_\alpha \mathcal{C}^{(\sigma)}_\alpha(0) J_\alpha^{(1- \sigma)}$ and $A^{(\sigma)}_\alpha(t) \to A^{(\sigma)}_\alpha(t) - \mathcal{C}^{(\sigma)}_\alpha(0) I$, without loss of generality, we can assume that $\mathcal{C}_\alpha(0) = 0$. Thus, an open system with a Stationary and Gaussian environment is specified entirely by the system Hamiltonian $H_S$, the system operators $J_\alpha$ which we will call the ``jump operators", and the two-point correlation function $\mathcal{C}^{(\sigma_1, \sigma_2)}_{\alpha_1,\alpha_2}(t_1, t_2) \equiv K_{\alpha_1, \alpha_2}^{(\sigma_1, \sigma_2)}(t_1-t_2)$ which we call the ``memory kernel". We remark that if we assume that $K_{\alpha_1, \alpha_2}^{(\sigma_1, \sigma_2)}(t_1-t_2)\propto \delta(t_1-t_2)$, then we recover the Markovian model of open quantum systems where the system dynamics is described by a Lindblad master equation \cite{nielsen2002quantum}.


In this work, we will focus on geometrically local open quantum systems on a $D$-dimensional lattice. For simplicity, we consider the $D=1$ model as schematically depicted in Fig.~\ref{Fig:fig1}, although our results hold for higher-dimensional lattice models. We consider a model with $N$ qubits in $1$D where the system Hamiltonian $H_S$ is a nearest neighbour Hamiltonian:
\begin{subequations}\label{eq:lattice_model}
\begin{align}
    H_S = \sum_{i=1}^{N-1} H_{i, i+1},\label{eq:sec2HSlattice}
\end{align}
and the system-environment interaction also involves at-most nearest neighbour system  operators,
\begin{align}\label{eq:lattice_mode_VSE}
    V_{SE}(t) = \sum_{i = 1}^{N-1} J^\dagger_{i, i+1} A_i(t) +\text{h.c.},
\end{align}
\end{subequations}
and we also assume that the environment is also locally independent i.e., 
\begin{equation}
K_{i, j}^{(\sigma, \sigma')}(\tau) = K_{i}^{(\sigma, \sigma')}(\tau)\delta_{i, j}. \label{eq:independent_bath_formula_maintext}
\end{equation}

To develop a quantum algorithm for simulating such models, it will be more convenient to work with an explicit description of the environment. Indeed, for given $K_i^{(\sigma,\sigma')}$, it is always possible to choose an explicit description of the environment which gives rises to the same memory kernels, and thus, the same reduced dynamics on the system. More specifically, suppose that the environment operators $A_i(t)$ were given by
\begin{align}\label{eq:Ai_exp_rep}
    A_i(t)= \int_{-\infty}^\infty v_i(t - s) a_{i, s}ds,
\end{align}
where we will choose $v_i$ later and $a_{i, s}$ are bosonic annihilation operators satisfy the commutation relations $[a_{i, s}, a_{i', s'}] = 0, [a_{i, s}, a^\dagger_{i', s'}] = \delta_{i, i'}\delta(s -s')$. The operators $A_i$, in this representation, are determined by the functions $v_i$ which we call ``coupling functions" throughout this paper. Furthermore, suppose that the initial environment state was given by $\rho_E(0) = \otimes_{i = 1}^{N - 1} \rho_{i, E}(0)$, where $\rho_{i, E}(0)$ are Gaussian states with $\text{Tr}(a_{i, s}\rho_{i, E}(0)) = 0$ and specified by $B_i, G_i$ via
\begin{align}\label{eq:initial_state_rep}
&B_i(s - s') = \text{Tr}(a_{i, s} a_{i, s'}^\dagger \rho_E(0)) \text{ and }\nonumber\\
&G_i(s - s') = \text{Tr}(a_{i, s} a_{i, s'}\rho_E(0)).
\end{align}
If $v_i, B_i, G_i$ are chosen such that
\[
\begin{aligned}
&\hat{K}_i^{(-, +)}(\omega) = (2\pi)^2|\hat{v}_i(\omega)|^2\hat{B}_i(\omega),\\ &\hat{K}^{(-,-)}_i(\omega) = \hat{K}_i^{(+, +)*}(\omega) =(2\pi)^2\hat{v}_i(\omega)\hat{v}_i(-\omega)\hat{G}_i(\omega),\\
&\hat{K}_i^{(-,+)}(\omega)-\hat{K}_i^{(+,-)}(-\omega)=(2\pi)^2|\hat{v}_i(\omega)|^2,
\end{aligned}
\]
where $\hat{f}(\omega) = \int_{-\infty}^{\infty} f(\tau) e^{i\omega \tau} d\tau /2\pi$, then $A_i(t)$ in Eq.~(\ref{eq:Ai_exp_rep}) together with the initial states in Eq.~\eqref{eq:initial_state_rep} provide an explicit representation of the environment with memory kernels Eq. (\ref{eq:independent_bath_formula_maintext}). In the main text, for simplicity, we will assume that $\rho_{E}(0) = \ket{\text{Vac}}\!\bra{\text{Vac}}$ or, equivalently, $ G_i(s, s') = 0$ and $B_i (s, s') = \delta(s-s')$ i.e., the environment is initially in the vacuum state. In the Supplemental Material \citep{SM}, we provide extensions of our results to any easily preparable $\rho_E(0)$ with integrable two-point correlators.

\begin{figure}
\includegraphics[width=1\columnwidth]{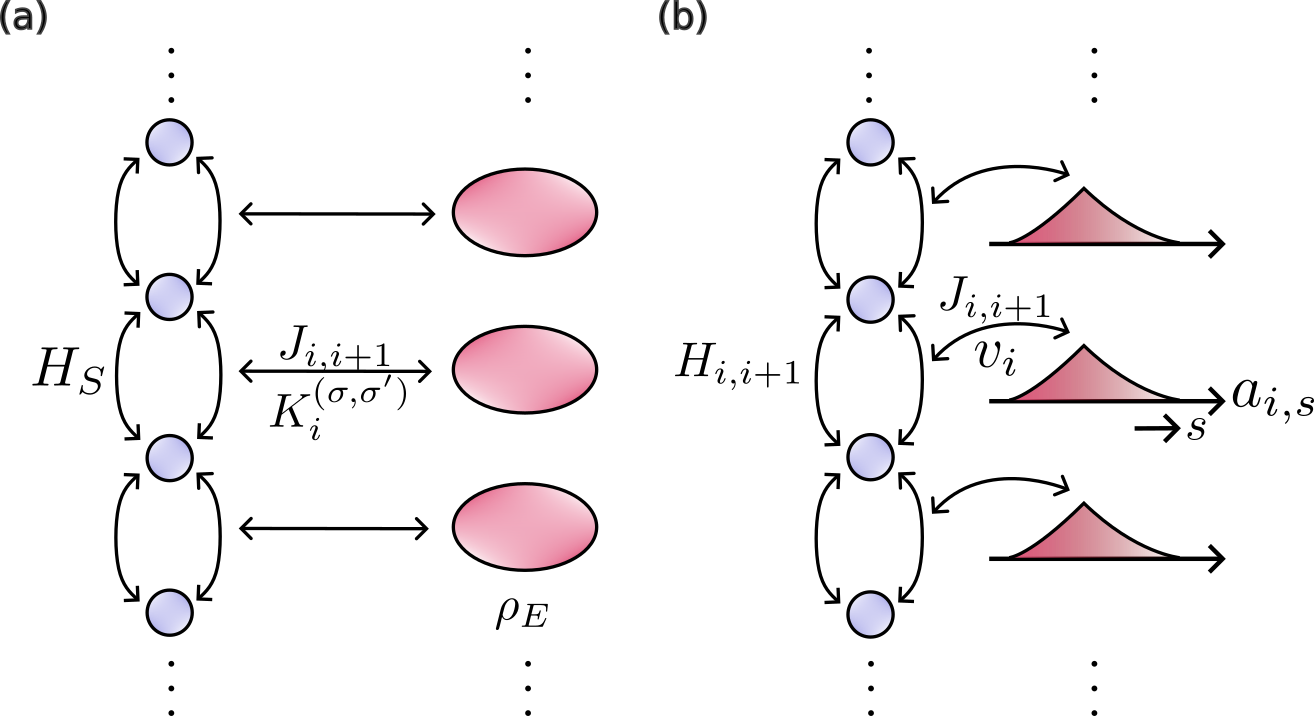}

\caption{(a)The non-Markovian dynamics on a lattice model with geometrically local interactions and independent environments. (b)An explicit description of the environment. Here, the thick solid line
depicts the continuum of the bosonic modes and the shaded envelope illustrates
the temporal coupling function $v_i(t-s)$.}

\label{Fig:fig1}
\end{figure}


In this work, our goal would be to approximate the channel describing the system dynamics [Eq. (\ref{eq:channel})] on a quantum computer for the lattice model [Eq.~\eqref{eq:lattice_model}].
More specifically, we will design a quantum circuit, with as few gates as possible, that implements a channel $\mathcal{E}$ such that for any initial state of the system $\rho_S(0)$
\begin{equation}
\begin{aligned}  \left\Vert{\mathcal{E}_S(t) (\rho_S(0)) - \mathcal{E}(\rho_S(0))}\right \Vert_\text{tr}
\leq  \delta,
\end{aligned}
\label{eq:error_requirement}
\end{equation}
where $\delta$ is the desired accuracy we want to achieve and $\lVert\cdot\rVert_{\mathrm{tr}}$ is the trace norm \citep{Watrous2018theory}. We will provide results for both cases: (i) the kernel $K_i^{\sigma,\sigma'}(\tau)$ is a smooth function of $\tau$ which $\to 0$ as $\tau \to \infty$: in this case, the dynamics of the system would be non-Markovian and not describable by a Lindblad master equation. (ii) $K_i^{\sigma,\sigma'}(\tau) \propto \delta(\tau)$: in this case, the system dynamics is Markovian and described by a Lindblad master equation.

\subsection{Summary of results\label{sec:Summary-of-results}}

\subsubsection{Simulation of Non-Markovian dynamics\label{subsec:Simulation-of-Non-Markovian}}

\emph{Higher-order Trotterization}. Trotterization is a powerful tool for simulating dynamics of lattice Hamiltonians,--- given a Hamiltonian $H$ expressed as a sum of two operators, $H = A + B$, the $P^\text{th}$ order Trotter formula expresses the exponential $\exp(-i\Delta t H)$ as a product of exponentials of $A$ and $B$, with a residual error of $\mathcal{O}(\Delta t^{P + 1})$. More specifically, a $P^\text{th}$ order Trotter formula for $e^{-i\Delta t H}$ with $s_P$ stages is of the form
\begin{align}\label{eq:trotter_formula}
   \mathcal{S}_P(\Delta t) = \prod_{i = 1}^{s_P}e^{-i \mu_{i} B \Delta t} e^{-i  \varepsilon_{i} A \Delta t},
\end{align}
where $\varepsilon_1, \varepsilon_2 \dots \varepsilon_{s_P}, \mu_1, \mu_2 \dots \mu_{s_P} \in \mathbb{R}$ are chosen such that $\exp(-i H \Delta t) =\mathcal{S}_P(\Delta t) + \mathcal{O}(\Delta t^{P + 1})$. To use Trotter formulae to simulate geometrically local spin Hamiltonians, e.g., a 1D lattice model $H = \sum_{i = 1}^{N - 1} H_{i, i+1}$, we can choose $A := H_\mathrm{o} = H_{1,2} + H_{3, 4} + \dots $ and $B := H_\mathrm{e} = H_{2, 3} + H_{4, 5} + \dots$ in which case $\mathcal{S}_P(\Delta t)$ directly yields a quantum circuit approximating $\exp(-iH \Delta t)$. In Ref.~\cite{childs2019nearly}, by explicitly exploiting the geometrical locality of Hamiltonian, it was shown that $\left\Vert{\exp(-iH \Delta t) - \mathcal{S}_P(\Delta t)}\right \Vert \leq \mathcal{O}(N \Delta t^{P + 1})$. An immediate implication of this error bound was that higher order Trotter formulae could be used to approximate $\exp(-iH t)$ to an error $\delta$ with $\mathcal{O}(Nt(Nt/\delta)^{1/P})$ geometrically local gates.

In our first lemma, we provide a Trotterization procedure for the open-system model in Eq.~\eqref{eq:lattice_model} and derive an explicit upper bound on the Trotter error. While the system-environment model that we consider is a geometrically local Hamiltonian, there are two reasons why the previous Trotter analysis in Refs.~\cite{childs2019nearly, Childs2021trotter} is inapplicable.--- \emph{First}, the system-environment Hamiltonian is unbounded, while the previous Trotter analysis is restricted to bounded Hamiltonians. \emph{Second}, the system-environment Hamiltonian as presented in Eq.~\eqref{eq:lattice_model} is time-dependent,--- directly using Trotter formulae for time-independent Hamiltonians for time-dependent models typically doesn't yield the correct error scaling. Indeed, additional smoothness assumptions on the time-dependent terms in the Hamiltonian are required to systematically develop higher order Trotter formulae \citep{Hatano2005Finding}. Nevertheless, we show that for stationary and Gaussian environments, as long as the kernels $K_{i}^{(\sigma, \sigma')}(\tau)$ have bounded $L^1$ and $L^\infty$ norms i.e., $\exists m, M > 0$ such that $\forall \sigma, \sigma', i$,
\begin{align}\label{eq:memory_kernel_bounded_trotter}
\sup_{\tau \in \mathbb{R}} \vert K_i^{(\sigma, \sigma')}(\tau) \vert < m, \int_{-\infty}^\infty \vert K_i^{(\sigma, \sigma')}(\tau) \vert d\tau < M,
\end{align}
then the open system lattice model can be Trotterized with a similar error scaling as that of Hamiltonian models. 

More specifically, we will decompose the total system-environment Hamiltonian $H_{SE}(t) = H_\mathrm{o}(t) + H_\mathrm{e}(t)$ where
\begin{align*}
&H_\mathrm{o}(t) = \sum_{i \in \text{odd}} H_{i, i + 1} + (J_{i,i+1}^{\dagger}A_{i}(t)+\text{h.c.}), \\
&H_\mathrm{e}(t) = \sum_{i \in \text{even}} H_{i, i + 1} + (J_{i,i+1}^{\dagger}A_{i}(t)+\text{h.c.}),
\end{align*}
and define the corresponding system-environment unitaries: $V_\mathrm{e}(t, s) = \mathcal{T}\exp(-i \int_s^t H_\mathrm{e}(\tau) d\tau)$ and $V_\mathrm{o}(t, s) = \mathcal{T}\exp(-i \int_s^t H_\mathrm{o}(\tau) d\tau)$. We divide the total evolution time $t$ into $T$ time steps and construct the Trotterized system-environment unitary $U_\text{tro}(t, T)$ via
\begin{subequations}\label{eq:trotterization_expression}
\begin{align}
U_\text{tro}(t, T) = \prod_{j =0}^{T-1} V\bigg((j + 1) \frac{t}{T}, j\frac{t}{T}\bigg),
\end{align}
where the unitary $V(j t/T, (j - 1) t/T)$ corresponding to the $j^\text{th}$ time-step is constructed from the $P^\text{th}$ order Trotter formula (Eq.~\eqref{eq:trotter_formula}) as follows:
\begin{align}
  &V\bigg((j + 1) \frac{t}{T}, j\frac{t}{T}\bigg)  = \prod_{i = 1}^{s_P} \bigg[V_\mathrm{e}\bigg((f_{i}+j) \frac{t}{T}, (f_{i - 1}+j) \frac{t}{T} \bigg) \times \nonumber \\
  &\qquad \qquad \qquad V_\mathrm{o}\bigg((e_{i}+j) \frac{t}{T}, (e_{i - 1}+j) \frac{t}{T}\bigg)\bigg],\label{eq:Sec2trotterexpression}
\end{align}
\end{subequations}
with $e_{i} = \sum_{k = 1}^i \varepsilon_k, f_{i} =\sum_{k = 1}^i \mu_k$ where $\varepsilon_k, \mu_k$ are determined the $P^\text{th}$ order Trotter formula used in the Trotterization (Eq.~\ref{eq:trotter_formula}). To characterize the Trotterization error, we consider the channel on the system $\mathcal{E}_\text{tro}(t, T)$ given by
\[
\mathcal{E}_\text{tro}(t, T)(\cdot) = \text{Tr}_E\big[U_\text{tro}(t, T)((\cdot) \otimes \rho_E(0))U_\text{tro}^\dagger(t, T)\big].
\]
The lemma below provides an upper bound on the error between the Trotterized channel $\mathcal{E}_\text{tro}(t, T)$ and the channel $\mathcal{E}_S(t)$ defined in Eq.~\eqref{eq:channel} corresponding to the exact system evolution.


\begin{Lemma}[Trotterization of open system dynamics with bounded Kernels]

If the kernels $K_i^{(\sigma, \sigma')}(\tau)$ satisfy Eq. (\ref{eq:memory_kernel_bounded_trotter}) with $m, M$ being independent of the number of spins $N$, then for any initial system state $\rho_S(0)$
\[
\left \Vert \mathcal{E}_\textnormal{tro}(t, T)(\rho_S(0)) - \mathcal{E}_S(t)(\rho_S(0))\right \Vert_\mathrm{tr} \leq\mathcal{O}\bigg((Ps_P)^P  Nt\bigg(\frac{t}{T}\bigg)^P\bigg)
\]
for any Trotter formula (Eq.~\eqref{eq:trotter_formula}) of order $P$ and with $s_P$ stages. \label{lemma1}
\end{Lemma}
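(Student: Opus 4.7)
The plan is to bound the error between $\mathcal{E}_S(t)$ and $\mathcal{E}_\text{tro}(t, T)$ by controlling the error at the level of the full system-environment unitaries and then using the Gaussianity and stationarity of the environment to tame the unboundedness of $H_{SE}(t)$. Two inputs will do the heavy lifting: (i) the system operators $J_{i, i+1}$ and all local $H_{i,i+1}$ are bounded and geometrically local, so the odd/even splitting of the \emph{system part} of $H_{SE}(t)$ behaves exactly as in the closed-system Trotter analysis of Refs.~\cite{childs2019nearly, Childs2021trotter}; and (ii) although $A_i(t)$ is unbounded, all environment expectations appearing after tracing are determined by the two-point kernels $K_i^{(\sigma,\sigma')}$, which are by assumption uniformly bounded in $L^\infty$ and $L^1$.

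First, I would focus on a single Trotter step on the interval $[j t/T, (j+1)t/T]$ and pass to the interaction picture with respect to $H_S$. Both the exact unitary and the $P$-th order formula $V((j+1)t/T, jt/T)$ then become time-ordered exponentials of a ``dressed'' interaction $\tilde V_{SE}(\tau) = \sum_i (\tilde J_{i,i+1}(\tau))^\dagger A_i(\tau) + \text{h.c.}$, with $\tilde J_{i,i+1}(\tau)$ bounded and still geometrically local. Expanding both in their Dyson series around $H_S$ in powers of $\tilde V_{SE}$, the contribution of order $k$ to the local per-step error can be reorganized, just as in the closed-system analysis, into a sum of nested commutators at level $\geq P+1$ multiplied by $(t/T)^{P+1}$ and by combinatorial coefficients built out of $\varepsilon_i, \mu_i$. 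Bookkeeping on the $s_P$ stages and on the $P$-fold nesting of commutators (each of whose commutator-coefficients can be $\mathcal{O}(P s_P)$ because the exponents in the higher-order formula can grow that fast) produces the claimed $(Ps_P)^P$ prefactor.

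Next, I would take the trace over the environment. The key observation is that we never need an operator-norm bound on $\tilde V_{SE}(\tau)$: only reduced-state matrix elements enter $\mathcal{E}_\text{tro}$ and $\mathcal{E}_S$. After the environment trace, Wick's theorem~(\ref{eq:wick_contraction_expression}) reduces every term to a sum over pairings of sums over site indices of products of the form $\prod_p K_{i_p}^{(\sigma_p, \sigma'_p)}(\tau_p - \tau'_p)$ multiplied by a product of dressed system operators that is still uniformly bounded and geometrically local. For each Wick pair one of the two time integrations is bounded using $\int |K_i^{(\sigma,\sigma')}(\tau)|\,d\tau \leq M$ and the other, restricted to the step length $t/T$, is bounded using $\|K_i^{(\sigma,\sigma')}\|_\infty \leq m$ together with the factor $(t/T)$. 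Locality and independence of the environments, i.e.\ $K_{i,j}^{(\sigma,\sigma')} = \delta_{ij} K_i^{(\sigma,\sigma')}$, collapses the site sums into a single factor of $N$. Putting it all together gives a per-step trace-norm error $\mathcal{O}((P s_P)^P N (t/T)^{P+1})$.

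Finally, I would close the argument by a telescoping/hybrid argument in which I interpolate between $\mathcal{E}_\text{tro}(t, T)$ and $\mathcal{E}_S(t)$ one Trotter step at a time, using contractivity of quantum channels in trace norm at every step; summing $T$ per-step errors produces the stated bound $\mathcal{O}((Ps_P)^P N t (t/T)^P)$. The principal obstacle, and where the work really lies, is the second step: controlling the full Dyson-series expansion of the per-step error uniformly in the order $k$ despite the unboundedness of $A_i(\tau)$. The key technical input is that after Wick contraction every additional factor of $\tilde V_{SE}$ brings in a factor of the step length times either $m$ or $M$ (through the kernels) and a bounded dressed system operator, so the tail of the Dyson series is geometrically summable as soon as $t/T$ is small compared to $1/(M + m)$, confirming that the $P$-th order term is genuinely dominant and that the bound is not spoiled by higher-order contributions.
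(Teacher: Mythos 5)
Your overall architecture --- telescope over the $T$ steps at the level of the joint system--environment evolution, express the per-step defect through commutator-type remainders, and only then trace out the environment and use Wick's theorem with the $L^\infty$ and $L^1$ kernel bounds to eliminate the unbounded $A_i$'s, with locality supplying the single factor of $N$ --- is the same skeleton as the paper's proof. But there is a genuine gap at exactly the step you dismiss as working ``just as in the closed-system analysis'': Eq.~\eqref{eq:trotterization_expression} is \emph{not} a $P$-th order formula for a generic time-dependent Hamiltonian; applied naively it is accurate only to $\mathcal{O}((t/T)^2)$, as the paper itself stresses. If you honestly reorganize the per-step error of a time-dependent Hamiltonian into nested commutators of order $P+1$, you unavoidably generate time derivatives of $A_i(\tau)$, and the hypotheses of the lemma control only $\sup_\tau|K_i^{(\sigma,\sigma')}(\tau)|\leq m$ and $\int|K_i^{(\sigma,\sigma')}(\tau)|d\tau\leq M$, not $\dot A_i$ or higher derivatives --- so the bound does not close. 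The paper's resolution, which your sketch never touches, is to exploit stationarity by passing to the explicit time-independent dilation $\mathcal{H}_{SE}=H_S+V_{SE}+H_E$ [Eq.~\eqref{eq:Schrodinger_picture_lattice_model}], apply the time-independent local error-sum remainder of Ref.~\cite{childs2019nearly} there, and then show by a careful interaction-picture transformation that the remainder can be written entirely in terms of $H_{i,i+1}$, $J_{i,i+1}$ and $A_i(t)$, with no nested commutators involving the badly unbounded $H_E$. Your interaction picture with respect to $H_S$ does not accomplish this: the Trotterized step is a product of $V_{\mathrm{o}}$, $V_{\mathrm{e}}$ evolutions and does not become a single time-ordered exponential of a dressed coupling, and the odd/even system commutators are themselves part of the error you must track.

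Two further points would need repair even granting order-$P$ accuracy. First, ``contractivity of quantum channels at every step'' is not available in the naive form: the dynamics is non-Markovian, so it does not factorize into per-step system channels, and the environment operators appearing in the per-step remainder must be Wick-contracted against the exact evolution \emph{after} and the Trotterized evolution \emph{before} the step --- this is precisely where the $L^1$ bound $M$ enters in the paper (the factors $16M$ and $16 s_P M$ in its monomial bounds); you cannot first discard those pieces and then contract only within the step against $\rho_E(0)$. Second, controlling an infinite Dyson tail in the coupling, with its factorially growing number of Wick pairings and the resulting smallness condition $t/T \lesssim 1/(M+m)$, is avoidable and is not how the stated unconditional bound arises: the exact remainder representation contains only at most $P+1$ explicit Hamiltonian insertions, everything else being resummed into unitaries that are handled by exact contraction identities, which is what yields $\mathcal{O}((Ps_P)^P N t (t/T)^P)$ --- whose $P$-dependent prefactor, incidentally, comes from the Wick-contraction combinatorics rather than from the size of the Trotter coefficients as you suggest.
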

\noindent We reemphasize that Eq.~\eqref{eq:trotterization_expression} is \emph{not} the Trotter formula for a generic time-dependent Hamiltonian and if applied to the generic case would only be accurate to $\mathcal{O}((t/T)^2)$. Nevertheless, by exploiting the stationarity of the Gaussian environment, we can show that Eq.~\eqref{eq:trotterization_expression} correctly Trotterizes the system-environment Hamiltonian to any desired order. We also remark that while Lemma \ref{lemma1} achieves the same Trotter error scaling in $N, t, T$ as the one for closed lattice models, its dependence on $P$ is worse than that for closed systems Ref. \citep{childs2019nearly}. Indeed, the constant prefactor in Lemma \ref{lemma1} scales as $\mathcal{O}(P^Ps_P^P)$ while that in Ref. \citep{childs2019nearly} grows as $\exp(\mathcal{O}(P))\times \mathcal{O}(s_P^P)$. This difference can be physically attributed to the the infinite-dimensionality of the bosonic environment, and the unbounded nature of the system-environment interaction.


We next apply this Trotter formula to develop an explicit algorithm for simulating the open lattice model in Eq.~\eqref{eq:lattice_model}. Our first result requires an assumption on the memory kernels of the open system model which is easiest to state in terms of the coupling function $v_i$ introduced in Eq.~\eqref{eq:Ai_exp_rep}. We assume that $v_i(t)$ is a smooth function of $t$ which decays superpolynomially as $\abs{t}\to \infty$ and has bounded derivatives i.e., 
\begin{align}\label{eq:smoothness_assumption}
    \exists C_{\nu}, D_\mu > 0: \sup_{t\in \mathbb{R}}\left \vert  t^\nu  v_i(t)\right \vert \leq C_{\nu} \text{ and }\sup_{t\in \mathbb{R}} \abs{\frac{\partial^\mu}{\partial t^\mu} v_i(t)} \leq D_\mu.
\end{align}
We remark that this assumption also implies that Eq.~\eqref{eq:memory_kernel_bounded_trotter} holds, and consequently we can use lemma \ref{lemma1} for models satisfying this assumption. Physically, the decay of $v_i(t)$ with $t$ implies that the environment should not retain information for a very long time, a condition which we expect to be satisfied in most physically relevant open system models. Furthermore, the smoothness of $v_i(t)$ has a physical consequence that the system-environment coupling has a smooth high-frequency cutoff. On physical grounds, we would expect the dynamics of an open system to be largely governed by environment modes that have frequencies around the dominant energies in the system. Consequently it is reasonable to assume a smooth high-frequency cutoff on the system-environment coupling in most physically relevant models.

\begin{theorem}[Simulation of general non-Markovian dynamics]\label{theoremdi}
If the coupling functions $v_i$ satisfy Eq.~\eqref{eq:smoothness_assumption}, then the channel on the system qubits  $\mathcal{E}_S(t)$ [Eq.~\eqref{eq:channel}] can be approximated within an error $\delta$ with $\mathcal{O}(Nt(Nt/\delta)^{1/p})$ geometrically local gates and $\tilde{\mathcal{O}}(N(Nt/\delta)^{1/(p+1)})$ \footnote{The notation $\tilde{\mathcal{O}}$ omits polylogarithmic factors.}
ancillary qubits for any user-specified $p > 0$.
\end{theorem}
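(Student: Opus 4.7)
The plan is to combine Lemma~\ref{lemma1} with a finite-qubit discretisation of the continuum bosonic environment, balancing the two sources of error by choosing the Trotter order and the truncation parameters consistently.

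First I would invoke Lemma~\ref{lemma1} with a $p^\text{th}$ order Trotter formula having a constant number of stages $s_p = \mathcal{O}(1)$, and set the number of Trotter steps to $T = \Theta\!\left(t\,(Nt/\delta)^{1/p}\right)$. This makes the Trotter contribution to the trace-norm error at most $\delta/3$ and fixes the total number of elementary bond sub-unitaries $V_\mathrm{o},V_\mathrm{e}$ to $\mathcal{O}(NT)=\mathcal{O}(Nt(Nt/\delta)^{1/p})$. The remaining task is to implement each elementary sub-unitary acting on a bond $(i,i{+}1)$ and its locally-independent environment by a finite-qubit circuit with $\mathcal{O}(1)$ gates and $\tilde{\mathcal{O}}(1)$ ancillas (up to polylogarithmic overhead), which will match the advertised scaling.

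Next I would control three approximations of the bath on each bond, using the smoothness hypothesis in Eq.~\eqref{eq:smoothness_assumption}. \emph{(i)} The super-polynomial decay of $v_i(t)$ lets me truncate its support to a window $[-T_c,T_c]$ with $T_c$ scaling only polylogarithmically in the per-step accuracy. \emph{(ii)} The uniform bound on derivatives of $v_i$ lets me replace the continuum of bath modes on this window by a finite number $M$ of discrete bosonic modes via a Nyquist/sampling-type argument on $\hat v_i$; because $v_i$ is smooth with rapid decay, $M$ grows only polylogarithmically in $Nt/\delta$, and the induced operator error on $A_i(t)$ (restricted to the low-occupation sector relevant during one step) decays faster than any polynomial. \emph{(iii)} Each retained bosonic mode is truncated to a finite Fock space whose size is logarithmic in $Nt/\delta$, since within a single Trotter step of duration $\mathcal{O}(t/T)$ the vacuum initial state populates only $\mathcal{O}(\mathrm{polylog})$ Fock levels. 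Summing these per-bond errors over $\mathcal{O}(NT)$ uses of the sub-unitary, and demanding the total is $\le \delta/3$, leaves each bond-plus-environment block with cost $(Nt/\delta)^{o(1)}$, which multiplies the $\mathcal{O}(NT)$ block count to give exactly $\mathcal{O}(Nt(Nt/\delta)^{1/p})$ geometrically local gates.

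For the ancilla count I would exploit the locality of the environment representation: at any instant only the registers attached to the currently active bonds must be held coherently, and registers can be reinitialised after each Trotter step, so the total ancilla count is $N$ times the per-bond mode count rather than $NT$ times it. The slightly better exponent $1/(p+1)$ versus $1/p$ reflects that the per-step discretisation error, rather than the Trotter-step count, governs the ancilla sizing. The main obstacle I foresee is step~(ii): turning the smoothness assumption into a rigorous bound of the form $\|A_i(t)-\tilde A_i(t)\| \le \epsilon$ on the relevant low-occupation subspace, with $\epsilon$ decaying fast enough in $M$ and $T_c$ to be distributed across all $NT$ layers without inflating the ancilla budget. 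Closing this analytic estimate, and verifying that the induced channel error on the system state composes additively under the Trotter product despite the environment being shared across time steps, is what ultimately pins down the exponents in the stated bound.
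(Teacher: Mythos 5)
There is a genuine gap, and it sits exactly where you flagged your ``main obstacle'': step \emph{(ii)}. You claim that the smoothness assumption Eq.~\eqref{eq:smoothness_assumption} permits replacing the continuum bath on each bond by only $M = \mathrm{polylog}(Nt/\delta)$ discrete modes with an error decaying faster than any polynomial. But Eq.~\eqref{eq:smoothness_assumption} only bounds each derivative individually, with constants $C_\nu, D_\mu$ that may grow arbitrarily fast in the order; it does \emph{not} give the uniform control over all derivatives that a Nyquist/spectral argument needs for superpolynomial convergence. With a fixed finite approximation order (the paper's degree-$j_{\mathrm{max}}$ local Legendre expansion), the kernel error is $\sim \eta^{j_{\mathrm{max}}+1}D_{j_{\mathrm{max}}+1}/(j_{\mathrm{max}}+1)!$ (Lemma~\ref{lemma:EB_Kernel}), so the mode density must scale as a fractional power $(Nt/\delta)^{1/(j_{\mathrm{max}}+1)}$, not polylogarithmically --- this is precisely why the theorem's ancilla count carries the factor $(Nt/\delta)^{1/(p+1)}$ rather than being $\tilde{\mathcal{O}}(N)$, and the paper's Discussion explicitly notes that pushing this to a logarithmic overhead would require $|\partial^k v/\partial\tau^k| \leq \mathcal{O}(\exp(k))$, which the hypotheses do not supply. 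Your polylog claim also breaks your gate accounting: after discretization and Fock truncation the local Hamiltonian norm is $\mathcal{O}(r\sqrt{d/\eta})$, so each Trotter step costs an extra fractional power $(Nt/\delta)^{1/(2(j_{\mathrm{max}}+1))}$ in depth, and the theorem's $1/p$ exponent is obtained only by balancing the Trotter order against the discretization order (the paper takes $P=2p$ and $j_{\mathrm{max}}=p$), not by taking the Trotter order equal to $p$ and declaring the per-block cost $(Nt/\delta)^{o(1)}$. As written, your final counts are internally inconsistent: polylog modes per bond would give $\tilde{\mathcal{O}}(N)$ ancillas, yet you quote the $1/(p+1)$ exponent.

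A second, related missing idea is how the discretization error is converted into a channel error. You propose a per-step operator bound $\lVert A_i(t)-\tilde{A}_i(t)\rVert \leq \epsilon$ on a low-occupation subspace, composed additively over $\mathcal{O}(NT)$ uses --- and you correctly note you cannot close this. The paper avoids this entirely: it compares the \emph{memory kernels} $K_i^{(-,+)}$ and $\tilde{K}_i^{(-,+)}$ (Lemma~\ref{lemma:EB_Kernel}) and then invokes the functional-integral bound of Lemma~\ref{Lemma: lemma_citation_from_functional_integral}, which controls the trace-norm distance of the exact reduced channels directly in terms of the time-integrated kernel difference (yielding Lemma~\ref{Lemma:EB_discretization_total}), with no per-step composition and no operator-norm bound on the unbounded $A_i(t)$ needed. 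The occupation-number truncation (your step \emph{(iii)}, the paper's Lemma~\ref{Lemma4} and Lemma~\ref{Lemma:EB_quditization_total}) and the ancilla-recycling idea do match the paper, but without the kernel-level comparison and the correct polynomial (not polylogarithmic) discretization rate, the proposal does not establish the stated gate and ancilla bounds.
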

\noindent This result establishes that, under the assumption in Eq.~\eqref{eq:smoothness_assumption}, the open system model can be simulated with almost $O(Nt)$ gates. Beyond Trotterization, the primary obstacle in establishing theorem \ref{theoremdi} is the environment which is described by a continuum of bosonic modes labelled by the annihilation operators $a_{i, s}$ in Eq.~\eqref{eq:Ai_exp_rep} for $i \in \{1, 2 \dots N - 1\}$ and $s \in \mathbb{R}$. To simulate this model with a quantum circuit, we need to both  discretize the environment to a finite number of discrete bosonic modes \emph{and} truncate the local Hilbert space of the bosonic modes.--- Furthermore, we aim to execute both of these steps while retaining close to $\mathcal{O}(Nt)$ scaling in the number of gates and close to $\mathcal{O}(N)$ scaling in the number of ancillas. To this end, we introduce a scheme that uses a discretization of $a_{i, t}$ with respect to $t$ together with a complete orthonormal set of basis functions. Furthermore, we establish that for each of the discretized bosonic mode, the probability of high particle number states being occupied in each bosonic mode descreases superpolynomially with the number of particles and consequently its Hilbert space can be truncated to $\mathcal{O}(\text{polylog}(N, t, 1/\delta))$ levels. This allows us to control the overheads incurred by discretization and the Hilbert space truncation and establish theorem \ref{theoremdi}.

We remark that the problem of discretizing bosonic environment has been extensively studied in previous work \cite{trivedi2022descriptioncomplexitynonmarkovianopen,Chin2010exact, Woods2014mappings}. In particular, Ref.~\cite{Chin2010exact} developed a discretization procedure, called the star-to-chain transformation, based on the Lanczos iteration which is able to approximate the environment to an accuracy $\delta$ with $\mathcal{O}(Nt \text{polylog}(N, t, 1/\delta))$ discrete modes. However, for this specific discretization procedure, it is hard to rigorously guarantee that a truncation of the local Hilbert space dimension to near $\mathcal{O}(\text{polylog}(N, t, 1/\delta))$ levels that our scheme provably attains. Furthermore, if the star-to-chain transformation is directly implemented on a quantum computer, it would require at-least $\mathcal{O}(Nt)$ ancillas,--- however, our discretization scheme allows us to use the fact that the coupling functions $v_i(t)$ superpolynomially decay with $t$ to reduce the number of ancillas needed to close to $\mathcal{O}(N)$.

We also emphasize that the prefactor in the complexity scaling depends on the property of the coupling function $v_i$ and can grow with $p$. Nevertheless, we can always choose $p$ to grow slowly enough with $N, t, \delta$ such that the gate complexity is $\mathcal{O}(Nt(Nt/\delta)^{o(1)})$, where $o(1)$ denotes a variable which can be made smaller than any constant. 

In a special class of non-Markovian models, referred as ``non-dissipative" models in the open-system literature \cite{Ferialdi2014general}, we can relax the assumption of smoothness and superpolynomial decay [Eq.~\eqref{eq:smoothness_assumption}] and still attain a near-optimal gate and ancilla count. Non-dissipative non-Markovian models are models where the system-environment interaction Hamiltonian [Eq.~\eqref{eq:lattice_mode_VSE}] has two additional properties:  \emph{first}, the jump operators $J_{i, i + 1}$ are Hermitian i.e., $J_{i, i + 1} = J_{i, i + 1}^\dagger$ and \emph{second}, the kernel 
\begin{align*}
\bar{K}_{i}(t - t') = \text{Tr}((A_i(t) + \text{h.c.})(A_i^\dagger(t') + \text{h.c.})\rho_E(0)),
\end{align*}
is purely real. We emphasize that $\bar{K}_i(t,t')$ can be expressed as a linear combination of $K_i^{(\sigma,\sigma')}(t,t')$. In this case, the channel $\mathcal{E}_S(t)$ describing the system dynamics can be considered to be an ensemble average over a family of random time-dependent Hamiltonian. More specifically,
\[
\mathcal{E}_S(t)= \mathbb{E}_\xi[U_{\xi}(t, 0) (\cdot) U_{\xi}(0, t)],
\]
where $U_\xi(t, s) = \mathcal{T}\exp(-i\int_s^t H_\xi(\tau) d\tau)$ with $H_\xi(t)$ being 
\begin{align}\label{eq:hamiltonian_ensemble}
H_\xi(\tau) = H_S + \sum_{i=1}^{N - 1} \xi_i(t) J_{i,  i+1},
\end{align}
where $\xi_i(t)$ is a classical Gaussian random process with 
\begin{equation}\label{eq:Sec2NonMarkovianclassicalnoise}
\langle \xi_i(t)\rangle=0,\; \langle \xi_i(t) \xi_j(t')\rangle=\delta_{i, j} \bar{K}_{i}(t- t').
\end{equation}In our next result, we make no assumptions on the smoothness or decay of $\bar{K}_i(t)$ but only assume that $\exists m, M >0$ such that
\begin{align}\label{eq:non_dissipative_kernel_conditions}
\sup_{\tau \in \mathbb{R}} \vert \bar{K}_i(\tau) \vert < m, \int_{-\infty}^\infty \vert \bar{K}_i(\tau) \vert d\tau < M.
\end{align}

\begin{theorem}[Simulation of non-dissipative non-Markovian dynamics]
For the non-dissipative non-Markovian open system model satisfying Eq.~\eqref{eq:non_dissipative_kernel_conditions}, for any $p>0$, there is an ensemble of depth $\mathcal{O}(t(Nt/\delta)^{1/p})$ geometrically local unitary circuits $\{U_{z}\}_{z\sim Z}$, where $Z = \{Z_1, Z_2\dots Z_Q\}$ is a jointly Gaussian zero-mean random vector with dimensionality $Q = \mathcal{O}(Nt(Nt/\delta)^{1/p})$ and a classically efficiently computable covariance matrix, such that for all $\rho_S(0)$
\[
\norm{\mathbb{E}_z(U_z \rho_S(0) U_z^\dagger) - \mathcal{E}_S(t)(\rho_S(0))}_\textnormal{tr} \leq \delta.
\]
\label{theoremnondi}
\end{theorem}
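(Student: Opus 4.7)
The plan is to exploit the stochastic-Hamiltonian representation in Eqs.~\eqref{eq:hamiltonian_ensemble} and~\eqref{eq:Sec2NonMarkovianclassicalnoise}, which expresses $\mathcal{E}_S(t)$ as an ensemble average of closed-system unitaries driven by the classical Gaussian processes $\{\xi_i(\tau)\}$. The simulation therefore reduces to sampling a classical Gaussian vector $Z$ and executing, for each sample, a shallow circuit that Trotterizes the realized, deterministic, geometrically local 1D Hamiltonian; averaging over $Z$ on the user's side recovers the target channel without any explicit environment simulation.

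First I would discretize each process $\xi_i$ on a uniform partition of $[0,t]$ into $T$ intervals of length $\Delta t=t/T$, replacing $\xi_i(\tau)$ by a surrogate $\tilde\xi_i(\tau)$ that is piecewise polynomial on each interval and parametrized by a zero-mean Gaussian vector $Z$ of dimension $Q=\mathcal{O}(NT)$. The covariance of $Z$ is obtained by integrating $\bar K_i$ against the chosen local basis functions, which is classically computable in polynomial time under the boundedness hypothesis~\eqref{eq:non_dissipative_kernel_conditions}. For each realization of $Z$, the resulting Hamiltonian $H_{\tilde\xi}(\tau)=H_S+\sum_i\tilde\xi_i(\tau) J_{i,i+1}$ is deterministic, geometrically local on the 1D chain, and piecewise polynomial in $\tau$, so the closed-system $P^\text{th}$-order Trotter formulae of Ref.~\cite{childs2019nearly} can be applied on each interval with $\mathcal{O}(Ts_P)$ circuit depth and per-realization error $\mathcal{O}(Nt(\Delta t)^P)$ (the sample-dependent prefactor is controlled using the Gaussian moments of $Z$).

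The main technical step is bounding the noise-discretization error $\bigl\|\mathbb{E}_\xi[U_\xi(t,0)\rho U_\xi^\dagger(t,0)]-\mathbb{E}_{\tilde\xi}[U_{\tilde\xi}(t,0)\rho U_{\tilde\xi}^\dagger(t,0)]\bigr\|_{\mathrm{tr}}$. Since $\xi$ enters $H_\xi(\tau)$ linearly, expanding the unitaries in a Dyson series in the interaction picture with respect to $H_S$ and taking the Gaussian expectation yields, by Wick's theorem, a series each of whose terms is a time-ordered integral of bounded system operators weighted by a product of two-point correlators of $\xi$. The comparison therefore collapses to controlling integrals of the form $\int\!\!\int f(\tau,\tau')\bigl[\bar K_i(\tau-\tau')-\tilde K_i(\tau,\tau')\bigr]d\tau\,d\tau'$ against uniformly bounded test functions $f$ arising from the Dyson terms; the $L^1$ bound in~\eqref{eq:non_dissipative_kernel_conditions} together with the $1/n!$ factor at the $n^\text{th}$ Dyson order ensures convergence of the series, and each order contributes $\mathcal{O}((\Delta t)^P)$ provided the surrogate $\tilde\xi$ is chosen as a polynomial projection of sufficient degree on each interval.

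Combining the two error sources and setting them equal to $\delta/2$ yields $T=\Theta\bigl(t(Nt/\delta)^{1/P}\bigr)$, producing a per-realization circuit depth $\mathcal{O}\bigl(t(Nt/\delta)^{1/P}\bigr)$ and dimensionality $Q=\mathcal{O}\bigl(Nt(Nt/\delta)^{1/P}\bigr)$, as claimed upon identifying $P=p$. The hardest part is the noise-discretization bound under the weak hypothesis~\eqref{eq:non_dissipative_kernel_conditions}: without smoothness or decay of $\bar K_i$, the weak-sense error of the projection against the Dyson weights must be tracked through every order, and the combinatorial growth of Wick contractions must be compensated by the $L^1$ bound to keep both the series convergent and the total error scaling as $(\Delta t)^P$.
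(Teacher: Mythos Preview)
Your approach inserts an unnecessary noise-discretization step $\xi\to\tilde\xi$ that is the source of both difficulties you flag. Under only~\eqref{eq:non_dissipative_kernel_conditions}, with no smoothness on $\bar K_i$, a piecewise-polynomial projection of fixed degree does not give a kernel error of order $(\Delta t)^P$ in any standard norm; the ``weak-sense'' Dyson argument you sketch would have to track Wick contractions through every order together with derivative bounds on interaction-picture operators --- that is essentially the proof of Lemma~\ref{lemma1}, not a way around it. Separately, Ref.~\cite{childs2019nearly} treats time-independent Hamiltonians; higher-order time-dependent product formulae need bounds on $\partial_\tau^k\tilde\xi_i$, and for a polynomial surrogate on intervals of width $\Delta t$ these scale as $(\Delta t)^{-k}$, degrading the Trotter order.

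The paper avoids both issues by Trotterizing the \emph{averaged} channel directly via Lemma~\ref{lemma1} (applied once to split even/odd, then again within each two-site block to separate $H_{l,l+1}$ from $J_{l,l+1}\xi_l$); the Wick-contraction error analysis there uses only~\eqref{eq:non_dissipative_kernel_conditions} and never requires $\xi$ to be differentiable. The resulting circuit's only noise-dependent gates are $\exp\bigl(-iJ_{l,l+1}\int_{t_1}^{t_2}\xi_l(t)\,dt\bigr)$, and the key observation you missed is that these time integrals are already \emph{exact} jointly Gaussian functionals of $\xi$, with covariance $\int_{t_1}^{t_2}\!\int_{t_1'}^{t_2'}\bar K_l(s-s')\,ds\,ds'$ computable directly from~\eqref{eq:non_dissipative_kernel_conditions}. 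So $Z$ is simply this finite collection of integrals --- an exact finite-dimensional marginal of the original process, not a surrogate --- and no noise-discretization error ever enters; the only error is the Trotter error already controlled by Lemma~\ref{lemma1}.
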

\noindent A first approach to simulating $\mathcal{E}_S(t)$ in the non-dissipative case could be to simulate the time-dependent Hamiltonian $H_\xi$ in Eq.~\eqref{eq:hamiltonian_ensemble} using a standard Hamiltonian simulation algorithm \cite{Hatano2005Finding,Kieferov2019Simulating}. However, for time-dependent Hamiltonians, to be efficient such algorithms typically require a guaranteed upper bound on the time-dependent terms in the Hamiltonian as well as their derivatives. However, $\xi_i(t)$ in Eq.~\eqref{eq:hamiltonian_ensemble} does not have bounded derivatives at all times with high probability, which prohibits directly using a Hamiltonian simulation algorithm for this setting. Instead, in our approach we use the Higher-order Trotterization result from lemma \ref{lemma1} with the fact that the Hamiltonian is non-dissipative to map the Trotterized dynamics to a circuit ensemble. We also remark that the algorithm in Theorem \ref{theoremnondi} does not need any ancilla qubits and thus has fewer resource requirements than Theorem \ref{theoremdi}.

\subsubsection{Tighter results for local observables}

\begin{figure}
\includegraphics[width=0.9\columnwidth]{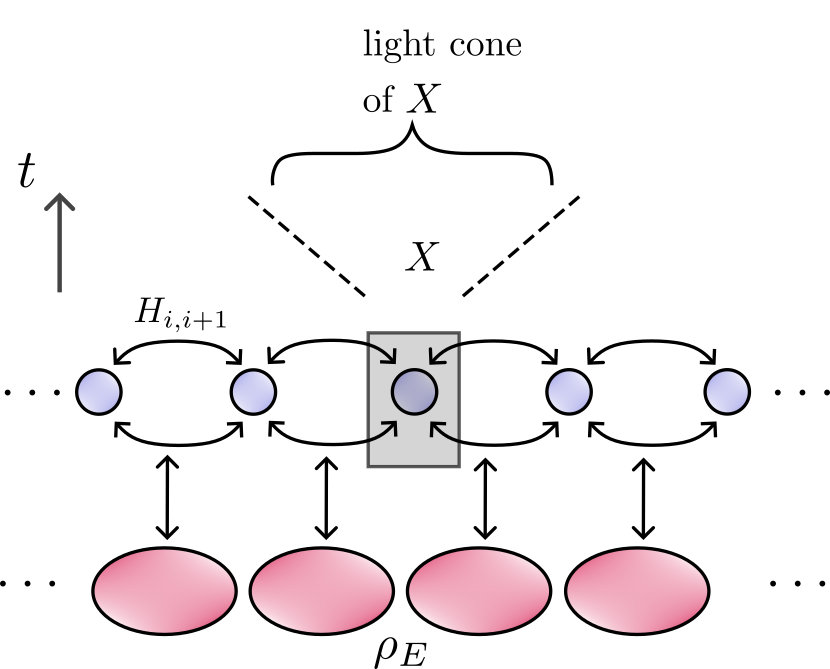}

\caption{Schematic illustration of the effective light cone in the non-Markovian dynamics model. An observable originally supported on $X$ can only propagate to a finite region whose size is linearly in time $t$ due to the locality of the Hamiltonian.}

\label{Fig:illustration_lightcone}
\end{figure}

In the algorithms described in the previous subsection, the parallelized circuit depth required to obtain a good approximation of the full many-body state of the spins increases, although very slowly, with the number of qubits $N$. This is of special concern when using near-term quantum devices without any error correction, where there is a possibility of errors accumulating during the run-time of the circuit --- if the circuit depth increases with the number of qubits, then the total accumulated error due to noise would also typically grow with the number of qubits and thus limit the size of the many-body problem that can be solved on such devices.

However, in many physically relevant scenarios, we are typically not interested in the full many-body state of the spins but only in local observables measured on the many-body state \cite{Trivide2024advantage}. For geometrically local many-body models, since local observables at any time $t$ are expected to depend only on the Hamiltonian terms within distance $t$ of the observables \cite{Anthony2023Speedlimit}, as also shown in Fig. \ref{Fig:illustration_lightcone}, the circuit depth needed to accurately simulate local observables can be made independent of the system size $N$. In our next result, we rigorously address this question --- more specifically, we will consider the problem of simulating observables whose support has a diameter bounded by a given constant $R$. We will aim to implement a channel $\mathcal{E}$ such that for any initial state $\rho_S(0)$ and any region $X$ such that $\text{diam}(X) \leq R$,
\begin{align}\label{eq:local_error_channel}
\norm{\text{Tr}_{\overline{X}}[\mathcal{E}_S(t)(\rho_S(0))] - \text{Tr}_{\overline{X}}[\mathcal{E}(\rho_S(0))]}_\text{tr}\leq \delta_\text{loc},
\end{align}
where $\overline{X}$ is the complement of $X$. We will assume that the diameter $R$ is an $N$ and $t$ independent $\mathcal{O}(1)$ constant. Our next result shows that the circuit depth in both the dissipative and non-dissipative geometrically local open models can be chosen to be independent of $N$ if we want to  accurately simulate only local observables. We first present the result for non-Markovian dynamics with coupling functions satisfying smoothness and polynomial decay assumption laid out in Eq.~\eqref{eq:smoothness_assumption}.

\begin{theorem}[Local observables in general non-Markovian dynamics.]\label{theorem_error_robustness}
For a $D$-dimensional lattice model, if the coupling functions $v_i$ satisfy Eq.~\eqref{eq:smoothness_assumption}, then for any $p>0$, the channel on the system qubits  $\mathcal{E}_S(t)$ [Eq.~\eqref{eq:channel}] can be approximated within a local error $\delta_\textnormal{loc}$ [Eq.~(\ref{eq:local_error_channel})] with a geometrically local circuit of depth $\mathcal{O}(t(t^{D+1}/\delta_\textnormal{loc})^{1/p})$ and with $\tilde{\mathcal{O}}(N(t^{D+1}/\delta_\textnormal{loc})^{1/(p+1)})$ ancillary qubits.
\end{theorem}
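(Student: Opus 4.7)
The plan is to combine the open-system Lieb-Robinson bound of Ref.~\cite{trivedi2024liebrobinsonboundopenquantum} with the Trotterized construction underlying Theorem \ref{theoremdi}, replacing the global volume factor $N$ in Lemma \ref{lemma1} by the effective light-cone volume $t^D$. The physical picture is that a diameter-$R$ observable on $X$ can be influenced during $[0,t]$ only by system Hamiltonian terms and environment modes within a buffer region $\tilde{X}\supseteq X$ of linear size $\mathcal{O}(t+\log(1/\delta_\textnormal{loc}))$, with corrections suppressed exponentially in the buffer width.

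I would first establish a local analogue of Lemma \ref{lemma1}, namely
\[
\left\Vert\textnormal{Tr}_{\overline{X}}\!\left[\mathcal{E}_\textnormal{tro}(t,T)(\rho_S(0)) - \mathcal{E}_S(t)(\rho_S(0))\right]\right\Vert_\textnormal{tr} \leq \mathcal{O}\!\bigl(t^{D+1}(t/T)^P\bigr)
\]
for every $\rho_S(0)$ and every $X$ with $\textnormal{diam}(X)\leq R$. The proof would split the local error, via the triangle inequality, into (a) the Lieb-Robinson error from restricting the exact channel $\mathcal{E}_S(t)$ to a channel $\mathcal{E}_{S,\tilde{X}}(t)$ generated only by the Hamiltonian and coupling terms supported in $\tilde{X}$; (b) the analogous restriction error for the Trotterized channel, which follows from the geometric locality of the Trotter circuit on the combined system-environment lattice; and (c) the Trotter error of the restricted dynamics, which is controlled directly by Lemma \ref{lemma1} applied to a model with $\mathcal{O}(t^D)$ qubits, yielding $\mathcal{O}(t^{D+1}(t/T)^P)$. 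Choosing the buffer width $\mathcal{O}(t+\log(1/\delta_\textnormal{loc}))$ makes (a) and (b) each at most $\delta_\textnormal{loc}/4$.

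With this local Trotter bound, the simulation circuit is the same geometrically local one built in Theorem \ref{theoremdi}, but with the number of Trotter steps chosen as $T=\mathcal{O}(t(t^{D+1}/\delta_\textnormal{loc})^{1/P})$, now independent of $N$. Since each Trotter step is $\mathcal{O}(s_P)$ layers of geometrically local gates, the parallelized depth is $\mathcal{O}(t(t^{D+1}/\delta_\textnormal{loc})^{1/P})$. The bosonic discretization and Hilbert-space truncation around each system site are carried out exactly as in the proof of Theorem \ref{theoremdi}, except that the accuracy target per local environment is now set by the light-cone volume $t^D$ rather than by $N$; this yields $\tilde{\mathcal{O}}((t^{D+1}/\delta_\textnormal{loc})^{1/(p+1)})$ ancillas per site and a total of $\tilde{\mathcal{O}}(N(t^{D+1}/\delta_\textnormal{loc})^{1/(p+1)})$ ancillas.

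The main obstacle I expect is proving the two restriction bounds (a) and (b) rigorously in the presence of the infinite-dimensional bosonic environment with non-trivial memory. For (a) the open-system Lieb-Robinson result of Ref.~\cite{trivedi2024liebrobinsonboundopenquantum} is directly applicable under the smoothness and superpolynomial-decay assumption~\eqref{eq:smoothness_assumption}, which in particular guarantees a bounded Lieb-Robinson velocity. For (b) one exploits that the Trotter circuit is geometrically local on the combined $S+E$ lattice and that $v_i(t)$ decays superpolynomially in $t$, so that environment modes far from $\tilde{X}$ and their couplings contribute negligibly to observables on $X$; carefully bookkeeping the polynomial prefactors produced by these estimates and combining them with the Trotter prefactors from Lemma \ref{lemma1} is what is absorbed into the $o(1)$ factors of the stated complexity.
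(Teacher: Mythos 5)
Your overall strategy---use the open-system Lieb-Robinson bound to replace the volume factor $N$ in Lemma \ref{lemma1} by the light-cone volume $\mathcal{O}(t^D)$, and then redo the resource counting of Theorem \ref{theoremdi} with $N\to t^D$---is the same as the paper's, and for the discretization and quditization errors your "restrict to a buffer $X_{[l]}$, bound the restriction error by the Lieb-Robinson bound, and apply Lemmas \ref{Lemma:EB_discretization_total} and \ref{Lemma:EB_quditization_total} with $N$ replaced by $l^D$" is exactly what the paper does. Where you genuinely diverge is the Trotterization error: the paper does \emph{not} restrict the dynamics there, but instead takes the local error-sum representation of the Trotter remainder from the proof of Lemma \ref{lemma1}, notes that after the Wick contractions each monomial has the form $\vecbra{O_X, I_E}\mathbb{U}_{SE}\,\mathbb{L}_Y\vecket{\phi}$ with $\vecbra{I_S}\mathbb{L}_Y=0$, and applies the Lieb-Robinson bound term by term so that the sum over the location $\nu$ of the seed term converges to $\mathcal{O}(t^D)$. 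Your restrict-then-Trotterize route can be made to work, but it requires a Lieb-Robinson statement for the Trotterized (concatenated forward/backward) evolution itself in step (b); the paper's way of handling this (in the $\delta_{\mathrm{loc,qud}}$ analysis) is to observe that Trotterization acts like a forward evolution with time dilated to $s_P t$, i.e.\ velocity $v^u_{LR}(1+56 s_P M)$, and you should say this explicitly rather than appeal to strict circuit locality, since the circuit depth $s_P T\gg t$ would otherwise give a uselessly large strict cone.

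Two concrete accounting gaps need fixing to reach the stated exponents. First, a Trotter step is not $\mathcal{O}(s_P)$ layers of local gates: each stage must simulate the qudit--system coupling whose norm is $\mathcal{O}(r\sqrt{d/\eta})$, so the depth is $\mathcal{O}\big((t/\Delta t)\sqrt{d/\eta}\big)$; since $\eta$ scales as a power of $\delta_{\mathrm{loc}}/t^{D+1}$ this is not a prefactor absorbable into $o(1)$, and one must choose the Trotter order and polynomial cutoff jointly (the paper takes $j_{\mathrm{max}}=P/2$ and $P=2p$) to land on depth $\tilde{\mathcal{O}}(t(t^{D+1}/\delta_{\mathrm{loc}})^{1/p})$ and $\tilde{\mathcal{O}}((t^{D+1}/\delta_{\mathrm{loc}})^{1/(p+1)})$ ancillas per site. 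Second, for the quditization comparison your buffer $\mathcal{O}(t+\log(1/\delta_{\mathrm{loc}}))$ is too small: after truncation the local Hamiltonian norm is $\mathcal{O}(\sqrt{d/\eta})$, so the relevant Lieb-Robinson velocity of the truncated dynamics is $\mathcal{O}(\sqrt{d/\eta})$ and the buffer must be taken $l'=\mathcal{O}(\sqrt{d/\eta}\,t+d)$, which is why the paper's bound reads $\delta_{\mathrm{loc,qud}}\leq\mathcal{O}\big(d^{2D+1/2}e^{-d}\,t^{2+2D}/\eta^{D+3/2}\big)$; the exponential smallness in $d$ still lets a logarithmic $d$ suffice, but the polynomial dependence on $\eta$ enters the choice of $d$ and must be tracked.
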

\noindent Similar to the previous subsection, for the case of non-dissipative models, we can relax the smoothness and superpolynomial decay assumption on the coupling function to just boundedness of memory kernels and their $L^1$ norms [Eq.~\eqref{eq:non_dissipative_kernel_conditions}].
\begin{theorem}[Local observables in non-dissipative non-Markovian dynamics]\label{theorem:error_robustness_non_dissipative}
For the non-dissipative non-Markovian open system model satisfying Eq.~\eqref{eq:non_dissipative_kernel_conditions}, for any $p>0$, there is an ensemble of depth $\mathcal{O}(t(t^{D +1}/\delta_\textnormal{loc})^{1/p})$ unitary circuits $\{U_{z}\}_{z\sim Z}$, where $Z = \{Z_1, Z_2\dots Z_Q\}$ is a jointly Gaussian zero-mean random vector with dimension $Q = \mathcal{O}(Nt(t^{D+1}/\delta_\mathrm{loc})^{1/p})$ and a classically efficiently computable covariance matrix, such that the channel $\mathbb{E}_z(U_z (\cdot) U_z^\dagger)$ approximates $\mathcal{E}_S(t)$ within a local error $\delta_\textnormal{loc}$ [Eq.~\eqref{eq:local_error_channel}].
\label{theoremnondi_local_obs}
\end{theorem}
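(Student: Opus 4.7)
The plan is to combine the random-Hamiltonian ensemble decomposition used in Theorem \ref{theoremnondi} with the light-cone truncation strategy that underlies Theorem \ref{theorem_error_robustness}. Since the model is non-dissipative, the exact channel admits the representation $\mathcal{E}_S(t)(\cdot) = \mathbb{E}_\xi[U_\xi(t,0)(\cdot)U_\xi^\dagger(t,0)]$ of Eq.~\eqref{eq:hamiltonian_ensemble}. For each realization of the Gaussian process $\xi$, $H_\xi(\tau)$ is a geometrically local, time-dependent $D$-dimensional lattice Hamiltonian driven by classical noise, and the Lieb--Robinson velocity associated with it is controlled uniformly in $\xi$ by the kernel bounds of Eq.~\eqref{eq:non_dissipative_kernel_conditions} together with the local operator norms of $J_{i,i+1}$.

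First I would invoke the open-system Lieb--Robinson bound referenced in the excerpt: for any observable supported on a region $X$ with $\mathrm{diam}(X) \leq R$, the expectation value at time $t$ under $\mathcal{E}_S(t)$ depends, up to an additive error $\delta_\mathrm{loc}/3$, only on the restriction of the model to a ball $Y$ of radius $\mathcal{O}(t+\mathrm{polylog}(1/\delta_\mathrm{loc}))$ around $X$; on a $D$-dimensional lattice this yields $|Y|=\mathcal{O}(t^D)$. I would then apply Theorem \ref{theoremnondi} to the restricted dynamics on $Y$, substituting $N \mapsto |Y|$ and $\delta \mapsto \delta_\mathrm{loc}/3$. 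Since the restricted model inherits the non-dissipative property (a subset of Hermitian jump operators with a real restricted kernel), this substitution is legitimate and yields an ensemble of circuits on $Y$ of depth $\mathcal{O}(t(t^{D+1}/\delta_\mathrm{loc})^{1/p})$ whose average channel approximates the restricted evolution to within $\delta_\mathrm{loc}/3$, saturating the depth bound claimed in the theorem.

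To deliver a single family of circuits on the full $N$-qubit lattice that works uniformly for every admissible $X$, I would apply the Trotter scheme of Lemma \ref{lemma1} to the full random Hamiltonian $H_\xi$, but choose the step count by the \emph{local} accuracy requirement $\delta_\mathrm{loc}/3$ rather than the global $\delta$. Because every Trotter gate is geometrically local, a second application of the Lieb--Robinson bound to the Trotterized ensemble shows that gates outside $Y$ cancel in the reduced state on $X$ up to a correction of order $\delta_\mathrm{loc}/3$. The dimension $Q$ of the Gaussian vector $Z$ that parametrizes one realization equals the number of independent Gaussian samples needed, namely (number of jump-operator sites) $\times$ (number of Trotter time abscissae) $= \mathcal{O}(Nt(t^{D+1}/\delta_\mathrm{loc})^{1/p})$; its covariance is a linear functional of the continuous-time covariance $\delta_{ij}\bar{K}_i(t-t')$ evaluated on those abscissae and is therefore classically efficient to compute.

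The main obstacle is localizing the Trotter error: Lemma \ref{lemma1} produces a bound that scales with the full system size $N$, whereas the theorem demands a bound that depends only on the light-cone volume $|Y|$. Following the strategy of Theorem \ref{theorem_error_robustness}, this requires resolving the Trotter error into commutators with strictly local support and then invoking the Lieb--Robinson bound for $H_\xi$ to discard all commutators whose support lies outside $Y$. A secondary but straightforward step is to move the ensemble average $\mathbb{E}_\xi$ past the triangle inequality on trace norms, which is valid because the error bound holds operator-norm-wise for each realization of $\xi$ and the realizations are jointly Gaussian with the covariance given by Eq.~\eqref{eq:Sec2NonMarkovianclassicalnoise}.
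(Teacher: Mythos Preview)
Your overall strategy aligns with the paper's: localize the Trotterization error via a Lieb--Robinson argument so that the factor $N$ in Lemma~\ref{lemma1} is replaced by the light-cone volume $\mathcal{O}(t^D)$, then choose the Trotter step by the local target $\delta_\mathrm{loc}$ and read off the depth. The paper does not, however, proceed by ``truncate to $Y$, then invoke Theorem~\ref{theoremnondi} on $Y$.'' Instead it works directly with the explicit Trotter remainder (the expression $\mathcal{J}_P^{1,2}$ from the proof of Lemma~\ref{lemma1}) in the doubled-space system--environment picture: each monomial in the remainder is a local superoperator $\mathbb{L}_Y$ sandwiched between system--environment evolutions, and the open-system Lieb--Robinson bound of Ref.~\cite{trivedi2024liebrobinsonboundopenquantum} is applied term by term to produce Lemma~\ref{Lemma:local_trotter}, $\delta_{\mathrm{loc,tro}}\le\mathcal{O}(t^{D+1}(\Delta t)^P)$. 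The second Trotter layer and the Gaussian vector $Z$ are then exactly as in Theorem~\ref{theoremnondi}, with $N$ replaced by $t^D$ only in the error budget.

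The place your proposal goes wrong is in the treatment of the per-realization dynamics. You assert that the Lieb--Robinson velocity of $H_\xi$ is ``controlled uniformly in $\xi$ by the kernel bounds,'' and later that the error bound ``holds operator-norm-wise for each realization of $\xi$'' so that you may average after applying the triangle inequality. Neither holds: the local terms $\xi_i(\tau)J_{i,i+1}$ have operator norms $|\xi_i(\tau)|$, which are unbounded Gaussian random variables, so the standard closed-system Lieb--Robinson velocity for a fixed realization scales with $\sup_\tau|\xi_i(\tau)|$ and enters the bound \emph{exponentially}. Averaging $\exp(c\,t\,\sup|\xi|)$ does not recover the $e^{\mathcal{O}(t)}$ growth needed for a light cone of radius $\mathcal{O}(t)$; at best you get a super-linear effective radius and hence a worse exponent than the theorem claims. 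The correct tool---and the one the paper uses throughout---is the \emph{open-system} Lieb--Robinson bound for the averaged channel (equivalently, for $\mathbb{U}_{SE}$), whose velocity is deterministic and depends only on $m,M$ from Eq.~\eqref{eq:non_dissipative_kernel_conditions}. You invoke this correctly in your first truncation step but then abandon it when you speak of ``the Lieb--Robinson bound for $H_\xi$'' in the final paragraph; you should use the open-system bound there as well, applied to the Wick-contracted remainder rather than to individual trajectories.
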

\noindent We remark that in Theorem \ref{theorem:error_robustness_non_dissipative}, the classical sampling complexity of the circuit ensemble should scale as $1/\delta_\mathrm{loc}^2$ such that the sampling error is below $\delta_\mathrm{loc}$. If we also consider this classical sampling complexity, the total circuit depth is $\mathcal{O}(t/\delta_\mathrm{loc}^2(t^{D+1}/\delta_\mathrm{loc})^{1/p})$. To establish both of these theorems, we revisit the Trotterization error bound presented in lemma \ref{lemma1},--- using the recently established Lieb Robinson bounds for open lattice models with Gaussian environments \cite{trivedi2024liebrobinsonboundopenquantum}, we show that when considering only errors in local observables [Eq.~\eqref{eq:local_error_channel}], the Trotterization error bound can be significantly strengthened and made independent of the system size $N$. These improved Trotter error bounds then allow us to re-analyze the quantum algorithms developed in the analysis of theorems \ref{theoremdi} and \ref{theoremnondi} for local observables and provide upper bounds on the required circuit depth independent of $N$.

\subsubsection{Simulation of Markovian dynamics\label{subsec:Simulation-of-Markovian}}

We next turn to the case of Markovian dynamics i.e., when $K_i^{(\sigma, \sigma')}(s - s') \sim \delta(s - s')$.
In this case, the reduced density matrix of the system $\rho_S(t)$ evolves as per $\dot{\rho}_S(t) = \mathcal{L}\rho_S(t)$ where $\mathcal{L}$ is a geometrically local Lindbladian given by
\begin{align}\label{eq:Lindbladian_dynamics}
\mathcal{L} =-i[H_S, \cdot] + \sum_{i = 1}^{N - 1}\mathcal{D}_{J_{i, i + 1}},
\end{align}
with $\mathcal{D}_J = J(\cdot) J^\dagger - \{\cdot, J^\dagger J \}/2$ being the dissipator corresponding to the jump operator $J$. The results stated so far in the previous subsections do not directly apply to this case since the delta-function kernel diverges at $s = s'$. While, to the best of our knowledge, simulating a general geometrically local Lindbladian with the number of geometrically local gates scaling almost as $\mathcal{O}(Nt)$ is still an open question, in this subsection we provide several results for the setting where the Jump operators $J_{i, i +1}$ commute with each other and their Hermitian conjugates (i.e., for $i \neq j$, $[J_{i, i + 1}, J_{j, j + 1}] = [J_{i, i + 1}, J^\dagger_{j, j + 1}] = 0$) but do not necessarily commute with the Hamiltonian. We remark that the requirement that all jump operators commute is automatically satisfied for models with a geometrically local Hamiltonian and single site jump operators, and may serve as a useful intermediate step in simulating more general non-commuting Lindbladian evolutions \citep{Vikram2025Accuracy,Zanardi2016dissipative}.

We first consider the non-dissipative case where each jump operator is Hermitian (i.e.~$J_{i, i + 1} = J_{i, i + 1}^\dagger$). 
We will assume the availability of Quantum Random Access Memory (Q-RAM) $O(\bm{x})$, which given a $D_{\bm{x}}$-dimensional classical vector $\bm{x}$ builds a unitary $O(\bm{x})$ satisfying
\begin{equation}
O(\bm{x})\ket{i,z}=\ket{i,z\oplus x_i},\text{  for  }i\in\{1, 2\cdots, D_{\bm{x}}\}.
\end{equation}
We can then show that, in this case, the Lindbladian can be simulated near-optimally in $N$ and $t$.


\begin{theorem}[Simulation of commuting, non-dissipative Markovian dynamics]
\label{theorem_markovian_non}
If the jump operators $J_{i,i+1}$ are Hermitian and commuting then there is an ensemble of unitary circuits $\{U_z\}_{z \sim Z}$, where $z \sim Z$ is the trajectory of a Wiener process which can be classically efficiently sampled from, such that $\mathbb{E}_z(U_z(\cdot) U_z^\dagger)$ approximates the channel $\mathcal{E}_S(t)$ within error $\delta$ and each circuit in the ensemble requires $\mathcal{O}(Nt\mathrm{polylog}(Nt/\delta)$ queries to the Q-RAM $O$, with $D_{\bm{x}}= \mathcal{O}(N^4t^4/\delta^4)$, and $\mathcal{O}(Nt\mathrm{polylog}(Nt/\delta))$ additional quantum gates.
\end{theorem}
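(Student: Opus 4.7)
The approach is to exploit the Hermiticity and mutual commutativity of the jump operators to rewrite the Lindblad channel as an ensemble average over unitary trajectories driven by independent Wiener processes, and to simulate each trajectory by combining a high-weak-order stochastic-Taylor discretisation of the noise with higher-order Trotterisation of the resulting (piecewise time-independent) Hamiltonian.

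First I would establish the stochastic unraveling. Because each $J_{i,i+1}$ is Hermitian, a direct It\^o calculation on $U_W \rho\, U_W^\dagger$ shows that
\[
e^{t\mathcal{L}}(\rho)=\mathbb{E}_W\bigl[\,U_W(t)\,\rho\,U_W(t)^\dagger\,\bigr],
\]
where $U_W$ solves the Stratonovich SDE
\[
dU_W=-iH_S\,U_W\,dt\;-\;i\sum_{i=1}^{N-1}J_{i,i+1}\,U_W\circ dW_i(t),
\]
with $\{W_i\}_{i=1}^{N-1}$ independent standard Wiener processes. Hermiticity of the $J_{i,i+1}$ is precisely what guarantees that every realisation $U_W$ is unitary, giving the Markovian analogue of the random-Hamiltonian representation used in Theorem~\ref{theoremnondi}.

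Next I would discretise each $W_i$ on $[0,t]$ into $T$ pieces of length $\Delta t=t/T$, freezing the noise within the $j$-th interval to its time-averaged value so that the step unitary becomes
\[
V_j=\exp\!\Big(-iH_S\,\Delta t\;-\;i\sum_{i=1}^{N-1}\Delta W_i^{(j)}J_{i,i+1}\Big),
\]
and $\tilde U_T=V_{T-1}\cdots V_0$. Commutativity of the $\{J_{i,i+1}\}$ means that the dissipative generator $\sum_i\mathcal{D}_{J_{i,i+1}}$ is integrated exactly by the noise term in each step, so the only sources of per-step error are the commutators $[H_S,\sum_i J_{i,i+1}]$ and the higher-order Stratonovich--Taylor remainders. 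A weak-order-$p$ stochastic Taylor scheme, supplemented by the Gaussian concentration $|\Delta W_i^{(j)}|=\tilde{\mathcal{O}}(\sqrt{\Delta t})$, should then give
\[
\bigl\|\mathbb{E}_W[\tilde U_T\rho\,\tilde U_T^\dagger]-e^{t\mathcal{L}}(\rho)\bigr\|_\mathrm{tr}\le\mathcal{O}\!\bigl(t(\Delta t)^{p}\bigr),
\]
and by choosing $p=\Theta(\log(Nt/\delta))$ this is $\delta/3$ already for $T=\tilde{\mathcal{O}}(t)$.

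Each $V_j$ is then compiled into a geometrically local circuit. The argument of the exponential is a geometrically local Hamiltonian whose operator norm is $\tilde{\mathcal{O}}(1)$ on typical trajectories, so a higher-order Trotter formula \cite{childs2019nearly} approximates $V_j$ to precision $\delta/(3T)$ with $\mathcal{O}(N\,\mathrm{polylog}(Nt/\delta))$ geometrically local gates. The coefficients $\Delta W_i^{(j)}$ and the auxiliary iterated stochastic integrals required by the weak-$p$ scheme (which, thanks to the commutativity of the $J_{i,i+1}$, reduce to Gaussian polynomials of the primary increments) are sampled classically, stored in the Q-RAM, and loaded via $\mathcal{O}(N)$ queries per step. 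Summing over the $T=\tilde{\mathcal{O}}(t)$ steps yields the advertised $\mathcal{O}(Nt\,\mathrm{polylog}(Nt/\delta))$ gate and query counts; the per-sample bit depth needed to make the sampling contribution to the channel error at most $\delta/3$, multiplied by the total sample count, is what accounts for the table size $D_{\bm{x}}=\mathcal{O}(N^4t^4/\delta^4)$.

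The main obstacle is the stochastic-discretisation step: establishing a Wong--Zakai-type error bound on the channel level with only polylogarithmic overhead in $T$ beyond the ballistic $t$ count. A naive Euler--Maruyama treatment is only weak-order one and would force $T$ to scale polynomially with $Nt/\delta$, destroying the near-optimal gate count. The resolution rests on the commuting structure of $\{J_{i,i+1}\}$, which decouples the dissipative generator from the Hamiltonian part of $\mathcal{L}$ and makes all iterated stochastic integrals in an arbitrary-order weak Taylor scheme expressible as explicit Gaussian polynomials of the primary Wiener increments, so the scheme can be extended to any weak order $p$ and implemented with Q-RAM-loaded classical samples, yielding $T=\tilde{\mathcal{O}}(t)$.
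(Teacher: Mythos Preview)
Your unraveling step is correct and matches the paper, but the core of your argument --- pushing a weak-order-$p$ stochastic Taylor scheme to $p=\Theta(\log(Nt/\delta))$ while keeping only polylogarithmic gate overhead per step --- is precisely the obstruction the paper identifies and circumvents rather than solves. The paper explicitly remarks (Sec.~\ref{subsec:Simulation-of-Markovian}) that ``it remains unclear if the Hamiltonian in Eq.~\eqref{eq:stochastic_schrodinger_eq} can be Trotterized to an arbitrary order,'' because the white-noise driving violates the bounded-kernel hypothesis $m<\infty$ of Lemma~\ref{lemma1}. Commutativity of the $J_{i,i+1}$ does make the diffusion vector fields commute, so pure-noise iterated Stratonovich integrals collapse; but the mixed drift--diffusion terms do not, since $[H_S,J_{i,i+1}]\neq 0$. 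At weak order $p$ the scheme must carry nested brackets of $H_S$ and the $J_{i,i+1}$ of depth $\mathcal{O}(p)$, and locality still leaves $\mathcal{O}(c^{p})$ distinct nonzero such terms per site, hence polynomial rather than polylogarithmic per-step overhead when $p=\Theta(\log(Nt/\delta))$. Your claim that the iterated integrals ``reduce to Gaussian polynomials of the primary increments'' fails once $H_S$ enters the multi-index, and in any case does not control the operator-side blow-up. This is essentially the same obstruction that limits the direct higher-order dilation of Lindbladians in Theorem~\ref{theorem3} to third order.

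The paper's route is structurally different. Commutativity of the $J_{i,i+1}$ is used not to simplify a Taylor scheme but to pass \emph{exactly} to the interaction picture with respect to the noise: $W_\xi(t,0)=\prod_i e^{-iJ_{i,i+1}\mathcal{W}_i(t)}$ is a product of commuting local unitaries determined by the sampled Wiener path, and the rotated Hamiltonian $\bar H_\xi(t)=W_\xi(0,t)\, H_S\, W_\xi(t,0)$ is then geometrically local, uniformly bounded, and --- because Brownian motion is a.s.\ H\"older-$1/4$ --- H\"older-$1/4$ continuous in $t$. This $\bar H_\xi(t)$ is simulated not by Trotterization but by the Lieb--Robinson block decomposition of Ref.~\cite{haah2021quantum} together with the truncated-Dyson-series algorithm of Ref.~\cite{Kieferov2019Simulating}, generalised in Lemma~\ref{Lemma:generalizing_time_dependent_simulation_closed} from differentiable to merely H\"older-continuous Hamiltonians. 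The Q-RAM stores the LCU coefficients $w_l(j\epsilon)$ of $\bar H_\xi$ on a time grid of spacing $\epsilon=\mathcal{O}\!\big((\delta/Nt)^{1/\alpha}\big)$ with $\alpha=1/4$; the resulting discretisation count $\mathcal{O}\!\big((Nt/\delta)^{4}\big)$ is the origin of $D_{\bm x}=\mathcal{O}(N^{4}t^{4}/\delta^{4})$, not the bit-depth-times-sample-count accounting you propose.
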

\noindent We remark that Q-RAM is required in most quantum algorithms that require loading input data onto quantum computers such as time-dependent Hamiltonian simulation algorithms \cite{Kieferov2019Simulating,low2019hamiltoniansimulationinteractionpicture,Berry2020Timedependent}, as well as other quantum algorithms for linear-algebra such as Harrow-Hassidim-Lloyd (HHL) algorithm \cite{Harrow2009Quantum}. However, this is indeed a stringent hardware reqiurement and thus we expect the algorithm proposed in theorem 5 to be suitable only for fault-tolerant quantum computers.

To establish this result, we begin similarly to the analysis of Theorem \ref{theoremnondi} by noting that in the non-dissipative setting, $\mathcal{E}_S(t) = \mathbb{E}_\xi[U_\xi(t, 0)(\cdot) U_\xi(0, t)]$ where $U_\xi(t, s) = \mathcal{T}\exp(-i\int_s^t H_\xi(\tau) d\tau)$ with \citep{Chenu2017classical}
\begin{equation}
H_\xi(t) = H_S + V_\xi(t).
\label{eq:stochastic_schrodinger_eq}
\end{equation}
Here $V_\xi(t)=\sum_{i=1}^{N-1}J_{i,i+1} \xi_i(t)$, with $\xi_i (t)$ is a zero-mean Gaussian white noise process satisfying $\langle \xi_i(t)\xi_{i'}(t')\rangle=\delta_{i,i'}\delta(t-t')$. However, Eq. (\ref{eq:stochastic_schrodinger_eq}) is still difficult to simulate, because any sample of the white-noise process $\xi_i(t)$ would be discontinuous with high probability. Furthermore, unlike for the non-Markovian models analyzed in the previous subsection, it remains unclear if the Hamiltonian in Eq.~\eqref{eq:stochastic_schrodinger_eq} can be Trotterized to an arbitrary order. However, since we assume that the jump operators $J_{i, i + 1}$ commute with each other, $V_\xi(t)$ can be seen to be a commuting Hamiltonian. Consequently, we first rotate to the interaction picture with respect to $V_\xi(t)$ and still obtain an effective geometrically local Hamiltonian. More specifically,
\begin{subequations}
\begin{align}
U_\xi(t, s) = W_\xi(t, 0) \bar{{U}}_\xi(t, s) W_\xi(0, s),
\end{align}
where
\begin{align}\label{eq:change_of_frame}
W_\xi(t, s)= \prod_{i= 1}^{N - 1} \mathcal{T}\exp\bigg(-i\int_s^t \xi_i(\tau) J_{i, i + 1}d\tau\bigg),
\end{align}
and $\bar{U}_{\xi}(t, s) = \mathcal{T}\exp(-i\int_s^t \bar{H}_\xi(\tau) d\tau)$ where
\begin{align}
\bar{H}_\xi(t) = \sum_{i= 1}^{N - 1} \bar{H}_{i, i + 1; \xi}(t),
\end{align}
\end{subequations}
where $\bar{H}_{i, i + 1; \xi}(t) = W_\xi(0, t) H_{i, i + 1} W_\xi(t, 0)$.
Since $W_\xi(t, s)$ is itself a product of commuting local unitaries [Eq.~\eqref{eq:change_of_frame}], $\bar{H}_{i,i + 1; \xi}(t)$ is geometrically local and bounded. Furthermore, since $W_\xi(t, s)$ involve \emph{integrals} of the white-noise process $\xi_i(t)$, $\bar{H}_{i,i + 1; \xi}(t)$ is also almost-surely continuous with respect to $t$ \cite{Peter2010Brownian}. However, its derivative can still be unbounded with high probability, which again makes it difficult to simulate $\bar{H}_\xi(t)$ with the existing quantum simulation toolbox. Nevertheless, we build upon the algorithm for time-dependent Hamiltonian simulation \citep{Kieferov2019Simulating} together with the Lieb-Robinson bounds \cite{haah2021quantum,Efficient2006Osborne} to obtain a near-optimal simulation algorithm for $\tilde{H}_\xi(t)$. 

In section \ref{subsubsec_commuting_dissipative_th5}, we also analyze a possible extension of this approach to the case where the jump operators $J_{i, i + 1}$ are not Hermitian which requires the appropriate quantum generalization of the white-noise process as a physical resource --- this, however, does not yield a digital quantum simulation algorithm with near-optimal scaling. Nevertheless, even when $J_{i, i + 1}$ are non-Hermitian but commuting (i.e., for $i \neq j$, $[J_{i, i + 1}, J_{j, j + 1}] = [J_{i, i + 1}, J^\dagger_{j, j + 1}] = 0$), we develop an approach that allows us to go beyond second-order Trotter error scalings. Unlike Theorem \ref{theorem_markovian_non}, this algorithm does not require a Q-RAM, and thus might be more suitable for pre-fault tolerant devices. We remind the reader that geometrically local Lindbladians can be Trotterized to the second order, similar to geometrically local Hamiltonians, since the second order Trotter formula only requires forward time evolution. However, any third (or higher) order Trotter formula necessarily requires backward time evolution \cite{Suzuki:1991jtk}, and thus cannot be directly applied to Lindbladians. Instead, we approach this problem by first trying to locally dilate the Lindbladian to third-order, and then Trotterize the resulting dilation. More specifically, as depcited in Fig.~\ref{fig:dilation}, given a time-step $\Delta t$, we construct a Hamiltonian $H_\text{dia}(\sqrt{\Delta t})$ on the $N$ system qubits and $N - 1$ ancillary qudits with $d = 5$ such that
\begin{align}\label{eq:dilated_Hamiltonian_markovian_third_order}
H_\text{dia}(\eta) = H_0  + \sum_{i = 1}^{N - 1} (S_{i, i+1}^\dagger O_i(\eta) + \text{h.c.}),
\end{align}
where $H_0$ is a Hermitian operator close to $H_S$, $S_{i,i+1}$ is a system local operator supported around the sites $i,i+1$, and $O_i(\eta)$ is an operator acting on the $i^\text{th}$ ancillary qudit such that for any fixed $N$,
\begin{align}\label{eq:definition_error_remainder_dilated}
\mathcal{R}_\text{dia}:= e^{\mathcal{L}\Delta t} - \mathcal{E}_\text{dia}(\Delta t) = \mathcal{O}(\Delta t ^4), 
\end{align}
where 
\[
\mathcal{E}_\text{dia}(\Delta t) = \text{Tr}_E(e^{-iH_\text{dia}(\sqrt{\Delta t})\Delta t} ((\cdot) \otimes \ket{\text{Vac}}\!\bra{\text{Vac}})e^{iH_\text{dia}(\sqrt{\Delta t})\Delta t}),
\]
with $\ket{\text{Vac}}$ denoting the $\ket{0}$ state of all the ancillary qudits, and $\text{Tr}_E$ denoting a trace over the ancillary qudits. Furthermore, we can consider the leading order contribution in $\mathcal{R}_\text{dia}$ with respect to $\Delta t$ i.e. the superoperator $\mathcal{G}_\text{dia}^{(4)}$ defined via
\[
\mathcal{R}_\text{dia} = \mathcal{G}_\text{dia}^{(4)}\Delta t^4 + \mathcal{O}(\Delta t^5).
\]
Our final result shows that the operators $O_i(\eta)$ in Eq.~\eqref{eq:dilated_Hamiltonian_markovian_third_order} can be chosen such that $\mathcal{G}_\text{dia}^{(4)}$ grows only linearly with $N$.

\begin{figure}
\includegraphics[width=1\columnwidth]{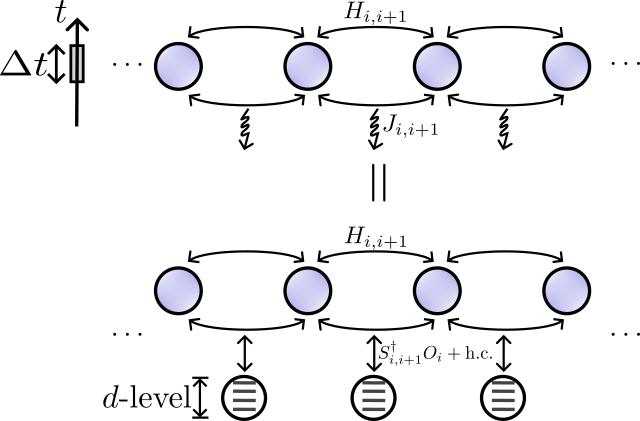}
\caption{The setting of the dilated Hamiltonian for simulating Lindbladian dynamics. The white circles represent the ancillas. $O$ is the local interaction between an ancilla and a pair of nearest neighbor system sites, which is a function of $\sqrt{\Delta t}$.}
\label{fig:dilation}
\end{figure}


\begin{theorem}[Third order dilation of Lindbladians]
If $[J_{i,i+1},J_{j, j + 1}] = 0$ and $[J_{i, i + 1}, J_{j, j + 1}^\dagger] = 0$ for $i \neq j$, then there is a choice of $O_i(\eta)$ which satisfies $\norm{O_i(\eta)} \leq \mathcal{O}(\eta^{-1})$ as $\eta \to 0$ such that $\norm{\mathcal{G}^{(4)}_\textnormal{dia}}_{\diamond} = \mathcal{O}(N)$.
\label{theorem3} 
\end{theorem}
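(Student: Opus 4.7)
My plan is to prove Theorem \ref{theorem3} by expanding both $e^{\mathcal{L}\Delta t}$ and $\mathcal{E}_\textnormal{dia}(\Delta t)$ in Taylor series in $\Delta t$, designing the dilation data $(H_0, S_{i,i+1}, O_i)$ so that the expansions agree through order $\Delta t^3$, and then showing that the leading $\Delta t^4$ residual is extensive, not super-extensive, in $N$. I would begin by writing the dilated dynamics as a Dyson series. Decomposing $H_\textnormal{dia}(\eta) = H_0 + V(\eta)$ with $V(\eta) = \sum_i (S_{i,i+1}^\dagger O_i(\eta) + \text{h.c.})$ and setting $\eta = \sqrt{\Delta t}$, each factor of $V$ contributes a norm $\sim 1/\sqrt{\Delta t}$, so a term with $k$ powers of $V$ in the Dyson expansion scales as $\Delta t^{k/2}$ after the time integrations. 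Tracing the ancillas against $\ket{\textnormal{Vac}}\!\bra{\textnormal{Vac}}$ selects those terms whose $O_i$–$O_i^\dagger$ sequences are balanced on each ancilla, and the surviving vacuum expectations take the form of Wick-like contractions. Capturing every contribution up to order $\Delta t^3$ requires retaining at most six factors of $V$, which can excite at most three quanta on any single ancilla; the choice $d = 5$ therefore accommodates the resulting local Hilbert space with room to spare for the auxiliary structure of $O_i(\eta)$.

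Next, I would construct $(H_0, S_{i,i+1}, O_i(\eta))$ by matching against $e^{\mathcal{L}\Delta t}$ order by order. At $\Delta t^1$ the standard Stinespring relation fixes $S_{i,i+1} = J_{i,i+1}$ and the leading $\eta^{-1}$ component of $O_i$ to a qudit ladder operator that reproduces $\mathcal{D}_{J_{i,i+1}}$ through the $V^2$ term, while $H_0$ coincides with $H_S$ at leading order. At orders $\Delta t^2$ and $\Delta t^3$, additional subleading components of $O_i(\eta)$ (scaling as $\eta^0$ and $\eta^1$) together with Lamb-shift-type corrections to $H_0$ must be tuned so that the vacuum Wick sums coincide with the corresponding coefficients of $\mathcal{L}^k/k!$. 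The essential simplification from the hypothesis $[J_{i,i+1}, J_{j,j+1}] = [J_{i,i+1}, J_{j,j+1}^\dagger] = 0$ for $i\ne j$ is that cross-site contractions factorize into independent single-site problems, reducing the matching to a finite system of algebraic equations on each ancilla separately. I expect this system to admit an explicit solution within a 5-level qudit, with the subleading components contributing only lower singular powers of $\eta$, so that the total still obeys $\|O_i(\eta)\| \le \mathcal{O}(\eta^{-1})$.

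Finally, to bound $\|\mathcal{G}^{(4)}_\textnormal{dia}\|_\diamond$, I would revisit the fourth-order contributions of both expansions and classify terms by the spatial cluster of lattice sites they touch. A generic quadruple of local superoperators in $\mathcal{L}^4$ produces a spatially disconnected contribution that, counted naively, is $\mathcal{O}(N^4)$ in number; however, because the third-order matching was achieved by purely local identities, the same local identities reproduce these disconnected contributions in $\mathcal{E}_\textnormal{dia}(\Delta t)$ as well, so they cancel in the residual. What survives in $\mathcal{G}^{(4)}_\textnormal{dia}$ are connected clusters whose support has bounded diameter; because such clusters are indexed by at most a single base site with constantly many internal configurations, their total diamond norm is bounded by $\mathcal{O}(N) \cdot \mathcal{O}(1) = \mathcal{O}(N)$, as claimed.

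The hardest step will be the third-order matching: enumerating the surviving Wick contractions from the $V^6$ term along with the mixed $V^4 H_0$, $V^2 H_0^2$, and pure $H_0^3$ contributions, and solving the resulting linear system for the subleading components of $O_i(\eta)$ and the correction to $H_0$. Verifying that the solution respects the $d = 5$ qudit truncation and the $\mathcal{O}(\eta^{-1})$ norm bound is the key consistency check. A secondary subtlety is confirming, via cumulant-style bookkeeping, that the disconnected $\mathcal{O}(N^4)$ contributions at order $\Delta t^4$ cancel completely rather than leaving a residual $\mathcal{O}(N^2)$ or larger remainder — this is precisely what guarantees the claimed linear-in-$N$ diamond-norm bound on $\mathcal{G}^{(4)}_\textnormal{dia}$.
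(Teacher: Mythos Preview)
Your overall strategy---Dyson-expand both sides, match through $\Delta t^3$, then argue the $\Delta t^4$ residual is extensive---is exactly the paper's strategy. However, one step of your plan is incorrect and would cause the construction to fail if carried out as stated.

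You assert that the commutativity hypothesis makes ``cross-site contractions factorize into independent single-site problems, reducing the matching to a finite system of algebraic equations on each ancilla separately.'' This is not what happens. When the paper builds the many-body $H_\textnormal{dia}^{(3)}$ by simply summing the single-jump-operator construction over $i$, the third-order remainder $\mathcal{G}_\textnormal{dia}^{(3)}$ is \emph{not} zero: it contains genuine two-site cross terms of the schematic form $\sum_{i<j} J_i^\dagger J_j^\dagger H_S J_i J_j$ and related orderings. These are not canceled by any per-ancilla data $O_i(\eta)$; they must be canceled by adding new terms to $H_0$ that involve \emph{pairs} of jump operators, specifically nested commutators like $[[J_i,H_S],J_j^\dagger]J_j$. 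The role of the commutativity hypothesis is not to make cross terms vanish, but to guarantee that the surviving cross terms can be rewritten as such nested commutators, which are geometrically local (nonzero only when $i$ and $j$ are both adjacent to the support of a single term $H_{m,m+1}$). Without commuting jump operators, the third-order cross terms cannot be canceled by any local compensator, and the construction stalls. So your third-order matching cannot be done ``on each ancilla separately''; you must identify and cancel the pairwise cross terms in $H_0$.

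For the $\mathcal{O}(N)$ bound on $\mathcal{G}_\textnormal{dia}^{(4)}$, your cumulant-style intuition is morally right, but the paper does not execute it abstractly. Instead it groups the fourth-order terms by the number of $H_S$ factors they contain (zero through four) and, for each group, verifies by brute-force symbolic computation that any term in which one of the participating operators commutes with all the others cancels. The nontrivial cases are the groups with one or two $H_S$ factors, where one must check several commutation sub-cases (e.g., two jump operators near one Hamiltonian term but far from another). Your ``disconnected contributions cancel because the third-order matching was local'' is the right heuristic, but since the third-order matching itself already involved two-site corrections, turning this into a clean linked-cluster proof requires care; the paper simply does the case analysis directly.
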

\noindent This theorem indicates that, to the lowest order in $\Delta t$, our dilation procedure attains the same error scaling with respect to both $N$ and $\Delta t$ as a $3^\text{rd}$ Trotter formula. Since the resulting dilated Hamiltonian is geometrically local, we can Trotterize it as a lattice Hamiltonian using the results in Ref.~\cite{childs2019nearly}. However, since $\Vert O_i(\sqrt{\Delta t})\Vert = \mathcal{O}(\Delta t^{-1/2})$, we need to choose a $7^\text{th}$ order Trotter product formula to ensure that the Trotter error also scales as $\mathcal{O}(N \Delta t^4)$.  The above theorem then suggests that, when simulating the Lindbladian for a total time $t$, by dividing this evolution time into $T$ time-steps corresponding to $\Delta t = t / T$, the total simulation error would scale as $\mathcal{O}(T\times N(t/T)^4) = \mathcal{O}(Nt^4/T^3)$ and thus can be guaranteed to be $\leq \delta$ by choosing $T =\Theta(t(Nt/\delta)^{1/3})$. This simulation algorithm would thus yield a circuit with  $\mathcal{O}(t(Nt/\delta)^{1/3})$ depth and $\mathcal{O}(Nt(Nt/\delta)^{1/3})$ geometrically local gates that simulates the Markovian master equation.

While our analysis suggests that we can go beyond the error scaling obtained from second order Trotterization, we remark that we have only analyzed the error to the leading order in the discretization time step $\Delta t = t / T$ --- we leave a more rigorous analysis of the remainder that includes all orders as an open problem for future work. We also leave open the possibility to extend this dilation and Trotterization scheme to higher orders --- the main challenge here would be to systematically construct a geometrically local dilation scheme that achieves an error $\mathcal{O}(N\Delta t^{p + 1})$ per time-step $\Delta t$ for any desired order $p$. While recent work has provided a systematic way of constructing a higher-order dilation scheme for Lindbladians \cite{Linlin2024Simulating}, the resulting remainder scales as $\mathcal{O}(N^{p +1} \Delta t^{p + 1})$ instead of the desired scaling of $\mathcal{O}(N\Delta t^{p + 1})$. Furthermore, the resulting dilated Hamiltonian is not necessarily geometrically local, simulating which would result in a polynomial overhead with respect to the system size $N$ when accounting for geometrically local gates. 

\section{Proof outline of the results\label{sec:Proof-outline-of}}

In this section, we outline the proof strategy for our main results and describe the simulation algorithms. The details can be found in the supplemental material \cite{SM}.

\subsection{Trotterization error analysis (Lemma \ref{lemma1})\label{subsec:Proof-of-Lemma}}

Our starting
point is a time-independent description of the non-Markovian lattice model in Eqs. (\ref{eq:lattice_model}) and \eqref{eq:Ai_exp_rep}. Consider the  Hamiltonian $\mathcal{H}_{SE}$ given by
\begin{subequations}
\begin{equation}
\mathcal{H}_{SE}=H_S+V_{SE}+H_E,\label{eq:Schrodinger_picture_lattice_model}
\end{equation}
where $H_S$ is defined in Eq. (\ref{eq:sec2HSlattice}), $H_E$ is given by
\begin{align}\label{eq:environment_Hamiltonian}
H_E = \sum_{i = 1}^{N - 1} \int_{-\infty}^\infty \omega a_{i, \omega}^\dagger a_{i,\omega}d\omega,
\end{align}
$V_{SE}$ is given by,
\begin{align}
V_{SE} = \sum_{i = 1}^{N - 1} J_{i, i + 1}^\dagger A_i + \text{h.c.},
\end{align}
\end{subequations}
with $A_i$ being the following time independent operator acting on the environment:
\begin{align}
A_i=2\pi\int_{-\infty}^\infty \hat{v}_i(\omega) a_{i,\omega} d\omega.
\end{align}
Since the operators $a_{i, \omega}$ in Eq.~\eqref{eq:Ai_exp_rep} satisfy $[a_{i, \omega}, a^\dagger_{i', \omega'}] = \delta_{i, i'} \delta(\omega - \omega')$, it can be seen that the Hamiltonian in Eq. (\ref{eq:lattice_model}) is obtained by transforming $\mathcal{H}_{SE}$ into the interaction picture with respect to $H_E$. Therefore, we can express the unitary generated by $H_{SE}(t)$, $U_{SE}(t, 0) = \mathcal{T}\exp(-i\int_0^t H_{SE}(s) ds)$, as
\begin{align}\label{eq:interaction_picture}
U_{SE}(t, 0) = e^{iH_E t} e^{-i\mathcal{H}_{SE} t}.
\end{align}
We next decompose $\mathcal{H}_{SE}$ as $\mathcal{H}_{SE} =\mathcal{H}_\mathrm{o}+\mathcal{H}_\mathrm{e}$ where
\begin{equation}
\mathcal{H}_\mathrm{o}=\sum_{i\in\mathrm{odd}}\mathcal{H}_{i,i+1},
\mathcal{H}_\mathrm{e}=\sum_{i\in\mathrm{even}}\mathcal{H}_{i,i+1},
\end{equation}
with the definition
\begin{equation}
    \mathcal{H}_{i,i+1}=H_{i,i+1}+(J_{i,i+1}^\dagger A_i+\text{h.c.})+\int_{-\infty}^\infty \omega a_{i,\omega}^\dagger a_{i,\omega}d\omega.
\end{equation}
Applying the $P^\text{th}$ order time-independent Trotter formula in Eq.~\eqref{eq:trotter_formula} to the unitary evolution under $\mathcal{H}_{SE}$ in the time interval $[j \Delta t, (j + 1) \Delta t]$, where $\Delta t=t/T$, we obtain
\begin{equation}
\begin{aligned} & e^{-i\mathcal{H}_{SE}\Delta t}=\prod_{i=1}^{s_P}e^{-i\mathcal{H}_{\mathrm{e}}(f_{i}-f_{i-1})\Delta t}e^{-i\mathcal{H}_{\mathrm{o}}(e_{i}-e_{i-1})\Delta t}+\mathcal{R}_\mathrm{tro},
\end{aligned}
\label{eq:time_independent_trotter_formula}
\end{equation}
where
$\mathcal{R}_\mathrm{tro}$ denotes the error remainder and $e_{i},f_{i}$ are defined below Eq. (\ref{eq:Sec2trotterexpression}). Now, the Trotter decomposition $U_\text{tro}(t; T)$ provided in Eq.~\eqref{eq:trotterization_expression} is obtained by using Eq.~\eqref{eq:time_independent_trotter_formula} and transforming back into the interaction picture with respect to $H_E$ via Eq.~\eqref{eq:interaction_picture} i.e.,
\[
U_\text{tro}(t; T) = e^{iH_E T \Delta t}\bigg(\prod_{i=1}^{s_P}e^{-i\mathcal{H}_{\mathrm{e}}(f_{i}-f_{i-1})\Delta t}e^{-i\mathcal{H}_{\mathrm{o}}(e_{i}-e_{i-1})\Delta t} \bigg)^T.
\]


Analysis of the error between the Trotter formula $U_\text{tro}(t; T)$ and the exact evolution $U_{SE}(t, 0)$ requires a more careful analysis. While an explicit expression for the remainder $\mathcal{R}_\mathrm{tro}$ of the time dependent Trotter formula (Eq.~\eqref{eq:time_independent_trotter_formula}) has been derived in Ref. \citep{childs2019nearly}, applying it directly to $\mathcal{H}_{SE}$ would result in the remainder containing nested commutators with the environment Hamiltonian $H_E$. We remark that the Hamiltonian $H_E$ is a highly divergent unbounded Hamiltonian even when compared to $V_{SE}$. For instance, when restricted to the $m$-particle subspace of the bosonic environment, then $V_{SE}$ would be a bounded operator. However, even when restricting to a fixed number of particles in the bosonic environment, $H_E$ would still be unbounded --- physically, this is simply a consequence of $H_E$ [Eq.~\eqref{eq:environment_Hamiltonian}) containing arbitrarily high oscillation frequencies. Consequently, to obtain a finite bound on $U_\text{tro}(t; T) - U_{SE}(t, 0)$, we would preferably require a remainder expression which does not contain $H_E$.

In the supplemental material, we show that by carefully analyzing the transformation of the local error-sum representation for $\mathcal{R}_\text{tro}$ defined in Ref.~\cite{childs2019nearly} from the Schroedinger picture to the interaction picture, we can in fact obtain a remainder expression that is expressible entirely in terms of nested commutators of $H_{i,i+1},J_{i,i+1}$ and the operators $A_{i}(t)$
(as well as their Hermitian conjugate). 
Even with this expression, the remainder still contains the unbounded operators $A_i(t)$ and the strategy used in Refs.~\cite{childs2019nearly,Childs2021trotter} to  bound $\lVert \mathcal{R}_\mathrm{tro}\rVert$ would yield an infinite upper bound.
To obtain a finite error bound, we instead use the fact that the initial environment state is a Gaussian state and the environment is finally traced out. Consequently, we can use Wick's theorem to contract and remove the unbounded operators $A_i(t)$ from the remainder expression and effectively derive a local-error sum representation for the Trotter error that is entirely in terms of the bounded operators $H_{i, i + 1}$ and $J_{i, i +1}$. By exploiting the locality of the system Hamiltonian and jump operators, we can then obtain an upper on the Trotter error that scales as $\mathcal{O}(P! s_P^P NT(\Delta t)^{P  +1})$, where $s_P$ is the number of stages in the Trotter formula [Eq.~\eqref{eq:trotter_formula}]. We remark that the $P!$ prefactor in this upper-bound arises due to the Wick contractions, and can be physically interpreted to be a consequence of the bosonic nature of the environment. Finally, we remark that while we restrict our analysis to a 1D problem, this analysis can easily be extended to higher dimensional problems in which the Trotter formula used decomposes the Hamiltonian as a sum of more than 2 operators which are individually commuting Hamiltonians.

\subsection{Quantum algorithm for general non-Markovian lattice model (Theorem \ref{theoremdi})\label{subsec:Proof-of-Theorem-dissipative}}

Our algorithm comprises of three steps, as shown in Fig.~\ref{fig:figure2}. \emph{First}, we discretize the continuum of
bosonic modes describing the environment into a finite set of modes. \emph{Second}, we apply Lemma \ref{lemma1} to Trotterize the non-Markovian dynamics. Finally, we truncate each bosonic
mode to a $d$-level system (which we call quditization) so that the resulting Hamiltonian
can be implemented on a digital quantum computer. By ensuring that the incurred error in each of these steps is at most $\delta/3$, 
the overall simulation error remains below $\delta$. Throughout this subsection, we assume that the coupling function $v_i(t)$ satisfies Eq. (\ref{eq:smoothness_assumption}). We further assume that $v_i(t)$
is additionally compact and
\[
\text{supp}(v_i) \subseteq [-r, r] \ \forall i,
\]
where $\text{supp}(v_i)$ is the set of $t$ such that $v_i(t) \neq 0$.  We will also show that, since $v_i(t)$ decays superpolynomially with $t$ as $\abs{t} \to \infty$, we can always approximate it with such a function.

\subsubsection{Discretizing bosonic environment}
\begin{figure}
\includegraphics[width=1\columnwidth]{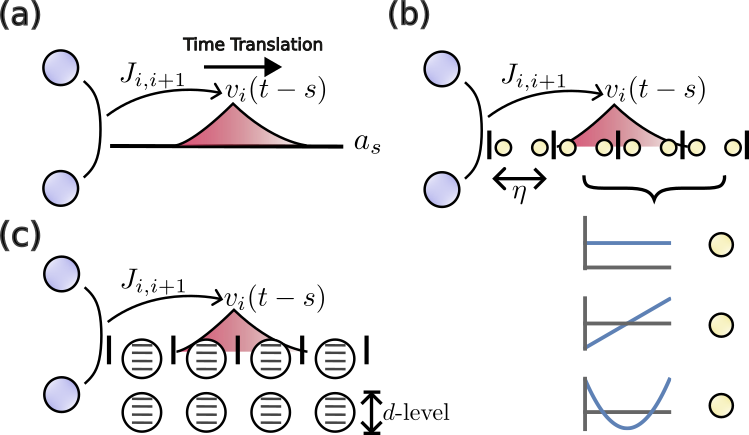}

\caption{Flowchart of the algorithm in proving Theorem \ref{theoremdi},
which comprises of three steps: discretization, Trotterization and quditization (the Trotterization step is omitted here). (a) Non-markovian
dynamics for a single jump operator in the interaction picture: the
system moves along a one-dimensional line hosting the continuous
bosonic modes $a_{\tau}$. (b) Discretization of the bosonic
environments: The entire line is divided into small segments of
width $\eta$. In each segment, we introduce a new set of bosonic modes
$b_{j}^{n}$ to replace $a_{\tau}$ as in Eq. (\ref{eq:definition_of_bosonicModeB}).
Yellow circles denote the newly introduced modes $b_{j}^{n}$, and the inset shows their structure for the first
few indices $j$. (c) Quditization of bosonic modes: Each
mode is truncated to dimension $d$ and encoded as a qudit,
here depicted by the white circle.}
\label{fig:figure2}
\end{figure}

A common strategy to discretize the bosonic environment to simulate the open system dynamics is to employ star-to-chain transformation \citep{Chin2010exact,trivedi2022descriptioncomplexitynonmarkovianopen}. In the star-to-chain transformation, the bosonic environment is effectively discretized in the frequency-domain. However, with this approach, it makes it difficult to rigorously exploit the fact that $v_i(t)$ has compact support in the time-domain, as well as optimally truncate the Hilbert space of the resulting bosonic modes. Instead, we provide a strategy to controllably discretize the bosonic environment directly in the time-domain.

Consider one term in the system-environment interaction Hamiltonian 
$J_{i,i+1}^{\dagger}\int_{-\infty}^\infty  v_i(t-s)a_{i,s}ds+\text{h.c.}$ --- a particularly convenient visual interpretation of this system-environment interaction, shown in Fig.~\ref{fig:figure2}a, is to consider the continuum of modes $a_{i,s}$
as a field theory on a one-dimensional line with position indexed by $s$. The system can be considered to move along the line at a constant velocity of 1 and at time $t$ couple to the points on the line within a temporal window $v_i(t - s)$ and via the system operator $J_{i,i+1}^{\dagger}$. Since we only consider a total evolution time
of $t$, only the bosonic modes $a_{i,s}$ with $s\in[- r , r +t]$ will participate in the dynamics.

To discretize the environment, we therefore divide the one-dimensional line into contiguous small segments of width $\eta$, as illustrated in Fig. \ref{fig:figure2}b. These
segments are labeled by integers $n\in[-\lceil r /\eta\rceil,\lceil( r +t)/\eta\rceil]$ --- the index $n$ corresponds to segment in $[n\eta, (n + 1)\eta)$.
We define a set of orthonormal polynomials $\{P_{j}^n(s)\}$ which are non-zero only in the segment corresponding to $n$:
\begin{equation}
P_{j}^{n}(s)=\begin{cases}
0, & s<n\eta;\\
\sqrt{\frac{2j+1}{\eta}}L_{j}(2\frac{s-n\eta}{\eta}-1), & n\eta\le s<(n+1)\eta;\\
0, & s\geq(n+1)\eta,
\end{cases}\label{eq:explicit_expression_of_Pjn}
\end{equation}
where $L_j$ is the standard $j^\text{th}$ order Legendre polynomial. The polynomials $P_j^n$ satisfy the orthonormality relations:
\begin{equation}
\int_{-\infty}^{\infty}P_{j}^{n}(s)P_{j'}^{n'}(s)ds=\delta_{n, n'}\delta_{j,j'}.\label{eq:orthonormality_of_Pjn_in_main_text}
\end{equation} 
Now, for every $t$, the coupling function $v_i(t-s)$ as a function of $s$ can then be expanded in the basis of $\{P_{j}^n(s)\}$ as
\begin{equation}\label{eq:expansion_v_i_basis}
v_i(t-s)=\sum_{j=0}^{\infty}C_{i,j}^{n}(t)P_{j}^{n}(s)\ \mathrm{for}\ s\in[n\eta,(n+1)\eta),
\end{equation}
where $C_{i,j}^{n}(t)$ are obtained by using the orthonormality relations in Eq.~\eqref{eq:orthonormality_of_Pjn_in_main_text}:
\begin{equation}
C_{i,j}^{n}(t)=\int_{n\eta}^{(n+1)\eta}v_i(t-s)P_{j}^{n}(s)ds.\label{eq:definition_of_cjn_main}
\end{equation}
We note that the norm of $C_{i,j}^{n}(t)$ can be bounded from Cauchy-Schwardz inequality. More explicitly,
\begin{equation}
\begin{aligned}
&|C_{i,j}^{n}(t)|\nonumber\\
&\leq\bigg[\int_{n\eta}^{(n+1)\eta}|v_i(t-s)|^{2}ds\int_{n\eta}^{(n+1)\eta}\big(P_{j}^{n}(s)\big)^{2}ds\bigg]^{\frac{1}{2}},\\&\leq C_0\sqrt{\eta}, \label{eq:upper_bound_on_cijn_from_Cauthy_Schwardz}
\end{aligned}
\end{equation}
where we have used the fact that $P^n_j$ has a $L^2$ norm $= 1$ and $C_0$ is defined in Eq. (\ref{eq:smoothness_assumption}).

Using Eq.~\eqref{eq:expansion_v_i_basis}, we can rewrite the bosonic operator $A_i(t)$ in the system-environment interaction Hamiltonian $V_{SE}(t)$ as
\begin{equation}
\begin{aligned} A_i(t)=&\int_{-\infty}^\infty  v_i(t-s)a_{i,s}ds,\\
= & \sum_{n}\sum_{j=0}^{\infty}C_{i,j}^{n}(t)\int_{n\eta}^{(n+1)\eta} P_{j}^{n}(s)a_{i,s}ds.
\end{aligned}
\end{equation}
The right hand suggests us to define a series of new bosonic modes
\begin{equation}
b_{i,j}^{n}=\int_{n\eta}^{(n+1)\eta}P_{j}^{n}(s)a_{i,s}ds.\label{eq:definition_of_bosonicModeB}
\end{equation}
The orthonormality of $\{P_{j}^{n}\}$ [Eq. (\ref{eq:orthonormality_of_Pjn_in_main_text})] imply that the annihilation operators $b_{i,j}^{n}$ satisfy the canonical commutation relation
\begin{equation}
[b_{i,j}^{n},b_{i',j'}^{n'\dagger }]=\delta_{n,n'}\delta_{i,i'}\delta_{j,j'}.
\end{equation}
With the newly introduced modes $b_{i,j}^{n}$, we obtain
\begin{equation}
A_i(t)=\sum_{n}\sum_{j=0}^{\infty}C_{i,j}^{n}(t)b_{i,j}^{n}.\label{eq:Hamiltonian_in_new_modes}
\end{equation}

We remark that uptil this point, we have simply reformulated the representation of the environment and not introduced any approximation. Consequently, Eq. (\ref{eq:Hamiltonian_in_new_modes}) still involves
infinitely many bosonic modes. Next, we cutoff each local polynomial basis
$\{P_{j}^{n}\}$ at a maximum degree $j_{\mathrm{max}}$. The resulting discretized bosonic modes $\tilde{A}_i(t)$ then takes the form
\begin{equation}
\tilde{A}_i(t)=\sum_{n}\sum_{j=0}^{j_{\mathrm{max}}}C_{i,j}^{n}(t)b_{i,j}^{n},\label{eq:discretized_Hamiltonian}
\end{equation}
where $C_{i,j}^{n}(t)$ is defined in Eq. (\ref{eq:definition_of_cjn_main}).
The parameter $j_\mathrm{max}$ will be chosen later to control the error of this approximation. This discretized environmental bosonic operator $\tilde{A}_i(t)$ corresponds to a new memory kernel
\begin{equation}
\begin{aligned}
\tilde{K_i}^{(-,+)}(s,s') & =\mathrm{Tr}(\tilde{A}_i(s)\tilde{A}_i^\dagger(s')\rho_E(0)),\\
 & =\sum_{n}\sum_{j=0}^{j_{\mathrm{max}}} C_{i,j}^{n}(s)C_{i,j}^{n*}(s').\label{eq:definition_of_new_memory_kernel_maintext}
\end{aligned}
\end{equation}
In the next lemma, we quantify the error between the approximate kernel $\tilde{K}_i^{(-, +)}(s, s')$ and the original kernel $K_i^{(-, +)}(s, s') = K_i^{(-, +)}(s - s')$.

\begin{Lemma}[Error bound on the Kernel difference]
For a fixed $j_{\mathrm{max}}$, the error between $K_i^{(-,+)}(s,s')$ and $\tilde{K}_i^{(-,+)}(s,s')$ can be bounded by 
\begin{equation}
\begin{aligned}
&|\tilde{K}_i^{(-,+)}(s,s')-K_i^{(-,+)}(s,s')|\leq\frac{2\eta^{j_\mathrm{max}+1}D_{j_\mathrm{max}+1}C_0 r }{(j_\mathrm{max}+1)!},
\end{aligned}
\end{equation}
where $C_\nu$, $D_\mu$ are defined in Eq.~(\ref{eq:smoothness_assumption}). Furthermore, 
\begin{equation}
\begin{aligned}
&\int_{0}^{t}ds\int_{0}^{t}ds'|\tilde{K}_i^{(-,+)}(s,s')-K_i^{(-,+)}(s,s')|\\&\quad\quad\leq4t\eta^{j_{\mathrm{max}}+1}D_{j_{\mathrm{max}}+1}\frac{2 r +\eta}{(j_{\mathrm{max}}+1)!}C_{0} r ,\label{eq:bounding_integral_kernel}
\end{aligned}
\end{equation}
and
\begin{equation}
\begin{aligned}
\int_{-\infty}^{\infty}|\tilde{K}_{i}^{(-,+)}(s,s')|ds'\leq2C_{0}^{2}(j_{\mathrm{max}}+1)(2 r +1)^{2}.
\end{aligned}\label{eq:Upperbound_on_L1_norm_of_tilde_K}
\end{equation}
\label{lemma:EB_Kernel}
\end{Lemma}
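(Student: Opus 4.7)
The plan exploits that both $K_i^{(-,+)}$ and $\tilde K_i^{(-,+)}$ are Gram-type functionals of the same polynomial basis, so their difference reduces to the tail of a Legendre expansion.

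\textbf{Parseval reduction.} Because the environment state is vacuum, $K_i^{(-,+)}(s,s') = \int v_i(s-\tau)v_i^*(s'-\tau)\,d\tau$. Splitting the $\tau$-integral segment by segment and using that $\{P_j^n\}_{j\ge 0}$ is a complete orthonormal $L^2$-basis of $[n\eta,(n+1)\eta]$ [Eq.~(\ref{eq:orthonormality_of_Pjn_in_main_text})], Parseval gives
\[
K_i^{(-,+)}(s,s') = \sum_n \sum_{j=0}^{\infty} C_{i,j}^{n}(s)\,C_{i,j}^{n*}(s').
\]
Comparing with Eq.~(\ref{eq:definition_of_new_memory_kernel_maintext}), the approximation error is exactly the truncated tail, which I would compactly rewrite as $\int R_i(s,\tau)\,v_i^*(s'-\tau)\,d\tau$, with $R_i(s,\tau) = v_i(s-\tau) - \sum_{j=0}^{j_\mathrm{max}} C_{i,j}^{n}(s) P_j^n(\tau)$ the Legendre-projection residual on the segment containing $\tau$.

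\textbf{Residual estimate and pointwise bound.} The non-routine step is showing that $R_i(s,\cdot)$ inherits the smoothness of $v_i$. Since the $L^2$-projection minimizes the $L^2$ error among polynomials of degree $\le j_\mathrm{max}$, I would compare $R_i$ on a segment to the order-$j_\mathrm{max}$ Taylor polynomial of $v_i(s-\cdot)$ about a segment endpoint: Eq.~(\ref{eq:smoothness_assumption}) and the Lagrange remainder give a pointwise bound $\eta^{j_\mathrm{max}+1} D_{j_\mathrm{max}+1}/(j_\mathrm{max}+1)!$ on the segment, so $\|R_i(s,\cdot)\|_{L^2(\mathrm{segment})} \le \sqrt{\eta}\,\eta^{j_\mathrm{max}+1}D_{j_\mathrm{max}+1}/(j_\mathrm{max}+1)!$. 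Summing squared $L^2$ norms over the at most $\mathcal{O}(r/\eta)$ segments overlapping $\mathrm{supp}\,v_i(s-\cdot)=[s-r,s+r]$ yields $\|R_i(s,\cdot)\|_{L^2}^2 \le (2r+2\eta)\,\eta^{2(j_\mathrm{max}+1)} D_{j_\mathrm{max}+1}^2/((j_\mathrm{max}+1)!)^2$. Cauchy--Schwarz in the $\tau$-integral, together with $\|v_i(s'-\cdot)\|_{L^2} \le C_0\sqrt{2r}$ (from the $C_\nu$ bound and the compact support $[-r,r]$), then delivers the stated pointwise estimate.

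\textbf{Integrated and $L^1$ bounds.} For Eq.~(\ref{eq:bounding_integral_kernel}) I would use that both $K_i^{(-,+)}$ and $\tilde K_i^{(-,+)}$ vanish whenever $|s-s'|>2r+\eta$, because the integrands are nonzero only when segments simultaneously overlap $[s-r,s+r]$ and $[s'-r,s'+r]$. Consequently the inner $s'$-integral reduces to a strip of length $\le 2(2r+\eta)$ around $s$, replacing one factor of $t$ by the strip width and yielding the stated linear-in-$t$ scaling. For Eq.~(\ref{eq:Upperbound_on_L1_norm_of_tilde_K}) I bypass Parseval and work directly from Eq.~(\ref{eq:definition_of_new_memory_kernel_maintext}): the elementary bound $|C_{i,j}^{n}(s)|\le C_0\sqrt{\eta}$ from Eq.~(\ref{eq:upper_bound_on_cijn_from_Cauthy_Schwardz}), the nonzero-segment count $\mathcal{O}(r/\eta)$ for fixed $s$, and $\int|C_{i,j}^{n}(s')|\,ds' \le C_0\sqrt{\eta}(2r+\eta)$ (the $s'$-support of $C_{i,j}^{n}$ has length $\le 2r+\eta$) combine, after summing over $j\in\{0,\dots,j_\mathrm{max}\}$ and invoking $(2r+\eta)(2r+2\eta)\le (2r+1)^2$ for $\eta\le 1/2$, to give the stated bound. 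The only non-routine ingredient throughout is the Taylor-polynomial competitor in the previous paragraph, since a naive Cauchy--Schwarz applied directly to the Legendre coefficients would throw away the smoothness of $v_i$ and yield only $\mathcal{O}(1)$ rather than the factorially small factor $\eta^{j_\mathrm{max}+1}/(j_\mathrm{max}+1)!$.
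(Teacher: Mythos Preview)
Your proposal is correct and follows essentially the same route as the paper: identify the kernel difference with the tail of the segmentwise Legendre expansion, bound that residual by comparing to the degree-$j_{\max}$ Taylor polynomial (this is exactly the paper's non-routine step), and then use the compact $s'$-support of width $2r+\eta$ for the integrated bound and the elementary $|C_{i,j}^n|\le C_0\sqrt{\eta}$ counting for the $L^1$ bound. The only cosmetic difference is that you pass through $L^2$ and Cauchy--Schwarz to get the pointwise estimate, whereas the paper bounds $|v_i(s-\tau)|$ by $C_0$ and integrates the Taylor remainder directly over $[s-r,s+r]$; both produce the same $2C_0 r\,\eta^{j_{\max}+1}D_{j_{\max}+1}/(j_{\max}+1)!$ up to harmless constants.
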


\noindent Since the reduced dynamics of the system
is completely determined by the memory kernels, we expect that if $\tilde{K}_i^{(-, +)}$ and $K_i^{(-, +)}$ are approximately equal, then so are the reduced states obtained on evolving the system with the two different memory kernels. More specifically, we consider the channel $\tilde{\mathcal{E}}_S(t)$ given by 
\begin{equation}
\tilde{\mathcal{E}}_S(t)=\mathrm{Tr}_E(\tilde{U}_{SE}(t,0)((\cdot)\otimes\rho_E(0))\tilde{U}_{SE}(0,t)),
\end{equation}
where $\tilde{U}_{SE}(t,0)=\mathcal{T}\exp(-i\int_0^t\tilde{H}_{SE}(\tau)d\tau)$ and
\[
\tilde{H}_{SE}(t) = \sum_{i=1}^{N-1}\left(H_{i,i+1}+(J_{i,i+1}^\dagger \tilde{A}_i(t)+\text{h.c.})\right).
\] 
Note that $\tilde{H}_{SE}(t)$ is the system-environment Hamiltonian corresponding to an environment with memory kernel $\tilde{K}_i^{(-, +)}$. Ref.~\citep{huang2024unifiedanalysisnonmarkovianopen} derived the following lemma providing a general upper bound on the error between the channels generated by different memory kernels in terms of the deviation between the two kernels.
\begin{Lemma}[From Ref.~\citep{huang2024unifiedanalysisnonmarkovianopen}, Eq. (2.43) in Sec. 2.3]
The difference between the reduced dynamics associated with different memory kernels can be upper bounded as 
\begin{equation}
\lVert \mathcal{E}_S(t)(\rho_S(0))-\tilde{\mathcal{E}}_S(t)(\rho_S(0))\rVert_\mathrm{tr}\leq e^{2\kappa}-1.
\end{equation} 
where
\[
\kappa=\sum_{i=1}^{N-1}\int_0^tds\int_0^tds'|\tilde{K}^{(-,+)}_i(s,s')-K^{(-,+)}_i(s,s')|.
\]
\label{Lemma: lemma_citation_from_functional_integral}
\end{Lemma}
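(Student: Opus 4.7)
The plan is to exploit the Gaussianity of the environment to obtain an explicit series expansion for each reduced channel in powers of the two-point memory kernel, and then bound the difference between the two expansions using a telescoping identity. Since $\mathcal{E}_S(t)$ and $\tilde{\mathcal{E}}_S(t)$ differ only through the replacement $K_i^{(-,+)} \to \tilde{K}_i^{(-,+)}$, an expansion that isolates the kernels is the natural starting point.

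First I would Dyson-expand both $U_{SE}(t,0)$ and $U_{SE}(0,t)$ in the interaction picture with respect to $H_S$, so that $\mathcal{E}_S(t)(\rho_S(0))$ becomes a double series in the number of $V_{SE}$ vertices inserted on the forward and the backward Keldysh branches. Because $\rho_E(0)=\ket{\text{Vac}}\!\bra{\text{Vac}}$ is Gaussian with vanishing one-point functions, Wick's theorem reduces the partial trace over $E$ at each order to a sum over perfect matchings of the vertices, in which every matched pair contributes a factor $K_i^{(-,+)}$ (the only non-vanishing two-point function in vacuum). The channel then takes the schematic form
\begin{equation}
\mathcal{E}_S(t)(\rho_S(0)) = \sum_{n\geq 0}\sum_{\mathrm{diag.}} \int_{[0,t]^{2n}}\!\!\! dt_1 \cdots dt_{2n}\bigg(\prod_{\mathrm{pairs}} K_i^{(-,+)}\bigg)\Phi[\rho_S(0)],
\end{equation}
where $\Phi$ is a bounded super-operator built only from left/right multiplications by the $J_{i,i+1}$, $J_{i,i+1}^\dagger$ and the free propagators $e^{-iH_S\tau}$, and satisfies $\lVert \Phi[\rho_S(0)]\rVert_\mathrm{tr} \leq 1$ thanks to the contractive structure inherited from the underlying unitary conjugation (absorbing the $\mathcal{O}(1)$ operator norms of the jump operators into the counting). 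An identical representation holds for $\tilde{\mathcal{E}}_S(t)$ with $K_i^{(-,+)}$ replaced everywhere by $\tilde{K}_i^{(-,+)}$.

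Subtracting the two expansions, I would apply the telescoping identity
\begin{equation}
\prod_\alpha \tilde{K}_\alpha - \prod_\alpha K_\alpha = \sum_\beta \bigg(\prod_{\alpha<\beta}\tilde{K}_\alpha\bigg)(\tilde{K}_\beta - K_\beta)\bigg(\prod_{\alpha>\beta}K_\alpha\bigg)
\end{equation}
inside each diagrammatic integral, then invoke the triangle inequality and absorb $\Phi$ through its unit trace-norm bound. The distinguished pair contributes a factor $\sum_i \int_0^t\!\!\int_0^t |\tilde{K}_i^{(-,+)} - K_i^{(-,+)}|$, every remaining pair contributes at most the $L^1$ bound on $|K_i^{(-,+)}|$ or $|\tilde{K}_i^{(-,+)}|$ (both absorbed into $\kappa$ up to a linear correction), and combining the $(1/n!)^2$ Dyson prefactors with the $(2n-1)!!$ enumeration of Wick matchings and the two-fold choice of placing each pair's endpoints on the same or opposite Keldysh branch yields, after standard estimates, the generating series $\sum_{n\geq 1} (2\kappa)^n/n! = e^{2\kappa}-1$.

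The principal obstacle is to carry out the combinatorial bookkeeping tightly enough that the exponent is exactly $2\kappa$ rather than a larger quantity. Two subtleties require care: (i) one must show that $\Phi$ has trace-norm action $\leq 1$ uniformly across all diagram topologies, so that kernel magnitudes can be factored out of the time integrals without cost---this reduces to the fact that any super-operator obtained by stripping environment operators from a CPTP unitary conjugation is itself a contraction; and (ii) one must handle the ``mixed'' telescoping contributions (where some pairs are $K$ and others are $\tilde{K}$) so that their aggregate does not exceed the clean exponential. A cleaner route, which I would pursue in parallel, is to work directly with the Feynman--Vernon influence functional $\mathcal{F}[K]$, whose exponent is linear in the kernel and diagonal in the Keldysh branches; there the bound reduces to controlling $\lVert \mathcal{F}[\tilde{K}] - \mathcal{F}[K]\rVert$ via $|e^x - e^y| \leq |x - y|\,e^{\max(|x|,|y|)}$, which renders the factor of $2$ in $2\kappa$ transparent as a product of the forward and backward exponentials.
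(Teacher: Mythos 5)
First, note that the paper does not prove this lemma at all: it is imported verbatim from Ref.~\citep{huang2024unifiedanalysisnonmarkovianopen} (Eq.~(2.43) there), so your task was effectively to reprove the cited result, and your attempt must be judged on its own merits.

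The central step of your argument does not work as stated. After fully Wick-expanding both channels and telescoping over the pairings, the distinguished pair indeed contributes $\iint|\tilde{K}_i^{(-,+)}-K_i^{(-,+)}|$, but every \emph{other} pair in that diagram contributes the $L^1$ mass of the full kernel $|K_i^{(-,+)}|$ or $|\tilde{K}_i^{(-,+)}|$, and these cannot be ``absorbed into $\kappa$ up to a linear correction'': $\kappa$ measures only the kernel \emph{difference}, which in the intended application is made arbitrarily small by the discretization while the kernels themselves retain $L^1$ norms of order $M$ per site, i.e.\ $\mathcal{O}(NtM)$ in total. Summing your telescoped diagrams therefore yields, at best, a bound of the schematic form $c\,\kappa\,\exp\bigl(c'\sum_i\iint(|K_i|+|\tilde{K}_i|)\bigr)$, which is exponentially weaker than $e^{2\kappa}-1$ and defeats the purpose of the lemma. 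The same defect afflicts your ``cleaner route'': the inequality $|e^x-e^y|\leq|x-y|\,e^{\max(|x|,|y|)}$ reintroduces the magnitude of the full influence-functional exponent, not its difference. To obtain a bound depending \emph{only} on $\kappa$ one must expand solely in the perturbation $\Delta K_i=\tilde{K}_i^{(-,+)}-K_i^{(-,+)}$ (e.g.\ via a Duhamel/interpolation identity in the kernel, or by factoring $e^{\tilde{\Phi}}-e^{\Phi}=e^{\Phi}(e^{\tilde{\Phi}-\Phi}-1)$ at the level of the influence functional), keeping the unperturbed kernel resummed into exact system--environment evolutions whose trace-norm contractivity (unitary dilation followed by partial trace) supplies the uniform bound your $\Phi\leq 1$ claim was meant to provide; only then does the series in $\Delta K$ alone resum to $\sum_{n\geq1}(2\kappa)^n/n!=e^{2\kappa}-1$. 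Relatedly, your justification that $\Phi$ is a contraction ``because it is obtained by stripping environment operators from a CPTP conjugation'' is not a valid principle; what actually saves that step is the explicit normalization $\lVert J_{i,i+1}\rVert\leq1$ together with H\"older's inequality. Finally, the factor of exactly $2$ in $2\kappa$ — which you correctly flag as the principal obstacle — is never established: with only $K^{(-,+)}$ nonvanishing in vacuum, each insertion still carries the four Keldysh variants of the coupling, and the bookkeeping that collapses this to precisely $2\kappa$ is the part of the cited proof your sketch leaves open.
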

\noindent Noting that $e^{2x}-1\leq 2e^2x \leq \mathcal{O}(x)$ if $x\leq 1$, we can combine lemmas \ref{lemma:EB_Kernel} and \ref{Lemma: lemma_citation_from_functional_integral} to obtain an upper bound on the error incurred in our scheme to discretize the environment.
\begin{Lemma}[Error bound on the evolution after discretization]
For any initial system state $\rho_S(0)$, the discretization error satisfies
\begin{equation}
\begin{aligned}  \lVert \mathcal{E}_S(t)(\rho_S(0)) -\tilde{\mathcal{E}}_S(t)(\rho_S(0))\rVert_{\mathrm{tr}}
\leq \mathcal{O}\left(Nt\eta^{j_{\mathrm{max}}+1}r^2\right).
\end{aligned}
\end{equation}
\label{Lemma:EB_discretization_total}
\end{Lemma}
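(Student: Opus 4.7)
The plan is to chain the two preceding lemmas directly. First I would invoke Lemma \ref{Lemma: lemma_citation_from_functional_integral} to reduce the trace-distance estimate to an estimate on
\[
\kappa = \sum_{i=1}^{N-1} \int_0^t ds \int_0^t ds' \, |\tilde{K}_i^{(-,+)}(s,s') - K_i^{(-,+)}(s,s')|,
\]
which gives $\lVert \mathcal{E}_S(t)(\rho_S(0)) - \tilde{\mathcal{E}}_S(t)(\rho_S(0)) \rVert_\mathrm{tr} \le e^{2\kappa} - 1$. Only $K^{(-,+)}$ enters because the main-text vacuum assumption $\rho_E(0) = \ket{\text{Vac}}\!\bra{\text{Vac}}$ annihilates the other two-point functions $K^{(-,-)}, K^{(+,+)}$, and $K^{(+,-)}$ for both the exact and the discretized model; the expansion defining the modes $b_{i,j}^{n}$ in Eq.~(\ref{eq:definition_of_bosonicModeB}) only rewrites annihilation operators in a different orthonormal basis, so normal-ordered vacuum expectations remain zero.

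Second, I would bound each summand in $\kappa$ using Eq.~(\ref{eq:bounding_integral_kernel}) of Lemma \ref{lemma:EB_Kernel}. For fixed $j_{\mathrm{max}}$ and in the natural regime $\eta \lesssim r$ (since $\eta$ is a segment width while $r$ is the support radius of $v_i$), that estimate simplifies to $\mathcal{O}(t\eta^{j_{\mathrm{max}}+1}r^2)$ once the constants $C_0$ and $D_{j_{\mathrm{max}}+1}/(j_{\mathrm{max}}+1)!$ are absorbed into the implicit $\mathcal{O}(\cdot)$ factor, with the prefactor $(2r+\eta)$ together with the explicit $r$ producing the $r^2$. Summing over the $N-1$ bath indices yields $\kappa = \mathcal{O}(Nt\eta^{j_{\mathrm{max}}+1}r^2)$.

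Finally, the exponential bound $e^{2\kappa}-1$ is turned into the claimed linear bound via the elementary inequality $e^{2x}-1 \le 2 e^2 x$, valid for $x \in [0,1]$. The statement of the lemma is only non-trivial when $Nt\eta^{j_{\mathrm{max}}+1}r^2$ is small, since otherwise the universal trace-distance bound of $2$ is already subsumed in the stated $\mathcal{O}(\cdot)$; hence $\kappa \le 1$ is the only regime of interest, and the estimate follows with an absolute multiplicative constant $2e^2$.

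I do not anticipate any serious technical obstacle, since essentially all of the analytical content already lives in Lemmas \ref{lemma:EB_Kernel} and \ref{Lemma: lemma_citation_from_functional_integral}. The only point requiring care is clearly tracking which factors (the $C_0$, the factorial denominator from Legendre-polynomial truncation, and the $2e^2$ from linearization) are constants being absorbed into the $\mathcal{O}(\cdot)$ and which quantities ($N$, $t$, $\eta^{j_{\mathrm{max}}+1}$, $r^2$) are being exposed; with that bookkeeping in place, the proof is a one-line combination of the two prior lemmas. If the vacuum assumption were relaxed (as in the Supplemental Material) the same strategy would still work, but one would additionally need analogues of Lemma \ref{lemma:EB_Kernel} for the $K^{(+,-)}, K^{(\pm,\pm)}$ kernels.
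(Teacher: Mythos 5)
Your proposal is correct and follows essentially the same route as the paper: combine Lemma \ref{lemma:EB_Kernel} (specifically Eq.~(\ref{eq:bounding_integral_kernel}), summed over the $N-1$ independent environments) with Lemma \ref{Lemma: lemma_citation_from_functional_integral}, and linearize $e^{2\kappa}-1\leq 2e^{2}\kappa$ in the only non-trivial regime $\kappa\leq 1$. Your side remarks on the vacuum assumption and constant bookkeeping are consistent with the paper's treatment, so there is nothing to add.
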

\noindent From this lemma, it therefore follows that we can choose 
\begin{equation}
\eta=\Theta\bigg(\bigg(\frac{\delta}{Ntr^2}\bigg)^{{1}/{(j_{\mathrm{max}}+1)}}\bigg)\label{eq:expression_for_eta_scaling}
\end{equation}
to ensure that the discretization error in the reduced dynamics of the system is at most $\delta/3$. This completes the analysis of the first step of the simulation algorithm.

\subsubsection{Trotterization}
From Lemma \ref{lemma:EB_Kernel}, the new memory kernel $\tilde{K}_i$ has a finite $L^1$ and $L^\infty $ norm, therefore it satisfies the condition in Eq. (\ref{eq:memory_kernel_bounded_trotter}). We can exploit Lemma \ref{lemma1} to divide the total evolution time $t$ into $T$ time steps and Trotterize each time step with a $P^\text{th}$ order product formula. The Trotterization error is bounded by
\[
\lVert \mathcal{E}_\mathrm{tro}(t,T)(\rho_S(0))-\tilde{\mathcal{E}}_S(t)(\rho_S(0))\rVert_\mathrm{tr}\leq\mathcal{O}\left(Nt\left(\frac{t}{T}\right)^P\right).
\]
We can choose $T=\mathcal{O}(t(Nt/\delta)^{1/P})$ such that the Trotterization error is bounded by $\delta/3$.

\subsubsection{Quditization bosonic modes}

Until this point, we still have a finite set of bosonic modes with annihilation operators $b_{i, j}^n$ describing the environment. In order to simulate this model on a quantum computer, we must truncate the Hilbert space of each bosonic mode $b_{i,j}^{n}$ to a finite-dimensional one.
For this, we exploit the fact that the system-environment interaction is \emph{linear} in the annihilation operators describing the environment. A physical consequence of this is that the system-environment interaction more or less acts like a system-dependent displacement on the environment modes, and thus cannot create a very large number of particles in the environment. More specifically, we show explicitly in the supplement that the system-environment state at time $\tau$ in the Trotterized model, $\ket{\psi(\tau)}$, satisfies
\[
\bra{\psi(\tau)}e^{2 N^n_{i, j}}\ket{\psi(\tau)} \leq \exp(\mathcal{O}(C_0 s_P \sqrt{\eta}(\eta + 2r))),
\]
where $N^n_{i, j} = (b_{i, j}^n)^\dagger b_{i, j}^n$ and, importantly, the right-hand side in this upper bound is independent of $\tau$. From this upper-bound, it then follows that the probability that the mode $b^n_{i, j}$ is has $d$ particles decays exponentially with $d$.
More specifically, suppose $P_{i,j}^n(d)$ is the projector to $<d$ particles for the bosonic mode $b_{i,j}^n$: $P_{i,j}^n(d)=\sum_{l=0}^{d-1}\ket{l}_{i,j,n}\bra{l}$ then, we obtain the following lemma.

\begin{Lemma}[Probabilistic bound on occupation number]

\label{Lemma4}
For any intermediate time $\tau$, the state $\ket{\psi(\tau)}$ satisfies
\begin{equation}
\lVert (I-P_{i,j}^{n}(d))\ket{\psi(\tau)}\rVert \leq C_\mathrm{trun}e^{-d}\ \mathrm{for}\ \forall \tau,
\end{equation}
where $C_\mathrm{trun}\leq\exp(\mathcal{O}(s_P^2C_0^2))$ is a constant which does not depend on $N,t$ or $\delta$.
\end{Lemma}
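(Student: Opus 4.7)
The plan is to prove the stated $e^{-d}$ decay in three steps: a Markov-type operator inequality that reduces the claim to a moment generating function estimate, a Heisenberg-picture decomposition exploiting linearity of the coupling in the bosonic operators, and an explicit vacuum expectation-value computation. For the first step, both $e^{2N_{i,j}^n}$ and $I - P_{i,j}^n(d)$ are diagonal in the Fock basis of $b_{i,j}^n$, so the eigenvalue bound $e^{2k}\geq e^{2d}$ for $k\geq d$ gives the operator inequality $e^{2N_{i,j}^n}\geq e^{2d}(I-P_{i,j}^n(d))$. Taking the expectation in $|\psi(\tau)\rangle$ and using that $I-P_{i,j}^n(d)$ is a projector yields
\[
\|(I-P_{i,j}^n(d))|\psi(\tau)\rangle\|^2 \leq e^{-2d}\,\langle\psi(\tau)|e^{2N_{i,j}^n}|\psi(\tau)\rangle,
\]
so it suffices to bound the right-hand moment generating function by a constant independent of $\tau$, $N$, $t$ and $\delta$.

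The second step passes to the Heisenberg picture. Each Trotter stage is generated by a Hamiltonian that is quadratic in the bosonic operators, with only a linear system-bath coupling and a number-conserving free-bath piece. Because quadratic bosonic generators map linear combinations of $\{b_{i,j}^n\}$ to linear combinations of $\{b_{i,j}^n\}$ plus an operator-valued source from the system, the Heisenberg evolved operator admits the decomposition
\[
B_{i,j}^n(\tau) := U_\text{tro}^\dagger(\tau)\,b_{i,j}^n\,U_\text{tro}(\tau) = U_b^\dagger(\tau)\,b_{i,j}^n\,U_b(\tau) + F_{i,j}^n(\tau),
\]
where $U_b(\tau)$ is a unitary rotation acting only on the bath modes and $F_{i,j}^n(\tau)$ is an operator supported only on the system qubits. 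Summing the operator-valued driving terms over those Trotter stages whose support intersects the time window where $C_{i,j}^n(t)$ is non-zero (length at most $\eta+2r$), and using $|C_{i,j}^n(t)|\leq C_0\sqrt{\eta}$ from Eq.~\eqref{eq:upper_bound_on_cijn_from_Cauthy_Schwardz} together with $\mathcal{O}(s_P)$ total stage weight per time step and $\|J_{i',i'+1}(t)\| = \|J_{i',i'+1}\|$ under unitary evolution, gives
\[
\|F_{i,j}^n(\tau)\| \leq \mathcal{O}\big(s_P\,C_0\,\sqrt{\eta}\,(\eta+2r)\,\max_{i'}\|J_{i',i'+1}\|\big).
\]

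For the third step, note that $F_{i,j}^n(\tau)$ and $U_b^\dagger(\tau)b_{i,j}^n U_b(\tau)$ act on disjoint tensor factors and therefore commute, so the shifted operator $\tilde b := U_b^\dagger(\tau)b_{i,j}^n U_b(\tau) + F_{i,j}^n(\tau)$ obeys $[\tilde b,\tilde b^\dagger]=1$ and $B_{i,j}^n(\tau)^\dagger B_{i,j}^n(\tau) = \tilde b^\dagger \tilde b$. Applying the normal-ordering identity $e^{2\tilde b^\dagger\tilde b} = \sum_{m\geq 0}\frac{(e^2-1)^m}{m!}(\tilde b^\dagger)^m\tilde b^m$, and the fact that $b_{i,j}^n|\text{Vac}\rangle = 0$ (so that the only surviving term of $\tilde b^m|\psi_S\rangle\otimes|\text{Vac}\rangle$ upon expansion is $F_{i,j}^n(\tau)^m|\psi_S\rangle\otimes|\text{Vac}\rangle$, since $F$ commutes with $b$), I obtain
\[
\langle\psi(\tau)|e^{2N_{i,j}^n}|\psi(\tau)\rangle \leq \sum_{m\geq 0}\frac{(e^2-1)^m}{m!}\|F_{i,j}^n(\tau)\|^{2m} = \exp\!\big((e^2-1)\|F_{i,j}^n(\tau)\|^2\big).
\]
Combined with the bound on $\|F_{i,j}^n(\tau)\|$ and absorbing $\eta$ and $r$ into $\mathcal{O}(1)$ constants, this yields $C_\text{trun}\leq\exp(\mathcal{O}(s_P^2 C_0^2))$.

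The main obstacle will be establishing the Heisenberg decomposition and the uniform $\tau$-independent norm bound on $F_{i,j}^n(\tau)$ through the full product of $\mathcal{O}(Ts_P)$ Trotter stages. Within any single stage the decomposition follows immediately from the quadratic structure of the generator, but showing that $\|F_{i,j}^n(\tau)\|$ does not accumulate linearly in $T$ requires invoking the bounded temporal support of the basis polynomial $P_j^n$ --- which makes $C_{i,j}^n(t)$ vanish outside a window of length $\eta+2r$ --- so that only $\mathcal{O}(s_P(\eta+2r)/\Delta t)$ Trotter stages couple non-trivially to mode $b_{i,j}^n$. Handling the possibly signed Trotter weights $\varepsilon_k,\mu_k$ forces us to use triangle-inequality summation rather than any cancellation argument, but the $T$-independence of the final bound is guaranteed precisely by this temporal locality of each mode's driving.
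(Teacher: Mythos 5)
Your steps 1 and 3 would be fine \emph{if} the decomposition in step 2 held, and your bookkeeping (the factor $s_P$, the window $\eta+2r$, $|C^n_{i,j}|\le C_0\sqrt{\eta}$, the Markov step with $\theta=2$) reproduces exactly the constant the paper obtains. The genuine gap is the decomposition itself. The generator is linear in the bosonic operators but with \emph{operator-valued} system coefficients, so integrating the Heisenberg equation gives (per stage, up to Trotter bookkeeping)
\begin{equation*}
U_\mathrm{tro}^\dagger(\tau)\, b_{i,j}^n\, U_\mathrm{tro}(\tau) \;=\; b_{i,j}^n \;-\; i\int_0^\tau C_{i,j}^{n*}(s)\, U_\mathrm{tro}^\dagger(s)\, J_{i,i+1}\, U_\mathrm{tro}(s)\, ds ,
\end{equation*}
and the source term $F_{i,j}^n(\tau)$ involves the \emph{Heisenberg-evolved} jump operator $U_\mathrm{tro}^\dagger(s) J_{i,i+1} U_\mathrm{tro}(s)$, which acts jointly on the system and all bath modes once $H_S$ and the couplings fail to commute with the dynamics. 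It is therefore not ``supported only on the system qubits'', it does not commute with $b_{i,j}^n$, and it does not live on a tensor factor disjoint from the bath. Consequently your step 3 collapses: the claim that the only surviving term of $\tilde b^m\,\ket{\psi_S}\otimes\ket{\mathrm{Vac}}$ is $F^m\ket{\psi_S}\otimes\ket{\mathrm{Vac}}$ requires $[F,b_{i,j}^n]=0$, and without it the cross terms cannot be discarded against the vacuum; since $b_{i,j}^n$ is unbounded, no triangle-inequality argument recovers the bound $\|\tilde b^m\ket{\psi_0}\|\le\|F\|^m$, so $\langle e^{2N_{i,j}^n}\rangle\le\exp((e^2-1)\|F\|^2)$ is unproven. (The norm estimate $\|F_{i,j}^n(\tau)\|\le\mathcal{O}(s_PC_0\sqrt{\eta}(\eta+2r))$ is still true, but by itself it does not control the moment generating function.)

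The paper avoids precisely this obstruction by never decomposing the Heisenberg operator: it works with the falling-factorial moments $F_{i,j}^{n,k}(\tau)=\bra{\psi(\tau)}(b_{i,j}^{n\dagger})^k(b_{i,j}^n)^k\ket{\psi(\tau)}$, derives their equation of motion, and applies Cauchy--Schwarz at the level of expectation values to get the recursion $\frac{d}{d\tau}\sqrt{F_{i,j}^{n,k}}\le k|C_{i,j}^n(\tau)|\sqrt{F_{i,j}^{n,k-1}}$, which it solves inductively from the vacuum initial condition and then resums into $\bra{\psi(\tau)}e^{\theta N_{i,j}^n}\ket{\psi(\tau)}$; the temporal locality of $C_{i,j}^n$ and the $s_P$ Trotter stages enter exactly as in your counting. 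To salvage your route you would need an operator-valued displacement bound that tolerates a non-commuting, joint system--bath source, which in effect amounts to re-deriving that moment recursion.
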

Lemma~\ref{Lemma4} allows us to safely restrict the system-environment interaction Hamiltonian to the finite-dimensional subspace spanned by the first 
$d$ levels of each bosonic mode. More formally, we introduce $P_d=\prod_{i,j,n}P^n_{i,j}(d)$ and the truncated Hamiltonian 
\[
\begin{aligned}
\hat{H}_\mathrm{o}^d=\sum_{i\in\mathrm{odd}}\left(H_{i,i+1}+P_d(J_{i,i+1}^\dagger\tilde{A}_i(t)+\text{h.c.})P_d\right),\\
\hat{H}_\mathrm{e}^d=\sum_{i\in\mathrm{even}}\left(H_{i,i+1}+P_d(J_{i,i+1}^\dagger\tilde{A}_i(t)+\text{h.c.})P_d\right).
\end{aligned}
\]
We define the truncated Trotter decomposition unitary $\hat{U}_\mathrm{tro}(t,T)$ as one obtained by applying Lemma \ref{lemma1} as well as Eq. (\ref{eq:trotterization_expression}) to $\hat{H}_\mathrm{o}^d$, $\hat{H}_\mathrm{e}^d$, and the corresponding channel
\[
\hat{\mathcal{E}}_\mathrm{tro}(t,T)=\mathrm{Tr}_E[\hat{U}_\mathrm{tro}(t,T)((\cdot)\otimes\rho_E(0))\hat{U}_\mathrm{tro}^\dagger(t,T)].
\]

The error introduced by this truncation can be bounded by the following lemma:

\begin{Lemma}[Error bound on the evolution after quditization]
The difference between $\mathcal{E}_\mathrm{tro}(t,T)$ and $\hat{\mathcal{E}}_\mathrm{tro}(t,T)$ can be upper bounded by

\begin{equation}
\begin{aligned} & \left\lVert \mathcal{E}_\mathrm{tro}(t,T)(\rho_S(0))-\hat{\mathcal{E}}_\mathrm{tro}(t,T)(\rho_S(0))\right\rVert_{\mathrm{tr}}\\
&\quad\quad\leq  \mathcal{O}\bigg(\sqrt{d}e^{-d}\frac{N^{2}t^{2}}{\eta^{\frac{3}{2}}}r^2\bigg).
\end{aligned}
\end{equation}
\label{Lemma:EB_quditization_total}
\end{Lemma}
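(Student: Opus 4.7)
\emph{Overall strategy.} The plan is a telescoping estimate over the Trotter substages combined with Duhamel's formula, followed by an application of Lemma~\ref{Lemma4} to the resulting integrands. First I would convert the channel-level trace-norm error into a state-level norm error on purifications: since $\mathcal{E}_\mathrm{tro}$ and $\hat{\mathcal{E}}_\mathrm{tro}$ share the Stinespring form with environment initial state $\ket{\text{Vac}}$, for any purification $\ket{\Psi_S}$ of $\rho_S(0)$,
\[
\lVert\mathcal{E}_\mathrm{tro}(\rho_S(0))-\hat{\mathcal{E}}_\mathrm{tro}(\rho_S(0))\rVert_\mathrm{tr} \leq 2\,\lVert(U_\mathrm{tro}(t,T)-\hat U_\mathrm{tro}(t,T))\ket{\Psi_S}\ket{\text{Vac}}\rVert.
\]
I would then write $U_\mathrm{tro}=V_n V_{n-1}\cdots V_1$ and $\hat U_\mathrm{tro}=\hat V_n\cdots\hat V_1$ as ordered products of the $n=\mathcal{O}(s_P T)$ substage unitaries appearing in Eq.~\eqref{eq:Sec2trotterexpression} (with $\hat V_k$ built from the projected Hamiltonians $\hat H^d_\mathrm{o/e}$). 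A standard telescoping estimate, together with the fact that each $\hat V_j$ is a contraction, reduces the bound to $\sum_{k=1}^n\lVert(V_k-\hat V_k)\ket{\psi(\tau_{k-1})}\rVert$, where $\ket{\psi(\tau_{k-1})}$ is the \emph{untruncated} Trotter intermediate state --- precisely the object to which Lemma~\ref{Lemma4} applies.

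\emph{Per-substage bound.} For each substage I would apply Duhamel's formula,
\[
V_k-\hat V_k = -i\int_{\tau_{k-1}}^{\tau_k}\hat V_k(\tau_k,s)\bigl[H_k(s)-P_d H_k(s)P_d\bigr]V_k(s,\tau_{k-1})\,ds,
\]
which, evaluated on $\ket{\psi(\tau_{k-1})}$ and using unitarity of $V_k$ and contractivity of $\hat V_k$, yields
\[
\lVert(V_k-\hat V_k)\ket{\psi(\tau_{k-1})}\rVert \leq \int_{\tau_{k-1}}^{\tau_k}\bigl\lVert[H_k(s)-P_d H_k(s)P_d]\ket{\psi(s)}\bigr\rVert\,ds.
\]
Because $H_S$ commutes with $P_d$, only the system-environment interaction contributes. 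Writing $H_k-P_d H_k P_d=(I-P_d)H_k P_d+H_k(I-P_d)$ after inserting $P_d+(I-P_d)=I$, the integrand splits into two pieces. The first piece comes from the creation part $J_i\tilde A_i^\dagger(s)=\sum_{n,j}C_{i,j}^{n*}(s)J_i b_{i,j}^{n\dagger}$ of the interaction acting on the component of $\ket{\psi(s)}$ with exactly $d-1$ particles in some mode; combining $\lVert b_{i,j}^{n\dagger}\rVert=\sqrt d$ on the $(d-1)$-particle sector, the coefficient bound $|C_{i,j}^n(s)|\leq C_0\sqrt\eta$ from Eq.~\eqref{eq:upper_bound_on_cijn_from_Cauthy_Schwardz}, Lemma~\ref{Lemma4} applied at level $d-1$ (giving amplitude $\leq C_\mathrm{trun}e^{-(d-1)}$), and summing over the $\mathcal{O}(Nr/\eta)$ modes active at time $s$ (recall $\mathrm{supp}(v_i)\subseteq[-r,r]$), gives $\mathcal{O}(Nr\sqrt d\,e^{-d}/\sqrt\eta)$ for this piece.

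\emph{Summing substages and the main obstacle.} Integrating each substage over its width $\mathcal{O}(\Delta t)=\mathcal{O}(t/T)$ and summing over all $n=\mathcal{O}(s_P T)$ substages contributes $\mathcal{O}(s_P t\cdot Nr\sqrt d\,e^{-d}/\sqrt\eta)$; an additional amplification of $\mathcal{O}(Nt/\eta)$ from the second piece (see below) then yields the stated bound $\mathcal{O}(N^2 t^2 r^2\sqrt d\,e^{-d}/\eta^{3/2})$. The main technical obstacle is precisely this second piece $\lVert H_k(s)(I-P_d)\ket{\psi(s)}\rVert$: because $H_k$ is unbounded, the na\"ive estimate $\lVert H_k\rVert\cdot\lVert(I-P_d)\ket{\psi(s)}\rVert$ is useless. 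The remedy is to use, not just the statement of Lemma~\ref{Lemma4}, but the stronger exponential-moment bound $\bra{\psi(s)}e^{2N_{i,j}^n}\ket{\psi(s)}\leq\exp(\mathcal{O}(C_0 s_P\sqrt\eta(\eta+2r)))$ established in its proof. This bound controls all moments of the occupation-number operators, so that $H_k$ applied to the tail $(I-P_d)\ket{\psi(s)}$ still has norm $\mathcal{O}(\sqrt d\,e^{-d})$ modulo polynomial prefactors in $N,t,1/\eta$, which combined with the per-substage counting closes the estimate.
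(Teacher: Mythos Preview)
Your strategy is essentially the same as the paper's --- telescoping over substages, Duhamel, and Lemma~\ref{Lemma4} --- but the paper sidesteps your ``main obstacle'' entirely by using a sharper comparison inequality. Rather than comparing $\ket{\psi(\tau)}$ with $\ket{\hat\psi(\tau)}$ directly, the paper (quoting an inequality from the non-Markovian complexity paper it cites) compares $P_d\ket{\psi(\tau)}$ with $\ket{\hat\psi(\tau)}$: since $\ket{\hat\psi}$ already lives in the range of $P_d$, the Duhamel integrand becomes $\lVert P_d \tilde H_{SE}(\tau)\,Q_d\ket{\psi(\tau)}\rVert$, and one picks up an additive $\lVert Q_d\ket{\psi(t)}\rVert$ at the endpoint. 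The point is that $P_d b_{i,j}^n Q_d$ and $P_d b_{i,j}^{n\dagger}Q_d$ have operator norm $\leq\sqrt{d+1}$, so the unbounded piece $Q_d H Q_d$ never appears. The bound then follows immediately from $\lVert Q_d\ket{\psi(\tau)}\rVert\leq\sum_{i,j,n}\lVert(I-P_{i,j}^n(d))\ket{\psi(\tau)}\rVert=\mathcal{O}(Nt\eta^{-1}e^{-d})$ and the count of active modes.

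Your route via the full $H_k(I-P_d)\ket{\psi(s)}$ term is salvageable, but the justification you sketch is incomplete. Bounding, say, $\lVert b_a Q_d\ket{\psi}\rVert^2=\langle N_a Q_d\rangle$ via $Q_d\leq\sum_{a'}e^{-\theta d}e^{\theta N_{a'}}$ forces you to control cross terms $\langle N_a e^{\theta N_{a'}}\rangle$ for $a\neq a'$; this requires a joint Cauchy--Schwarz argument together with the freedom (present in the proof of Lemma~\ref{Lemma4}) to choose $\theta$ large, and the resulting prefactors are not quite what you wrote. You still get an exponential in $d$, so the downstream choice $d=\mathcal{O}(\log(\cdots))$ survives, but the clean $\sqrt{d}\,e^{-d}$ prefactor of the lemma comes most naturally from the $P_d H Q_d$ formulation rather than from moment bounds on $H Q_d$.
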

\noindent Consequently, we can choose 
\begin{equation}
d=\mathcal{O}\bigg(\log\frac{N^{2}t^{2}}{\delta\eta^\frac{3}{2}}r^2\bigg)\label{eq:expression_for_d_scaling}
\end{equation}
 such that this truncation error is at most $\delta/3$. 
Now, the truncated Trotterization
$\tilde{\mathcal{E}}_\mathrm{tro}$ is generated by the finite-dimensional qudit Hamiltonian $\hat{H}_\mathrm{e(o)}$, and therefore admits efficient simulation on a digital quantum computer.

\subsubsection{Smooth cut-off on coupling function $v_i$}

Uptil now, we have assumed that $v_i$ has a compact support that lies in $[-r, r]$. If $v_i$ is not compact but is smooth and superpolynomially decaying in $t$, we will next show that a smooth
cutoff can be applied to $v_i$, which only introduces negligible overhead in complexity scaling.  One choice of the smooth cutoff function can be
constructed as 
\begin{equation}
\tilde{B}(t)=(B_{t^{*}+1}*\varphi)(t),
\end{equation}
where $t^{*}$ is the cutoff value, $*$ denotes a convolution,
\[
B_{t^{*}+1}(t)=\begin{cases}
1, & |t|\leq t^{*}+1;\\
0, & |t|>t^{*}+1
\end{cases}
\]
is the hard cutoff function and $\varphi(t)$ is a smooth mollifier
\begin{equation}
\varphi(t)=\begin{cases}
\frac{1}{\int_{-1}^{1}e^{-\frac{1}{1-\tau^{2}}}d\tau}e^{-\frac{1}{1-t^{2}}}, & |t|\leq1;\\
0, & |t|>1.
\end{cases}
\end{equation}
Then the function $\tilde{v}_i(t)=\tilde{B}(t)v_i(t)$ is smooth, finitely
supported on $[-t^{*}-2,t^{*}+2]$ and identical to $v_i(t)$ on $[-t^{*},t^{*}]$.
The error introduced by this smooth truncation is superpolynomially suppressed in $t^{*}$, which only requires an additional gate cost scaling slower than any fractional power of $Nt/\delta$ and thus can be neglected compared to $(Nt/\delta)^{1/p}$.
For example, if $v_i(t)$ decays exponentially fast, one can choose $r=t^{*}+2=\mathcal{O}(\log(Nt/\delta))$,
which only adds an logarithmic correction and is negligible.

\subsubsection{Total resource counting\label{subsub:total_resource}}

Combining the above three steps, we arrive at the desired finite-dimensional simulation channel $\mathcal{E}$, which is implemented by first preparing all the ancilla modes $\{b_{i,j}^n\}$ in their vacuum state, evolving under the Hamiltonian $P_d\tilde{H}_{SE}(\tau)P_d$ for time $t$ and tracing out the ancillas.
The total approximation error is bounded by $\delta$, which comes from summing the errors over discretization, Trotterization, and quditization steps. In the following,
we analyze the computational resources required to implement this simulation on a digital quantum computer.

After Trotterization, we have in total $T=\mathcal{O}\big(\big(Nt/\delta\big)^{\frac{1}{P}}t\big)$
time steps. In each Trotterization stage, we can, for example, use Hamiltonian simulation whose circuit depth is proportional to the norm of the simulated Hamiltonian
\citep{Low_2019Qubitization, Kieferov2019Simulating}. Since each $C_{i,j}^{n}(t)$ has finite support,
for any fixed $t$ and $i$ there can be at most $j_\mathrm{max}(\frac{2 r }{\eta}+2)$ nonzero $C_{i,j}^n(t)$s. Therefore, the norm associated with a single jump operator term in $P_d\tilde{H}_{SE}(t)P_d$ can be bounded as
\begin{equation}
\begin{aligned} & \left\Vert J_{i,i+1}^{\dagger}\sum_{n}\sum_{j=0}^{j_{\mathrm{max}}}C_{i,j}^{n}(t)P_{d}b_{i,j}^{n}P_{d}+\text{h.c.}\right\Vert\\
&\qquad\leq  2j_\mathrm{max}\left(\frac{2 r }{\eta}+2\right)\max_{i,j,n}|C_{i,j}^{n}(t)|\lVert P_{d}b_{i,j}^{n}P_{d}\rVert,\\
&\qquad\leq  \mathcal{O}\bigg( r \sqrt{\frac{d}{\eta}}\bigg).
\end{aligned}
\end{equation}
In the last inequality, we use the norm bound on $C_{i,j}^{n}$ and the
fact that $\lVert P_{d}b_{i,j}^{n}P_{d}\rVert\leq\sqrt{d}$. Consequently,
the total circuit depth is 
\begin{equation}
\begin{aligned} & \mathcal{O}\left(t\bigg(\frac{Nt}{\delta}\bigg)^{\frac{1}{P}} r \sqrt{\frac{d}{\eta}}\right)\\
&\qquad = \tilde{\mathcal{O}}\bigg(t\bigg(\frac{Nt}{\delta}\bigg)^{\frac{1}{P}}\bigg(\frac{Nt}{\delta}\bigg)^{\frac{1/2}{j_{\mathrm{max}}+1}}\left( r \right)^{1+\frac{1}{j_{\mathrm{max}}+1}}\bigg).
\end{aligned}
\label{eq:time_cost_estimation}
\end{equation}
In the second line we use Eqs. (\ref{eq:expression_for_eta_scaling})
and (\ref{eq:expression_for_d_scaling}) and ignore the logarithmic
factor. 

If $ r $ is a constant which does not depend on $t,N$,
we can choose $P=2p$ and $j_{\mathrm{max}}=p$ such that the total
circuit depth is 
\begin{equation}
\text{Circuit Depth }=\mathcal{O}\left(t\bigg(\frac{Nt}{\delta}\bigg)^{\frac{1}{p}}\right),
\end{equation}
and the total gates counting is 
\begin{equation}
\text{Elementary Gate Complexity }=\mathcal{O}\left(Nt\bigg(\frac{Nt}{\delta}\bigg)^{\frac{1}{p}}\right).
\end{equation}

In the above algorithm, introducing all truncated modes $b_{i,j}^{n}$ at once requires
$\mathcal{O}(tN/\eta \log(d))=\tilde{\mathcal{O}}(tN(Nt/\delta)^{1/(p+1)})$
ancilla qubits.
However, at any fixed time $t$ only $j_\mathrm{max}N(\frac{2 r }{\eta}+2)$ of those modes actually enter $\tilde{H}_{SE}(t)$. 
By sequentially tracing out and refreshing the ancillas, the total number of additional qudits only needs to scale as $\tilde{\mathcal{O}}(N(Nt/\delta)^{\frac{1}{p+1}})$.

\subsection{Quantum algorithm for non-dissipative non-Markovian lattice model (Theorem \ref{theoremnondi})\label{subsec:Proof-of-Theorem-non-dissipative}}

Theorem \ref{theoremnondi} follows immediately from Lemma
\ref{lemma1}. Indeed, Lemma \ref{lemma1} provides a $P^\text{th}$ order Trotterization decomposition of $U_{SE}(t,0)$
into products of $V_{\mathrm{e}}$ and
$V_{\mathrm{o}}$ as 
\begin{equation}
U_\mathrm{tro}(t,T)=\prod_{j=1}^{T}V\left(jt/T,(j-1)t/T\right),
\end{equation}
with the average Trotterization error $\lVert\langle\mathcal{R}_\mathrm{tro}\rangle\rVert\leq\mathcal{O}\left(Nt^{P+1}/T^P\right)$ from Lemma \ref{lemma1} and $V$ is defined in Eq. (\ref{eq:trotterization_expression}).

We now examine a single stage in each time step, say $V_\mathrm{e}((j+f_i)t/T,(j+f_{i-1})t/T)$ in Eq. (\ref{eq:trotterization_expression}). An analogous analysis holds for $V_\mathrm{o}((j+e_i)t/T,(j+e_{i-1})t/T)$. The former can be explicitly expressed as
\begin{equation}
\begin{aligned}&V_{\mathrm{e}}((j+f_{i})t/T,(j+f_{i-1})t/T)\\  &=\mathcal{T}e^{-i\int_{(j+f_{i-1})t/T}^{(j+f_{i})t/T}\sum_{l\in\mathrm{even}}\left(H_{l,l+1}+J_{l,l+1}\xi_{l}(t)\right)dt},\\
  &=\prod_{l\in\mathrm{even}}\mathcal{T}e^{-i\int_{(j+f_{i-1})t/T}^{(j+f_{i})t/T}\left(H_{l,l+1}+J_{l,l+1}\xi_{l}(t)\right)dt}.
\end{aligned}
\end{equation}
For each term in the above equation,
we again apply a $P^\text{th}$ order Trotter decomposition on $[(j+f_{i-1})t/T,(j+f_i)t/T]$ by introducing a series of intermediate time points $\{\tilde{f}_{k,j,i}=(j+f_{i-1})t/T+f_k(f_i-f_{i-1})t/T\}$. This yields another product formula
\begin{equation}
\begin{aligned} & \mathcal{T}e^{-i\int_{(j+f_{i-1})t/T}^{(j+f_{i})t/T}\left(H_{l,l+1}+J_{l,l+1}\xi_{j}(t)\right)dt}\\
&=  \tilde{R}_\mathrm{tro}+\prod_{k=1}^{s_P}e^{-iH_{l,l+1}(\tilde{f}_{k,j,i}-\tilde{f}_{k-1,j,i})}e^{-i\int_{\tilde{f}_{k-1,j,i}}^{\tilde{f}_{k,j,i}}J_{l,l+1}\xi_{l}(t)dt},\\
&=  \tilde{R}_\mathrm{tro}+\prod_{k=1}^{s_P}e^{-iH_{l,l+1}(\tilde{f}_{k,j,i}-\tilde{f}_{k-1,j,i})}e^{-iJ_{l,l+1}\int_{\tilde{f}_{k-1,j,i}}^{\tilde{f}_{k,j,i}}\xi_{l}(t)dt},
\end{aligned}
\label{eq:another_trotter_step}
\end{equation}
where $\tilde{R}_\mathrm{tro}$ denotes the error remainder. By the same argument as in subsection \ref{subsec:Proof-of-Lemma}, after average over $\xi_l(t)$, the error remainder $\lVert\langle\tilde{R}\rangle\rVert$ scales
as $\mathcal{O}((\Delta t)^{P+1})$. 
Since each time step of $V(jt/T,(j-1)t/T)$ contains $s_PN$ exponentials which requires a second Trotterization as in Eq. (\ref{eq:another_trotter_step}), the error per time step picks up an additional factor of $N$. More formally, by introducing
\[
\begin{aligned}
&\tilde{V}_\mathrm{e}((f_i+j)t/T,(f_{i-1}+j)t/T)\\&=\prod_{l\in\mathrm{even}}\prod_{k=1}^{s_P}e^{-iH_{l,l+1}(\tilde{f}_{k,j,i}-\tilde{f}_{k-1,j,i})}e^{-iJ_{l,l+1}\int_{\tilde{f}_{k-1,j,i}}^{\tilde{f}_{k,j,i}}\xi_l(t)dt}
\end{aligned}
\]
and
\[
\begin{aligned}
&\tilde{V}_\mathrm{o}((e_i+j)t/T,(e_{i-1}+j)t/T)\\&=\prod_{l\in\mathrm{odd}}\prod_{k=1}^{s_P}e^{-iH_{l,l+1}(\tilde{e}_{k,j,i}-\tilde{e}_{k-1,j,i})}e^{-iJ_{l,l+1}\int_{\tilde{e}_{k-1,j,i}}^{\tilde{e}_{k,j,i}}\xi_l(t)dt}
\end{aligned}
\]
with $\tilde{e}_{k,j,i}=(j+e_{i-1})t/T+e_k(e_i-e_{i-1})t/T$, we can approximate $V(jt/T),(j-1)t/T$ to its fully Trotterized version
\[
\begin{aligned}
  &\tilde{V}\bigg((j + 1) \frac{t}{T}, j\frac{t}{T}\bigg)  = \prod_{i = 1}^{s_p} \bigg[\tilde{V}_\mathrm{e}\bigg((f_{i}+j) \frac{t}{T}, (f_{i - 1}+j) \frac{t}{T} \bigg) \times \nonumber \\
  &\qquad \qquad \qquad \tilde{V}_\mathrm{o}\bigg((e_{i}+j) \frac{t}{T}, (e_{i - 1}+j) \frac{t}{T}\bigg)\bigg]
\end{aligned}
\]
with
\[
\begin{aligned}
&\left\lVert \left\langle V\left((j+1)\frac{t}{T},j\frac{t}{T}\right)-\tilde{V}\left((j+1)\frac{t}{T},j\frac{t}{T}\right)\right\rangle\right\rVert_\mathrm{tr}\\&\quad\quad\quad\leq\mathcal{O}\left(N(\Delta t)^{P+1}\right). 
\end{aligned}
\]
We remark that the notation $\langle\cdot\rangle$ denotes the average over $\xi_l(t)$.
Concatenating $T$ time steps yields the fully Trotterized evolution unitary
\begin{equation}
\tilde{U}_\mathrm{tro}(t,T)=\prod_{j=0}^{T-1}\tilde{V}\left((j+1)\frac{t}{T},j\frac{t}{T}\right)
\end{equation}
with
the total Trotterization error bound $\mathcal{O}((\Delta t)^{P+1}\times N\times T)=\mathcal{O}(Nt^{P+1}/T^{P})$.
More precisely, We consider the channel on the system given by
\[
\tilde{\mathcal{E}}_\mathrm{tro}(t,T)(\cdot)=\mathbb{E}_{\xi_l(t)}\left(\tilde{U}_\mathrm{tro}(t,T)(\cdot)\tilde{U}_\mathrm{tro}^\dagger(t,T)\right).
\]
By combining Lemma \ref{lemma1} with Eq.~(\ref{eq:another_trotter_step}) gives
\begin{equation}
\begin{aligned}
&\lVert\mathcal{E}_S(t)(\rho_S(0))-\tilde{\mathcal{E}}_\mathrm{tro}(t,T)(\rho_S(0))\rVert_\mathrm{tr}\\
&\quad   \leq\lVert\mathcal{E}_S(t)(\rho_S(0))-\mathcal{E}_\mathrm{tro}(t,T)(\rho_S(0))\rVert_\mathrm{tr}+\\
&\quad \quad \quad \lVert\mathcal{E}_\mathrm{tro}(t,T)(\rho_S(0))-\tilde{\mathcal{E}}_\mathrm{tro}(t,T)(\rho_S(0))\rVert_\mathrm{tr},\\
&\quad \leq\mathcal{O}\left(N\frac{t^{P+1}}{T^P}\right).
\end{aligned}
\end{equation}
Our algorithm is to simulate $\tilde{\mathcal{E}}_\mathrm{tro}(t,T)$ which describes a circuit ensemble with each instance containing $\mathcal{O}(TN)$ exponentials.
Each exponential is of the form: $e^{-iH_{l,l+1}(t_{2}-t_{1})}$, or $e^{-iJ_{l,l+1}\int_{t_{1}}^{t_{2}}\xi_{l}(t)dt}$
with $t_{1}$, $t_{2}$ the intermediate Trotterization time points $\tilde{f}_{k,j,i}$ or $\tilde{e}_{k,j,i}$. Since $H_{l,l+1}$ and $J_{l,l+1}$ are Hermitian, each exponential is unitary. The first type, $e^{-iH_{l,l+1}(t_{2}-t_{1})}$,
can be directly implemented as a two-qubit gate. To implement the second type,
$e^{-iJ_{l,l+1}\int_{t_{1}}^{t_{2}}\xi_{l}(t)dt}$, we can define
a new set of Gaussian random variables $\xi_{l}^{t_{2},t_{1}}=\int_{t_{1}}^{t_{2}}\xi_{l}(t)dt$
with 
\begin{equation}
\langle \xi_l^{t_2,t_1}\rangle=0, \;\langle\xi_l^{t_2,t_1}\xi_{l'}^{t_2',t_1'}\rangle=\delta_{l,l'}\int_{t_1}^{t_2}ds\int_{t_{1'}}^{t_{2'}}ds' \bar{K}_l(s,s')
\label{eq:CM_of_xi_l_t_2_t_1}
\end{equation}
such that $e^{-iJ_{l,l+1}\int_{t_{1}}^{t_{2}}\xi_{l}(t)dt}=e^{-iJ_{l,l+1}\xi_{l}^{t_{2},t_{1}}}$.
This set of Gaussian random variables $\{\xi_{l}^{t_{2},t_{1}}\}$
has zero mean and a non-vanishing covariance matrix determined by
Eq. (\ref{eq:CM_of_xi_l_t_2_t_1}). So one can sample $\xi_{l}^{t_{2},t_{1}}$
classically, then implement $e^{-iJ_{j,j+1}\xi_{j}^{t_{2},t_{1}}}$ as another two-qudit gate.

To complete the proof of theorem \ref{theoremnondi}, one can
choose a $p^\text{th}$ order Trotter decomposition with $T=\mathcal{O}(t(Nt/\delta)^{1/p})$
such that the total Trotterization error is at most $\delta$. Next, one can sample
the Gaussian random variables $\{\xi_{l}^{t_{2},t_{1}}\}$ (the variables $\{Z\}$ in Theorem \ref{theoremnondi}) classically. Once these realizations are fixed, $\tilde{U}_\mathrm{tro}(t,T)$
can be implemented on a qauntum computer with $\mathcal{O}(Nt(Nt/\delta)^{1/p})$ gates.

This strategy fails for the dissipative non-Markovian dynamics.
In that setting, the memory kernel is not symmetric, so $A_i(t)$ and $A_i^\dagger(t)$ cannot be treated as classical random variables.

\subsection{Local error analysis (Theorem \ref{theorem_error_robustness})\label{sub:proof_of_error_robustness}}

In this subsection, we show how to analyze the error of local observables.
We investigate the errors in Trotterization, discretization, and quditization
procedure separately, where the tunable parameters are chosen as the
Trotter time $\Delta t$, the segment width $\eta$, and truncation
level in bosonic spectrum $d$, respectively. For the non-dissipative
case, the total error for local observables will be just the Trotterization
error plus the possible experimental noise. For the dissipative case,
the total error should also include the discretization and quditization errors.

We can rewrite the local error as 
\begin{equation}
\begin{aligned}&\delta_{\mathrm{loc}}:=\\
&\max_{\substack{\lVert O_{X}\rVert\leq1\\
\lVert\rho_S(0)\rVert_{\mathrm{tr}}=1
}
}|\mathrm{Tr}\left(O_{X}\mathcal{E}_S(t)(\rho_S(0))\right)
-\mathrm{Tr}\left(O_{X}\mathcal{E}(\rho_S(0))\right)|,
\end{aligned}
\end{equation}
where $\rho_S(0)$ is the initial system state, $O_{X}$ is a local system
operator supported on region $X$, $\mathcal{E}_S$ is the exact time evolution
channel and $\mathcal{E}$ is the simulation channel. We also define
the local Trotterization error as 
\begin{equation}
\begin{aligned}&\delta_{\mathrm{loc,tro}}:=\\
&\max_{\substack{\lVert O_{X}\rVert\leq1\\
\lVert\rho_S\rVert_{\mathrm{tr}}=1
}
}|\mathrm{Tr}\left(O_{X}\mathcal{E}_S(t)(\rho_S)\right)
-\mathrm{Tr}\left(O_{X}\mathcal{E}_\mathrm{tro}(t,T)(\rho_S)\right)|,
\end{aligned}
\end{equation}
where $\mathcal{E}_\mathrm{tro}(t,T)$ is the Trotterization formula for
$\mathcal{E}_S$ constructed in Lemma \ref{lemma1}. The local discretization
and quditization errors are defined similarly as
\begin{equation}
\begin{aligned}&\delta_{\mathrm{loc,dis}}:=\\
&\max_{\substack{\lVert O_{X}\rVert\leq1\\
\lVert\rho_S\rVert_{\mathrm{tr}}=1
}
}|\mathrm{Tr}\left(O_{X}\mathcal{E}_S(t)(\rho_S)\right)
-\mathrm{Tr}\left(O_{X}\tilde{\mathcal{E}}_S(t)(\rho_S)\right)|,
\end{aligned}
\end{equation}
\begin{equation}
\begin{aligned}&\delta_{\mathrm{loc,qud}}:=\\
&\max_{\substack{\lVert O_{X}\rVert\leq1\\
\lVert\rho_S\rVert_{\mathrm{tr}}=1
}
}|\mathrm{Tr}\left(O_{X}\mathcal{E}_\mathrm{tro}(t,T)(\rho_S)\right)
-\mathrm{Tr}\left(O_{X}\hat{\mathcal{E}}_\mathrm{tro}(t,T)(\rho_S)\right)|,
\end{aligned}
\end{equation}
with $\tilde{\mathcal{E}}_S(t)$, $\hat{\mathcal{E}}_\mathrm{tro}(t,T)$ defined in Lemma \ref{Lemma:EB_discretization_total} and \ref{Lemma:EB_quditization_total}, respectively.

Let us first analyze $\delta_{\mathrm{loc,tro}}$. In Ref. \citep{RahulUnpunlishedNoiseRobust},
the authors have already investigated the Trotterization error for
local observables in usual closed unitary dynamics. Here we extend
their results to non-Markovian case. Our strategy is to employ the
error remainders in Lemma \ref{lemma1} to obtain the explicit expression of
errors for local observables in Trotterization formula. We further
identify that each term in the expression can be bounded by the Lieb-Robinson
bound proved in Ref. \citep{trivedi2024liebrobinsonboundopenquantum}.
Therefore, we can recover an effective light cone. More specifically,
we have

\begin{Lemma}[Error in Trotterization for local observables]

For the non-Markovian dynamics in a $D$-dimensional lattice satisfying
the condition in Lemma \ref{lemma1}, the error for local observables with a
$P^\text{th}$ order Trotter formula can be bounded by 
\begin{equation}
\delta_{\mathrm{loc,tro}}\leq\mathcal{O}\left(t^{D+1}(\Delta t)^{P}\right).
\end{equation}
\label{Lemma:local_trotter}
\end{Lemma}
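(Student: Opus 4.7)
The plan is to sharpen the global Trotter bound of Lemma \ref{lemma1} into a local one by combining a local-error-sum representation of the remainder $\mathcal{R}_{\mathrm{tro}}$ with the recently established open-system Lieb--Robinson bound of Ref.~\cite{trivedi2024liebrobinsonboundopenquantum}. First I would write the difference $\mathcal{E}_S(t) - \mathcal{E}_{\mathrm{tro}}(t,T)$ as a telescoping sum across the $T$ Trotter steps, and within each step decompose $\mathcal{R}_{\mathrm{tro}}$ as a sum over bonds $\langle i,i+1\rangle$ of local error super-operators $\mathcal{R}^{(j)}_{i,i+1}$. These are exactly the pieces already produced in the proof of Lemma \ref{lemma1} once Wick contractions have eliminated the unbounded environment operators: each has diamond norm $\mathcal{O}((\Delta t)^{P+1})$ and is supported on an $\mathcal{O}(1)$-neighbourhood of the bond $\langle i,i+1\rangle$, with the only non-local element being time integrals of the memory kernels appearing as scalar coefficients.

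Next I would exploit the local observable by passing $O_X$ into the Heisenberg picture with respect to the exact system channel acting from $t_{j+1}$ to $t$. For a closed geometrically local Hamiltonian, this manipulation is controlled by the standard Lieb--Robinson bound; here the exact dynamics is an open-system channel, so the essential input is the Lieb--Robinson bound of Ref.~\cite{trivedi2024liebrobinsonboundopenquantum}, which guarantees that the Heisenberg-evolved $O_X$ is, up to tails decaying faster than any inverse polynomial in the distance, supported on a region whose diameter grows at most linearly in $t - t_{j+1}$. Consequently, only the bonds $\langle i,i+1\rangle$ lying inside this effective light cone contribute non-negligibly to $\mathrm{Tr}(O_X\, \mathcal{R}^{(j)}_{i,i+1}(\cdots))$, and in a $D$-dimensional lattice the number of such bonds is at most $\mathcal{O}((t-t_{j+1})^D) = \mathcal{O}(t^D)$, not $\mathcal{O}(N)$ as in the global bound.

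Summing $\mathcal{O}(t^D) \cdot \mathcal{O}((\Delta t)^{P+1})$ contributions over the $T = t/\Delta t$ Trotter steps yields $\delta_{\mathrm{loc,tro}} \leq \mathcal{O}(t\cdot t^D \cdot (\Delta t)^{P}) = \mathcal{O}(t^{D+1}(\Delta t)^P)$, matching the stated bound.

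The hard part, I expect, will be justifying the Lieb--Robinson step in this hybrid setting. Two subtleties arise. First, the Heisenberg propagation of $O_X$ is over the full open-system channel rather than a bounded Hamiltonian, so I need to verify that the precise form of the Lieb--Robinson bound from Ref.~\cite{trivedi2024liebrobinsonboundopenquantum} survives the insertion of a local perturbation at the intermediate time $t_{j+1}$ and still gives an exponentially small contribution for bonds outside the cone. Second, the local pieces $\mathcal{R}^{(j)}_{i,i+1}$ carry convolutions of the memory kernels over the Trotter interval, and I must confirm that the $L^1$ and $L^\infty$ bounds on the kernels assumed in Lemma \ref{lemma1} bound these integrals uniformly in $i$, $j$, so that the bond counting $\mathcal{O}(t^D)$ inside the cone is not inflated by a site-dependent factor. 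Both issues should be resolvable with the same Dyson-series estimates used to prove the open-system Lieb--Robinson bound, adapted to the particular local structure of $\mathcal{R}^{(j)}_{i,i+1}$.
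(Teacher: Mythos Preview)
Your proposal is correct and follows essentially the same route as the paper: telescoping over Trotter steps, using the local-error-sum representation from the proof of Lemma~\ref{lemma1} (after Wick contractions), and then invoking the open-system Lieb--Robinson bound of Ref.~\cite{trivedi2024liebrobinsonboundopenquantum} to restrict the bond sum to $\mathcal{O}(t^D)$ terms. The ``hard part'' you anticipate is exactly what the paper handles by applying that bound not to a reduced system channel but in the full doubled system--environment picture, identifying the leftmost non-evolution superoperator in each monomial as the local perturbation $\mathbb{L}_Y$ (satisfying $\vecbra{I_S}\mathbb{L}_Y=0$) and verifying that everything to its right---the Trotterized evolutions $\tilde{\mathbb{V}}(t_1,0)$, the $\mathrm{Ad}$'s in $\mathcal{J}_P$, and the kernel integrals---defines a state in the operator space $S_\zeta(\rho_E)$ with $\lVert\zeta\rVert_\infty=\mathcal{O}(s_P M)$, so that the Lieb--Robinson velocity stays $\mathcal{O}(1)$.
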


We emphasize that this Lemma only relies on a bounded memory kernel
and does not require any specific form of the non-Markovian dynamics.
It holds for both non-dissipative and dissipative case. We also notice
that $\delta_{\mathrm{loc,tro}}$ can be obtained from $\delta_{\mathrm{tro}}$
in Lemma \ref{lemma1} by replacing $N$ with $t^{D}$. This is intuitively    
true, as the existence of the effectively light cone restricts that
the operator $O_{X}$ can only propagate to a region consists of $\mathcal{O}(t^{D})$
sites in time $t$. We would expect only these $\mathcal{O}(t^{D})$
sites contributing to the local error $\delta_{\mathrm{loc,tro}}$.

For the non-dissipative case, this Trotterization error is the only
systematic error inherited in the simulation algorithm. Here, in the
absence of experimental noise, we can choose $\Delta t=\mathcal{O}(\delta_\mathrm{loc}/t^{D+1})^{1/P}$
such that the error for local observables are below $\delta_\mathrm{loc}$. The
corresponding circuit depth is $\frac{t}{\Delta t}=\mathcal{O}(t(t^{D+1}/\delta_\mathrm{loc})^{1/P})$,
which is independent of the system size. On the other case, if the
experimental noise for two-qubit gate is $\gamma$, the total error
for local observables would be 
\begin{equation}
\delta_{\mathrm{loc}}\leq\mathcal{O}\left(t^{D+1}(\Delta t)^{P}\right)+\mathcal{O}\left(\gamma\left(\frac{t}{\Delta t}\right)^{D+1}\right).
\end{equation}
Here the last term counting the experimental noise is derived in
Ref. \citep{RahulUnpunlishedNoiseRobust}, which can be understood as a strict light cone imposed by the circuit structure with depth $t/\Delta t$. As a result,
we can choose $\Delta t=\mathcal{O}(\gamma^{1/(P+D+1)})$ to obtain
the optimal local error as
\begin{equation}
\delta_{\mathrm{loc}}\leq\mathcal{O}\left(t^{D+1}\gamma^{P/(P+D+1)}\right),
\end{equation}
which does not depend on $N$ as we expected.

The dissipative case is a little more complicated, as we also need
to include the local discretization and quditization error. To bound
discretization error $\delta_{\mathrm{loc,dis}}$, we introduce a
length scale $l$ and four different observable values as 
\begin{equation}
\begin{aligned}
&O(t)=\mathrm{Tr}(O_{X}\mathcal{E}_S(t)(\rho_S(0))),\\
&O^{\eta}(t)=\mathrm{Tr}(O_{X}\tilde{\mathcal{E}}_S(\rho_{S}(0))),\\
&O_{X_{[l]}}(t)=\mathrm{Tr}(O_{X}\mathcal{E}_{S,X_{[l]}}(t)(\rho_{S}(0))),\\
&O_{X_{[l]}}^{\eta}(t)=\mathrm{Tr}(O_X\tilde{\mathcal{E}}_{S,X_{[l]}}(t)(\rho_{S}(0))),
\end{aligned}
\end{equation}
where the subscript
$X[l]$ denotes that the restricted evolution where only the interaction within
distance no greater than $l$ from $X$ is turned on, see \citep{SM} for a rigorous
definition. We can thus rewrite $\delta_{\mathrm{loc,dis}}$ as 
\[
\begin{aligned}\delta_{\mathrm{loc,dis}} & =\max_{\substack{\lVert O_{X}\rVert\leq1\\
\lVert\rho_S(0)\rVert_{\mathrm{tr}}=1
}
}|O(t)-O^{\eta}(t)|,\\
 & \leq\max_{\substack{\lVert O_{X}\rVert\leq1\\
\lVert\rho_S(0)\rVert_{\mathrm{tr}}=1
}
}\bigg(|O(t)-O_{X_{[l]}}(t)|+\\
 & |O_{X_{[l]}}(t)-O_{X_{[l]}}^{\eta}(t)|+|O_{X_{[l]}}^{\eta}(t)-O^{\eta}(t)|\bigg).
\end{aligned}
\]
The first and the third term correspond to the difference between
the original evolution and the restricted evolution. Due to the existence
of Lieb-Robinson velocity, these differences are exponentially suppressed
in $l$ and are negligible if $l$ is larger than $t$, see Ref. \citep{trivedi2024liebrobinsonboundopenquantum}.
The second term corresponds to the discretization error analyzed in
Lemma \ref{Lemma:EB_discretization_total} but restricted to a subsystem whose number of sites is only $\mathcal{O}(l^{D})$.
Thus, the second term is bounded by Lemma \ref{Lemma:EB_discretization_total} with $N$ replaced
by $l^{D}$. By a proper choice of $l$, we can prove $\delta_{\mathrm{loc,dis}}$
is independent of the system size. $\delta_{\mathrm{loc,qud}}$ can
be analyzed in a similar way. More concretely, we have

\begin{Lemma}[Error in discretization and quditization for local observables]

For a $D$-dimensional many-body dynamics, the errors for local observables
in discretization and quditization procedures are bounded by
\begin{equation}
\begin{aligned}
\delta_{\mathrm{loc,dis}}\leq\tilde{\mathcal{O}}\left(t^{D+1}\eta^{j_{\mathrm{max}}+1}\right),\\ \delta_{\mathrm{loc,qud}}\leq\mathcal{O}\left(d^{2D+1/2}e^{-d}\frac{t^{2+2D}}{\eta^{D+\frac{3}{2}}}\right),
\end{aligned}
\end{equation}
respectively. The corresponding circuit depth for simulation is $\mathcal{O}\left(\frac{t}{\Delta t}\sqrt{\frac{d}{\eta}}\right)$.
\label{Lemma:local_dis_and_qud}
\end{Lemma}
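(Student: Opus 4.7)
\emph{Strategy.} My plan is to extend the triangle-inequality plus Lieb--Robinson argument that the excerpt already sketches for $\delta_{\mathrm{loc,dis}}$ to cover both bounds. For each of the discretization and quditization steps I introduce restricted channels $\mathcal{E}_{S,X_{[l]}}$, $\tilde{\mathcal{E}}_{S,X_{[l]}}$, $\mathcal{E}_{\mathrm{tro},X_{[l]}}$, $\hat{\mathcal{E}}_{\mathrm{tro},X_{[l]}}$, in which only the local Hamiltonian terms, jump operators and bosonic modes within graph distance $l$ of the support $X$ of the observable $O_X$ are retained. A single triangle inequality then bounds the local error by (i) the difference between the full and the restricted exact (or Trotterized) channel, (ii) the difference between the restricted exact and the restricted approximate channel, and symmetrically (iii) the restricted--full gap for the approximate channel.

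\emph{Controlling the three pieces.} Pieces (i) and (iii) are exactly what the open-system Lieb--Robinson bound of Ref.~\cite{trivedi2024liebrobinsonboundopenquantum} is designed to control: for any observable supported in $X$, the Heisenberg evolutions under the full and the restricted model differ by a quantity superpolynomially small in $l-vt$, with a velocity $v$ depending only on $\lVert H_{i,i+1}\rVert$, $\lVert J_{i,i+1}\rVert$ and the $L^1,L^\infty$ norms of the kernels. Lemma~\ref{lemma:EB_Kernel} guarantees that $\tilde K_i$ inherits these kernel bounds uniformly in $\eta$ and $j_{\max}$, so the same velocity works for the discretized and quditized dynamics; choosing $l=ct$ with $c>v$ makes (i) and (iii) smaller than any inverse power of $t$, and they are absorbed into the $\tilde{\mathcal{O}}$. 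Piece (ii) is then a bulk estimate on a subsystem containing only $\mathcal{O}(l^D)=\mathcal{O}(t^D)$ sites. Substituting $N\to t^D$ in Lemma~\ref{Lemma:EB_discretization_total} immediately gives the claimed $\tilde{\mathcal{O}}(t^{D+1}\eta^{j_{\max}+1})$ bound on $\delta_{\mathrm{loc,dis}}$; substituting $N\to t^D$ in Lemma~\ref{Lemma:EB_quditization_total} turns the prefactor $N^2$ into $t^{2D}$, while the additional factors $d^{2D}/\eta^D$ in $\delta_{\mathrm{loc,qud}}$ come from the $\mathcal{O}((l/\eta)^D)$ bosonic modes that the restricted region carries, each contributing $\sqrt d$ through the norm bound $\lVert P_d b_{i,j}^n P_d\rVert\le\sqrt d$ when one runs the per-mode truncation argument underlying Lemma~\ref{Lemma4}.

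\emph{Circuit depth and main obstacle.} The depth count follows immediately from the norm calculation already performed in Sec.~\ref{subsub:total_resource}: one Trotter step under $P_d\tilde H_{SE}(t)P_d$ has operator norm $\mathcal{O}(r\sqrt{d/\eta})$, so its Hamiltonian simulation can be realised in depth $\mathcal{O}(\sqrt{d/\eta})$, and multiplying by the $t/\Delta t$ Trotter steps yields the advertised $\mathcal{O}((t/\Delta t)\sqrt{d/\eta})$. The hard part, I expect, will be verifying that the Lieb--Robinson bound of Ref.~\cite{trivedi2024liebrobinsonboundopenquantum} applies uniformly to all three dynamics involved---exact, discretized, and quditized---and that the per-mode particle-number confinement of Lemma~\ref{Lemma4} survives the spatial restriction so that piece (ii) can genuinely be analysed as a bulk error on the small subsystem. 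Both reductions should follow from the locality of the underlying arguments, since the restriction preserves the Gaussian product structure of the bath and only removes modes rather than modifying the ones that are kept.
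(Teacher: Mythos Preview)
Your overall triangle-inequality plus Lieb--Robinson strategy is right, and for $\delta_{\mathrm{loc,dis}}$ it matches the paper's argument essentially verbatim. The quditization bound, however, has a genuine gap.

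The problem is your claim that ``the same velocity works for the discretized and quditized dynamics.'' After projecting with $P_d$, the bath modes are finite-dimensional and no longer Gaussian, so the non-Markovian Lieb--Robinson bound of Ref.~\cite{trivedi2024liebrobinsonboundopenquantum} (which is formulated in terms of the kernels $K_i^{(\sigma,\sigma')}$) does not apply to $\hat{\mathcal{E}}_{\mathrm{tro}}$. The paper instead treats $P_d\tilde H_{SE}P_d$ as an ordinary closed-system Hamiltonian on spins plus qudits and invokes the standard finite-dimensional Lieb--Robinson bound. But the local terms of that Hamiltonian have norm $\mathcal{O}(\sqrt{d/\eta})$, not $\mathcal{O}(1)$, so the effective Lieb--Robinson velocity for the quditized dynamics scales as $\sqrt{d/\eta}$. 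To make piece~(iii) small one must therefore take the restriction radius $l'\sim t\sqrt{d/\eta}$ (the paper uses $l'=\sqrt{d/\eta}\,t+da_0\le\mathcal{O}(td/\sqrt\eta)$), not $l'\sim t$.

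This is the actual origin of the $d^{2D}/\eta^{D}$ factor: substituting $N\to l'^{D}$ in Lemma~\ref{Lemma:EB_quditization_total} gives the middle piece as $\mathcal{O}\big(\sqrt{d}\,e^{-d}\,l'^{2D}t^{2}/\eta^{3/2}\big)$, and $l'^{2D}\le\mathcal{O}(t^{2D}d^{2D}/\eta^{D})$ produces exactly $d^{2D+1/2}e^{-d}t^{2D+2}/\eta^{D+3/2}$. Your alternative explanation---that the extra $d^{2D}/\eta^{D}$ comes from counting $\mathcal{O}((l/\eta)^D)$ bosonic modes each contributing $\sqrt d$---is not correct: Lemma~\ref{Lemma:EB_quditization_total} already accounts for all modes in its $N^2t^2/\eta^{3/2}$ scaling, and with your choice $l\sim t$ the substitution $N\to t^D$ would yield only $\sqrt{d}\,e^{-d}t^{2D+2}/\eta^{3/2}$, missing the claimed bound. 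The circuit-depth count is fine.
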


These two terms resembles the error bounds we obtained in Lemma \ref{Lemma:EB_discretization_total}
and Lemma \ref{Lemma:EB_quditization_total}, by replacing $N$ with $t^{D}$ as we expected. The
factor $\sqrt{d/\eta}$ in the circuit depth comes from the norm of
the local Hamiltonian as already analyzed in Subsubsec. \ref{subsub:total_resource}.
In the absence of experimental noise, the error for local observables
are given by 
\begin{equation}
\delta_{\mathrm{loc}}\leq\delta_{\mathrm{loc,dis}}+\delta_{\mathrm{loc,tro}}+\delta_{\mathrm{loc,qud}},
\end{equation}
where the right hand side is given by Lemma \ref{Lemma:local_trotter} and \ref{Lemma:local_dis_and_qud}, respectively.
In this case, we can choose 
\begin{equation}
\begin{aligned}\eta=\tilde{\mathcal{O}}\left(\left(\frac{\delta_\mathrm{loc}}{t^{D+1}}\right)^{\frac{1}{j_\mathrm{max}+1}}\right),\ \Delta t=\mathcal{O}\left(\left(\frac{\delta_\mathrm{loc}}{t^{D+1}}\right)^{\frac{1}{P}}\right),\\
d=\mathcal{O}\left(\log\left(\frac{t^{2D+2}}{\delta_\mathrm{loc}}\times\left(\frac{t^{D+1}}{\delta_\mathrm{loc}}\right)^{\frac{D+\frac{3}{2}}{j_\mathrm{max}+1}}\right)\right),
\end{aligned}
\end{equation}
such that the total error for local observables is below $\delta_\mathrm{loc}$.
With $j_\mathrm{max}=P/2$. the circuit depth is 
\begin{equation}
\text{Circuit Depth }=\tilde{\mathcal{O}}\left(t\left(\frac{t^{D+1}}{\delta_\mathrm{loc}}\right)^{\frac{2}{P}}\right),
\end{equation}
and the number of ancillary
qubits per system site is 
\begin{equation}
\mathcal{O}\left(\frac{\log(d)}{\eta}\right)=\tilde{\mathcal{O}}\left(\left(\frac{t^{D+1}}{\delta_\mathrm{loc}}\right)^{\frac{2}{P+2}})\right),
\end{equation}
which are both independent of $N$. Theorem \ref{theorem_error_robustness} is obtained by choosing the Trotter order $P=2p$. In the presence of experimental
noise where each two-qubit gate has noisy rate $\gamma$, a proper
choose of $d$, $\eta$, $\Delta t$ leading to the total error $\delta_{\mathrm{loc}}\leq\tilde{\mathcal{O}}\left(t^{D+1}\gamma^{P/(2D+P+2)}\right)$. Similarly, we can choose $P=2p$ which leads to $\delta_\mathrm{loc}\leq\tilde{\mathcal{O}}(t^{D+1}\gamma^{p/(D+p+1)})$, a similar expression as its closed system counterpart \cite{RahulUnpunlishedNoiseRobust}. This local error expression
is $N$-independent, thus robust to local noise. We emphasize that
this choise may not be optimal. In practice, one can try to numerically
optimized $d$, $\eta$ and $\Delta t$ to obtain the optimal performance.

\subsection{Quantum algorithm for commuting, non-dissipative Lindbladian Markovian lattice models (Theorems \ref{theorem_markovian_non}) \label{subsection:proof_markovian_non}}
\subsubsection{Algorithm description}
We next consider algorithms for simulating Markovian master equation. We first consider a 1D Lindbladian with a nearest neighbour Hamiltonian and nearest neighbour Hermitian jump operators:
\begin{equation}
\begin{aligned}
    \mathcal{L}=-&i[H_{S},\cdot]+\sum_{i=1}^{N-1}\bigg[J_{i,i+1}(\cdot) J_{i,i+1}-\frac{1}{2} \{J_{i,i+1}^{2}, \cdot\}\bigg],\label{eq:Commuting_non_dissipative_Markovian}
\end{aligned}
\end{equation}
with $H_{S}=\sum_{i=1}^{N-1}H_{i,i+1}$ and $J_{i, i + 1} = J_{i, i + 1}^\dagger$. Furthermore, we will also assume that the jump operators are commuting i.e., $[J_{i,i+1},J_{j,j+1}]=0$. As described in section \ref{subsec:Simulation-of-Markovian}, the dynamics generated by this Lindbladian can instead be expressed as an ensemble average of dynamics generated by an ensemble of Hamiltonians $H_\xi(t)$ where
\begin{align}\label{eq:H_xi_repeat}
H_\xi(t) = H_S + V_\xi(t) \text{ where } V_\xi(t) = \sum_{i = 1}^{N - 1} \xi_i(t) J_{i, i + 1},
\end{align}
with $\xi_i(t)$ being a zero-mean Gaussian white noise process satisfying $\langle \xi_i(t) \xi_j(s)\rangle = \delta_{i, j}\delta(t - s)$. The channel generated by the Lindbladian in Eq.~\eqref{eq:Commuting_non_dissipative_Markovian}, $\exp(\mathcal{L}t)$, can be related to the unitary $U_\xi(t, s) = \mathcal{T}\exp(-i\int_s^t H_\xi(\tau) d\tau)$ generated by $H_\xi(t)$ in Eq.~\eqref{eq:H_xi_repeat}:
\[
\exp(\mathcal{L}t) = \mathbb{E}_\xi\big(U_\xi(t, 0) (\cdot) U_\xi(0, t)\big).
\]
To simulate $\exp(\mathcal{L}t)$, the strategy indicated by this relation is to sample a single white-noise trajectory $\xi_i(t)$, and then simulate the corresponding unitary $U_\xi(t, 0)$. 

As discussed in section \ref{subsec:Simulation-of-Markovian}, it is not possible to directly employ a time-dependent Hamiltonian simulation algorithm to simulate the Hamiltonian $H_\xi(t)$ since $\xi_{i}(t)$ is both almost surely unbounded and non-differentiable. We instead first rotate to the interaction picture with respect to $V_\xi(t)$:
since the jump operators commute, the unitary generated by $V_\xi(t)$, $W_\xi(t,0)=\mathcal{T}e^{-i\int_{0}^{t}V_{\xi}(\tau)d\tau}$, can be explictly written as: 
\begin{equation}
W_\xi(t,0)=\prod_{i=1}^{N-1}e^{-iJ_{i,i+1}\mathcal{W}_{i}(t)},
\end{equation}
where $\mathcal{W}_{i}(t)=\int_{0}^{t}\xi_{i}(\tau)d\tau$ is a sample path of the Wiener process.
Therefore, transforming $H_\xi(t)$ to the interaction picture with respect to $V_\xi(t)$, we obtain the Hamiltonian $\tilde{H}_\xi(t)$ given by
\begin{equation}
\begin{aligned}\bar{H}_\xi (t) & = W_\xi(t, 0) H_{S}W_\xi(0, t)=\sum_{i=1}^{N-1}\bar{H}_{i,i+1}(t),
\end{aligned}
\label{eq:CNM_interaction_Hamiltonian}
\end{equation}
with
\begin{equation}
\begin{aligned}
\bar{H}_{i,i+1}(t)=\prod_{i'\in\{i-1,i,i+1\}}e^{iJ_{i',i'+1}\mathcal{W}_{i'}(t)}H_{i,i+1}e^{-iJ_{i',i'+1}\mathcal{W}_{i'}(t)}.
\end{aligned}\label{eq:Markovian_non_interaction}
\end{equation}

\begin{figure}
\includegraphics[width=1\columnwidth]{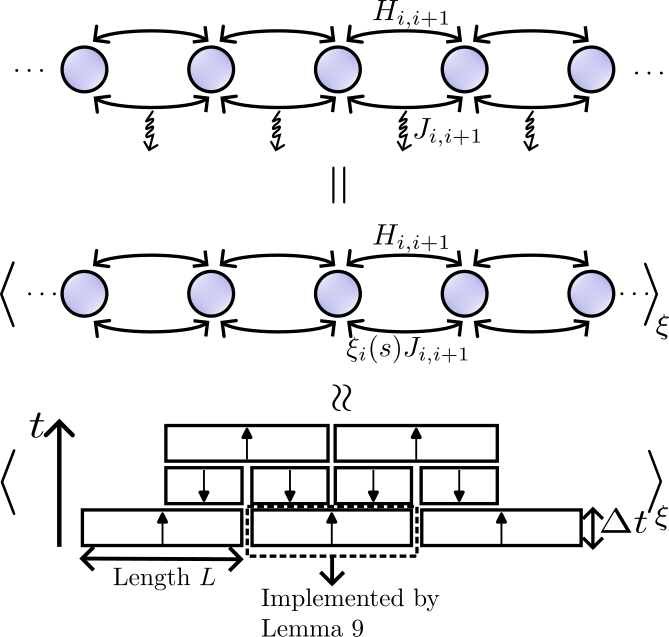}
\caption{The algorithm for simulating commuting, non-dissipative Lindbladian Markovian dynamics. The Lindbladian master equation is first unraveled into an ensemble of stochastic Schrodinger equations. For each realization in this ensemble, we use the algorithm in Ref. \cite{haah2021quantum} to split the total evolution into small blocks and implement each block using Lemma. \ref{Lemma:generalizing_time_dependent_simulation_closed}. Here, the dark uparrow (downarrow) represents the forward- (backward-) evolutions.}
\label{fig:commuting_non_dissipative}
\end{figure}

An important property is that the Wiener process can be classically
sampled. Once $\{\mathcal{W}_{i}(t)\}$ is sampled out, our task is to simulate
a time-dependent Hamiltonian Eq. (\ref{eq:CNM_interaction_Hamiltonian}).
We remark that since the Wiener process has almost surely continuous sample paths \cite{Peter2010Brownian}, $\bar{H}_{i,i+1}(t)$ is
also local, bounded, and almost surely continuous. 
Therefore, there exists a
non-vanishing Lieb-Robinson light cone for $\bar{H}_{\xi}(t)$ \cite{Anthony2023Speedlimit}.
We can apply the method from Ref. \cite{haah2021quantum} to divide the whole evolution into several time steps and several small blocks with alternating forward- and backward-evolution, as shown in Fig. \ref{fig:commuting_non_dissipative}. We can choose each time step as $\Delta t =\mathcal{O}(1/\mathrm{polylog(Nt/\delta)})$, with each block having size $L=\mathcal{O}(\mathrm{polylog}(Nt/\delta))$. 
The question is how to simulate the time-dependent Hamiltonian evolution (both forward and backward) in each small block.

A subtle point is that the Wiener process is almost surely non-differentiable, whereas most time-dependent Hamiltonian simulation algorithms assume the Hamiltonian is differentiable. To address this difficulty, we need to slightly generalize the Hamiltonian simulation algorithm. Now we assume that the Hamiltonian at each block $\bar{H}_{\xi, \mathrm{b}} $ can be expressed as a linear combination of unitaries
\begin{equation}
\bar{H}_{\xi,\mathrm{b}}(t)=\sum_{l}w_{l}(t)V_{l},
\label{eq:local_LCU}
\end{equation}
where $V_{l}$ is a local unitary operator and $w_l(t)$ the time-dependent coefficients.
Since the size of each block $L$ is bounded by $\mathcal{O}(\mathrm{polylog}(Nt/\delta))$, the number of indices $l$ can be bounded by $\mathcal{O}(\mathrm{polylog}(Nt/\delta))$. We further introduce two sets of oracles depending on a parameter $\epsilon$, which will be determined later:
\begin{equation}
\begin{aligned}
&O_{\mathrm{uni}}\ket{l}_{\mathrm{a,coef}}\ket{\psi}_{S}=\ket{l}_{\mathrm{a,coef}}V_{l}\ket{\psi}_{S}\ \forall l;\\
&O_{\mathrm{coe}}(\epsilon)\ket{l}_{\mathrm{a,coef}}\ket{j}_{\mathrm{a,time}}\ket{z}_{\mathrm{a,reg}}\\\quad\quad\quad&=\ket{l}_{\mathrm{a,coef}}\ket{j}_{\mathrm{a,time}}\ket{z\oplus w_{l}(j\epsilon)}_{\mathrm{a,reg}}\ \forall l,j,\label{eq:description_of_oracles_non_dissipative}
\end{aligned}
\end{equation}
where $\ket{l}_{\mathrm{a,coef}}, \ket{j}_{\mathrm{a,time}}, \ket{z}_{\mathrm{a,reg}}$ are different sets of ancillas and $\ket{\psi}_S$ is the system state. We have
\begin{Lemma}[Generalizing Hamiltonian simulation algorithm from \cite{Kieferov2019Simulating}]
If $\bar{H}_{i,i+1}$ is Holder-$\alpha$ continuous, the evolution in each block $\bar{U}_{\xi,\mathrm{b}}(t,0)=\mathcal{T}\exp(-i\int_0^t\bar{H}_{\xi,\mathrm{b}}(\tau))d\tau $ can be simulated with $\mathcal{O}(t\mathrm{polylog}(t/\delta)\mathrm{poly}(L))$ queries to the oracles $\{O_{\mathrm{uni}},\mathcal{O}_{\mathrm{coe}}(\epsilon)\}$ with $\epsilon=\mathcal{O}((\delta/Nt)^{1/\alpha})$ and additional $\mathcal{O}(t\mathrm{polylog}(t/\delta)\mathrm{poly}(L))$ elementary gates.\label{Lemma:generalizing_time_dependent_simulation_closed}
\end{Lemma}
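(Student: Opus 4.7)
The plan is to reduce the simulation of the H\"older continuous time-dependent Hamiltonian $\bar{H}_{\xi,\mathrm{b}}(t)$ to the well-studied case of a piecewise-constant (hence locally time-independent) Hamiltonian, for which the truncated Dyson-series / HAM-T framework of Ref.~\cite{Kieferov2019Simulating} applies essentially verbatim. Concretely, I would first replace $\bar{H}_{\xi,\mathrm{b}}(t)$ by its time-discretized surrogate $\bar{H}^{\epsilon}_{\xi,\mathrm{b}}(t) := \sum_l w_l\!\left(\epsilon \lfloor t/\epsilon \rfloor\right) V_l$ at time resolution $\epsilon$. Since each coefficient $w_l$ inherits H\"older-$\alpha$ continuity from $\bar{H}_{\xi,\mathrm{b}}$ and the block contains only $\mathrm{poly}(L)$ local unitaries, the pointwise deviation satisfies $\norm{\bar{H}_{\xi,\mathrm{b}}(t)-\bar{H}^\epsilon_{\xi,\mathrm{b}}(t)} \leq \mathcal{O}(\mathrm{poly}(L)\,\epsilon^\alpha)$. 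A standard Duhamel/Gr\"onwall estimate lifts this into an evolution error of order $\mathcal{O}(\mathrm{poly}(L)\,t\,\epsilon^\alpha)$ per block, and after summing over the $\mathcal{O}(Nt/\mathrm{polylog})$ blocks making up the full simulation the total approximation error is $\mathcal{O}(Nt\,\epsilon^\alpha)$ up to logarithms. Demanding that this be at most $\delta$ forces the stated resolution $\epsilon = \mathcal{O}\left((\delta/Nt)^{1/\alpha}\right)$.

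Next I would simulate the piecewise-constant Hamiltonian $\bar{H}^\epsilon_{\xi,\mathrm{b}}(t)$ using the compressed Dyson-series algorithm of Ref.~\cite{Kieferov2019Simulating}. The building block there is a selected block-encoding (HAM-T) that, controlled by an ancillary time register $\ket{j}_{\mathrm{a,time}}$, acts on the system as a block-encoding of $\bar{H}_{\xi,\mathrm{b}}(j\epsilon)$. Given the two oracles in Eq.~\eqref{eq:description_of_oracles_non_dissipative}, the HAM-T can be assembled from one call to $O_{\mathrm{coe}}(\epsilon)$ (loading $w_l(j\epsilon)$ into $\ket{z}_{\mathrm{a,reg}}$), a prepare-like unitary over $\ket{l}_{\mathrm{a,coef}}$ that encodes the loaded amplitudes as an LCU superposition, one call to $O_{\mathrm{uni}}$ (applying $V_l$ controlled on $\ket{l}_{\mathrm{a,coef}}$), and uncomputation. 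Feeding this HAM-T into the truncated Dyson series together with oblivious amplitude amplification simulates the block for time $t$ with query complexity $\mathcal{O}(t\,\mathrm{polylog}(t/\delta)\,\mathrm{poly}(L))$, dominated by $\mathcal{O}(t \lVert \bar{H}_{\xi,\mathrm{b}}\rVert)=\mathcal{O}(t\,\mathrm{poly}(L))$ Dyson segments each truncated at order $\mathcal{O}(\log(1/\delta)/\log\log(1/\delta))$; the extra elementary gates arising from the prepare unitary, arithmetic on the time register, and amplification all fit into the same budget.

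The main obstacle is that the Kiefer\'ova--Scherer--Berry analysis as originally stated assumes enough smoothness of $\bar{H}(t)$ to justify both convergence of the truncated Dyson expansion and the finite-difference sampling of $H$ at times $\{j\epsilon\}$; translating those bounds into the H\"older-continuous setting requires replacing derivative-based quadrature estimates by H\"older moduli of continuity, which is precisely the source of the $\epsilon^\alpha$ rate above. A related subtlety is that $O_{\mathrm{coe}}(\epsilon)$ only provides the coefficients at multiples of $\epsilon$; this is consistent because the object actually simulated by HAM-T is the piecewise-constant surrogate $\bar{H}^\epsilon_{\xi,\mathrm{b}}$, with the deviation from $\bar{H}_{\xi,\mathrm{b}}$ absorbed into the H\"older bound of the first paragraph. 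Everything else---the Dyson truncation order, the LCU prepare weights, and the amplitude-amplification step---is inherited directly from Ref.~\cite{Kieferov2019Simulating}, yielding the stated $\mathcal{O}(t\,\mathrm{polylog}(t/\delta)\,\mathrm{poly}(L))$ count for both oracle queries and additional elementary gates.
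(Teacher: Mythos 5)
Your proposal is correct and follows essentially the same route as the paper's proof: the paper likewise reduces to the algorithm of Ref.~\cite{Kieferov2019Simulating} and only replaces the derivative-based bound on the discretized time integral $\int_s^{s+\Delta t}\bar{H}(\tau)d\tau\approx\frac{\Delta t}{M}\sum_j\bar{H}(s_j)$ by a H\"older-$\alpha$ modulus-of-continuity bound of order $M^{-\alpha}$, then chooses $M=\mathcal{O}((Nt/\delta)^{1/\alpha})$ (i.e.\ $\epsilon=\Delta t/M$) and uses the logarithmic dependence of the algorithm on $M$ to absorb this into a $\mathrm{polylog}$ overhead. Your phrasing via a piecewise-constant surrogate plus a Duhamel/Gr\"onwall step, and the explicit assembly of the HAM-T block-encoding from $O_{\mathrm{uni}}$ and $O_{\mathrm{coe}}(\epsilon)$, are just a more detailed rendering of the same argument.
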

In Ref. \cite{Kieferov2019Simulating}, the differentiability of the Hamiltonian is used to discretize the integration $\int_s^{s+\Delta t}\bar{H}_{i,i+1}(\tau )d\tau$ into the sum of $M$ pieces 
\[
\int_s^{s+\Delta t}\bar{H}_{i,i+1}(\tau)d\tau\approx\frac{\Delta t}{M}\sum_{j=1}^M\bar{H}_{i,i+1}(s_j)
\]
with $s_j=s+j\Delta t/M$ and $s=i\Delta t$ for some $i$. In \cite{SM}, we have shown that this discretization also holds if the Hamiltonian is only Holder continuous. The only cost is that we need to increase number of discretization points $M$ by a polynomial factor in $N,t,\delta$ to achieve the same accuracy. However, the complexity of the algorithm only scales logarithmically with $M$. Therefore, our algorithm in Lemma \ref{Lemma:generalizing_time_dependent_simulation_closed} parallels the one in Ref. \cite{Kieferov2019Simulating} with more discretization points $M$, which yields an additional $\mathrm{polylog}(Nt/\delta)$ factor in complexity. Correspondingly, the parameter $\epsilon$ in Eq. (\ref{eq:description_of_oracles_non_dissipative}) can be chosen as $\epsilon =\Delta t/M$.

According to Ref. \cite{Peter2010Brownian}, the Wiener process is Holder-$1/4$ continuous, i.e.,
\[
|\mathcal{W}_i(s)-\mathcal{W}_i(s')|\leq D_\mathrm{wie}|s-s'|^{\frac{1}{4}}
\]
almost surely with a constant $D_\mathrm{wie}$. From Eq. (\ref{eq:Markovian_non_interaction}), the Hamiltonian satisfies
\[
\lVert\bar{H}_{i,i+1}(s)-\bar{H}_{i,i+1}(s')\rVert\leq 6D_\mathrm{wie}|s-s'|^\frac{1}{4},
\]
which is Holder-$1/4$ continuous. Therefore, the algorithm in Lemma \ref{Lemma:generalizing_time_dependent_simulation_closed} can be directly applied to each block with $L=\mathcal{O}(\mathrm{polylog}(Nt/\delta))$.
The total cost of the algorithm is thus 
\[
\text{Query Complexity}=\mathcal{O}\left(Nt\mathrm{polylog}\left(\frac{Nt}{\delta}\right)\right)
\]
with addition $\mathcal{O}(Nt\mathrm{polylog}(Nt/\delta))$ elementary gates.

Finally, we remark that due to the locality of Eq. (\ref{eq:Markovian_non_interaction}), which only involves a constant number of matrix multiplications, the coefficients $w_{l}(t)$ can be efficiently computed by a classical computer once $\mathcal{W}_i(t)$ is sampled.
Therefore, if we have access to a Q-RAM with $\{w_l(j\epsilon)\}_{l,j}$ as the classical input, the two oracles in Eq. (\ref{eq:description_of_oracles_non_dissipative}) can be built from $\mathcal{O}(\mathrm{polylog}(Nt/\delta))$ elementary gates as shown in Ref. \cite{Berry2015simulating}. In total, if we have access to a Q-RAM, the elementary gate complexity is
\[
\text{Elementary Gate Complexity}=\mathcal{O}\left(Nt\mathrm{polylog}\left(\frac{Nt}{\delta}\right)\right),
\]
and the circuit depth is 
\[
\text{Circuit Depth}=\mathcal{O}\left(t\mathrm{polylog}\left(\frac{Nt}{\delta}\right)\right).
\]

In summary, our algorithm consists of two main steps:
(i), By applying the method in Ref. \cite{haah2021quantum}, the system is partitioned into $\mathcal{O}(N/\mathrm{polylog}(Nt/\delta))$ blocks, with $\mathcal{O}(\mathrm{polylog}(Nt/\delta))$
    sites in each block. The evolution time $t$ is also divided into
    $T$ time steps with alternating forward- and backward-evolutions, where $T=\mathcal{O}(t\mathrm{polylog}(Nt/\delta))$;
(ii), We apply Lemma \ref{Lemma:generalizing_time_dependent_simulation_closed} to simulate the Hamiltonian evolution in each small block and time step.

\subsubsection{Discussion on dissipative case\label{subsubsec_commuting_dissipative_th5}}
The key point of the above algorithm is to rotate the unraveled Schrodinger equation with respect to the stochastic part, and the resulting Hamiltonian can be easily computed in a constant time. This hinges on the fact that the classical Ito stochastic differential equation is solvable \citep{kloeden1992stochastic}, thereby allowing the associated Wiener process to be sampled in a purely classical manner.

Analogously, the commuting, dissipative Lindbladian dynamics, 
Eq. (\ref{eq:Lindbladian_dynamics}) with $[J_{i,i+1},J_{j,j+1}]=[J_{i,i+1},J_{j,j+1}^\dagger]=0$ for $\forall i\neq j$, 
might be simulated in a similar way. This Lindbladian can be unraveled to a Schrodinger equation
\begin{equation}
H_\mathrm{unr}=H_S+V_\mathrm{sto}(t), \label{eq:quantum_ito_stochastic}
\end{equation}
where $V_\mathrm{sto}(t)=\sum_{i=1}^{N-1}(J_{i,i+1}a_i^\dagger(t)+J_{i,i+1}^\dagger a_i(t))$, with $[a_i(t),a_{i'}^\dagger(t')]=\delta_{i,i'}\delta(t-t')$ being the bath operators. Unlike Eq. (\ref{eq:stochastic_schrodinger_eq}), here we have to introduce the environmental bath which is initialized in the joint vacuum state and traced out at the end. 

If we rotate $H_\mathrm{unr}$ to the interaction picture with respect to $V_\mathrm{sto}(t)$ as
\begin{equation}
\bar{H}_S(t)=W_\mathrm{sto}(0,t)H_SW_\mathrm{sto}(t,0),
\end{equation}
with $W_\mathrm{sto}(t,0)=\mathcal{T}\exp(-i\int_0^t V_\mathrm{sto}(t)dt)$,
the Lieb-Robinson velocity still exists. However, the difficulty is to realize $\bar{H}_S(t)$, which requires to solve the quantum Ito stochastic differential equation $W_\mathrm{sto}(t,0)$ efficiently. 

On the other hand, suppose we are given an oracle which has access to a quantum Ito solver, the commuting, dissipative Lindbladian dynamics can be simulated in a similar way as the non-dissipative case. 
More specifically, we define the quantum Ito solver oracle $O_\mathrm{qito}$ as taking three inputs $\{t,h,\ket{\psi}\}$ with $h$ a local, Hermitian operator, $\ket{\psi}$ a system-environment state, such that
 \begin{equation}
O_\mathrm{qito}(t,h,\ket{\psi})=
 \alpha W_\mathrm{sto}(0,t)hW_\mathrm{sto}(t,0)\ket{\psi}\ket{0}+\beta \ket{\varphi}\ket{1}.
 \end{equation}
Here, $\ket{0},\ket{1}$ is a register indicating if the procedure succeeds, $\alpha=1/\lVert h\rVert$, and $\ket{\varphi}$ is any other system-environment state.
The quantum Ito solver oracle allows us to realize the Hamiltonian in the interaction picture efficiently. In the Supplemental Material \citep{SM}, we further prove that we can show $\bar{H}_S(t)$ is Holder continuous such that Lemma \ref{Lemma:generalizing_time_dependent_simulation_closed} can be applied.
Therefore, the commuting, dissipative Lindbladian dynamics Eq. (\ref{eq:Lindbladian_dynamics}) can be simulated with $\mathcal{O}(NT\mathrm{polylog}(NT/\delta))$ queries to the quantum Ito solver oracle and $\mathcal{O}(NT\mathrm{polylog}(NT/\delta))$ additional elementary gates. The simulation algorithm is the same as the non-dissipative one as first divide the whole evolution into small blocks and time steps, and then simulate the unraveled Hamiltonian evolution in each block using $O_\mathrm{qito}$ and Lemma \ref{Lemma:generalizing_time_dependent_simulation_closed}. After tracing out the environmental bath, we obtain the corresponding Lindbladian dynamics. 
We leave the possibility of constructing this quantum Ito solver oracle to the future study.

\subsection{Quantum algorithms for general commuting Lindbladian Markovian lattice models based on dilation (Theorem \ref{theorem3})\label{subsec:Proof-of-Observation}}

In this subsection, we outline a strategy for simulating the Lindbladian dynamics of Eq. (\ref{eq:Lindbladian_dynamics})
on a digital quantum computer. The simulation error is defined analogously
as Eq. (\ref{eq:error_requirement}), replacing $\mathcal{E}_S(t)(\rho_S(0))$
with $e^{\mathcal{L}t}(\rho_S(0))$. The total simulation time $t$ is first divided into short  time steps of length $\Delta t$. On each time step, we dilate the original Lindbladian
dynamics into a Hamiltonian one with ancillas traced out as in Eq. (\ref{eq:definition_error_remainder_dilated}) and then implement
that dilated Hamiltonian on a quantum computer. The principal technique challenge
is to construct a $p^\text{th}$ order dilated Hamiltonian $H_\mathrm{dia}^{(p)}$ with only geometrically local interactions. Our main contribution is stated in Theorem \ref{theorem3}. 

In the previous method \citep{Linlin2024Simulating}, the Lindbladian dynamics is
first integrated via the Stochastic Schrodinger equation and the resulting quantum channel is expressed into the Kraus form. By the quantum Stinespring dilation
theorem, the Kraus operators can be realized as an isometry on the system plus one ancilla, which is 
implementable on a digital quantum computer. However, the Kraus operators produced by this procedure typically act non-locally across the system, thereby
forfeiting the geometric locality inherent to the original Lindbladian. For example,
at the first order in $\Delta t$, the Kraus operator 
can be written as 
\begin{equation}
\begin{aligned}
K_{0}&=I-i\Delta tH_{S}-\frac{\Delta t}{2}\sum_{i=1}^{N-1}J_{i,i+1}^{\dagger}J_{i,i+1},\\K_{i}&=\sqrt{\Delta t}J_{i,i+1}.
\end{aligned}
\end{equation}
The term $\sum_{i=1}^{N-1}J_{i,i+1}^{\dagger}J_{i,i+1}$ in $K_{0}$ is across the entire system. These non-local Kraus operators effectively mediate a non-local interactions to the dilated Hamiltonian, resulting in an overall cost that grows poorly with $N$.

To address this issue, we introduce a dedicated ancilla for each local jump
operator $J_{i,i+1}$. In this construction, each ancilla interacts only with its surrounding system sites, thereby preserving the locality inherent to the original Lindbladian dynamics. The total number of ancillas scales linearly in $N$.
The first order dilated Hamiltonian can be constructed as 
\begin{equation}
\begin{aligned}H_{\mathrm{dia}}^{(1)}(\sqrt{\Delta t})= & H_{S}+\\
 & \frac{1}{\sqrt{\Delta t}}\bigg(\sum_{j=1}^{N-1}J_{j,j+1}\otimes\ket{1}_{j}\bra{0}+\text{h.c.}\bigg).
\end{aligned}
\label{eq:first_order_dilated}
\end{equation}
Here $\ket{1}_{j}$ denotes the first excited state of the $j^\text{th}$ ancilla. It
can be directly verified that the dilation error $\mathcal{R}_\mathrm{dia}$ (defined in Eq. (\ref{eq:definition_error_remainder_dilated})) corresponding to $H_\mathrm{dia}^{(1)}$ scales as $\mathcal{O}((\Delta t)^2)$. We further analyze the leading order contribution  in $\mathcal{R}_\mathrm{dia}$ with respect to $\Delta t$, i.e., the superoperator $\mathcal{G}_\mathrm{dia}^{(2)}$ defined via
\begin{equation}\label{eq:second_order_dilation_error}
\mathcal{R}_\mathrm{dia}=\mathcal{G}_\mathrm{dia}^{(2)}\Delta t^2+\mathcal{O}((\Delta t)^3),
\end{equation}
and we can show that $\lVert \mathcal{G}_\mathrm{dia}^{(2)}\rVert_{\diamond}=\mathcal{O}(N)$.

To elevate the scheme to the second-order accuracy, we therefore augment $H_\mathrm{dia}^{(1)}$
with carefully chosen correction terms, and, if needed, additional ancilla levels, designed specifically to cancel
the $\mathcal{G}_\mathrm{dia}^{(2)}\Delta t^2$ component in Eq. (\ref{eq:second_order_dilation_error}).
More concretely, we have
\begin{equation}
\begin{aligned}\mathcal{G}_{\mathrm{dia}}^{(2)}(\rho_S) & =\sum_{j=1}^{N-1}\frac{1}{12}\bigg(\\
 & -i\rho_S H_{S}J_{j,j+1}^{\dagger}J_{j,j+1}+2i\rho_S J_{j,j+1}^{\dagger}H_{S}J_{j,j+1}-\\
 & i\rho_S J_{j,j+1}^{\dagger}J_{j,j+1}H_{S}+\text{h.c.}\bigg)+\cdots.
\end{aligned}
\label{eq:partial_error_term_dialated_2nd}
\end{equation}
The error term shown in Eq. (\ref{eq:partial_error_term_dialated_2nd})
can be canceled by introducing a compensating interaction term into $H_{\mathrm{dia}}^{(1)}$
as
\begin{equation}
\begin{aligned} & \frac{\Delta t}{12}\sum_{j=1}^{N-1}\bigg(2J_{j,j+1}^{\dagger}H_{S}J_{j,j+1}-\\
 & H_{S}J_{j,j+1}^{\dagger}J_{j,j+1}-J_{j,j+1}^{\dagger}J_{j,j+1}H_{S}\\
= & \frac{\Delta t}{12}\sum_{j=1}^{N-1}\bigg([J_{j,j+1}^{\dagger},H_{S}]J_{j,j+1}+J_{j,j+1}^{\dagger}[H_{S},J_{j,+1}]\bigg),
\end{aligned}
\label{eq:higher_order_dilated}
\end{equation}
and similarly for other error terms (represented by dots) in $\mathcal{G}_{\mathrm{dia}}^{(2)}$. By canceling
all such error terms,
we obtain the second order dilated Hamiltonian $H_\mathrm{dia}^{(2)}$. The full expression
can be seen in the Supplemental Material \citep{SM}.

We emphasize that the strength of compensating interaction terms we introduced is at one higher power of $\Delta t$ than the corresponding ones in $H_\mathrm{dia}^{(1)}$, so the first order dilation is intact. Furthermore, each compensator
can be written into the commutator of $[H_{S},J_{j,j+1}]$ and its Hermitian
conjugate, which guarantees the geometric locality of the $H_\mathrm{dia}^{(2)}$. This procedure
can be continued to construct the third order dilated Hamiltonian $H_\mathrm{dia}^{(3)}$. We report the explicit form of $H_\mathrm{dia}^{(3)}$ and its error
analysis in the Supplemental Material \citep{SM}. Due to the tedious algebraic
computations, it remains an open problem whether this method can be pushed to arbitary order while maintaining both geometric locality of $H_\mathrm{dia}$ and ensuring that the dilation error scales only linearly with $N$.

We require that all jump operators mutually commute. 
If this commutativity condition fails, one can show that the error component at $\mathcal{O}((\Delta t)^{3})$ in $\mathcal{R}_{\mathrm{dia}}$
cannot be canceled by introducing compensating terms
into $H_{\mathrm{dia}}^{(2)}$ via our iterative procedure. Hence, the
third-order dilated Hamiltonian for general Markovian dynamics cannot be obtained from our method.

Finally, we stress that the operator norm of our dilated Hamiltonian grows as $\mathcal{O}(\frac{1}{\sqrt{\Delta t}})$,
thereby evading the constant-norm assumption at the heart of the no-go theorem in Ref. \citep{cleve2019efficientquantumalgorithmssimulating}.
Importantly, this non-constant norm does not incur additional costs in the subseqeunt product
Trotterization decomposition, as long as one chooses a higher
order product formula. For example, to be compatible with the third-order accuracy of
$H_\mathrm{dia}^{(3)}$, one should choose a $7^\text{th}$ order product Trotter
formula.

\section{Discussion and Outlook\label{sec:Discussion-and-Outlook}}

In this paper, we study the quantum simulation of an open, geometrically local many-body quantum system. For a broad class of non-Markovian dynamics, we develop an algorithm which can achieve the nearly optimal scaling of gate complexity and circuit depth in terms of the system size $N$ and the evolution time $t$. When restricted to local observables, this algorithm is also error-robustness against local experimental noise. For Markovian dynamics, we introduce two algorithms based on the locally dilated Hamiltonian and sampling of the Wiener process, respectively, which can achieve better scaling in $N$ compared to existing methods.

Our complexity results for simulating non-Markovian dynamics have direct implications for phase classification. For open quantum systems, phases are commonly defined in two ways: (i) via continuous time evolution \cite{Coser2019Classification},---   two states lie in the same phase if they can be connected by a rapid-mixing open quantum dynamics; and (ii) via discrete time evolution \cite{Sarang2024Defining},--- two states lie in the same phase if they can be connected by local, reversible channels.  Theorem \ref{theoremdi} provides evidence that these two definitions are equivalent i.e., when translating a continuous time evolution into a discrete model, we only introduce an $N^{o(1)}$ overhead, which can be made smaller than any fractional power of $N$.

A natural question in simulating non-Markovian open quantum system is whether one can further improve our algorithm to achieve a gate complexity which scales as $Nt\log(Nt/\delta)$. In closed quantum systems, such scaling follows by employing the Lieb-Robinson bound \citep{haah2021quantum}. One might hope to recover the logarithmic overhead by letting the cutoff $j_\mathrm{max}$ in our discretization step grow appropriately with $N$ and $t$. However, this idea requires a universal bound on all higher-order derivatives of the temporal coupling function $v(\tau)$ such that $|\frac{\partial^k v(\tau)}{\partial \tau^k}|\leq \mathcal{O}(\exp(k))$. It remains unclear whether any physically realistic environment satisfies such a stringent condition.

Another interesting and key problem is the resource counting on the ancillary modes. Without recycling, our current algorithm requires $\mathcal{O}(Nt(Nt/\delta)^{1/(p+1)})$ ancillary modes in total. By contrast, a naive counting based on the rank of the underlying quantum channels would suggest a lower bound $\mathcal{O}(N)$ on ancillary modes. Closing the gap, either by reducing our ancilla overhead or by proving a tighter lower bound, remains an important direction for future work.

Finally, the optimal simulation of Markovian lattice models remains an open problem. For instance, is there a systematic scheme which can extend our dilation and Trotterization method to any order? Complementarily, deriving possible lower bounds on the gate complexity for simulating a generic geometrically local Markovian open many-body quantum system that rigorously establishes a separation between the open and closed system simulation remains an open challenge.
\begin{acknowledgments}
We thank Georgios Styliaris for valuable discussions. H. L. is supported by Forefront
Physics and Mathematics Program to Drive Transformation (FoPM), a World-leading Innovative Graduate Study
(WINGS) Program, the University of Tokyo. 
J.I.C. acknowledges funding from the Federal
Ministry of Education and Research Germany (BMBF)
via the project Almanaq.
R.T. acknowledges support from QuPIDC, an Energy Frontier Research Center, funded by the US Department of Energy (DOE), Office of Science, Basic Energy Sciences (BES), under the award number DE-SC0025620.  Work at MPQ is part of the Munich Quantum Valley, which is supported by the Bavarian
state government with funds from the Hightech Agenda
Bayern Plus.
\end{acknowledgments}
\bibliographystyle{apsrev4-2}
\bibliography{MyCollection}

\newpage
\onecolumngrid

\newcounter{equationSM} 
\newcounter{figureSM} 
\newcounter{tableSM} 
\stepcounter{equationSM} 
\setcounter{equation}{0} 
\setcounter{figure}{0} 
\setcounter{table}{0} 
\makeatletter 
\setcounter{equation}{0} 
\setcounter{figure}{0} 
\setcounter{table}{0} 
\setcounter{section}{0}
\makeatletter 
\renewcommand{\theequation}{S\arabic{equation}} 
\renewcommand{\thefigure}{S\arabic{figure}} 
\renewcommand{\thetable}{S\arabic{table}} 
\begin{center} 
{\large{\bf Supplemental Material for\\  ``\papertitle''}} 
\end{center}

In this supplemental material, we provide details in computing and
proving the main results of the main text. Concretely,
\begin{itemize}
\item In Sec. \ref{sec:summarynotations}, we summarize the notations we used in this Supplemental Material.
\item In Sec. \ref{sec:Detailed-Proof-ofLemma1}, we provide details in
proving Lemma \ref{lemma1};
\item In Sec. \ref{sec:Details-in-proving_theorem1}, we provide detailed
proof of Lemmas which are used in proving Theorem \ref{theoremdi};
\item In Sec. \ref{sec:Extension-to-general}, we extend our algorithms
in Theorem \ref{theoremdi} to general Gaussian initial states;
\item In Sec. \ref{sec:noise_robustness}, we analyze the noise robustness of our simulation procedure for local observables;
\item In Sec. \ref{Sec:SM_commuting_nondissi}, we show that the integration of quantum Ito process can also be discretized;
\item In Sec. \ref{sec:Details-proof-ofObservation1}, we provide details
in proving Theorem \ref{theorem3}.
\end{itemize}

\section{Summary of notations\label{sec:summarynotations}}
In this section, we summarize the notations we used in this Supplemental Material.
\begin{enumerate}
    \item $\vecket{\rho}$ denotes the vectorized density matrix of $\rho$ on a doubled Hilbert space
defined as 
\[
\sum_{i,j}\rho_{i,j}\ket{i}\bra{j}\to\sum_{i,j}\rho_{i,j}\ket{i}\otimes\ket{j}:=\vecket{\rho}.
\] The trace
norm on $\vecket{\rho}$ is defined as the trace norm of the original
density matrix $\rho$ as 
\begin{equation}
\lVert\vecket{\rho}\rVert_{\mathrm{tr}}=\lVert\rho\rVert_{\mathrm{tr}}=\mathrm{Tr}\sqrt{\rho^{\dagger}\rho}.
\end{equation}
\item
$\vecbra{O}$ denotes the vectorization of an observable $O$ to a doubled Hilbert space
as 
\[
\sum_{i,j}O_{i,j}\ket{i}\bra{j}\to\sum_{i,j}\bra{i}\otimes\bra{j}O_{i,j}=:\vecbra{O}.
\]
Specfically, the vectorized Identity matrix is $\vecbra{I}=\sum_{i}\bra{i}\bra{i}$, with $\{\bra{i}\}$ a set of orthonormal states.
\item 
The superoperator on the original density matrix
becomes an operator on the doubled Hilbert space. For simplicity,
we still call the operator on the doubled Hilbert space as the superoperator.
\item We will use the outline fonts to denote these superoperators, such as $\mathbb{H},\mathbb{U},\mathbb{V},\mathbb{J},\mathbb{A}\cdots.$
Remark: In this Supplemental Material, if $H$ is a Hamiltonian operator, the notation $\mathbb{H}$ corresponds to the commutator as $\mathbb{H}\vecket{\rho}=\vecket{[H,\rho]}$. It $U$ is a unitary operator, the notation $\mathbb{U}$ corresponds to the Adjoint as $\mathbb{U}\vecket{\rho}=\vecket{U\rho U^\dagger}$. 
\item We use the symbol $\langle O\rangle$ to denote $\mathrm{Tr}_E(O\rho_E(0))$. Equivalently. in the doubled space notation, we use $\langle \mathbb{O}\rangle$ to denote $\vecbraket{I|_E \mathbb{O}|\rho_E(0)}$, with $\vecbra{I}_E$ the vectorized Identity operator for the environmental space. 
\item The diamond norm of a superoperator is defined as
\begin{equation}
\lVert\mathbb{O}\rVert_{\diamond}:=\max_{\lVert\vecket{\rho}\rVert_{\mathrm{tr}}=1,n}\lVert\mathbb{O}\otimes \mathbb{I}_n\vecket{\rho}\rVert_{\mathrm{tr}},\label{eq:SM_definition_Superoperator_Norm}
\end{equation}
where $\mathbb{I}_n$ denotes the Identity channel (in the doubled space) acting on an auxiliary system of dimension $n$, and $\rho$ is any density matrix on the joint system composed of the supports of $\mathbb{O}$ and $\mathbb{I}_n$.
\item We introduce the adjoint expression for the superoperators as 
\[
\mathrm{ad}_{\mathbb{O}}:=[\mathbb{O},],
\]
\[
\mathrm{Ad}_{\mathbb{V}}(\cdot):=\mathbb{V}(\cdot)\mathbb{V}^\dagger.
\]
\item Unless explicitly noted, a repeated Greek
index, such as $\alpha, \beta$, implies summation over that index. 
\end{enumerate}

\section{Detailed Proof of Lemma \ref{lemma1}\label{sec:Detailed-Proof-ofLemma1}}

\subsection{Description of the Set up}

Suppose the original Hamiltonian is 
\begin{equation}
H_{SE}(t)=\sum_{i=1}^{N-1}H_{i,i+1}+\sum_{i=1}^{N-1}J_{i,i+1}^{\dagger}A_{i}(t)+J_{i,i+1}A_{i}^{\dagger}(t),
\end{equation}
where $J_{i,i+1},H_{i,i+1}$ act on the nearest neighbor sites $i,i+1$
with $H_{i,i+1}^{\dagger}=H_{i,i+1}$ and $\lVert H_{i,i+1}\rVert,\lVert J_{i,i+1}\rVert\leq1$.
Additionally, $A_{i}$ is the environmental bosonic operator satisfying
\begin{equation}
\langle A_{i}^{\sigma}(t)A_{j}^{\sigma'}(t')\rangle=\delta_{i,j}K_i^{(\sigma,\sigma')}(t,t'),\label{eq:SM_memory_kernel_operatorA}
\end{equation}
with 
$K^{(\sigma,\sigma')}_i(t,t')$
defined in the main text.

The Hamiltonian $\mathbb{H}_{SE}(t)$ on the doubled space,
which is an superoperator, is defined as 
\begin{equation}
\mathbb{H}_{SE}(t)\vecket{\rho_{SE}}=\vecket{[H_{SE}(t),\rho_{SE}]},
\end{equation}
where the right hand side represents the vectorized vector of $[H_{SE}(t),\rho_{SE}]$ on doubled space,
with $\rho_{SE}$ the system-environment
state. The explicit form of $\mathbb{H}_{SE}(t)$ reads
as
\begin{equation}
\mathbb{H}_{SE}(t)=\sum_{i=1}^{N-1}\mathbb{\mathbb{H}}_{i,i+1}+\sum_{i=1}^{N-1}\sum_{\alpha=1}^{4}\mathbb{J}_{i,i+1}^{\alpha}\mathbb{A}_{i}^{\alpha}(t),
\end{equation}
where $\mathbb{H}_{i,i+1}=H_{i,i+1}\otimes I-I\otimes H_{i,i+1}^{T}$,
$\mathbb{J}_{i,i+1}^{\alpha}=\{J_{i,i+1}^{\dagger}\otimes I,J_{i,i+1}\otimes I,-I\otimes J_{i,i+1}^{*},-I\otimes J_{i,i+1}^{T}\}$
and $\mathbb{A}_{i}^{\alpha}(t)=\{A_{i}(t)\otimes I,A_{i}^{\dagger}(t)\otimes I,I\otimes A_{i}^{\dagger}(t),I\otimes A_{i}(t)\}.$ Here and throughout this section, we only keep the notation $\otimes$ explicitly between the doubled space.
The memory kernel for $\mathbb{A}(t)$ follows from Eq. (\ref{eq:SM_memory_kernel_operatorA})
as
\begin{equation}
\langle\mathbb{A}_{i}^{\alpha}(t)\mathbb{A}_{j}^{\beta}(t')\rangle=\delta_{i,j}\begin{pmatrix}K_i^{(-,-)}(t,t') & K_i^{(-,+)}(t,t') & K_i^{(-,-)}(t',t)  & K_i^{(+,-)}(t',t)\\
K_i^{(+,-)}(t,t') & K_i^{(+,+)}(t,t') & K_i^{(-,+)}(t',t) & K_i^{(+,+)}(t',t)\\
K_i^{(-,-)}(t,t') & K_i^{(-,+)}(t,t') & K_i^{(-,-)}(t',t)  & K_i^{(+,-)}(t',t)\\
K_i^{(+,-)}(t,t') & K_i^{(+,+)}(t,t') & K_i^{(-,+)}(t',t) & K_i^{(+,+)}(t',t)
\end{pmatrix}_{\alpha\beta}:=\delta_{i,j}\mathbb{K}_{\alpha\beta}^i(t,t').\label{eq:doubled_space_Kernel_function}
\end{equation}
We also assume that 
\begin{equation}
\max_{t,t'}|K_i^{(\sigma,\sigma')}(t,t')|\leq m,\ \ \max_{t}\int_{-\infty}^{\infty}|K_i^{(\sigma,\sigma')}(t,t')|dt'\leq M.
\end{equation}
It is obvious that 
\begin{equation}
\sum_{\alpha=1}^4|\mathbb{K}_{\alpha\beta}^i(t,t')|,\sum_{\beta=1}^4|\mathbb{K}_{\alpha\beta}^i(t,t')|\leq 4m,
\;\;
\sum_{\alpha,\beta=1}^4|\mathbb{K}_{\alpha\beta}^i(t,t')|\leq16m.
\end{equation}

From the Holder inequality
\begin{equation}
\lVert \rho O\rVert_\mathrm{tr},\lVert O\rho\rVert_{\mathrm{tr}}\leq\lVert O\rVert\cdot\lVert\rho\rVert_{\mathrm{tr}},
\end{equation}
we immediately obtain that
\begin{equation}
\lVert\mathbb{H}_{i,i+1}\rVert_{\diamond}\leq2,\lVert\mathbb{J}_{i,i+1}^{\alpha}\rVert_{\diamond}\leq1.\label{eq:SM_bound_on_superoperator_norm}
\end{equation}

\subsection{Trotterization Formula}

In this subsection, we introduce the Trotterization formula for the following proof. Typically, we will use $\mathbb{U}$ and $\mathbb{V}$ to denote the time evolution superoperators on the doubled space.

First, we define the odd/even Hamiltonian on the doubled space as
\begin{equation}
\mathbb{H}_{\mathrm{\mathrm{o}}}:=\sum_{i\in\mathrm{odd}}\mathbb{H}_{i,i+1}+\sum_{i\in\mathrm{odd}}\sum_{\alpha=1}^{4}\mathbb{J}_{i,i+1}^{\alpha}\mathbb{A}_{i}^{\alpha},
\end{equation}
\begin{equation}
\mathbb{H}_{\mathrm{e}}:=\sum_{i\in\mathrm{even}}\mathbb{H}_{i,i+1}+\sum_{i\in\mathrm{even}}\sum_{\alpha=1}^{4}\mathbb{J}_{i,i+1}^{\alpha}\mathbb{A}_{i}^{\alpha}.
\end{equation}

The unitary evolution on the doubled space induced by different Hamiltonian
is written as

\begin{equation}
\mathbb{U}_{SE}(t_{2},t_{1}):=\mathcal{T}e^{-i\int_{t_{1}}^{t_{2}}\mathbb{H}_{SE}(t)dt},
\end{equation}
\begin{equation}
\mathbb{V}_{\mathrm{\mathrm{o}}}(t_{2},t_{1}):=\mathcal{T}e^{-i\int_{t_{1}}^{t_{2}}\mathbb{H}_{\mathrm{\mathrm{o}}}(t)dt},
\end{equation}
\begin{equation}
\mathbb{V}_{\mathrm{\mathrm{e}}}(t_{2},t_{1}):=\mathcal{T}e^{-i\int_{t_{1}}^{t_{2}}\mathbb{H}_{e}(t)dt}.
\end{equation}
It is obvious that the diamond norms for those three superoperators
are equal to $1$.

The $P^\text{th}$ order product Trotter decomposition to $\mathbb{U}_{SE}(t_{2},t_{1})$
is expressed as
\begin{equation}
\mathbb{U}_{SE}(t_{2},t_{1})\approx\mathbb{V}(t_{2},t_{1})=\prod_{i=1}^{s_P}\mathbb{V}_{\mathrm{\mathrm{e}}}(t_1+(t_2-t_1)f_i,t_1+(t_2-t_1)f_{i-1})\mathbb{V}_{\mathrm{\mathrm{o}}}(t_1+(t_2-t_1)e_{i},t_1+(t_2-t_1)e_{i-1})
\end{equation}
with $\{e_{i}\},\{f_{i}\}$ a series of intermediate time points defined in Eq. (\ref{eq:trotterization_expression}). We further assume that $b_{s_P}=a_{s_P}=1,b_{0}=a_{0}=0$ and
$0\leq b_{j},a_{j}\leq 1$. Notice that this is satisfied by the
Suzuki formula. We also define
\begin{equation}
\mathbb{V}_{j}(t_{2},t_{1})=\mathbb{V}_{\mathrm{\mathrm{e}}}(t_1+(t_2-t_1)f_i,t_1+(t_2-t_1)f_{i-1})\mathbb{V}_{\mathrm{\mathrm{o}}}(t_1+(t_2-t_1)e_{i},t_1+(t_2-t_1)e_{i-1}), \label{eq:SM_definition_1D_trotter_stage}
\end{equation}
which is the $j^\text{th}$ stage of the Trotter decomposition.

The full evolution $[0,t]$ is divided into $T$ time steps, with $\Delta t=t/T$.
The total error introduced by this Trotterization can be calculated
as 
\begin{equation}
\begin{aligned}\delta_{\mathrm{tro}}:= & \bigg\lVert\bigg\langle\mathbb{U}_{SE}(t,0)-\prod_{i=1}^{T}\mathbb{V}\bigg(i\Delta t,(i-1)\Delta t\bigg)\bigg\rangle\bigg\rVert_{\diamond},\\
= & \bigg\lVert\bigg\langle\sum_{i=1}^{T}\mathbb{U}_{SE}(t,i\Delta t)\bigg[\mathbb{U}_{SE}\bigg(i\Delta t,(i-1)\Delta t\bigg)-\mathbb{V}\bigg(i\Delta t,(i-1)\Delta t\bigg)\bigg]\times\\
 & \prod_{j=1}^{i-1}\mathbb{V}\bigg(j\Delta t,(j-1)\Delta t\bigg)\bigg\rangle\bigg\rVert_{\diamond}.
\end{aligned}
\label{eq:Chunk_error_p_order}
\end{equation}
Our task is to provide an upper bound for $\delta_{\mathrm{tro}}$.

\subsection{Some useful identities}

In this subsection, we introduce some identities corresponding to
the Wick contraction of $\mathbb{A}_{i}^{\alpha}(t)$. 
From the principle of the Wick contraction, we have the following recursive formula
\[
\langle\cdots \mathbb{A}_i^\alpha(t)\cdots\mathbb{O}\cdots\rangle=\langle\contraction{}{\cdots}{}{\mathbb{A}} \cdots \mathbb{A}_i^\alpha(t)\cdots\mathbb{O}\cdots\rangle +\langle\contraction{\cdots}{\mathbb{A}}{{}_i^\alpha(t)}{\cdots} \cdots \mathbb{A}_i^\alpha(t)\cdots\mathbb{O}\cdots\rangle+\langle\contraction{\cdots}{\mathbb{A}}{{}_i^\alpha(t)\cdots}{\mathbb{O}} \cdots \mathbb{A}_i^\alpha(t)\cdots\mathbb{O}\cdots\rangle+\langle\contraction{\cdots}{\mathbb{A}}{{}_i^\alpha(t)\cdots\mathbb{O}}{\cdots} \cdots \mathbb{A}_i^\alpha(t)\cdots\mathbb{O}\cdots\rangle,
\]
where $\cdots$ can be any expressions. If $\mathbb{O}=\mathbb{A}_j^\beta(t')$, we have $\contraction{}{\mathbb{A}}{{}_i^\alpha(t)}{\mathbb{A}}\mathbb{A}_i^\alpha(t)\mathbb{A}_j^\beta(t')=\delta_{i,j}\mathbb{K}_{\alpha\beta}^i(t,t')$. In the following, we provide the Wick-Contraction results for the other forms of $\mathbb{O}$ which we will use frequently.
Though we assume
$i\in\mathrm{odd}$, the similar identities apply if $i\in\mathrm{even}$:

\begin{equation}
\contraction{}{\mathbb{A}}{{}_i^\alpha(s)}{\mathbb{V}}\mathbb{A}_{i}^{\alpha}(s)\mathbb{V}_{\mathrm{\mathrm{o}}}(t_{2},t_{1})=-i\int_{t_{1}}^{t_{2}}\mathbb{V}_{\mathrm{\mathrm{o}}}(t_{2},t')\mathbb{J}_{i,i+1}^{\beta}\mathbb{V}_{\mathrm{\mathrm{o}}}(t',t_{1})\mathbb{K}^i_{\alpha\beta}(s,t')dt';\label{eq:contraction_with_A}
\end{equation}
\begin{equation}
\contraction{}{\mathbb{A}}{{}_i^\alpha(s)}{\mathbb{U}}\mathbb{A}_{i}^{\alpha}(s)\mathbb{U}_{SE}(t_{2},t_{1})=-i\int_{t_{1}}^{t_{2}}\mathbb{U}_{SE}(t_{2},t')\mathbb{J}_{i,i+1}^{\beta}\mathbb{U}_{SE}(t',t_{1})\mathbb{K}^i_{\alpha\beta}(s,t')dt';\label{eq:contraction_with_E}
\end{equation}
\begin{equation}
\contraction{}{\mathbb{A}}{{}_i^\alpha(s)}{Ad}\mathbb{A}_{i}^{\alpha}(s)\mathrm{Ad}_{\mathbb{V}_{\mathrm{\mathrm{o}}}(t_{1},t_{2})}=i\int_{t_{1}}^{t_{2}}\mathrm{Ad}_{\mathbb{V}_{\mathrm{\mathrm{o}}}(t_{1},t')}\mathrm{ad}_{\mathbb{J}_{i,i+1}^{\beta}}\mathrm{Ad}_{\mathbb{V}_{\mathrm{\mathrm{o}}}(t',t_{2})} \mathbb{K}^i_{\alpha\beta}(s,t').\label{eq:contraction_with_Ad}
\end{equation}
Similar identities hold if $\mathbb{A}_{i}^{\alpha}$ is on the right.
These identities can be proved by a straightforward calculation. Here
we show the proof to the second and the third identities.

For the second identity, we have

\begin{equation}
\begin{aligned} 
& \contraction{}{\mathbb{A}}{{}_i^\alpha(s)}{\mathbb{U}}\mathbb{A}_{i}^{\alpha}(s)\mathbb{U}_{SE}(t_{2},t_{1})\\
&= \contraction{}{\mathbb{A}}{{}_{i}^{\alpha}(s)\sum_{n=0}^\infty(-i)^{n}\int_{t_{2}>\tau_{1}>\cdots>\tau_{n}>t_{1}}}{\left(\mathbb{H}_{SE}(\tau_{1})\cdots\mathbb{H}_{SE}\right)}\mathbb{A}_{i}^{\alpha}(s)\sum_{n=0}^\infty(-i)^{n}\int_{t_{2}>\tau_{1}>\cdots>\tau_{n}>t_{1}}\left(\mathbb{H}_{SE}(\tau_{1})\cdots\mathbb{H}_{SE}(\tau_{n})\right),\\
&=  \sum_{n=0}^\infty\sum_{j=1}^{n}\int_{t_{2}>\tau_{1}>\cdots>\tau_{j}=t'>\cdots>\tau_{n}>t_{1}}(-i)^{n}\mathbb{H}_{SE}(\tau_{1})\cdots\contraction{}{\mathbb{A}}{{}_{i}^{\alpha}(s)}{\mathbb{H}}\mathbb{A}_{i}^{\alpha}(s)\mathbb{H}_{SE}(t')\cdots\mathbb{H}_{SE}(\tau_{n}),\\
&=  \sum_{n_{1},n_{2}=0}^\infty\int_{t_{2}>\tau_{1}>\cdots\tau_{n_{1}}>t'>\tilde{\tau}_{1}\cdots>\tilde{\tau}_{n_{2}}>t_{1}}(-i)^{n_{1}+n_{2}+1}\mathbb{H}_{SE}(\tau_{1})\cdots\mathbb{H}_{SE}(\tau_{n_{1}})\times\\
 & \contraction{}{\mathbb{A}}{{}_{i}^{\alpha}(s)}{\mathbb{H}}\mathbb{A}_{i}^{\alpha}(s)\mathbb{H}_{SE}(t')\mathbb{H}_{SE}(\tilde{\tau}_{1})\cdots\mathbb{H}_{SE}(\tilde{\tau}_{n_{2}}),\\
&=  -i\int_{t_{1}}^{t_{2}}dt'\sum_{n_{1}=0}^\infty(-i)^{n_{1}}\int_{t_{2}>\tau_{1}>\cdots\tau_{n_{1}}>t'}\mathbb{H}_{SE}(\tau_{1})\cdots\mathbb{H}_{SE}(\tau_{n_{1}})\contraction{}{\mathbb{A}}{{}_{i}^{\alpha}(s)}{\mathbb{H}}\mathbb{A}_{i}^{\alpha}(s)\mathbb{H}_{SE}(t')\times\\
 & \sum_{n_{2}=0}^\infty(-i)^{n_{2}}\int_{t'>\tilde{\tau}_{1}>\cdots>\tilde{\tau}_{n_{2}}>t_{1}}\mathbb{H}_{SE}(\tilde{\tau}_{1})\cdots\mathbb{H}_{SE}(\tilde{\tau}_{n_{2}}),\\
&=  -i\int_{t_{1}}^{t_{2}}dt'\mathbb{U}_{SE}(t_{2},t')\contraction{}{\mathbb{A}}{{}_{i}^{\alpha}(s)}{\mathbb{H}}\mathbb{A}_{i}^{\alpha}(s)\mathbb{H}_{SE}(t')\mathbb{U}_{SE}(t',t_{1}),\\
&=  -i\int_{t_{1}}^{t_{2}}\mathbb{U}_{SE}(t_{2},t')\mathbb{J}_{i,i+1}^{\beta}\mathbb{U}_{SE}(t',t_{1})\mathbb{K}_{\alpha\beta}^i(s,t')dt'.
\end{aligned}
\end{equation}
In the second equality, the Wick contraction is applied to contract
$\mathbb{A}_{i}^{\alpha}$ with $\mathbb{H}_{SE}(\tau_{j})$ and we replace
$\tau_{j}$ by $t'$. In the third equality, we relabel the integration
variables and reorder the sum. In the forth equality, we interchange
the integration order. In the fifth equality, we rewrite the time-ordered
integration to the evolution operator and in the final equality, we
use Eq. (\ref{eq:doubled_space_Kernel_function}).

The third identity is proved as 
\begin{equation}
\begin{aligned} & \contraction{}{\mathbb{A}}{{}_{i}^{\alpha}(s)}{Ad}\mathbb{A}_{i}^{\alpha}(s)\mathrm{Ad}_{\mathbb{V}_{\mathrm{\mathrm{o}}}(t_{1},t_{2})}(\cdot)\\
&=  \contraction{}{\mathbb{A}}{{}_{i}^{\alpha}(s)\mathbb{V}_{\mathrm{\mathrm{o}}}(t_{1},t_{2})(\cdot)}{\mathbb{V}}\mathbb{A}_{i}^{\alpha}(s)\mathbb{V}_{\mathrm{\mathrm{o}}}(t_{1},t_{2})(\cdot)\mathbb{V}_{\mathrm{\mathrm{o}}}(t_{2},t_{1})+\contraction{}{\mathbb{A}}{{}_{i}^{\alpha}(s)}{\mathbb{V}}\mathbb{A}_{i}^{\alpha}(s)\mathbb{V}_{\mathrm{\mathrm{o}}}(t_{1},t_{2})(\cdot)\mathbb{V}_{\mathrm{\mathrm{o}}}(t_{2},t_{1}),\\
&=  -i\int_{t_{1}}^{t_{2}}\mathbb{K}^i_{\alpha\beta}(s,t')\mathbb{V}_{\mathrm{\mathrm{o}}}(t_{1},t_{2})(\cdot)\mathbb{V}_{\mathrm{\mathrm{o}}}(t_{2},t')\mathbb{J}_{i,i+1}^{\beta}\mathbb{V}_{\mathrm{\mathrm{o}}}(t',t_{1}) dt'\\
 & +i\int_{t_{1}}^{t_{2}}\mathbb{K}^i_{\alpha\beta}(s,t')\mathbb{V}_{\mathrm{\mathrm{o}}}(t_{1},t')\mathbb{J}_{i,i+1}^{\beta}\mathbb{V}_{\mathrm{\mathrm{o}}}(t',t_{2})(\cdot)\mathbb{V}_{\mathrm{\mathrm{o}}}(t_{2},t_{1}) dt',\\
&=  -i\int_{t_{1}}^{t_{2}}\mathbb{K}^i_{\alpha\beta}(s,t')\bigg(\mathbb{V}_{\mathrm{\mathrm{o}}}(t_{1},t')\mathbb{V}_{\mathrm{\mathrm{o}}}(t',t_{2})(\cdot)\mathbb{V}_{\mathrm{\mathrm{o}}}(t_{2},t')\mathbb{J}_{i,i+1}^{\beta}\mathbb{V}_{\mathrm{\mathrm{o}}}(t',t_{1})\\
 & -\mathbb{V}_{\mathrm{\mathrm{o}}}(t_{1},t')\mathbb{J}_{i,i+1}^{\beta}\mathbb{V}_{\mathrm{\mathrm{o}}}(t',t_{2})(\cdot)\mathbb{V}_{\mathrm{\mathrm{o}}}(t_{2},t')\mathbb{V}_{\mathrm{\mathrm{o}}}(t',t_{1}) dt'\bigg),\\
&=  i\int_{t_{1}}^{t_{2}}\mathbb{K}^i_{\alpha\beta}(s,t')\mathbb{V}_{\mathrm{\mathrm{o}}}(t_{1},t')[\mathbb{J}_{i,i+1}^{\beta},\mathbb{V}_{\mathrm{\mathrm{o}}}(t',t_{2})(\cdot)\mathbb{V}_{\mathrm{\mathrm{o}}}(t_{2},t')]\mathbb{V}_{\mathrm{\mathrm{o}}}(t',t_{1}) dt',\\
&=  i\int_{t_{1}}^{t_{2}}\mathbb{K}^i_{\alpha\beta}(s,t')\mathrm{Ad}_{\mathbb{V}_{\mathrm{\mathrm{o}}}(t_{1},t')}\{\mathrm{ad}_{\mathbb{J}_{i,i+1}^{\beta}}[\mathrm{Ad}_{\mathbb{V}_{\mathrm{\mathrm{o}}}(t',t_{2})}(\cdot)]\} dt'.
\end{aligned}
\end{equation}
In the first equality, we employ Eq. (\ref{eq:contraction_with_A})
to contract $\mathbb{A}_{i}^{\alpha}(s)$ with either $\mathbb{V}_{\mathrm{\mathrm{o}}}(t_{2},t_{1})$
or $\mathbb{V}_{\mathrm{\mathrm{o}}}(t_{1},t_{2})$. Since $\mathbb{V}_{\mathrm{o}}(t_{1},t_{2})$
is a backward evolution as $t_{1}<t_{2}$, we got another minus sign
when performing the Wick contraction between $\mathbb{A}_{i}^{\alpha}(s)$
and $\mathbb{V}_{\mathrm{o}}(t_{1},t_{2})$.

\subsection{The error bound}

\subsubsection{rewrite the remainder error expression}

Let us compute each term in the summand of Eq. (\ref{eq:Chunk_error_p_order}).
For simplicity, we denote $t_{2}=i\Delta t,t_{1}=(i-1)\Delta t$ and
$\widetilde{\mathbb{V}}(t_{1},0)=\prod_{j=1}^{i-1}\mathbb{V}\big(j\Delta t,(j-1)\Delta t\big)$.
Then, we are required to calculate
\begin{equation}
\bigg\langle\mathbb{U}_{SE}(t,t_{2})\bigg[\mathbb{U}_{SE}(t_{2},t_{1})-\mathbb{V}(t_{2},t_{1})\bigg]\widetilde{\mathbb{V}}(t_{1},0)\bigg\rangle.\label{eq:rewritten_each_chunk_term}
\end{equation}
From the explicit error remainder form in Ref. \cite{childs2019nearly}
and the procedure of transforming to the interaction picture described
in the main text, we obtain
\begin{equation}
\mathbb{U}_{SE}(t_{2},t_{1})-\mathbb{V}(t_{2},t_{1})=\int_{0}^{\Delta t}\mathbb{U}_{SE}(t_{2},t_{1}+\tau)\mathbb{V}(t_{1}+\tau,t_{1})\mathcal{J}(\tau)d\tau.
\end{equation}
Here $\mathcal{J}(\tau)$ can be expressed as 
\begin{equation}
\mathcal{J}(\tau)=P\int_{0}^{1}dx(1-x)^{P-1}(\mathcal{J}_{P}^{1}(x\tau)+\mathcal{J}_{P}^{2}(x\tau))\frac{\tau^{P}}{P!},
\end{equation}
whose explicit form can be expressed in a simpler way by introducing $\tilde{e}_l(\tau)=t_1+\tau e_l$ and $\tilde{f}_l(\tau)=t_1+\tau f_l$ as
\begin{equation}
\begin{aligned}\mathcal{J}_{P}^{1}(\tau) & =\sum_{k=1}^{s_P}\sum_{w_{1}+\cdots+w_{2(k-1)}=P}\prod_{l=k-1}^{1}\mathrm{Ad}_{\mathbb{V}_{\mathrm{o}}(\tilde{e}_{l-1}(\tau),\tilde{e}_{l}(\tau))}\mathrm{ad}_{\mathbb{H}_{\mathrm{o}}(\tilde{e}_{l}(\tau))}^{w_{2l-1}}\mathrm{Ad}_{\mathbb{V}_{\mathrm{e}}(\tilde{f}_{l-1}(\tau),\tilde{f}_{l}(\tau))}\mathrm{ad}_{\mathbb{H}_{\mathrm{e}}(\tilde{f}_{l}(\tau))}^{w_{2l}}\times\\
 & \bigg(c_{k}\mathbb{H}_{\mathrm{o}}(\tilde{e}_{k-1}(\tau))+d_{k-1}\mathbb{H}_{\mathrm{\mathrm{e}}}(\tilde{f}_{k-1}(\tau))\bigg)\frac{P!}{\prod_{i=1}^{2(k-1)}w_{i}!},
\end{aligned}
\label{eq:function_to_estimate}
\end{equation}
and
\begin{equation}
\begin{aligned}\mathcal{J}_{P}^{2}(\tau) & =-\sum_{k=1}^{s_P}\sum_{w_{1}+\cdots+w_{2k}=P}\prod_{l=k}^{1}\mathrm{Ad}_{\mathbb{V}_{\mathrm{o}}(\tilde{e}_{l-1}(\tau),\tilde{e}_{l}(\tau))}\mathrm{ad}_{\mathbb{H}_{\mathrm{o}}(\tilde{e}_{l}(\tau))}^{w_{2l-1}}\mathrm{Ad}_{\mathbb{V}_{\mathrm{e}}(\tilde{f}_{l-1}(\tau),\tilde{f}_{l}(\tau))}\mathrm{ad}_{\mathbb{H}_{\mathrm{e}}(\tilde{f}_{l}(\tau))}^{w_{2l}}\times\\
 & \bigg(c_{k}\mathbb{H}_{\mathrm{o}}(\tilde{e}_{k}(\tau))+d_{k-1}\mathbb{H}_{\mathrm{\mathrm{e}}}(\tilde{f}_{k}(\tau))\bigg)\frac{P!}{\prod_{i=1}^{2k}w_{i}!}.
\end{aligned}
\end{equation}
In the above expression, $c_{k},d_{k-1}$ are two constants which
satisfy $|c_{k}|,|d_{k-1}|\leq k$. Each $\mathbb{H}_{\mathrm{o}}$,
$\mathbb{H}_{\mathrm{\mathrm{e}}}$ contains the bosonic operator$\mathbb{A}_{i}^{\alpha}$
which we need to contract with others using the Wick theorem and Eq.
(\ref{eq:doubled_space_Kernel_function}). We start by upper bounding the term involving
$\mathcal{J}_{P}^{1}(\tau)$. 

We define the monomial as the product whose factor is from $\mathbb{A}_{i}^{\alpha}\mathbb{J}_{i,i+1}^{\alpha}$,
$\mathbb{H}_{i,i+1}$ and the time evolution operator. Our strategy
comprises three steps. First, we will bound the number of monomials
in $\mathcal{J}_{P}^{1}(\tau)$. Second, we will bound the diamond
norm of each monomial after the Wick contraction. Finally, combining
these two bounds together gives us an upper bound on the Trotterization error measured in diamond norm.
We explain each step in details in the following three subsubsections.

\subsubsection{Counting the number of monomials}

Let's fix $k$ and $w_{1}\cdots w_{2(k-1)}$ at this moment. We start by bounding
the number of nested commutators as well as the support of each monomial
in $\mathcal{J}_{P}^{1}(\tau)$. Then we can arrive at the bound of
the number of monomials.

In $\mathcal{J}_{P}^{1}(\tau)$, there is at most $(P+1)$ bosonic
operators $\mathbb{A}_{i}^{\alpha}$ which are multiplied together.
Each time one of these $\mathbb{A}_{i}^{\alpha}$ is contracted with
an Ad, we will replace the original Ad by a new ad and two new Ads,
as shown in Eq. (\ref{eq:contraction_with_Ad}). In total, there will
be at most $(2P+1)$ nested-ad and $(2k-2+P+1)$ nested-Ad after we
contract all the bosonic operator $\mathbb{A}_{i}^{\alpha}$ in $\mathcal{J}_{P}^{1}(\tau)$.
However, the successive application of $\mathrm{Ad}_{\mathbb{V}_{\mathrm{o}}}$and
$\mathrm{ad}_{\mathbb{H}_{\mathrm{o}}}$will only enlarge the support
of an operator once, as they contains the same Hamiltonians. The same
thing holds for the successive application of $\mathrm{Ad}_{\mathbb{V}_{\mathrm{e}}}$and
$\mathrm{ad}_{\mathbb{H}_{\mathrm{e}}}$.

Hence, if we start from one term in $\mathbb{H}_{\mathrm{o}}(\tilde{e}_{k-1}(\tau))$
or $\mathbb{H}_{\mathrm{\mathrm{e}}}(\tilde{f}_{k-1}(\tau))$ whose support
is two, the final term has the support at most $4k-2$. This is because
there are only $2k-2$ adjoint operations (either Ad or ad) which
can enlarge the support, as stated above, and each operation enlarges
the support by two sites.

Now, let's upper bound the number of monomials in $\mathcal{J}_{P}^{1}(\tau)$
with fixed $w_{1}\cdots w_{2(k-1)}$. Each commutator ad will increase
the number of monomials by at most a factor of $2(4k-2)$ because
there will be $(2k-1)$ $\mathbb{H}_{i,i+1}$ and $(2k-1)$ $\mathbb{A}_{i}^{\alpha}\mathbb{J}_{i,i+1}^{\alpha}$in
$\mathbb{H}_{\mathrm{o}}$ or $\mathbb{H}_{\mathrm{e}}$ sharing a
non-vanishing overlap with an operator supported on $4k-2$ sites.
The total number of monomials in Eq. (\ref{eq:function_to_estimate})
is thus bounded by $2N(8k-4)^{2P+1}$. Here $2N$ comes from the number
of monomials in $c_{k}\mathbb{H}_{\mathrm{o}}(\tilde{e}_{k-1}(\tau))+d_{k-1}\mathbb{H}_{\mathrm{\mathrm{e}}}(\tilde{f}_{k-1}(\tau))$.

\subsubsection{Bounding the diamond norm for each monomial}

Next, we need to bound the diamond norm defined in Eq. (\ref{eq:SM_definition_Superoperator_Norm})
for each monomial in the above counting. As the monomial still involves
the bosonic operator $\mathbb{A}_{i}^{\alpha}$, we need to contract
all of $\mathbb{A}_{i}^{\alpha}$ before bounding the superoperator
norm. For a fixed $\mathbb{A}_{i}^{\alpha}\mathbb{J}_{i,i+1}^{\alpha}$,
we now discuss all the possible ways it can contract with:
\begin{itemize}
\item If it is contracted with another $\mathbb{A}_{j}^{\beta}\mathbb{J}_{j,j+1}^{\beta}$,
the diamond norm will increase by at most a factor of $\sum_{\alpha\beta=1}^4|\mathbb{K}^i_{\alpha\beta}(t,t')|\leq16m$.
There are at most $P$ other $\mathbb{A}_{j}^{\beta}\mathbb{J}_{j,j+1}^{\beta}$
in the monomial, which it can contract with. Thus, by this kind of
contraction, the diamond norm increases by at most a factor
of $16Pm$.
\item If it is contracted with $\mathbb{U}_{SE}(t,t_{1}+\tau)$, the superoperator
norm increases by at most a factor of $16M$, which is evident from
Eq. (\ref{eq:contraction_with_E}) as 
\begin{equation}
\begin{aligned} & \lVert\contraction{}{\mathbb{U}}{{}_{SE}(t,t_{1}+\tau)}{\mathbb{A}}\mathbb{U}_{SE}(t,t_{1}+\tau)\mathbb{A}_{i}^{\alpha}(s)\mathbb{J}_{i,i+1}^{\alpha}\rVert_{\diamond}\\
&=  \lVert -i\int_{t_{1}+\tau}^{t}dt'\mathbb{U}_{SE}(t,t')\mathbb{J}_{i,i+1}^{\beta}\mathbb{U}_{SE}(t',t_{1}+\tau)\mathbb{K}^i_{\beta\alpha}(t',s)\mathbb{J}_{i,i+1}^{\alpha}\rVert_{\diamond},\\
&\leq  \sum_{\alpha,\beta=1}^4\int_{t_{1}+\tau}^{t}|\mathbb{K}^i_{\beta\alpha}(t',s)|dt'\leq16M.
\end{aligned}
\end{equation}
\item Similarly, if it is contracted with $\mathbb{V}(t_{1}+\tau,t_{1})$,
the diamond norm increases by at most a factor of $16s_PM$ as
it is evident from Eq. (\ref{eq:contraction_with_A}) and there are
$s_P$ $\mathbb{V}_{\mathrm{o}}(\mathbb{V}_{\mathrm{e}})$ in $\mathbb{V}$.
\item If it is contracted with Ad, the diamond norm of a monomial
increases by at most a factor of $16M$ as it is evident from Eq. (\ref{eq:contraction_with_Ad}).
There are in total $(2k-2+P+1)$ Ad and hence by this kind of contraction,
the diamond norm increases by at most a factor of $16(2k+P-1)M$.
\item If it is contracted with $\widetilde{\mathbb{V}}(t_{1},0)$, the
diamond norm increased by at most a factor of $16s_PM$. Actually, for any $j$,
with Eq. (\ref{eq:contraction_with_A}), we obtain 
\begin{align}
 & \lVert\contraction{}{\mathbb{A}}{{}_{i}^{\alpha}(s)\mathbb{J}_{i,i+1}^{\alpha}}{\mathbb{V}}\mathbb{A}_{i}^{\alpha}(s)\mathbb{J}_{i,i+1}^{\alpha}\mathbb{V}\left(j\Delta t,(j-1)\Delta t \right)\rVert_{\diamond}\nonumber \\
&=  \lVert\sum_{\text{Contraction of }\mathbb{A}_i^\alpha(s) \text{ with each }\mathbb{V}_\mathrm{e}\text{ or }\mathbb{V}_\mathrm{o}}\mathbb{A}_{i}^{\alpha}(s)\mathbb{J}_{i,i+1}^{\alpha}\prod_{m=1}^{s_P}\mathbb{V}_{\mathrm{e}}((j-1+f_{m})\Delta t,(j-1+f_{m-1})\Delta t)\mathbb{V}_{\mathrm{o}}((j-1+e_{m})\Delta t,(j-1+e_{m-1})\Delta t)\rVert_{\diamond}\nonumber, \\
&\leq  \sum_{\alpha,\beta=1}^4\sum_{m=1}^{s_P}\int_{(j-1+\min\{f_{m-1},e_{m-1}\})\Delta t}^{(j-1+\max\{f_{m},e_{m}\})\Delta t}|\mathbb{K}^i_{\alpha\beta}(s,t')|dt'\nonumber, \\
&\leq  s_P\sum_{\alpha,\beta=1}^4\int_{(j-1)\Delta t}^{j\Delta t}|\mathbb{K}^i_{\alpha\beta}(s,t')|dt'.
\end{align}
Here we use the assumption that $0\leq a_{m},b_{m}\leq 1$.
Therefore, 
\begin{equation}
\begin{aligned} & \lVert\contraction{}{\mathbb{A}}{{}_{i}^{\alpha}(s)\mathbb{J}_{i,i+1}^{\alpha}}{\widetilde{\mathbb{V}}}\mathbb{A}_{i}^{\alpha}(s)\mathbb{J}_{i,i+1}^{\alpha}\widetilde{\mathbb{V}}(t_{1},0)\rVert_{\diamond}\\
&=  \left\lVert\sum_{j^*=1}^{t_1/\Delta t}\contraction[2.5ex]{}{\mathbb{A}}{{}_{i}^{\alpha}(s)\mathbb{J}_{i,i+1}^{\alpha}\left(\prod_{j=j^*+1}^{t_1/\Delta t}\mathbb{V}\left(j\Delta t,(j-1)\Delta t\right)\right)}{\mathbb{V}}\mathbb{A}_{i}^{\alpha}(s)\mathbb{J}_{i,i+1}^{\alpha}\left(\prod_{j=j^*+1}^{t_1/\Delta t}\mathbb{V}\left(j\Delta t,(j-1)\Delta t\right)\right)\mathbb{V}\left(j^*\Delta t,(j^*-1)\Delta t\right)\left(\prod_{j=1}^{j^*-1}\mathbb{V}\left(j\Delta t,(j-1)\Delta t\right)\right)\right\rVert_{\diamond},\\
&\leq  s_P\sum_{\alpha,\beta=1}^4\sum_{j=1}^{t_1/\Delta t}\int_{(j-1)\Delta t}^{j\Delta t}|\mathbb{K}^i_{\alpha\beta}(s,t')|dt',\\
&\leq  s_P\sum_{\alpha,\beta=1}^4\int_{-\infty}^{\infty}|\mathbb{K}^i_{\alpha\beta}(s,t')|dt'\leq16s_PM.
\end{aligned}
\end{equation}
\end{itemize}
The above are all possible contractions of one $\mathbb{A}_{i}^{\alpha}\mathbb{J}_{i,i+1}^{\alpha}$.
All together, after we contract one $\mathbb{A}_{i}^{\alpha}\mathbb{J}_{i,i+1}^{\alpha}$,
the diamond norm is increased by at most a factor of $\max\big\{16Pm+(32k+16P+32s_P)M,1\big\}<\big(16Pm+(32k+16P+32s_P)M+1\big)$.
We have at most $(P+1)$ $\mathbb{A}_{i}^{\alpha}\mathbb{J}_{i,i+1}^{\alpha}$
in each monomial, thus, the diamond norm of each monomial is
bounded by $2^{P+1}k\big(16Pm+(32k+16P+32s_P)M+1\big)^{P+1}$. Here $k$
is from the norm bound on $c_{k},d_{k-1}.$ The factor of $2^{P+1}$
is from $\lVert\mathbb{H}_{i,i+1}\rVert_{\diamond}\leq2$ and there will
be at most $P+1$ $\mathbb{H}_{i,i+1}$ in each monomial.

\subsubsection{The Trotter error scaling}

Combining the above bounds on the number of monomials and on the superoperator
norm of each monomial, we obtain
\begin{equation}
\begin{aligned} & \bigg\lVert\bigg\langle\mathbb{U}_{SE}(t,t_{1}+\tau)\mathbb{V}(t_{1}+\tau,t_{1})\times\\
 & \sum_{w_{1}+\cdots+w_{2(k-1)}=P}\prod_{l=k-1}^{1}\mathrm{Ad}_{\mathbb{V}_{\mathrm{o}},\tilde{e}_{l-1}(\tau),\tilde{e}_{l}(\tau)}\mathrm{ad}_{\mathbb{H}_{\mathrm{o}}(\tilde{e}_{l}(\tau))}^{w_{2l-1}}\mathrm{Ad}_{\mathbb{V}_{\mathrm{e}},\tilde{f}_{l-1}(\tau),\tilde{f}_{l}(\tau)}\mathrm{ad}_{\mathbb{H}_{\mathrm{e}}(\tilde{f}_{l}(\tau))}^{w_{2l}}\times\\
 & \bigg(c_{k}\mathbb{H}_{\mathrm{o}}(\tilde{e}_{k-1}(\tau))+d_{k-1}\mathbb{H}_{\mathrm{\mathrm{e}}}(\tilde{f}_{k-1}(\tau))\bigg)\widetilde{\mathbb{V}}(t_{1},0)\frac{P!}{\prod_{i=1}^{2(k-1)}w_{i}!}\bigg\rangle\bigg\rVert_{\diamond}\\
&\leq  \sum_{w_{1}+\cdots+w_{2(k-1)}=P}\frac{P!}{\prod_{i=1}^{2(k-1)}w_{i}!}\times\\
 & 2N(8k-4)^{2P+1}k\bigg(32Pm+(64k+32P+64s_P)M+2\bigg)^{P+1},\\
&=  (2k-2)^{P}\times2N(8k-4)^{2P+1}k\bigg(32Pm+(64k+32P+64s_P)M+2\bigg)^{P+1}.
\end{aligned}
\end{equation}

Following the same procedure, we can bound the error remainder related
to$\mathcal{J}_{P}^{(2)}$, which is 
\begin{equation}
(2k)^{P}\times2N(8k+4)^{2P+1}k\bigg(32Pm+(64k+32P+64s_P+64)M+2\bigg)^{P+1}.
\end{equation}

Thus, the upper bound for Eq. (\ref{eq:rewritten_each_chunk_term})
is 
\begin{equation}
\begin{aligned} & \bigg\lVert\bigg\langle\mathbb{U}_{SE}(t,t_{2})\bigg[\mathbb{U}_{SE}(t_{2},t_{1})-\mathbb{V}(t_{2},t_{1})\bigg]\widetilde{\mathbb{V}}(t_{1},0)\bigg\rangle\bigg\rVert_{\diamond}\\
&\leq  \int_{0}^{\Delta t}d\tau\int_{0}^{1}dx(1-x)^{P-1}\frac{\tau^{P}}{(P-1)!}\times\\
 & \sum_{k=1}^{s_P}k\bigg\{(2k-2)^{P}2N(8k-4)^{2P+1}\bigg(32Pm+(64k+32P+64s_P)M+2\bigg)^{P+1}\\
 & +(2k)^{P}2N(8k+4)^{2P+1}\bigg(32Pm+(64k+32P+64s_P+64)M+2\bigg)^{P+1}\bigg\},\\
&=  \frac{(\Delta t)^{P+1}N}{(P+1)!}\times\\
 & \sum_{k=1}^{s_P}k\bigg\{(2k-2)^{P}\times2(8k-4)^{2P+1}\bigg(32Pm+(64k+32P+64s_P)M+2\bigg)^{P+1}\\
 & +(2k)^{P}\times2(8k+4)^{2P+1}\bigg(32Pm+(64k+32P+64s_P+64)M+2\bigg)^{P+1}\bigg\},\\
&=  \mathcal{O}\bigg(N(\Delta t)^{P+1}\bigg).
\end{aligned}
\end{equation}
The total Trotterization error after summing over all the chunks in
Eq. (\ref{eq:Chunk_error_p_order}) is thus
\begin{equation}
\delta_{\mathrm{tro}}\leq\mathcal{O}\bigg(N\frac{t^{P+1}}{T^{P}}\bigg),
\end{equation}
the same as the $P^\text{th}$ order product Trotter formula for a closed
quantum system. We complete the proof of Lemma \ref{lemma1}.

\section{Details in proving Theorem \ref{theoremdi}\label{sec:Details-in-proving_theorem1}}

\subsection{Some property of the coefficients $C_{i,j}^{n}(t)$}

For convenience, in this subsection we provide some properties of
the time-dependent expansion coefficients $C_{i,j}^{n}(t)$.

In the main text, $C_{i,j}^{n}(t)$ is defined as 
\begin{equation}
C_{i,j}^{n}(t)=\int_{n\eta}^{(n+1)\eta}v_i(t-s)P_{j}^{n}(s)ds.\label{eq:SM_definition_of_Cjn}
\end{equation}
Here $v_i(t)$ is the temporal coupling function, which is compactly
supported on $t\in[- r , r ]$ and bounded
by $|v(t)|\leq C_0$. $P_{j}^{n}(\tau)$ is the local orthonormal
polynomial basis. From the Cauchy-Schwartz inequality, the norm of
$C_{i,j}^{n}(t)$ is bounded by
\begin{equation}
|C_{i,j}^{n}(t)|\leq\bigg[\int_{n\eta}^{(n+1)\eta}|v_i(t-s)|^{2}ds\int_{n\eta}^{(n+1)\eta}\bigg(P_{j}^{n}(s)\bigg)^{2}ds\bigg]^{\frac{1}{2}}\leq C_0\sqrt{\eta}.
\end{equation}

For any fixed $n$, $C_{i,j}^{n}(t)$ is non-vanishing only if 
\begin{equation}
t-(n+1)\eta\leq r \ \mathrm{and}\ t-n\eta\geq- r .
\end{equation}
Therefore, $C_{i,j}^{n}(t)\neq0$ only if 
\begin{equation}
t\in[n\eta- r ,(n+1)\eta+ r ],\label{eq:SM_support_of_t_for_C_ijn}
\end{equation}
 and hence 
\begin{equation}
\int_{-\infty}^{\infty}|C_{i,j}^{n}(\tau)|d\tau\leq C_0\sqrt{\eta}(2 r +\eta).\label{eq:SM_bounding_on_integral_Cjn}
\end{equation}

Similarly, for any fixed $t$, $C_{i,j}^{n}(t)$ is non-vanishing only
if
\begin{equation}
\frac{t- r }{\eta}-1\leq n\leq\frac{t+ r }{\eta}.\label{eq:SM_region_of_n_fixed_t}
\end{equation}
Consequently, for a fixed $t,i$, there are at most $j_{\mathrm{max}}\big(\frac{2 r }{\eta}+2\big)$
non-zero $C_{i,j}^{n}(t)$.

\subsection{Proof of Lemma \ref{lemma:EB_Kernel}}

The new memory kernel related to the discretized Hamiltonian $\tilde{H}_{SE}(t)$
is
\begin{equation}
\delta_{i,i'}\tilde{K}_i^{(-,+)}(t,t')=\sum_{n,n'}\sum_{j,j'=0}^{j_{\mathrm{max}}}\langle C_{i,j}^{n}(t)b_{i,j}^{n}C_{i',j'}^{n'*}(t')b_{i',j'}^{ n'\dagger}\rangle=\delta_{i,i'}\sum_{n}\sum_{j=0}^{j_{\mathrm{max}}}C_{i,j}^{n}(t)C_{i,j}^{n*}(t').
\end{equation}
Using the definition of $C_{i,j}^{n}(t)$ in Eq. (\ref{eq:SM_definition_of_Cjn}),
we can rewrite $\tilde{K}_i^{(-,+)}(t,t')$ as 
\begin{equation}
\begin{aligned}\tilde{K}_i^{(-,+)}(t,t') & =\sum_{n}\sum_{j=0}^{j_{\mathrm{max}}}C_{i,j}^{n}(t)C_{i,j}^{n*}(t'),\\
 & =\sum_{n}\sum_{j=0}^{j_{\mathrm{max}}}\int_{n\eta}^{(n+1)\eta}v_i(t-s_{1})P_{j}^{n}(s_{1})C_{i,j}^{n*}(t')ds_{1},\\
 & =\sum_{n}\int_{n\eta}^{(n+1)\eta}v_i(t-s_{1})ds_{1}\sum_{j=0}^{j_{\mathrm{max}}}P_{j}^{n}(s_{1})C_{i,j}^{n*}(t').
\end{aligned}
\end{equation}
Recall that from the definition, the original memory kernal $K_i^{(-,+)}(t,t')$ is written as
\begin{equation}
K_i^{(-,+)}(t,t')=\int_{-\infty}^\infty  v_i(t-s_{1})v_i^{*}(t'-s_{1})ds_{1},
\end{equation}
we obtain 
\begin{equation}
\begin{aligned} & |\tilde{K}_i^{(-,+)}(t,t')-K_i^{(-,+)}(t,t')| \leq  \sum_{n}\int_{n\eta}^{(n+1)\eta}\bigg|v_i(t-s_{1})\bigg(\sum_{j=0}^{j_{\mathrm{max}}}P_{j}^{n}(s_{1})C_{i,j}^{n*}(t')-v^{*}_i(t'-s_{1})\bigg)\bigg|ds_{1}.
\end{aligned}
\end{equation}

If we recognize $v^{*}_i(t'-s_{1})$ as a function of $s_{1}$
with $t'$ fixed, $\sum_{j=0}^{j_{\mathrm{max}}}P_{j}^{n}(s_{1})C_{i,j}^{n*}(t')$
is a $j_{\mathrm{max}}^\text{th}$ order polynomial fitting for $v^{*}_i(t'-s_{1})$
on the region $s_{1}\in[n\eta,(n+1)\eta)$. So the error in the
brackets should be small. More concretely, the Taylor's remainder
theory allows us to rewrite $v_i^{*}(t'-s_{1})$ as 
\begin{equation}
v_i^{*}(t'-s_{1})=f_{t'}^{*}(s_{1}-n\eta)+R_{t'}^{*}(s_{1}).
\end{equation}
Here $f_{t'}^{*}(s_{1}-n\eta)$ is a polynomial on $s_{1}\in[n\eta,(n+1)\eta)$
whose order is $j_{\mathrm{max}}$. $R^{*}_{t'}$ is the Taylor's remainder,
which can can be upper bounded as 
\begin{equation}
|R_{t'}(s_{1})|\leq\frac{\eta^{j_{\mathrm{max}}+1}\lVert v_i^{(j_{\mathrm{max}}+1)}\rVert_{\infty}}{(j_{\mathrm{max}}+1)!}\ \mathrm{for}\ s_{1}\in[n\eta,(n+1)\eta),
\end{equation}
where $\lVert v_i^{(j_{\mathrm{max}}+1)}\rVert_{\infty}=\max_{\tau}|\frac{\partial^{j_{\mathrm{max}}+1}}{\partial\tau^{j_{\mathrm{max}}+1}}v_i(\tau)|=D_{j_\mathrm{max}+1}$ defined in Eq. (\ref{eq:smoothness_assumption}) and we assume that it is uniformly bounded for $\forall i$.
Because $f^{*}$ is a polynomial whose order is $j_{\mathrm{max}}$,
it can be exactly captured by $P_{j}^{n}(s_{1})$ for $s_{1}\in[n\eta,(n+1)\eta)$.
Thus, we obtain
\begin{equation}
\begin{aligned} & |\tilde{K}_i^{(-,+)}(t,t')-K_i^{(-,+)}(t,t')|\\
&\leq  \frac{\eta^{j_{\mathrm{max}}+1}D_{j_\mathrm{max}+1}}{(j_{\mathrm{max}}+1)!}\sum_{n}\int_{n\eta}^{(n+1)\eta}\bigg|v_i(t-s_{1})\bigg|ds_{1},\\
& \leq  \frac{\eta^{j_{\mathrm{max}}+1}D_{j_\mathrm{max}+1}}{(j_{\mathrm{max}}+1)!}2C_0 r .
\end{aligned}
\label{eq:error_bound_by_discretization}
\end{equation}
In the last inequality, we use the fact that $v_i(t)$ is bounded and
compactly supported. Furthermore, since $K_i^{(-,+)}(t,t')\neq0$ only
if $t'\in[t-2 r ,t+2 r ]$ and 
\begin{equation}
\tilde{K}_i^{(-,+)}(t,t')\neq0\;
\text{only if}\; t'\in[t-2 r -\eta,t+2 r +\eta],\label{eq:SM_finite_support_of_tilde_K}
\end{equation}
which is a consequence of the finite support of $C_{i,j}^{n}(t)$, we
have

\begin{equation}
\begin{aligned} & \int_{0}^{t}\int_{0}^{t}|\tilde{K}_i^{(-,+)}(s,s')-K_i^{(-,+)}(s,s')|dsds'\\
&\leq  t(4 r +2\eta)\frac{\eta^{j_{\mathrm{max}}+1}D_{j_\mathrm{max}+1}}{(j_{\mathrm{max}}+1)!}2C_0 r, \\
&=  \mathcal{O}\left(t\eta^{j_{\mathrm{max}}+1}r^2\right).
\end{aligned}
\end{equation}

Last, we will show the $L^1$ bound of $\tilde{K}_i$. We start from
\begin{equation}
\begin{aligned}
|\tilde{K}_{i}^{(-,+)}(t,t')| & =|\sum_{n}\sum_{j=0}^{j_{\mathrm{max}}}C_{i,j}^{n}(t)C_{i,j}^{n*}(t')|,\\
 & \leq\sqrt{\eta}C_{0}|\sum_{n}\sum_{j=0}^{j_{\mathrm{max}}}C_{i,j}^{n}(t)|,\\
 & \leq\sqrt{\eta}C_{0}(j_{\mathrm{max}}+1)\sqrt{\eta}C_{0}(\frac{2 r }{\eta}+1),\\
 & \leq C_{0}^{2}(j_{\mathrm{max}}+1)(2 r +1),
\end{aligned}
\end{equation}
where in the first inequality, we used the fact that $|C_{i,j}^{n}(t)|\leq C_{0}\sqrt{\eta}$, in the second inequality, we used Eq. (\ref{eq:SM_region_of_n_fixed_t}) and in the last inequality we used $\eta<1$. Combining the above result and Eq. (\ref{eq:SM_finite_support_of_tilde_K}), we arrive at
\begin{equation}
\int_{-\infty}^{\infty}|\tilde{K}_{i}^{(-,+)}(t,t')|dt'\leq2C_{0}^{2}(j_{\mathrm{max}}+1)(2 r +1)^{2}.
\end{equation}
Therefore, we complete the proof of Lemma \ref{lemma:EB_Kernel}.

\subsection{Proof of Lemma \ref{Lemma4}\label{subsec:Proof-of-Lemma4}}

We denote the number operator for the mode $b_{i,j}^{n}$ as $N_{i,j}^{n}=b_{i,j}^{n\dagger}b_{i,j}^{n}$.
We will use the following Lemma to prove Lemma \ref{Lemma4}:

\begin{Lemma}[Bounding on occupation number from expectation value] 

Consider a system-environment state $\ket{\psi(\tau)}$, if there exists some constant $\theta, C$ such that
\begin{equation}
\bra{\psi(\tau)}e^{\theta N_{i,j}^{n}}\ket{\psi(\tau)}\leq e^{C},
\end{equation}
then 
\begin{equation}
\lVert(I-P_{i,j}^{n}(d))\ket{\psi(\tau)}\rVert\leq e^{\frac{C}{2}}e^{-\frac{\theta d}{2}}.
\end{equation}
\label{Lemma:bounding_from_expectation_value}
\end{Lemma}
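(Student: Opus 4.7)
\textbf{Proof proposal for Lemma~\ref{Lemma:bounding_from_expectation_value}.}
The plan is to apply an operator-level Chebyshev/Markov-type bound exploiting the fact that the cutoff projector $P_{i,j}^n(d)$ commutes with the number operator $N_{i,j}^n$ of the same bosonic mode, and then read off the desired norm bound by taking the expectation in $\ket{\psi(\tau)}$.

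First, I would note that both $P_{i,j}^n(d)=\sum_{\ell=0}^{d-1}\ket{\ell}\!\bra{\ell}$ (with $\ket{\ell}$ the Fock states of mode $(i,j,n)$) and $N_{i,j}^n=\sum_{\ell\ge 0}\ell\,\ket{\ell}\!\bra{\ell}$ are diagonal in the same Fock basis, so they commute. Hence on the range of the complementary projector $I-P_{i,j}^n(d)$, the operator $N_{i,j}^n$ is bounded below by $d$, which yields the operator inequality
\begin{equation}
(I-P_{i,j}^n(d))\,e^{\theta N_{i,j}^n}\,(I-P_{i,j}^n(d)) \;\geq\; e^{\theta d}\,(I-P_{i,j}^n(d)).
\end{equation}
Since $[P_{i,j}^n(d),N_{i,j}^n]=0$, the cross terms vanish in the decomposition $e^{\theta N_{i,j}^n}=P_{i,j}^n(d)\,e^{\theta N_{i,j}^n}\,P_{i,j}^n(d)+(I-P_{i,j}^n(d))\,e^{\theta N_{i,j}^n}\,(I-P_{i,j}^n(d))$, and the first summand is positive semidefinite, so
\begin{equation}
e^{\theta N_{i,j}^n}\;\geq\;e^{\theta d}\,(I-P_{i,j}^n(d)).
\end{equation}

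Next, I would take the expectation of this operator inequality in the state $\ket{\psi(\tau)}$. Using the hypothesis $\bra{\psi(\tau)}e^{\theta N_{i,j}^n}\ket{\psi(\tau)}\leq e^{C}$ together with the identity $\bra{\psi(\tau)}(I-P_{i,j}^n(d))\ket{\psi(\tau)}=\lVert(I-P_{i,j}^n(d))\ket{\psi(\tau)}\rVert^{2}$ (valid because $I-P_{i,j}^n(d)$ is an orthogonal projector), this gives
\begin{equation}
e^{\theta d}\,\lVert(I-P_{i,j}^n(d))\ket{\psi(\tau)}\rVert^{2}\;\leq\;e^{C},
\end{equation}
and taking the square root yields exactly $\lVert(I-P_{i,j}^n(d))\ket{\psi(\tau)}\rVert\leq e^{C/2}e^{-\theta d/2}$.

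There is essentially no serious obstacle here: the argument is a Markov inequality in disguise, and the only subtlety is ensuring that the exchange of the projector with $N_{i,j}^n$ is legitimate, which follows from their joint diagonalizability in the Fock basis of a single bosonic mode. The implicit assumption $\theta>0$ is needed for the bound to be nontrivial and will be satisfied in the application (where $\theta=2$ based on the quoted estimate $\bra{\psi(\tau)}e^{2N_{i,j}^n}\ket{\psi(\tau)}\leq\exp(\mathcal{O}(C_0 s_P\sqrt{\eta}(\eta+2r)))$), so that $C=\mathcal{O}(s_P^{2}C_0^{2})$ after absorbing constants, matching the $C_{\mathrm{trun}}$ claimed in Lemma~\ref{Lemma4}.
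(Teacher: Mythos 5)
Your proposal is correct and follows essentially the same route as the paper: the operator inequality $e^{\theta N_{i,j}^{n}}\geq e^{\theta d}(I-P_{i,j}^{n}(d))$ taken in expectation is exactly the one-line Markov-type bound $\bra{\psi(\tau)}(I-P_{i,j}^{n}(d))\ket{\psi(\tau)}e^{\theta d}\leq\bra{\psi(\tau)}e^{\theta N_{i,j}^{n}}\ket{\psi(\tau)}\leq e^{C}$ used there, with the same projector identity and square root giving the stated norm bound. Your extra care about the commutation of $P_{i,j}^{n}(d)$ with $N_{i,j}^{n}$ in the Fock basis is a harmless elaboration of the same argument.
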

\begin{proof}From the definition, we have
\begin{equation}
\bra{\psi(\tau)}(I-P_{i,j}^{n}(d))\ket{\psi(\tau)}e^{\theta d}\leq\bra{\psi(\tau)}e^{\theta N_{i,j}^{n}}\ket{\psi(\tau)}\leq e^{C}.
\end{equation}
Rearranging the above inequality leads to 
\begin{equation}
\lVert(I-P_{i,j}^{n}(d))\ket{\psi(\tau)}\rVert\leq e^{\frac{C}{2}}e^{-\frac{\theta d}{2}},
\end{equation}
which completes the proof.
\end{proof}

In order to use Lemma \ref{Lemma:bounding_from_expectation_value},
we need to upper bound $\bra{\psi(\tau)}e^{\theta N_{i,j}^{n}}\ket{\psi(\tau)}$
for any $\tau$. For convenience, we introduce the falling factory
which is defined as 
\begin{equation}
\underline{(N_{i,j}^{n})^{k}}:=\prod_{l=0}^{k-1}(N_{i,j}^{n}-l)=(b_{i,j}^{n\dagger})^{k}(b_{i,j}^{n})^{k},
\end{equation}
and
\begin{equation}
F_{i,j}^{n,k}(\tau)=\bra{\psi(\tau)}\underline{(N_{i,j}^{n})^{k}}\ket{\psi(\tau)}.
\end{equation}
Our starting point would be the equation of motion for $F_{i,j}^{n,k}(\tau)$.
It is straight forward to derive that
\begin{equation}
\begin{aligned}\frac{d}{d\tau}F_{i,j}^{n,k}(\tau) & =i\bra{\psi(\tau)}[\tilde{H}_{SE}(\tau),\underline{(N_{i,j}^{n})^{k}}]\ket{\psi(\tau)},\\
 & =ikC_{i,j}^{n}(\tau)\bra{\psi(\tau)}J_{i,i+1}^{\dagger}(b_{i,j}^{n\dagger})^{k-1}(b_{i,j}^{n})^{k}\ket{\psi(\tau)}-\\
 & ikC_{i,j}^{n*}(\tau)\bra{\psi(\tau)}J_{i,i+1}(b_{i,j}^{n\dagger})^{k}(b_{i,j}^{n})^{k-1}\ket{\psi(\tau)},\\
 & =ikC_{i,j}^{n}(\tau)\bra{\psi(\tau)}J_{i,i+1}^{\dagger}\underline{(N_{i,j}^{n})^{k-1}}b_{i,j}^{n}\ket{\psi(\tau)}-\\
 & ikC_{i,j}^{n*}(\tau)\bra{\psi(\tau)}J_{i,i+1}b_{i,j}^{n\dagger}\underline{(N_{i,j}^{n})^{k-1}}\ket{\psi(\tau)}.
\end{aligned}
\label{eq:SM_new_prove_derivative_of_Fijnk}
\end{equation}
By Cauchy-Schwarz inequality, we can upper bound $\bra{\psi(\tau)}J_{i,i+1}^{\dagger}\underline{(N_{i,j}^{n})^{k-1}}b_{i,j}^{n}\ket{\psi(\tau)}$
as 
\begin{equation}
\begin{aligned} & |\bra{\psi(\tau)}J_{i,i+1}^{\dagger}\underline{(N_{i,j}^{n})^{k-1}}b_{i,j}^{n}\ket{\psi(\tau)}|\\
&\leq  \left(\bra{\psi(\tau)}J_{i,i+1}^{\dagger}\underline{(N_{i,j}^{n})^{k-1}}J_{i,i+1}\ket{\psi(\tau)}\bra{\psi(\tau)}b_{i,j}^{n\dagger}\underline{(N_{i,j}^{n})^{k-1}}b_{i,j}^{n}\ket{\psi(\tau)}\right)^{\frac{1}{2}},\\
&=  \left(\bra{\psi(\tau)}\sqrt{\underline{(N_{i,j}^{n})^{k-1}}}J_{i,i+1}^{\dagger}J_{i,i+1}\sqrt{\underline{(N_{i,j}^{n})^{k-1}}}\ket{\psi(\tau)}\bra{\psi(\tau)}\underline{(N_{i,j}^{n})^{k}}\ket{\psi(\tau)}\right)^{\frac{1}{2}},\\
&\leq  \left(\lVert J_{i,i+1}^{\dagger}J_{i,i+1}\rVert\bra{\psi(\tau)}\underline{(N_{i,j}^{n})^{k-1}}\ket{\psi(\tau)}\bra{\psi(\tau)}\underline{(N_{i,j}^{n})^{k}}\ket{\psi(\tau)}\right)^{\frac{1}{2}},
\end{aligned}\label{eq:SM_CS_inequality_used_in_prove}
\end{equation}
where in the equality, we used the fact that $\underline{(N_{i,j}^{n})^{k-1}}$
is positive and commutes with $J_{i,i+1}$ (as one acts on bosonic
mode and one acts on the system), and also $b_{i,j}^{n\dagger}\underline{(N_{i,j}^{n})^{k-1}}b_{i,j}^{n}=\underline{(N_{i,j}^{n})^{k}}$
from the definition. A similar bound holds for the second term in
Eq. (\ref{eq:SM_new_prove_derivative_of_Fijnk}). Therefore, we arrive
at 
\begin{equation}
\frac{d}{d\tau}F_{i,j}^{n,k}(\tau)\leq2k|C_{i,j}^{n}(\tau)|\sqrt{F_{i,j}^{n,k}(\tau)F_{i,j}^{n,k-1}(\tau)},
\end{equation}
or equivalently
\begin{equation}
\frac{d}{d\tau}\sqrt{F_{i,j}^{n,k}(\tau)}\leq k|C_{i,j}^{n}(\tau)|\sqrt{F_{i,j}^{n,k-1}(\tau)}.\label{eq:SM_recursive_inequality_for_F_nkij}
\end{equation}
This is an recursive inequality with the initial condition $F_{i,j}^{n,0}(\tau)=1$
and $F_{i,j}^{n,k}(0)=0$ for $\forall k\geq1$. To solve the above
recursive inequality, we first assume purely forward evolution (i.e.,
neglecting the Trotterization procedure at this stage), and later
account for the effect of Trotterization. Setting $k=1$ in Eq. (\ref{eq:SM_recursive_inequality_for_F_nkij})
and using $|C_{i,j}^{n}(\tau)|\leq C_{0}\sqrt{\eta}$ , along with
the condition from Eq. (\ref{eq:SM_support_of_t_for_C_ijn}) that $C_{i,j}^{n}(\tau)\neq0$ only if
$\tau\in[n\eta- r ,(n+1)\eta+ r ]$ , we
obtain
\begin{equation}
\sqrt{F_{i,j}^{n,1}(\tau)}\leq C_{0}\sqrt{\eta}(\tau-\tau_{n,l})\ \mathrm{for}\ \tau_{n,l}\leq\tau\leq\tau_{n,u},\label{eq:SM_F_n1ij_vacuum}
\end{equation}
where $\tau_{n,l}=n\eta- r $, $\tau_{n,u}=(n+1)\eta+ r $.
We next set $k=2$ in Eq. (\ref{eq:SM_recursive_inequality_for_F_nkij})
and substitute Eq. (\ref{eq:SM_F_n1ij_vacuum}), which yields
\begin{equation}
\sqrt{F_{i,j}^{n,2}(\tau)}\leq C_{0}\sqrt{\eta}\int_{\tau_{n,l}}^{\tau}d\tau'2\sqrt{F_{i,j}^{n,1}(\tau')}\leq(C_{0}\sqrt{\eta})^{2}(\tau-\tau_{n,l})^{2}\ \mathrm{for}\ \tau_{n,l}\leq\tau\leq\tau_{n,u}.
\end{equation}
This procedure can be continued, leading to 
\begin{equation}
\sqrt{F_{i,j}^{n,k}(\tau)}\leq(C_{0}\sqrt{\eta})^{k}(\tau-\tau_{n,l})^{k}\ \mathrm{for}\ \tau_{n,l}\leq\tau\leq\tau_{n,u}.
\end{equation}
Alternatively, we can express this upper bound as 
\begin{equation}
F_{i,j}^{n,k}(\tau)\leq(C_{0}\sqrt{\eta})^{2k}(\tau_{n,u}-\tau_{n,l})^{2k}\ \mathrm{for}\ \forall\tau.
\end{equation}

In the above expression, $(\tau_{n,u}-\tau_{n,l})$ is the total interaction
time associated with the driving field $C_{i,j}^{n}(\tau)$. Including
the Trotterization procedure, which involves both forward- and backward-evolution,
is equivalent to extending the interaction time up to a factor $s_{P}$.
Therefore, the final upper bound for $F_{i,j}^{n,k}(\tau)$ after
the Trotterization is given by
\begin{equation}
F_{i,j}^{n,k}(\tau)\leq(C_{0}\sqrt{\eta})^{2k}(s_{P}(\tau_{n,u}-\tau_{n,l}))^{2k}=\left(s_{P}C_{0}\sqrt{\eta}(2 r +\eta)\right)^{2k}
\end{equation}
for $\forall\tau$. By a simple algebraic calculation, we can derive
\begin{equation}
\begin{aligned} & \bra{\psi(\tau)}e^{\theta N_{i,j}^{n}}\ket{\psi(\tau)}\\
&=  \sum_{k=0}^{\infty}(e^{\theta}-1)^{k}\frac{F_{i,j}^{n,k}(\tau)}{k!},\\
&\leq  \exp\left((e^{\theta}-1)(s_{P}C_{0}\sqrt{\eta}(2 r +\eta))^{2}\right),
\end{aligned}
\end{equation}
where the first equality can be verified, for example, by expanding
both operators under the Fock basis. From Lemma \ref{Lemma:bounding_from_expectation_value},
we have
\begin{equation}
\lVert(I-P_{i,j}^{n}(d))\ket{\psi(\tau)}\rVert\leq e^{-\frac{\theta d}{2}+\frac{(e^{\theta}-1)(s_{P}C_{0}\sqrt{\eta}(2 r +\eta))^{2}}{2}}
\end{equation}
for $\forall\tau,\theta>0$. For example, one can choose $\theta=2$
such that 
\begin{equation}
\lVert(I-P_{i,j}^{n}(d))\ket{\psi(\tau)}\rVert\leq C_{\mathrm{trun}}e^{-d},\label{eq:SM_final_result_on_bounding_occupation_from_vacuum}
\end{equation}
where $C_{\mathrm{trun}}=\exp((e^{2}-1)(s_{P}C_{0}\sqrt{\eta}(2 r +\eta))^{2}/2)$
which is a constant. Notice that $\sqrt{\eta} r \leq\mathcal{O}(1)$, $C_\mathrm{trun}$ can thus be bounded by $C_\mathrm{trun}\leq \exp((e^2-1)s_P^2C_0^2)$. We remark that in principle, one can choose $\theta\sim\log d$
such that the probability bound can be improved to $d^{-d}$. However,
for our purpose the exponential decay probability bound is enough.

\subsection{Proof of Lemma \ref{Lemma:EB_quditization_total}\label{subsec:Proof-of-Lemma5}}

We now consider the error in state evolution by truncating the Hamiltonian
$\tilde{H}_{SE}$. More concretely, we denote $U_\mathrm{tro}$ as the Trotterization formula Eq. (\ref{eq:trotterization_expression}) applied to 
\[
\begin{aligned}
\tilde{H}_\mathrm{e}(t)&=\sum_{i\in\mathrm{even}}\left(H_{i,i+1}+(J_{i,i+1}^\dagger\tilde{A}_i(t)+\text{h.c.})\right),\\
\tilde{H}_\mathrm{o}(t)&=\sum_{i\in\mathrm{odd}}\left(H_{i,i+1}+(J_{i,i+1}^\dagger\tilde{A}_i(t)+\text{h.c.})\right),
\end{aligned}
\]
and $\hat{U}_\mathrm{tro}(t,T)$ as
the Trotterization formula Eq. (\ref{eq:trotterization_expression}) applied to
\[
\begin{aligned}
\hat{H}^d_\mathrm{e}(t)&=\sum_{i\in\mathrm{even}}\left(H_{i,i+1}+P_d(J_{i,i+1}^\dagger\tilde{A}_i(t)+\text{h.c.})P_d\right),\\
\hat{H}^d_\mathrm{o}(t)&=\sum_{i\in\mathrm{odd}}\left(H_{i,i+1}+P_d(J_{i,i+1}^\dagger\tilde{A}_i(t)+\text{h.c.})P_d\right),
\end{aligned}
\]
where
$P_d$ is defined in Lemma \ref{Lemma:EB_quditization_total}.
In the following, we analyze the error $\lVert U_\mathrm{tro}(t,T)\ket{\psi(0)}-\hat{U}_\mathrm{tro}(t,T)\ket{\psi(0)}\rVert$,
with $\ket{\psi(0)}$ the initial system-environment state. We first consider purely forward evolution (i.e., neglecting the Trotterization procedure at this stage). The effect of Trotterization, which involves forward- and backward-evolution, is to extend the interaction time by a factor $s_P$.

For the purely forward evolution, we denote
\[
\begin{aligned}
\tilde{U}_{SE}(t,0)=&\mathcal{T}\exp\left(-i\int_0^t\tilde{H}_\mathrm{SE}(\tau)d\tau\right),\\
\hat{U}_{SE}(t,0)=&\mathcal{T}\exp\left(-i\int_0^tP_d\tilde{H}_\mathrm{SE}(\tau)P_dd\tau\right),
\end{aligned}
\]
and 
\[
\ket{\psi(\tau)}=\tilde{U}_{SE}(\tau,0)\ket{\psi(0)}.
\]
Let us introduce $Q_{d}=I-P_{d}$ the projector to the orthogonal subspace of $P_d$. From Ref. \cite{trivedi2022descriptioncomplexitynonmarkovianopen},
we have the following inequality:

\begin{equation}
\begin{aligned} & \lVert \tilde{U}_{SE}(t,0)\ket{\psi(0)}-\hat{U}_{SE}(t,0)\ket{\psi(0)}\rVert \leq  \lVert Q_{d}\ket{\psi(t)}\rVert +\int_{0}^{t}\lVert P_{d}\tilde{H}_{SE}(\tau)Q_{d}Q_{d}\ket{\psi(\tau)}\rVert d\tau.
\end{aligned}\label{eq:SM_inequality_for_the_difference_pure_state}
\end{equation}
By employing Lemma \ref{Lemma4}, we can bound each term separately. For the first
term, we have

\begin{equation}
\begin{aligned} & \lVert Q_{d}\ket{\psi(\tau)}\rVert \leq  \sum_{n,i,j}\lVert (I-P_{i,j}^{n}(d))\ket{\psi(\tau)}\rVert\leq Nj_{\mathrm{max}}\frac{2 r +t}{\eta}C_\mathrm{trun}e^{-d} =  \mathcal{O}\left(\frac{Nt}{\eta} r e^{-d}\right)\text{  for  }\forall\tau,
\end{aligned}
\end{equation}
where $\frac{2 r +t}{\eta}$ is the total number of segments
when we discretize the continuum of the bosonic modes and $Nj_\mathrm{max}\frac{2 r +t}{\eta}$  is the total number of ancillary modes $b_{i,j}^n$ in the environment. $C_\mathrm{trun}$ is defined below Eq. (\ref{eq:SM_final_result_on_bounding_occupation_from_vacuum}).

When bounding the second term, we notice that at each time $\tau$,
there will be at most $Nj_{\mathrm{max}}(\frac{2 r }{\eta}+2)$
modes entering $\tilde{H}_{SE}(\tau)$, which is a consequence from the finite
support of $C_{j}^{n}(\tau)$. For each bosonic mode $b_{i,j}^{n}$,
we have 
\begin{equation}
\lVert P_{d}b_{i,j}^{n}Q_{d}\rVert\leq\sqrt{\lVert P_{d}b_{i,j}^{n}b_{i,j}^{\dagger n}P_{d}\rVert}\leq\sqrt{d+1},
\end{equation}
thus, the upper bound on the second term is 
\begin{equation}
\begin{aligned} & \int_{0}^{t}\lVert P_{d}\tilde{H}_{SE}(\tau)Q_{d}Q_{d}\ket{\psi(\tau)}\rVert d\tau\\
&\leq  \int_{0}^{t}d\tau\sum_{i,j,n}\left(\lVert C_{j}^{n}(\tau)P_{d}b_{i,j}^{n}Q_{d}\rVert +\lVert C_{j}^{n*}(\tau)P_{d}b_{i,j}^{n\dagger }Q_{d}\rVert \right)\lVert Q_{d}\ket{\psi(\tau)}\rVert, \\
&\leq  2C_0\sqrt{\eta}Nj_{\mathrm{max}}\left(\frac{2 r }{\eta}+2\right)\sqrt{d+1}\int_{0}^{t}d\tau\lVert Q_{d}\ket{\psi(\tau)}\rVert.
\end{aligned}
\end{equation}
Combine these two bounds, we have
\begin{equation}
\begin{aligned} & \lVert \tilde{U}_{SE}(t,0)\ket{\psi(0)}-\hat{U}_{SE}(t,0)\ket{\psi(0)}\rVert\\
&\leq  \mathcal{O}\bigg(\frac{Nt}{\eta} r e^{-d}(1+2C_0\sqrt{\eta}tNj_{\mathrm{max}}(\frac{2 r }{\eta}+2)\sqrt{d+1}\bigg),\\
&=  \mathcal{O}\bigg(\sqrt{d}e^{-d}\frac{N^{2}t^{2}}{\eta^{\frac{3}{2}}}r^2\bigg).
\end{aligned}
\label{eq:SM_bound_on_state_distance}
\end{equation}

The above bound scales quadratically with the evolution time $t$. When we consider the Trotterization procedure, which involves $s_P$ stages of forward- and backward-evolution within each time step, the evolution time extends from $t$ to $s_Pt$. Nonetheless, everything else follows the same analysis as in this subsection. Thus, it will not change the scaling behavior in Eq. (\ref{eq:SM_bound_on_state_distance}), which also holds for the quantity $\lVert U_\mathrm{tro}(t,T)\ket{\psi(0)}-\hat{U}_\mathrm{tro}(t,T)\ket{\psi(0)}\rVert$.

Lemma \ref{Lemma:EB_quditization_total} directly follows from Eq. (\ref{eq:SM_bound_on_state_distance})
and the fact that $\lVert\ket{\varphi_{1}}\bra{\varphi_{1}}-\ket{\varphi_{2}}\bra{\varphi_{2}}\rVert_{\mathrm{tr}}\leq2\lVert\ket{\varphi_{1}}-\ket{\varphi_{2}}\rVert$
for any state $\ket{\varphi_{1}},\ket{\varphi_{2}}$, as well as the
fact that the trace distance never increases under the partial trace.

\section{Extension to general Gaussian initial states\label{sec:Extension-to-general}}

In this section, we extend our algorithm to the general Gaussian initial
state $\rho_{E}(0)$, which is characterized by its one-point and
two-point correlators in the temporal region. Similarly, we use $\langle\cdot\rangle$
to denote $\mathrm{Tr}_E((\cdot)\rho_E(0))$. Since we can always absorb
the one-point correlator into the definition of $H_S$, without loss of generality, we can set
\begin{equation}
\langle a_{i,\tau}\rangle=\langle a_{i,\tau}^{\dagger}\rangle=0.
\end{equation}
We further assume that the two-point correlator is stationary and integrable with
a bounded integration, i.e.,
\begin{equation}
\begin{aligned}\langle a_{i,\tau}a_{i',\tau'}\rangle & =\delta_{i,i'}G_i(\tau,\tau'),\\
\langle a_{i,\tau}a_{i',\tau'}^{\dagger}\rangle & =\delta_{i,i'}B_i(\tau,\tau'),\\
\langle a_{i,\tau}^{\dagger}a_{i',\tau'}^{\dagger}\rangle & =\delta_{i,i'}G^{*}_i(\tau',\tau),\\
\langle a_{i,\tau}^{\dagger}a_{i',\tau'}\rangle & =\delta_{i,i'}\left(B_i(\tau',\tau)-\delta(\tau-\tau')\right),
\\
\max_{i,\sigma,\sigma'}|\langle a_{i,\tau}^\sigma a_{i,\tau'}^{\sigma'}\rangle|&\leq D(\tau,\tau'),
\end{aligned}
\label{eq:SM_notation_in_two_point_CF}
\end{equation}
with 
\begin{equation}
\max_{\tau}\int_{-\infty}^{\infty}D(\tau,\tau')d\tau'=\max_{\tau'}\int_{-\infty}^{\infty}D(\tau,\tau')d\tau\leq\mathcal{D}.\label{eq:SM_Bounding_on_the_two_point_correlator}
\end{equation}
Here $D(\tau,\tau')$ is a distribution which may include $\delta(\tau-\tau')$ and $\mathcal{D}$ is a $\mathcal{O}(1)$ constant. With a little abuse of notations, we define $f\leq g$ for two distributions $f,g$ in the sense that $g-f$ is a positive distribution. Our algorithm directly works through
if the assumption Eq. (\ref{eq:SM_Bounding_on_the_two_point_correlator})
is satisfied and the corresponding environmental initial state (after
the discretization and truncation steps discussed in the main text)
can be prepared with $\mathcal{O}(Nt)$ gates. Before proving this
claim, we first introduce the memory Kernels associated with a general
Gaussian initial state $\rho_{E}(0)$.

Using the notation in Eq. (\ref{eq:SM_notation_in_two_point_CF}),
the memory Kernels can be expressed as 
\begin{equation}
\begin{aligned}K_i^{(-,+)}(s,s') & =\langle A_{i}(s)A_{i}^{\dagger}(s')\rangle=\int_{-\infty}^{\infty}d\tau\int_{-\infty}^{\infty}d\tau'v_i(s-\tau)B_i(\tau,\tau')v_i^{*}(s'-\tau');\\
K^{(-,-)}_i(s,s') & =\langle A_{i}(s)A_{i}(s')\rangle=\int_{-\infty}^{\infty}d\tau\int_{-\infty}^{\infty}d\tau'v_i(s-\tau)G_i(\tau,\tau')v_i(s'-\tau');\\
K_i^{(+,+)}(s,s') & =\langle A_{i}^{\dagger}(s)A_{i}^{\dagger}(s')\rangle=\int_{-\infty}^{\infty}d\tau\int_{-\infty}^{\infty}d\tau'v_i^{*}(s-\tau)G_i^{*}(\tau',\tau)v_i^{*}(t'-\tau');\\
K_i^{(+,-)}(s,s') & =\langle A_{i}^{\dagger}(s)A_{i}(s')\rangle=\int_{-\infty}^{\infty}d\tau\int_{-\infty}^{\infty}d\tau'v_i^{*}(s-\tau)(B(\tau',\tau)-\delta(\tau-\tau'))v_i(s'-\tau'),\\
 & =\int_{-\infty}^{\infty}d\tau\int_{-\infty}^{\infty}d\tau'v_i^{*}(s-\tau)B(\tau',\tau)v_i(s'-\tau')-\int_{-\infty}^{\infty}d\tau v_i^{*}(t-\tau)v_i(t'-\tau).
\end{aligned}
\end{equation}
Those memory Kernels as well as their integration
is bounded. To see this, we take $K_i^{(-,+)}$ as an example.
We have
\begin{equation}
K_i^{(-,+)}(s,s')\leq C_0^{2}\int_{s- r }^{s+ r }d\tau\int_{s'- r }^{s'+ r }d\tau'B(\tau,\tau')\leq2C_0^{2} r \mathcal{D},
\end{equation}
where in the first inequality we use the assumption that the temporal
coupling function $v_i(t)$ is bounded and finitely supported, and in
the second inequality we use Eq. (\ref{eq:SM_Bounding_on_the_two_point_correlator}).
Also, we can derive
\begin{equation}
\begin{aligned}\int_{-\infty}^{\infty}ds'|K_i^{(-,+)}(s,s')| & \leq\int_{-\infty}^{\infty}d\tau\int_{-\infty}^{\infty}d\tau'|v_i(s-\tau)B(\tau,\tau')|\int_{-\infty}^{\infty}ds'|v_i^{*}(s'-\tau')|,\\
 & \leq2C_0 r \int_{-\infty}^{\infty}|v_i(s-\tau)|d\tau\int_{-\infty}^{\infty}d\tau'|B(\tau,\tau')|,\\
 & \leq2C_0 r \mathcal{D}\int_{-\infty}^{\infty}|v_i(s-\tau)|d\tau,\\
 & \leq\left(2C_0 r \right)^{2}\mathcal{D}.
\end{aligned}
\end{equation}
Here in the second and forth inequality, we use $\int_{-\infty}^{\infty}|v_i(t)|dt\leq2C_0 r $,
and in the third inequality use Eq. (\ref{eq:SM_Bounding_on_the_two_point_correlator}).
The upper bound for the other three memory Kernels can be proved in
a similar way. Now, in order to show that our algorithm in the main
text applies, we only need to prove the Lemma \ref{lemma:EB_Kernel} and Lemma \ref{Lemma4} with $\rho_{E}(0)$
under the assumption Eq. (\ref{eq:SM_Bounding_on_the_two_point_correlator}).

\subsection{Proof of Lemma \ref{lemma:EB_Kernel} for general initial states}

The new sets of bosonic modes $b_{i,j}^{n}$ is defined as 
\begin{equation}
b_{i,j}^{n}=\int_{-\infty}^{\infty}P_{j}^{n}a_{i,\tau}d\tau,
\end{equation}
with $P_{j}^{n}$ only supports on the interval $[n\eta,(n+1)\eta)$.
According to Eq. (\ref{eq:SM_notation_in_two_point_CF}), we have
\begin{equation}
\begin{aligned}\langle b_{i,j}^{n}\rangle & =\langle b_{i,j}^{n\dagger}\rangle=0;\\
\langle b_{i,j}^{n}b_{i',j'}^{n'}\rangle & =\delta_{i,i'}\langle\int_{-\infty}^{\infty}P_{j}^{n}a_{i,\tau}d\tau\int_{-\infty}^{\infty}P_{j'}^{n'}a_{i',\tau'}d\tau\rangle\\
 & =\delta_{i,i'}\int_{-\infty}^{\infty}d\tau\int_{-\infty}^{\infty}d\tau'P_{j}^{n}(\tau)P_{j'}^{n'}(\tau')G_i(\tau,\tau');\\
\langle b_{i,j}^{n}b_{i',j'}^{\dagger n'}\rangle & =\delta_{i,i'}\int_{-\infty}^{\infty}d\tau\int_{-\infty}^{\infty}d\tau'P_{j}^{n}(\tau)P_{j'}^{n'}(\tau')B_i(\tau,\tau').
\end{aligned}
\end{equation}
Thus, the new memory Kernel can be expressed as 
\begin{equation}
\begin{aligned}\tilde{K}_i^{(-,+)}(s,s') & =\sum_{j,n,j',n'}\langle C_{i,j}^{n}(s)b_{i,j}^{n}C_{i,j'}^{*n'}(s')b_{i,j'}^{\dagger n'}\rangle,\\
 & =\int_{-\infty}^{\infty}d\tau\int_{-\infty}^{\infty}d\tau'\sum_{j,n,j',n'}C_{i,j}^{n}(s)C_{i,j'}^{*n'}(s')P_{j}^{n}(\tau)P_{j'}^{n'}(\tau')B_i(\tau,\tau');\\
\tilde{K}_i^{(-,-)}(s,s') & =\int_{-\infty}^{\infty}d\tau\int_{-\infty}^{\infty}d\tau'\sum_{j,n,j',n'}C_{i,j}^{n}(s)C_{i,j'}^{n'}(s')P_{j}^{n}(\tau)P_{j'}^{n'}(\tau')G_i(\tau,\tau');\\
\tilde{K}_i^{(+,+)}(s,s') & =\int_{-\infty}^{\infty}d\tau\int_{-\infty}^{\infty}d\tau'\sum_{j,n,j',n'}C_{i,j}^{*n}(s)C_{i,j'}^{*n'}(s')P_{j}^{n}(\tau)P_{j'}^{n'}(\tau')G_i^{*}(\tau',\tau);\\
\tilde{K}_i^{(+,-)}(s,s') & =\int_{-\infty}^{\infty}d\tau\int_{-\infty}^{\infty}d\tau'\sum_{j,n,j',n'}C_{i,j}^{*n}(s)C_{i,j'}^{n'}(s')P_{j}^{n}(\tau)P_{j'}^{n'}(\tau')\left(B_i(\tau',\tau)-\delta(\tau-\tau')\right).
\end{aligned}
\end{equation}
In order to show that the Lemma \ref{lemma:EB_Kernel} still holds,
we take $\tilde{K}_i^{(-,+)}(s,s')$ as an example. The other
three memory Kernels can be proved in the same way. We introduce
\begin{equation}
R_{s}(\tau)=\sum_{j,n}C_{i,j}^{n}(s)P_{j}^{n}(\tau)-v_i(s-\tau).
\end{equation}
 Since $\sum_{j,n}C_{i,j}^{n}(s)P_{j}^{n}(\tau)$ is a piecewise $j_{\mathrm{max}}^\text{th}$
order polynomial interpolation of $v_i(s-\tau)$, we have 
\begin{equation}
|R_{s}(\tau)|\leq\frac{D_{j_\mathrm{max}+1}}{(j_{\mathrm{max}}+1)!}\eta^{j_{\mathrm{max}}+1}\leq \Gamma\eta^{j_{\mathrm{max}}+1},\label{eq:SM_upper_bound_on_R}
\end{equation}
with $\Gamma$ a constant.
What's more, according to Eq. (\ref{eq:SM_region_of_n_fixed_t}),
for a fixed $s$, $C_{i,j}^{n}(s)$ is non-zero only if $\frac{s- r }{\eta}-1\leq n\leq\frac{s+ r }{\eta}$.
Therefore, we have 
\begin{equation}
R_{s}(\tau)=0\ \mathrm{if}\ |s-\tau|> r +\eta,\label{eq:SM_support_of_R}
\end{equation}
Consequently
\begin{equation}
\begin{aligned} & |\tilde{K}_i^{(-,+)}(s,s')-K_i^{(-,+)}(s,s')|\\
&=  |\int_{-\infty}^{\infty}d\tau\int_{-\infty}^{\infty}d\tau'B_i(\tau,\tau')\{\sum_{j,n}C_{i,j}^{n}(s)P_{j}^{n}(\tau)\sum_{j',n'}C_{i,j'}^{n'*}(s')P_{j'}^{n'}(\tau')-v_i(s-\tau)v^{*}_i(s'-\tau')\}|,\\
&\leq  \int_{-\infty}^{\infty}d\tau\int_{-\infty}^{\infty}d\tau'|B_i(\tau,\tau')|\left\{ \left(v_i(s-\tau)+R_{s}(\tau)\right)\left(v^{*}_i(s'-\tau')+R_{s'}^{*}(\tau')\right)-v_i(s-\tau)v^{*}_i(s'-\tau')\right\}, \\
&\leq  \int_{-\infty}^{\infty}d\tau\int_{-\infty}^{\infty}d\tau'|B_i(\tau,\tau')|\left\{ |R_{s}(\tau)v^{*}_i(s'-\tau')|+|R_{s'}^{*}(\tau')v_i(s-\tau)|+|R_{s}(\tau)R_{s'}^{*}(\tau')|\right\}, \\
&\leq  \int_{s- r -\eta}^{s+ r +\eta}d\tau\int_{s'- r -\eta}^{s'+ r +\eta}d\tau'|B_i(\tau,\tau')|(2C_0\Gamma\eta^{j_{\mathrm{max}}+1}+\Gamma^2\eta^{2j_{\mathrm{max}}+2}),\\
&\leq  \int_{s- r -\eta}^{s+ r +\eta}d\tau\mathcal{D}(2C_0\Gamma\eta^{j_{\mathrm{max}}+1}+\Gamma^2\eta^{2j_{\mathrm{max}}+2}),\\
&=  \mathcal{O}\left(\eta^{j_{\mathrm{max}}+1} r \right).
\label{eq:SM_bounding_on_general_kernel}
\end{aligned}
\end{equation}
Here in the third inequality, we use Eqs. (\ref{eq:SM_support_of_R}),
(\ref{eq:SM_upper_bound_on_R}), and the assumption that $v_i(t)$ is
finitely supported and bounded. Similarly,
\begin{equation}
\begin{aligned} & \int_{0}^{t}ds\int_{0}^{t}ds'|\tilde{K}_i^{(-,+)}(s,s')-K_i^{(-,+)}(s,s')|\\
&\leq  \int_{0}^{t}ds\int_{0}^{t}ds'\int_{-\infty}^{\infty}d\tau\int_{-\infty}^{\infty}d\tau'|B_i(\tau,\tau')|\left\{ |R_{s}(\tau)v_i^{*}(s'-\tau')|+|R_{s'}^{*}(\tau')v_i(s-\tau)|+|R_{s}(\tau)R_{s'}^{*}(\tau')|\right\},\\
&\leq  \int_{0}^{t}ds'\int_{-\infty}^{\infty}d\tau'|v^{*}_i(s'-\tau')|\int_{-\infty}^{\infty}d\tau|B_i(\tau,\tau')|\int_{0}^{t}ds|R_{s}(\tau)|+\\
 & \int_{0}^{t}ds\int_{-\infty}^{\infty}d\tau|v_i(s-\tau)|\int_{-\infty}^{\infty}d\tau'|B_i(\tau,\tau')|\int_{0}^{t}ds'|R_{s'}^{*}(\tau')|+\\
 & \int_{0}^{t}ds\int_{-\infty}^{\infty}d\tau|R_{s}(\tau)|\int_{-\infty}^{\infty}d\tau'|B_i(\tau,\tau')|\int_{0}^{T}ds'|R_{s'}^{*}(\tau')|,\\
&\leq  2\left( r +\eta\right)\Gamma\eta^{j_{\mathrm{max}}+1}\mathcal{D}\left(2C_0 r \right)t+
 \left( r +\eta\right)^{2}\Gamma^{2}\eta^{2j_{\mathrm{max}}+2}\mathcal{D}t,\\
&=  \mathcal{O}\left(t\eta^{j_{\mathrm{max}}+1}r^2\right).
\end{aligned}\label{eq:SM_bounding_on_general_kernel_integrals}
\end{equation}

Last, we show that $\tilde{K}_{i}^{(-,+)}(s,s')$ has a finite $L^{1}$ norm bound. This follows from 
\begin{equation}
\begin{aligned}
\int_{-\infty}^{\infty}|\tilde{K}_{i}(s,s')|ds' & =\int_{-\infty}^{\infty}ds'|\int_{-\infty}^{\infty}d\tau\int_{-\infty}^{\infty}d\tau'\sum_{j,n,j',n'}C_{i,j}^{n}(s)C_{i,j'}^{n'*}(s')P_{j}^{n}(\tau)P_{j'}^{n'}(\tau')B_{i}(\tau,\tau')|,\\
 & \leq\int_{-\infty}^{\infty}d\tau|v_{i}(s-\tau)+R_{s}(\tau)|\int_{-\infty}^{\infty}d\tau'|B_{i}(\tau,\tau')|\int_{-\infty}^{\infty}ds'|v_{i}^{*}(s'-\tau')+R_{s'}^{*}(\tau')|,\\
 & \leq(2 r +2\eta)(C_{0}+\Gamma\eta^{j_{\mathrm{max}}+1})\int_{-\infty}^{\infty}d\tau|v_{i}(s-\tau)+R_{s}(\tau)|\int_{-\infty}^{\infty}d\tau'|B_{i}(\tau,\tau')|,\\
 & \leq\mathcal{D}(2 r +2\eta)(C_{0}+\Gamma\eta^{j_{\mathrm{max}}+1})\int_{-\infty}^{\infty}d\tau|v_{i}(s-\tau)+R_{s}(\tau)|,\\
 & \leq4\mathcal{D}( r +\eta)^{2}(C_{0}+\Gamma\eta^{j_{\mathrm{max}}+1})^{2},
\end{aligned}
\end{equation}
where in the second and last inequality, we used Eqs. (\ref{eq:SM_upper_bound_on_R}) and (\ref{eq:SM_support_of_R}), and in the third inequality we used Eq. (\ref{eq:SM_Bounding_on_the_two_point_correlator}).
Thus, we have proved the Lemma \ref{lemma:EB_Kernel}.

\subsection{Proof of Lemma \ref{Lemma4} for general initial states}

Since each dynamics of each bosonic mode $b_{i,j}^{n}$ is decoupled,
we can consider the fixed $n,i,j$ in this subsection. Let's first
bound $2k$-point correlators for $b_{i,j}^{n},b_{i,j}^{n\dagger}$
at initial time $t=0$. Namely, we want to bound
\begin{equation}
\langle O_{1}^{\sigma_1}O_{2}^{\sigma_2}\cdots O_{2k}^{\sigma_{2k}}\rangle
\end{equation}
with $O_{i}^{\sigma_i}=b_{i,j}^{n,\sigma_i}(0)$.
Since the orthonormal polynomial satisfies
\begin{equation}
|P_{j}^{n}(\tau)|\leq\frac{g}{\sqrt{\eta}},\label{eq:SM_maximum_bounding_on_orthonormal_poly}
\end{equation}
with $g$ only depends on the maximum degree $j_{\mathrm{max}}$,
we obtain
\begin{equation}
\begin{aligned} & \langle O_{1}^{\sigma_1}O_{2}^{\sigma_2}\cdots O_{2k}^{\sigma_{2k}}\rangle\\
&= \left\langle \prod_{l=1}^{2k}\int_{n\eta}^{(n+1)\eta}d\tau_{l}P_{j}^{n}(\tau_{l})a_{i,\tau_{l}}^{\sigma_l}\right\rangle, \\
&\leq  (2k-1)!!\prod_{l=1}^{k}\int_{n\eta}^{(n+1)\eta}d\tau_{(2l-1)}P_{j}^{n}(\tau_{2l-1})\int_{n\eta}^{(n+1)\eta}d\tau_{2l}P_{j}^{n}(\tau_{2l})D(\tau_{2l-1},\tau_{2l}),\\
&=  (2k-1)!!\left(\int_{n\eta}^{(n+1)\eta}d\tau_{1}P_{j}^{n}(\tau_{1})\int_{n\eta}^{(n+1)\eta}d\tau_{2}P_{j}^{n}(\tau_{2})D(\tau_{1},\tau_{2})\right)^{k},\\
&\leq  (2k)!!g^{2k}\left(\frac{1}{\eta}\int_{n\eta}^{(n+1)\eta}d\tau_{1}\int_{n\eta}^{(n+1)\eta}d\tau_{2}D(\tau_{1},\tau_{2})\right)^{k},\\
&\leq  2^{k}k!g^{2k}\left(\frac{1}{\eta}\int_{n\eta}^{(n+1)\eta}d\tau_{1}\mathcal{D}\right)^{k},\\
&\leq  k!(2g^{2}\mathcal{D})^{k}.
\end{aligned}
\label{eq:SM_k_correlator_bounding_on_initial}
\end{equation}
Here, in the first inequality, we apply Wick's theorem to decompose
the $2k$-point correlator into a sum over all possible pairwise contractions
of 2-point correlators. In the second inequality, we employ Eq. (\ref{eq:SM_maximum_bounding_on_orthonormal_poly})
and in the third inequality we use Eq. (\ref{eq:SM_Bounding_on_the_two_point_correlator}).

Specifically, the above inequality indicates that 
\begin{equation}
F_{i,j}^{n,k}(0)\leq k!(2g^{2}\mathcal{D})^{k}\ \mathrm{for}\ k\geq1,\quad F_{i,j}^{n,0}(\tau)=1,
\end{equation}
which serves as the initial condition for the recursive inequality
Eq. (\ref{eq:SM_recursive_inequality_for_F_nkij}). We remark that the Cauchy-Schwarz inequality we used in Eq. (\ref{eq:SM_CS_inequality_used_in_prove}) also holds for the mixed state $\rho_{SE}(\tau)$, where $\rho_{SE}(\tau)$ is the system-environment density matrix at time $\tau$. Furthermore, the last inequality in Eq. (\ref{eq:SM_CS_inequality_used_in_prove}) can be generalized to the mixed state as 
\[
\begin{aligned} & \mathrm{Tr}\left(\sqrt{\underline{(N_{i,j}^{n})^{k-1}}}J_{i,i+1}^{\dagger}J_{i,i+1}\sqrt{\underline{(N_{i,j}^{n})^{k-1}}}\rho_{SE}(\tau)\right)\\
&=  \mathrm{Tr}\left(J_{i,i+1}^{\dagger}J_{i,i+1}\sqrt{\underline{(N_{i,j}^{n})^{k-1}}}\rho_{SE}(\tau)\sqrt{\underline{(N_{i,j}^{n})^{k-1}}}\right),\\
&\leq  \lVert J_{i,i+1}^{\dagger}J_{i,i+1}\rVert\mathrm{Tr}\left(\sqrt{\underline{(N_{i,j}^{n})^{k-1}}}\rho_{SE}(\tau)\sqrt{\underline{(N_{i,j}^{n})^{k-1}}}\right),\\
&\leq  F_{i,j}^{n,k-1}(\tau).
\end{aligned}
\]
Therefore, Eq. (\ref{eq:SM_recursive_inequality_for_F_nkij}) also holds for the mixed state.
To solve this
recursive inequalities, we also first assume purely forward evolution
(i.e., neglecting the Trotterization procedure at this stage), and
later account for the effect of Trotterization. Setting $k=1$ in
Eq. (\ref{eq:SM_recursive_inequality_for_F_nkij}) leads to

\begin{equation}
\sqrt{F_{i,j}^{n,1}(\tau)}\leq\sqrt{1!(2g^{2}\mathcal{D})^{1}}+C_{0}\sqrt{\eta}(\tau-\tau_{n,l})\ \mathrm{for}\ \tau_{n,l}\leq\tau\leq\tau_{n,u},
\end{equation}
where $\tau_{n,l},\tau_{n,u}$ is defined below Eq. (\ref{eq:SM_F_n1ij_vacuum}).
We next set $k=2$, which yields
\begin{equation}
\begin{aligned}\sqrt{F_{i,j}^{n,2}(\tau)} & \leq\sqrt{2!(2g^{2}\mathcal{D})^{2}}+2C_{0}\sqrt{\eta}\int_{\tau_{n,l}}^{\tau}d\tau'\sqrt{F_{i,j}^{n,1}(\tau')},\\
 & \leq\sqrt{2!}(2g^{2}\mathcal{D})+\sqrt{1!}2C_{0}\sqrt{\eta}(\tau-\tau_{n,l})\sqrt{2g^{2}\mathcal{D}}+(C_{0}\sqrt{\eta})^{2}(\tau-\tau_{n,l})^{2},\\
 & \leq\sqrt{2!}(C_{0}\sqrt{\eta}(\tau-\tau_{n,l})+\sqrt{2g^{2}\mathcal{D}})^{2},
\end{aligned}
\end{equation}
for $\tau_{n,l}\leq\tau\leq\tau_{n,u}$. Inspired by the above formula,
we assume that 
\begin{equation}
\sqrt{F_{i,j}^{n,k}(\tau)}\leq\sqrt{k!}(C_{0}\sqrt{\eta}(\tau-\tau_{n,l})+\sqrt{2g^{2}\mathcal{D}})^{k}\ \mathrm{for}\ \tau_{n,l}\leq\tau\leq\tau_{n,u},\label{eq:SM_general_initial_bound_on_Fijnk}
\end{equation}
which will be proved by induction.
Indeed, substituting Eq. (\ref{eq:SM_general_initial_bound_on_Fijnk}) into Eq. (\ref{eq:SM_recursive_inequality_for_F_nkij}) gives an upper
bound on $\sqrt{F_{i,j}^{n,k+1}(\tau)}$ as 
\begin{equation}
\begin{aligned}\sqrt{F_{i,j}^{n,k+1}(\tau)} & \leq\sqrt{(k+1)!}(2g^{2}\mathcal{D})^{\frac{k+1}{2}}+(k+1)C_{0}\sqrt{\eta}\int_{\tau_{n,l}}^{\tau}\sqrt{F_{i,j}^{n,k}(\tau')}d\tau',\\
 & \leq\sqrt{(k+1)!}(2g^{2}\mathcal{D})^{\frac{k+1}{2}}+\sqrt{k!}\int_{\tau_{n,l}}^{\tau}(k+1)C_{0}\sqrt{\eta}d\tau'(C_{0}\sqrt{\eta}(\tau'-\tau_{n,l})+\sqrt{2g^{2}\mathcal{D}})^{k},\\
 & \leq\sqrt{(k+1)!}(2g^{2}\mathcal{D})^{\frac{k+1}{2}}+\sqrt{k!}\left((C_{0}\sqrt{\eta}(\tau-\tau_{n,l})+\sqrt{2g^{2}\mathcal{D}})^{k+1}-(2g^{2}\mathcal{D})^{\frac{k+1}{2}}\right),\\
 & \leq\sqrt{(k+1)!}(2g^{2}\mathcal{D})^{\frac{k+1}{2}}+\sqrt{(k+1)!}\left((C_{0}\sqrt{\eta}(\tau-\tau_{n,l})+\sqrt{2g^{2}\mathcal{D}})^{k+1}-(2g^{2}\mathcal{D})^{\frac{k+1}{2}}\right)\\
 & =\sqrt{(k+1)!}(C_{0}\sqrt{\eta}(\tau-\tau_{n,l})+\sqrt{2g^{2}\mathcal{D}})^{k+1}
\end{aligned}
\end{equation}
for $\tau_{n,l}\leq\tau\leq\tau_{n,u}$. According to the Induction
principle, Eq. (\ref{eq:SM_general_initial_bound_on_Fijnk}) holds
for all $k$. Alternatively, we can express this upper bound as 
\begin{equation}
F_{i,j}^{n,k}(\tau)\leq k!(C_{0}\sqrt{\eta}(\tau_{n,u}-\tau_{n,l})+\sqrt{2g^{2}\mathcal{D}})^{2k}\ \mathrm{for}\ \forall\tau.
\end{equation}

In the above expression, $(\tau_{n,u}-\tau_{n,l})$ is the total interaction
time associated with the driving field $C_{i,j}^{n}(\tau)$. Including
the Trotterization procedure, which involves both forward- and backward-evolution,
is equivalent to extending the interaction time up to a factor $s_{P}$.
Therefore, the final upper bound for $F_{i,j}^{n,k}(\tau)$ after
the Trotterization is given by
\begin{equation}
\begin{aligned}F_{i,j}^{n,k}(\tau) & \leq k!(C_{0}s_{P}\sqrt{\eta}(\tau_{n,u}-\tau_{n,l})+\sqrt{2g^{2}\mathcal{D}})^{2k}\ \mathrm{for}\ \forall\tau,\\
 & \leq k!\left(C_{0}s_{P}\sqrt{\eta}(2 r +\eta)+\sqrt{2g^{2}\mathcal{D}}\right)^{2k}\ \mathrm{for}\ \forall\tau.
\end{aligned}
\end{equation}
For convenience, we introduce the constant 
\begin{equation}
D_{\mathrm{trun}}=(C_{0}s_{P}\sqrt{\eta}(2 r +\eta)+\sqrt{2g^{2}\mathcal{D}})^{2}.
\end{equation}
We can therefore derive that for $\forall\tau$,
\begin{equation}
\begin{aligned} & \mathrm{Tr}\left(e^{\theta N_{i,j}^{n}}\rho_{SE}(\tau)\right)\\
&=  \sum_{k=0}^{\infty}(e^{\theta}-1)^{k}\frac{F_{i,j}^{n,k}(\tau)}{k!},\\
&\leq  \sum_{k=0}^{\infty}\left((e^{\theta}-1)D_{\mathrm{trum}}\right)^{k}.
\end{aligned}
\end{equation}
We can choose $e^{\theta_{\mathrm{trun}}}-1=1/(2D_{\mathrm{trun}})$
or $\theta_{\mathrm{trun}}=\ln(1+1/(2D_{\mathrm{trun}}))$ such that
\begin{equation}
\mathrm{Tr}\left(e^{\theta_{\mathrm{trun}}N_{i,j}^{n}}\rho_{SE}(\tau)\right)\leq2.
\end{equation}
Similar as Lemma \ref{Lemma:bounding_from_expectation_value}, we can obtain
the probability bound on high occupation state as 
\begin{equation}
e^{d\theta_\mathrm{trun}}\mathrm{Tr}\left((I-P_{i,j}^{n}(d))\rho_{SE}(\tau)\right)\rVert\leq\mathrm{Tr}\left(e^{\theta_\mathrm{trun} N_{i,j}^n}\rho_{SE}(\tau)\right)\leq 2,
\end{equation}
such that
\begin{equation}
\mathrm{Tr((I-P^n_{i,j}(d)\rho_{SE}(\tau))}\leq2e^{-d\theta_\mathrm{trun}},\label{eq:SM_bounding_on_high_occupation_number_for_mixed_state}
\end{equation}
with $\theta_{\mathrm{trun}}$ a constant. By 
using $\sqrt{\eta} r \leq\mathcal{O}(1)$, we can upper bound $D_\mathrm{trun}$ by $(C_0s_P+\sqrt{2g^2\mathcal{D}})^2$. As a result, $\theta_\mathrm{trun}$ is lower bounded by
\[
\theta_\mathrm{trun}\geq\ln\left(1+\frac{1}{2(C_0s_P+\sqrt{2g^2\mathcal{D}})^2}\right).
\]
Therefore, for the general
initial state we also obtain an exponential decay occupation probability.
\subsection{Proof of Lemma \ref{Lemma:EB_quditization_total} for general initial states}
Here we briefly mention how to prove the Lemma \ref{Lemma:EB_quditization_total} for general initial states. Most of the proof strategies directly parallel those in Subsec. \ref{subsec:Proof-of-Lemma5}. In this subsection, we just outline the difference. 

By denoting $\rho_{SE}(\tau)=\tilde{U}_{SE}(\tau,0)\rho_{SE}(0)\tilde{U}_{SE}(0,\tau)$,  Eq. (\ref{eq:SM_bounding_on_high_occupation_number_for_mixed_state}) leads to 
\[
\mathrm{Tr}\left(Q_d\rho_{SE}(\tau)\right)=\mathrm{Tr}\left(Q_d\rho_{SE}(\tau)Q_d\right)\leq \mathcal{O}\left(\frac{Nt}{\eta}re^{-d\theta_\mathrm{trun}}\right),
\]
therefore, by Holder inequality, we can upper bound $\lVert Q_d\rho_{SE}(\tau)\rVert_\mathrm{tr}$ as
\[
\begin{aligned}\lVert \rho_{SE}(\tau)Q_{d}\rVert_{\mathrm{tr}}=\lVert Q_{d}\rho_{SE}(\tau)\rVert_{\mathrm{tr}} & =\lVert(Q_{d}\rho_{SE}^{\frac{1}{2}}(\tau))\rho_{SE}^{\frac{1}{2}}(\tau)\rVert_{\mathrm{tr}},\\
 & \leq\lVert Q_{d}\rho_{SE}^{\frac{1}{2}}(\tau)\rVert_{2}\lVert\rho_{SE}^{\frac{1}{2}}(\tau)\rVert_{2},\\
 & \leq\sqrt{\mathrm{Tr}Q_{d}\rho_{SE}(\tau)Q_{d}},\\
 & \leq\mathcal{O}\left(\sqrt{\frac{Nt}{\eta}r}e^{-\frac{d\theta_{\mathrm{trun}}}{2}}\right),
\end{aligned}
\]
where in the first inequality, we used the Holder inequality and $\lVert \cdot \rVert_2$ is the $2$-norm (or Hilbert-Schmidt norm), and in the second inequality we used the fact that $\lVert \rho_{SE}^{1/2}(\tau)\rVert_2=1$. 

Now, for the mixed  state, we have the inequality
\begin{equation}
\begin{aligned} & \left\lVert\tilde{U}_{SE}(t,0)\rho_{SE}(0)\tilde{U}_{SE}^{\dagger}(t,0)-\hat{U}_{SE}(t,0)\rho_{SE}(0)\hat{U}_{SE}^{\dagger}(t,0)\right\rVert_{\mathrm{tr}},\\
 & \leq\lVert P_{d}\rho_{SE}(t)Q_{d}\rVert_{\mathrm{tr}}+\lVert Q_{d}\rho_{SE}(t)Q_{d}\rVert_{\mathrm{tr}}+\lVert Q_{d}\rho_{SE}(t)P_{d}\rVert_{\mathrm{tr}}+\\
 & \int_{0}^{t}\left\lVert (P_{d}\tilde{H}_{SE}(\tau)Q_{d})(Q_{d}\rho_{SE}(\tau)P_{d})\right\rVert _{\mathrm{tr}}d\tau+\int_{0}^{t}\left\lVert (P_{d}\rho_{SE}(\tau)Q_{d})(Q_{d}\tilde{H}_{SE}(\tau)P_{d})\right\rVert _{\mathrm{tr}}d\tau,\\
 & \leq\mathcal{O}\left(\frac{(Ntr)^{\frac{3}{2}}}{\eta}\sqrt{d}e^{-\frac{d\theta_{\mathrm{trun}}}{2}}\right),
\end{aligned}
\end{equation}
where in the last inequality, we used the fact that 
\[\lVert P_d\rho_{SE}(\tau)Q_d\rVert_\mathrm{tr}\leq \lVert P_d\rVert\lVert\rho_{SE}(\tau)Q_d\rVert_\mathrm{tr}\leq\lVert\rho_{SE}(\tau)Q_d\rVert_\mathrm{tr},\] and
\[
\left\lVert (P_{d}\rho_{SE}(\tau)Q_{d})(Q_{d}\tilde{H}_{SE}(\tau)P_{d})\right\rVert _{\mathrm{tr}}\leq \left\lVert P_{d}\rho_{SE}(\tau)Q_{d}\right\rVert_\mathrm{tr}\left\lVert Q_{d}\tilde{H}_{SE}(\tau)P_{d}\right\rVert.
\]
Therefore, we can derive a similar exponential decay error bound with respect to $d$ as in Lemma \ref{Lemma:EB_quditization_total}.
\section{Error in simulating local observables\label{sec:noise_robustness}}
In this section, we examine the error in simulating local observables
which resides on the region $X$ whose diameter is $\mathcal{O}(1)$. Denoting the simulating channel
by $\mathcal{E}$, we will show that if we only require the local
error 
\begin{equation}
\begin{aligned}|O_{\mathrm{tar}}-O_{\mathrm{sim}}| & :=\max_{\lVert\rho_S(0)\rVert_{\mathrm{tr}}=1}|\mathrm{Tr}\left(O_{X}U_{SE}(t,0)\rho_S(0)\otimes \rho_E(0)U_{SE}^{\dagger}(t,0)\right)-\mathrm{Tr}\left(O_{X}\mathcal{E}(\rho_S(0))\right)|\end{aligned}
\label{eq:definition_local_error}
\end{equation}
is below $\delta_\mathrm{loc}$ for any local operator $\lVert O_{X}\rVert\leq1$,
the circuit depth and the number of ancillary qubits per site can
be chosen to be independent of the total system size $N$. As a consequence,
according to Ref. \cite{RahulUnpunlishedNoiseRobust}, our simulation
procedure is noise robustness.

Our proof strategy mainly follows \cite{RahulUnpunlishedNoiseRobust,trivedi2024liebrobinsonboundopenquantum}.
We first combine the Lieb-Robinson bound proved in Ref. \cite{trivedi2024liebrobinsonboundopenquantum}
and the technics in \cite{RahulUnpunlishedNoiseRobust}
to examine the local error of the Trotterization formula in Lemma \ref{lemma1}. This directly gives us the expression of local error and noise
robustness in simulating non-dissipative non-Markovian dynamics. We
then perform the local error analysis for the discretization and quditization
procedure in simulating dissipative non-Markovian dynamics.

Throughout this section, we consider a local non-Markovian dynamics
in $D$-dimensional lattice. Following the same notations in Sec.
\ref{sec:Detailed-Proof-ofLemma1}, we express the time dependent
Hamiltonian in doubled Hilbert space as 
\begin{equation}
\mathbb{H}_{SE}(t)=\sum_{\nu\subseteq\mathbb{R}^{D}}\mathbb{H}_{\nu}+\sum_{\nu\subseteq\mathbb{R}^{D}}\sum_{\alpha=1}^{4}\mathbb{J}_{\nu}^{\alpha}\mathbb{A}_{\nu}^{\alpha}(t),\label{eq:Hamiltonian_in_higher_dimension_interaction}
\end{equation}
where $\mathbb{H}_{\nu},\mathbb{J}_{\nu}^{\alpha}$ are system superoperators
and $\nu$ denotes the support of the superoperator. Without loss
of generality, we assume that each $\nu$ only covers two nearest
neighbor sites. $\mathbb{A}_{\nu}^{\alpha}(t)$ follows the same memory
kernel function as in Eq. (\ref{eq:doubled_space_Kernel_function}).
We will use the subscript to denote the support of the operator in
this section. For example, if we choose $\Lambda\subseteq\mathbb{R}^{D}$,
$\mathbb{H}_{SE,\Lambda}(t)=\sum_{\nu\subseteq\Lambda}\mathbb{H}_{\nu}+\sum_{\nu\subseteq\Lambda}\sum_{\alpha=1}^{4}\mathbb{J}_{\nu}^{\alpha}\mathbb{A}_{\nu}^{\alpha}(t)$
and 
\begin{equation}
\mathbb{U}_{\Lambda}(t_{2},t_{1})=\mathcal{T}\exp\left(-i\int_{t_{1}}^{t_{2}}\mathbb{H}_{SE, \Lambda}(t)dt\right).\label{eq:SM_definition_restricted_evolution_operator}
\end{equation}
With the notations of vectorization in the doubled Hilbert space, Eq. (\ref{eq:definition_local_error}) can be
rewritten as 
\begin{equation}
|O_{\mathrm{tar}}-O_{\mathrm{sim}}|=\max_{\lVert\vecket{\rho_S(0)}\rVert=1}|\vecbra{{O_{X},I_{E}}|\mathbb{U}_{SE}(t,0)\vecket{\rho_S(0),\rho_E(0)}}-\vecbraket{O_{X}|\mathcal{E}(\rho_S(0))}|,
\end{equation}
with the abbreviation $\vecbra{O_{X},I_{E}}=\vecbra{O_{X}}_{S}\otimes\vecbra{I_{E}}_{E}$
and $\vecket{\rho_S(0),\rho_E(0)}=\vecket{\rho_S(0)}_{S}\otimes\vecket{\rho_{E}(0)}_E$.

\subsection{Proof of Lemma \ref{Lemma:local_trotter}}

In this subsection, we analyze the local error in Trotterization formula
defined as 
\begin{equation}
\begin{aligned}\delta_{\mathrm{loc,tro}} & :=\max_{\lVert O_{X}\rVert\leq1,\lVert \rho_S \rVert_\mathrm{tr}=1}|O_{\mathrm{tar}}-O_{\mathrm{tro}}|\\
 & =\max_{\Vert O_{X}\rVert\leq1,\Vert\vecket{\rho_S}\rVert=1}|\vecbraket{O_{X},I_{E}|\mathbb{U}_{SE}(t,0)-\prod_{i=1}^{T}\mathbb{V}\left(i\Delta t,(i-1)\Delta t\right)|\rho_S(0),\rho_E(0)}|,
\end{aligned}
\label{eq:definition_local_error_trotter}
\end{equation}
with $\mathbb{V}\left(i\Delta t,(i-1)\Delta t\right)$ the Trotterization
formula similar as Eq. (\ref{eq:SM_definition_1D_trotter_stage}) but now for a $D$-dimensional lattice.
We also assume each $\mathbb{V}\left(i\Delta t,(i-1)\Delta t\right)$
has $s_P$ stages as 
\[
\mathbb{V}\left(i\Delta t,(i-1)\Delta t\right)=\prod_{j=1}^{s_P}\mathbb{V}_{j}(i\Delta t,(i-1)\Delta t).
\]
For each stage, $\mathbb{V}_{j}$ consists of $l$ unitaries 
\[
\mathbb{V}_{j}(i\Delta t,(i-1)\Delta t)=\prod_{n=1}^{l}\mathbb{V}_{B_{n}}((i-1+f_{j}^{n})\Delta t,(i-1+f_{j-1}^{n})\Delta t)
\]
with $B_{n}\subseteq\mathbb{R}^{D}$ denoting the support of the evolution operator as in Eq. (\ref{eq:SM_definition_restricted_evolution_operator}), and $\{f_{j}^{n}\}$ the intermediate
time points, $0\leq f_{j}^{n}\leq 1$. The $l$-partition
of the lattice $\{B_{n}\}$ is chosen such that $\nu\cap\nu'=\emptyset$
for $\nu,\nu'\subseteq B_{n},n\in[1:l]$. The main technics
used here is the error remainder derived in Sec. \ref{sec:Detailed-Proof-ofLemma1}
(Eq. (\ref{eq:Chunk_error_p_order}), Eqs. (\ref{eq:rewritten_each_chunk_term})-(\ref{eq:function_to_estimate}))
and the Lieb-Robinson bound for non-Markovian systems which is well
established in Ref. \cite{trivedi2024liebrobinsonboundopenquantum}.
For convenience, we restate the latter here with our notations.

\begin{Lemma}
[Lieb-Robinson bound for non-Markovian dynamics (from Lemma 16 in Ref. (\cite{trivedi2024liebrobinsonboundopenquantum}))].

Define
\begin{align}
\gamma_{\zeta}^{X,Y}(t,t') & =\sup_{\substack{\phi\in S_{\zeta}(\rho_{E})\\
\lVert O_{X}\rVert,\lVert\mathbb{L}_{Y}\rVert_{\diamond}\leq1
}
}|\vecbraket{O_{X},I_{E}|\mathbb{U}_{SE}(t,t')\mathbb{L}_{Y}|\phi}|,\label{eq:SM_LR_definition_gamma}
\end{align}
where $O_{X}$, $\mathbb{L}_{Y}$ are the system operator and superoperator,
which support at $X,Y\subseteq\mathbb{R}^{D}$, respectively. $\mathbb{L}_{Y}$
further satisfies that $\vecbra{I_{S}}\mathbb{L}_{Y}=0$. The operator
space $S_{\zeta}(\rho_{E})$ with $\zeta:\mathbb{R}\to[0,\infty)$
a scalar function is defined in the following way. An operator $\phi\in S_{\zeta}(\rho_{E})$
if there exists an integer $n$, a set of system superoperators $\{\Omega_{i}:\lVert\Omega_{i}\rVert_{\diamond}\leq1\}_{i\in[1:n]}$
and the sets $\{A_{i}\subseteq\mathbb{R}^{D}\}_{i\in[1:n]},$$\{s_{i},t_{i}\in\mathbb{R}\}_{i\in[1:n]}$
such that
\begin{equation}
\vecket{\phi}=\prod_{i=1}^{n}\Omega_{i}\mathbb{U}_{A_{i}}(t_{i},s_{i})\vecket{\rho_S(0),\rho_E(0)},
\end{equation}
and
\begin{equation}
\max_{\nu}\sum_{i=1}^{n}\Theta_{A_{i}}(\nu)|\int_{s_{i}}^{t_{i}}K(\tau,\tau')d\tau'|\leq\zeta(\tau)\ \mathrm{for}\ \forall\tau,\label{eq:SM_definition_operator_space}
\end{equation}
where $K(\tau,\tau')=\max_{\nu,\sigma,\sigma'}|K_\nu^{(\sigma,\sigma')}(\tau,\tau')|$ and $\Theta_{A_{i}}(\nu)=1$ if $\nu\subseteq A_{i}$ and $0$ otherwise,
$\nu$ is defined by the form of the Hamiltonian Eq. (\ref{eq:Hamiltonian_in_higher_dimension_interaction}).
It can be proved that 
\begin{equation}
\gamma_{\zeta}^{X,Y}(t,t')\leq|Y|\bigg(\exp\left(v(\zeta)|t-t'|\right)-1\bigg)\exp\left(-\frac{\mathrm{dist}(X,Y)}{a_{0}}\right),
\end{equation}
where $v(\zeta)=a_{1}(1+16\lVert\zeta\rVert_{\infty}+40M)$, $\mathrm{dist}(X,Y)$ is the distance between the two regions $X$ and $Y$ defined via $\mathrm{dist}(X,Y)=\min_{\bm{v}_i\in X, \bm{v}_j\in Y}|\bm{v}_i-\bm{v}_j|$, $|Y|$ is the number of sites in the region $Y$, and $a_{0},a_{1}$
are two constants which only depend on the dimension $D$.
\label{lamma:LR_bound_non_mark1}
\end{Lemma}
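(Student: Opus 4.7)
The plan is to establish the bound through a Duhamel-type iteration that peels off boundary interactions one at a time, combined with Wick contractions that convert the unbounded bosonic operators into finite memory-kernel factors, yielding a Grönwall-type recursion whose solution is the claimed exponential light cone. The proof will closely follow the structure of the closed-system Lieb–Robinson argument but must be augmented to accommodate the system–environment interactions through the class $S_\zeta(\rho_E)$.

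First I would write, for any enlargement $\Lambda\supseteq Y$ with $\mathrm{dist}(X,\Lambda)>0$, the Duhamel identity
\begin{equation*}
\mathbb{U}_{SE}(t,t')=\mathbb{U}_{SE,\Lambda}(t,t')-i\int_{t'}^{t}\mathbb{U}_{SE}(t,s)\,\delta\mathbb{H}_\Lambda(s)\,\mathbb{U}_{SE,\Lambda}(s,t')\,ds,
\end{equation*}
where $\delta\mathbb{H}_\Lambda(s)=\sum_{\nu\cap\Lambda\neq\emptyset,\,\nu\not\subseteq\Lambda}(\mathbb{H}_\nu+\sum_\alpha \mathbb{J}_\nu^\alpha\mathbb{A}_\nu^\alpha(s))$ collects the boundary bonds. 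Since $\mathbb{L}_Y$ and $O_X$ live on disjoint system regions, $\vecbra{I_S}\mathbb{L}_Y=0$, and $\mathbb{U}_{SE,\Lambda}$ factorizes across $\Lambda^c$, the leading $\mathbb{U}_{SE,\Lambda}$ term contributes nothing to $\vecbraket{O_X,I_E|\mathbb{U}_{SE}(t,t')\mathbb{L}_Y|\phi}$, so only the boundary correction survives.

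Second, the residual expression still contains unbounded operators $\mathbb{A}_\nu^\alpha(s)$, so direct operator-norm bounds fail. Exploiting the Gaussianity of $\rho_E(0)$ together with the trace $\vecbra{I_E}$, I would invoke Wick's theorem to pair every $\mathbb{A}_\nu^\alpha(s)$ with another bosonic operator appearing in (i) $\mathbb{U}_{SE}(t,s)$ to its left, giving a contribution $\sum_{\alpha,\beta}\int_{s}^{t}|\mathbb{K}_{\alpha\beta}(s,\tau)|d\tau\leq 4M$ via identities analogous to Eqs.~(\ref{eq:contraction_with_A})–(\ref{eq:contraction_with_Ad}); (ii) $\mathbb{U}_{SE,\Lambda}(s,t')$ to its right, contributing the same bound; or (iii) the state $\vecket{\phi}$, whose membership in $S_\zeta(\rho_E)$ by Eq.~(\ref{eq:SM_definition_operator_space}) ensures a contribution of at most $\zeta(s)$ per bond. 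Accumulating the $\alpha\in\{1,2,3,4\}$ components and the symmetry factors yields the coefficients $40M$ and $16\|\zeta\|_\infty$ in $v(\zeta)$, while the purely Hermitian piece $\mathbb{H}_\nu$ contributes the residual $+1$.

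Third, the iteration must be closed. Each peeled boundary bond expands the support by one unit and may inject new Wick contractions into the state; the key structural fact is that $S_\zeta(\rho_E)$ is preserved under this peeling as long as $\zeta$ is an a priori upper bound on all memory-kernel integrals that the iteration can generate, which is precisely what Eq.~(\ref{eq:SM_definition_operator_space}) encodes. Collecting the estimates produces the Grönwall inequality
\begin{equation*}
\gamma_\zeta^{X,Y}(t,t')\leq v(\zeta)\int_{t'}^{t}\sum_{\nu\in\partial Y}\gamma_\zeta^{X,Y\cup\nu}(t,s)\,ds,
\end{equation*}
whose iterated solution, together with the fact that at least $k\geq \mathrm{dist}(X,Y)/a_0$ layers must be peeled before any contribution reaches $X$, gives $\gamma_\zeta^{X,Y}(t,t')\leq|Y|\sum_{k\geq\mathrm{dist}(X,Y)/a_0}[v(\zeta)(t-t')]^k/k!\leq|Y|(e^{v(\zeta)|t-t'|}-1)e^{-\mathrm{dist}(X,Y)/a_0}$, after bounding the tail geometrically. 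The constant $a_0$ records the maximum bond radius of the lattice.

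The main obstacle is the combinatorial bookkeeping of the Wick contractions: bosonic operators are unbounded and must always be paired before any norm estimate, and each pairing either costs a kernel factor $M$ or charges against the allowance $\zeta$. Designing $\zeta$ so that it simultaneously bounds the initial state's contractions and absorbs all contractions generated along the iteration—without growing $\|\zeta\|_\infty$—is the delicate step, and is what the definition of $S_\zeta(\rho_E)$ is tailored to accomplish. Once this closure is in place, the Grönwall step is routine.
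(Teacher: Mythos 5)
You should first note that the paper does not prove this lemma at all: it is imported verbatim as Lemma~16 of Ref.~\citep{trivedi2024liebrobinsonboundopenquantum}, and the supplemental material only restates it for later use (it is applied, with a concrete choice $\zeta(\tau)=3s_P M$, in the proof of Lemma~\ref{Lemma:local_trotter}). So there is no internal proof to compare against, and your attempt has to be judged on its own merits. Your overall strategy --- Duhamel/boundary-peeling on the doubled space, killing the leading restricted-evolution term via $\vecbra{I_S}\mathbb{L}_Y=0$, Wick-contracting every $\mathbb{A}^\alpha_\nu$ before taking norms, charging contractions either to the memory kernel ($M$) or to the budget $\zeta$, and closing a Gr\"onwall-type iteration --- is indeed the expected route for such a bound and is consistent with how the cited work and the paper's own local-error analysis operate.

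However, as a proof the proposal has concrete gaps. First, the closure of the recursion is asserted, not established: when an $\mathbb{A}^\alpha_\nu(s)$ from the peeled boundary bond contracts into the \emph{left} evolution $\mathbb{U}_{SE}(t,s)$, the resulting leftmost insertion is of the form $\mathbb{J}^{\beta}_{\nu'}$ weighted by the kernel, and an individual $\mathbb{J}^{\alpha}_{\nu}$ does \emph{not} satisfy $\vecbra{I_S}\mathbb{J}^{\alpha}_{\nu}=0$; only the kernel-contracted combination $\mathbb{J}^{\alpha}_{\nu}\mathbb{K}_{\alpha\beta}$ (or $\mathbb{H}_\nu$) annihilates $\vecbra{I_S}$, as the paper itself points out in Sec.~\ref{sec:noise_robustness}. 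Without verifying this, the re-identification of the peeled expression as another instance of $\gamma_\zeta$ fails. Second, your Gr\"onwall inequality is written in terms of $\gamma_\zeta^{X,Y\cup\nu}$ with the \emph{same} budget $\zeta$; but after one peeling step the old $\mathbb{L}_Y$ and the restricted evolution $\mathbb{U}_{SE,\Lambda}(s,t')$ must be absorbed into a new state, which adds kernel-integral budget of the form $\int_{s}^{t'}K(\tau,\tau')d\tau'$ per absorbed evolution --- precisely the quantity Eq.~(\ref{eq:SM_definition_operator_space}) rations. Showing that $S_\zeta(\rho_E)$ is closed under the iteration with a \emph{fixed} $\zeta$ (or tracking how $\zeta$ must grow and why it stays bounded) is the heart of the argument, and you flag it as ``delicate'' without resolving it. Moreover, the recursion should reduce to insertions supported on single boundary bonds after the first step --- otherwise the prefactor would not be $|Y|$ but would grow with the enlarging region. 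Third, the quantitative content --- the coefficients $16\lVert\zeta\rVert_\infty+40M$ in $v(\zeta)$, the $|Y|$ prefactor appearing exactly once, and the constants $a_0,a_1$ --- is not derived: in $D>1$ the number of length-$k$ sequences of overlapping bonds grows exponentially in $k$ with a dimension-dependent rate, and $a_0,a_1$ come from balancing that path-counting against the $1/k!$ from the time-ordered integrals, not merely from ``the maximum bond radius.'' As it stands, the proposal is a credible roadmap to the cited result, but the steps that make the theorem true (annihilation by $\vecbra{I_S}$ after contraction, closure of $S_\zeta$, and the lattice combinatorics fixing the constants) are exactly the ones left unproven.
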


Now, we start from one term in the error remainders in Eqs. (\ref{eq:rewritten_each_chunk_term})-(\ref{eq:function_to_estimate}),
which can be expressed as 
\begin{equation}
|P\int_{0}^{\Delta t}d\tau\int_{0}^{1}dx(1-x)^{P}\frac{\tau^{P}}{P!}\vecbraket{O_{X},I_{E}|\mathbb{U}_{SE}(t,t_{1}+\tau)\mathbb{V}(t_{1}+\tau,t_{1})\mathcal{J}_{P}^{1}(x\tau)\tilde{\mathbb{V}}(t_{1},0)|\rho_S(0),\rho_{E}(0)}|.\label{eq:SM_chunk_error_local_ob}
\end{equation}
Recall that $t_1=(i-1)\Delta t$. Here in higher dimensions, $\mathcal{J}_{P}^{1}(\tau)$ can be written
as
\begin{equation}
\mathcal{J}_{P}^{1}(\tau)=\sum_{k=1}^{s_Pl}\sum_{w_{1}+\cdots+w_{k}=P}\left(\prod_{\sigma=k-1}^{1}\mathrm{Ad}_{\mathbb{V}_{B_{n_{\sigma}}}(\alpha_{\sigma},\beta_{\sigma})}\mathrm{ad}_{\mathbb{H}_{B_{n_{\sigma}}}^{SE}(\beta_{\sigma})}^{w_{\sigma}}\right)\left(\sum_{\nu\subseteq\mathbb{R}^{D}}C_{\nu}^{k}(\mathbb{H}_{\nu}+\sum_{\alpha=1}^{4}\mathbb{J}_{\nu}^{\alpha}\mathbb{A}_{\nu}^{\alpha}(t_{k,\nu}))\right).\label{eq:SM_higher_dimension_J1}
\end{equation}
Here $C_{\nu}^{k}\leq s_P$, $t_1\leq\alpha_{\sigma},\beta_{\sigma},t_{k,\nu}\leq t_1+\tau$
are intermediate time points, $n_{\sigma}\in[1:n]$. Specifically,
$n_{\sigma}=(\sigma\ \mathrm{mod}\ l)+1$, $\alpha_{\sigma}=t_1+\tau f_{\lceil\frac{\sigma}{l}\rceil-1}^{n_{\sigma}}$,
$\beta_{\sigma}=t_1+\tau f_{\lceil\frac{\sigma}{l}\rceil}^{n_{\sigma}}$, and
$t_{k,\nu}=t_1+\tau f_{\lceil\frac{k-n}{l}\rceil}^{n}\ \mathrm{for}\ \nu\in B_{n}$,
though their exact form will not be used in the analysis. For fixed
$k,\nu$ and $w_{1},\cdots,w_{k}$, following the similar analysis
in Sec. \ref{sec:Detailed-Proof-ofLemma1}, the support of each term
in Eq. (\ref{eq:SM_higher_dimension_J1}) can be restricted to a region
whose diameter is at most $2k-1$. Consequently, the number of monomials
as well as the diamond norm of each monomial after contracting all
the bosonic operators $\mathbb{A}_{\nu'}^{\alpha}$ can be upper bounded
by $2(2^{2P+1}(2k)^{(2P+1)D})$, $2^{P+1}s_P[16Pm+16M(1+2s_P+k+P)]^{P+1}$,
respectively.

Now let us analyze Eq. (\ref{eq:SM_chunk_error_local_ob}) after fixing
$k,\nu,w_{1},\cdots,w_{k}$, which reads as 

\begin{equation}
\begin{aligned} & |P\int_{0}^{\Delta t}d\tau\int_{0}^{1}dx(1-x)^{P}\frac{\tau^{P}}{P!}\times\\&\llangle O_{X},I_{E}|\mathbb{U}_{SE}(t,t_{1}+\tau)\mathbb{V}(t_{1}+\tau,t_{1})\\
 & \left(\prod_{\sigma=k-1}^{1}\mathrm{Ad}_{\mathbb{V}_{B_{n_{\sigma}}}(\alpha_{\sigma},\beta_{\sigma})}\mathrm{ad}_{\mathbb{H}_{B_{n_{\sigma}}}^{SE}(\beta_{\sigma})}^{w_{\sigma}}\right)\left(C_{\nu}^{k}(\mathbb{H}_{\nu}+\sum_{\alpha=1}^{4}\mathbb{J}_{\nu}^{\alpha}\mathbb{A}_{\nu}^{\alpha}(t_{k,\nu}))\right)\tilde{\mathbb{V}}(t_{1},0)|\rho_S(0),\rho_{E}(0)\rrangle|.
\end{aligned}
\label{eq:SM_analyze_one_term_in_error}
\end{equation}
For each monimial in Eq. (\ref{eq:SM_analyze_one_term_in_error}),
after we contract all the bosonic operators we find that it has the
same form as the right hand side of Eq. (\ref{eq:SM_LR_definition_gamma}).
Actually, we can track the most left superoperator in Eq. (\ref{eq:SM_analyze_one_term_in_error})
which is not an evolution operator, and we find that it can only be
either $\mathbb{H}_{\nu'}$ or $\mathbb{J}_{\nu'}^{\alpha}\mathbb{K}_{\alpha\beta}$
for some $\nu',\beta$. By definition, both of these superoperator
annihilates the state $\vecbra{I_{S}}$ as $\vecbra{I_{S}}\mathbb{H}_{\nu'}=\vecbra{I_{S}}\mathbb{J}_{\nu'}^{\alpha}\mathbb{K}_{\alpha\beta}=0$.
Thus, we can identify this superoperator as the $\mathbb{L}_{Y}$
in Eq. (\ref{eq:SM_LR_definition_gamma}) with $|Y|=|\nu'|=2$. The evolution superoperator
on the left side of this one is identified as $\mathbb{U}_{SE}(t,t')$
in Eq. (\ref{eq:SM_LR_definition_gamma}) and all the superoperators
on the right side can be absorbed into $\vecket{\phi}$ in Eq. (\ref{eq:SM_LR_definition_gamma})
with a proper normalization factor $\mathcal{N}$. Consequently, we
can upper bound each monomial in the integrand of Eq. (\ref{eq:SM_analyze_one_term_in_error}) by
\begin{equation}
\mathcal{N}\gamma_{\zeta}^{X,Y(\nu)}(t,t'),
\end{equation}
with $\zeta,t'$ being analyzed more carefully in the following. The
normalization factor $\mathcal{N}$ can be upper bounded by the diamond
norm of each monomial discussed in the previous paragraph, i.e., $2^{P+1}s_P[16Pm+16M(1+2s_P+k+P)]^{P+1}$.
Here the argument $\nu$ of $Y$ indicates that the support of $\mathbb{L}_{Y}$
may depend on the $\nu$ chosen in Eq. (\ref{eq:SM_higher_dimension_J1}).

The value of $\zeta,t'$ can be determined by the position of $\mathbb{L}_{Y}$
in the expression Eq. (\ref{eq:SM_analyze_one_term_in_error}). Since
$\mathbb{L}_{Y}$ can not reside on the right hand of the first ad
in Eq. (\ref{eq:SM_analyze_one_term_in_error}) counted from right
to left, i.e., the ad associated with the largest $\sigma$ such that
$w_{\sigma}\neq0$, $t'\geq t_{1}-k\Delta t\geq t_{1}-s_Pl\Delta t$.
Here, $k$ comes from the maximum number of Ad before the first ad in Eq.
(\ref{eq:SM_analyze_one_term_in_error}). Since $t_{1}\geq0$, we
obtain $t'\geq-s_Pl\Delta t$. Similarly, as we absorb all the superoperator
on the right side of $\mathbb{L}_{Y}$ to $\vecket{\phi}$, we should
count their contribution to $\zeta$ as required by Eq. (\ref{eq:SM_definition_operator_space}).
Here, we can choose $\zeta$ as
\begin{equation}
\zeta(\tau)\geq s_P\int_{0}^{t_{1}}|K(\tau,\tau')|d\tau'+2s_P\int_{t_{1}}^{t_{1}+\Delta t}|K(\tau,\tau')|d\tau'+s_P\int_{t_{1}}^{t_{1}+\tau}|K(\tau,\tau')|d\tau'+\int_{t_{1}+\tau}^{t}|K(\tau,\tau')|d\tau'\label{eq:SM_zeta_tau_value}
\end{equation}
such that Eq. (\ref{eq:SM_definition_operator_space}) is satisfied.
Here the first term in Eq. (\ref{eq:SM_zeta_tau_value}) is from the
contribution of $\tilde{\mathbb{V}}(t_{1},0)$ in $\vecket{\phi}$, and
the $s_P$ factor comes from the assumption that each $\mathbb{V}(i\Delta t,(i-1)\Delta t)$
consists of $s_P$ stages. The second term is from the possible contribution
of all the Ad in $\mathcal{J}_{P}^{1}$, with each Ad contributes
a factor of $2$. The third term is from the possible contribution
of $\mathbb{V}(t_{1}+\tau,t_{1})$ in Eq. (\ref{eq:SM_analyze_one_term_in_error})
where $0\leq\tau\leq\Delta t$ and the last term is from $\mathbb{U}_{SE}(t,t_{1}+\tau)$.
For convenience, we can choose 
\begin{equation}
\zeta(\tau)=3s_PM.
\end{equation}

Since $\gamma_{\zeta}^{X,Y(\nu)}(t,t')$ is a decreasing function
of $t'$, with the Lemma \ref{lamma:LR_bound_non_mark1} , we can upper bound the integrand (the last two lines) in Eq. (\ref{eq:SM_analyze_one_term_in_error})
by
\begin{equation}
a_{2}\max\left\{ \bigg(\exp\left(v_{LR}|t+s_Pl\Delta t|\right)-1\bigg)\exp\left(-\frac{\mathrm{dist}(X,Y(\nu))}{a_{0}}\right),1\right\} .\label{eq:SM_LR_applied_to_trotter}
\end{equation}
Here $a_{2}=2(2^{2P+1}(2k)^{(2P+1)D})2^{P+1}s_P[16Pm+16M(1+2s_P+k+P)]^{P+1}$,
which is obtained by multiplying the number of the monomials and the
normalization factor $\mathcal{N}$, $v_{LR}=a_{1}(1+16(3s_PM)+40M)$.
We also use the fact that $\gamma_{\zeta}^{X,Y(\nu)}(T,t')\leq1$.
Since $\mathrm{dist}(Y(\nu),\nu)$ can only be enlarged by the adjoint
operation in Eq. (\ref{eq:SM_higher_dimension_J1}), following a similar
analysis in Sec. \ref{sec:Detailed-Proof-ofLemma1} we obtain that
$\mathrm{dist}(Y(\nu),\nu)\leq k$. Consequently, 
\begin{equation}
\mathrm{dist}(X,Y(\nu))\geq\mathrm{dist}(X,\nu)-\mathrm{dist}(Y(\nu),\nu)\geq\mathrm{dist}(X,\nu)-k.
\end{equation}
This allows us to sum over $\nu$ in Eq. (\ref{eq:SM_LR_applied_to_trotter}),
leading to 
\begin{equation}
\begin{aligned} & a_{2}\sum_{\nu}\max\left\{ \bigg(\exp\left(v_{LR}|t+s_Pl\Delta t|\right)-1\bigg)\exp\left(-\frac{\mathrm{dist}(X,Y(\nu))}{a_{0}}\right),1\right\}, \\
&\leq  e^{\frac{k}{a_{0}}}a_{2}\sum_{\nu}\max\left\{ \bigg(\exp\left(v_{LR}|t+s_Pl\Delta t|\right)\bigg)\exp\left(-\frac{\mathrm{dist}(X,\nu)}{a_{0}}\right),1\right\}, \\
&\leq  e^{\frac{k}{a_{0}}}a_{2}\sum_{n=0}^{\infty}\sum_{\nu:\mathrm{dist}(X,\nu)=n}\max\left\{ \bigg(\exp\left(v_{LR}|t+s_Pl\Delta t|\right)\bigg)\exp\left(-\frac{n}{a_{0}}\right),1\right\}, \\
&\leq  2De^{\frac{k}{a_{0}}}\mathrm{diam}(X)a_{2}\frac{2^{D}}{(D-1)!}\sum_{n=0}^{\infty}(n+D-1)^{D-1}\max\left\{ \bigg(\exp\left(v_{LR}|t+s_Pl\Delta t|\right)\bigg)\exp\left(-\frac{n}{a_{0}}\right),1\right\} .
\end{aligned}
\end{equation}
Here, in the final inequality, we used the fact that $\sum_{\nu:\mathrm{dist}(X,\nu)=n}\leq2D\frac{2^{D}}{(D-1)!}(n+D-1)^{D-1}\mathrm{diam}(X)$,
see Ref. \cite{RahulUnpunlishedNoiseRobust} for detailed analysis.
Ref. \cite{RahulUnpunlishedNoiseRobust} also proved that the sum
in $n$ can be bounded by $\mathcal{O}((t+s_Pl\Delta t)^{D})$, therefore,
we obtain 
\begin{equation}
a_{2}\sum_{\nu}\max\left\{ \bigg(\exp\left(v_{LR}|t+s_Pl\Delta t|\right)-1\bigg)\exp\left(-\frac{\mathrm{dist}(X,Y(\nu))}{a_{0}}\right),1\right\} \leq\mathcal{O}\left((t+s_Pl\Delta t)^{D}\right).
\end{equation}

Integrating $x,\tau$ in Eq. (\ref{eq:SM_analyze_one_term_in_error})
which gives us an additional factor $(\Delta t)^{P+1}$. We notice
that summing over $k,w_{1},\cdots,w_{k}$ as well as considering the
contribution from $\mathcal{J}_{P}^{2}$ will not change the asymptotical
scaling of the Trotterization error. Because Eq. (\ref{eq:definition_local_error_trotter})
contains $T=t/\Delta t$ terms of the form of Eq. (\ref{eq:SM_chunk_error_local_ob}),
the error of Trotterization formula for local observables can be bounded
by 
\begin{equation}
\delta_{\mathrm{loc,tro}}\leq\mathcal{O}\left(t(\Delta t)^{P}(t+sl\Delta t)^{D}\right)=\mathcal{O}\left(t^{D+1}(\Delta t)^{P}\right).\label{eq:SM_result_local_trotter_error}
\end{equation}
This expression is independent of the total system size $N$ as expected. 

Following the analysis in Ref. \cite{RahulUnpunlishedNoiseRobust},
the expression Eq. (\ref{eq:SM_result_local_trotter_error}) directly
indicates that our non-dissipative, non-Markovian simulation algorithm
is robust against local noise. Indeed, if we assume that the local
error rate for two-qubit gates during the simulation is $\gamma$,
the total simulation error for local observables defined in Eq. (\ref{eq:definition_local_error})
can be bounded by 
\begin{equation}
|O_{\mathrm{tar}}-O_{\mathrm{sim}}|\leq\lVert O_{X}\rVert\times\left(\mathcal{O}(t^{D+1}(\Delta t)^{P})+\mathcal{O}(\gamma t^{D+1}/(\Delta t)^{D+1})\right),
\end{equation}
see Ref. \cite{RahulUnpunlishedNoiseRobust} for a detailed derivation of the second term. One can thus choose $\Delta t=\mathcal{O}(\gamma^{1/(P+D+1)})$ such
that 
\begin{equation}
|O_{\mathrm{tar}}-O_{\mathrm{sim}}|\leq\mathcal{O}\left(t^{D+1}\gamma^{P/(P+D+1)}\right),
\end{equation}
which is independent of the total system size $N$.

\subsection{Proof of Lemma \ref{Lemma:local_dis_and_qud}}

In the dissipative, non-Markovian simulation algorithm, we also need
to count the error for local observables in the discretization and
quditization procedures. In this subsection, we assume the observable
$\lVert O_{X}\rVert\leq1$ without loss of generality. The symbol $X_{[l]}$
denotes the set of all points within a distance
no greater than $l$ from the region $X$, as 
\begin{equation}
X_{[l]}=\{\bm{v}\in\mathbb{R}^D:\mathrm{dist}(X,\bm{v})\leq l\}.
\end{equation}
From Lieb-Robinson bound, the information cannot propagate
too fast. Therefore, $U_{SE}(0,t)O_{X}U_{SE}(t,0)$ should be close to
$U_{SE, X_{[l]}}(0,t)O_{X}U_{SE, X_{[l]}}(t,0)$, where $U_{SE, X_{[l]}}$ is defined in Eq. (\ref{eq:SM_definition_restricted_evolution_operator}) as
\begin{subequations}
\label{eq:SM_detailed_definition_of_U_SE_X}
\begin{align}
U_{SE,X_{[l]}}(t_2,t_1)=\mathcal{T}\exp\left(-i\int_{t_1}^{t_2}H_{SE, X_{[l]}}(t)dt\right)
\end{align}
with
\begin{align}
H_{SE,X_{[l]}}(t)=\sum_{\nu \subseteq X_{[l]}}\left(H_\nu+(J_\nu A_\nu^\dagger(t)+\text{h.c.})\right).
\end{align}
\end{subequations}
This is strictly proved in Ref. \cite{trivedi2024liebrobinsonboundopenquantum}
even for non-Markovian dynamics, which we summarize below

\begin{Lemma}[Local observable in restricted dynamics, from Proposition 1 in Ref. \cite{trivedi2024liebrobinsonboundopenquantum}].

Define 
\begin{equation}
\Delta_{O_{X}}(t,t';l)=\max_{\lVert\rho\rVert_{\mathrm{tr}}=1}\left|\mathrm{Tr}\left(U(t',t)O_{X}U(t,t')-U_{X_{[l]}}(t',t)O_{X}U_{X_{[l]}}(t,t')\right)\rho(t')\right|,
\end{equation}
where $U(t,t')$ can correspond to either a usual closed unitary dynamics
or a non-Markovian dynamics $U_{SE}(t,t')$ and $\rho(t')=U(t',0)\rho_S(0)\otimes\rho_E(0)U(0,t')$. There exists a Lieb-Robinson velocity
$v_{LR}^{u(m)}$ such that 
\begin{equation}
\Delta_{O_{X}}(t,t';l)\leq f(l)\exp\left(-\frac{l}{a_{0}}\right)\left(\exp\left(\frac{v_{LR}^{u(m)}|t-t'|}{a_{0}}\right)-1\right).
\end{equation}
Here $v_{LR}^{u},v_{LR}^{m}$ corresponds to usual closed unitary
dynamics or non-Markovian dynamics, respectively. $v_{LR}^{u}$ and
$a_{0}$ only depends on the dimensions and $v_{LR}^{m}=v_{LR}^{u}(1+56M)$.
$f(l)\leq\mathcal{O}(l^{D-1})$. \label{lemma:restricted_dy_LR}
\end{Lemma}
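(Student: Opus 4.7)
The plan is to reduce the statement to an application of the Lieb-Robinson bound in Lemma \ref{lamma:LR_bound_non_mark1}. Working in the vectorized (doubled-space) formalism, let $\mathbb{O}_X(t)=\mathbb{U}_{SE}(t',t)\vecket{O_X,I_E}$ and $\mathbb{O}_X^{[l]}(t)=\mathbb{U}_{SE,X_{[l]}}(t',t)\vecket{O_X,I_E}$, so that the quantity I want to control is $|\vecbraket{\mathbb{O}_X(t)-\mathbb{O}_X^{[l]}(t)|\rho(t')}|$ maximized over $\lVert\rho\rVert_{\mathrm{tr}}=1$. Using Duhamel's formula, the difference can be written as
\begin{equation}
\mathbb{O}_X(t)-\mathbb{O}_X^{[l]}(t)=-i\int_{t}^{t'}\mathbb{U}_{SE,X_{[l]}}(t',s)\,\Delta\mathbb{H}_{SE}(s)\,\mathbb{U}_{SE}(s,t)\vecket{O_X,I_E}\,ds,
\end{equation}
where $\Delta\mathbb{H}_{SE}(s)=\mathbb{H}_{SE}(s)-\mathbb{H}_{SE,X_{[l]}}(s)=\sum_{\nu\not\subseteq X_{[l]}}(\mathbb{H}_\nu+\sum_\alpha\mathbb{J}_\nu^\alpha\mathbb{A}_\nu^\alpha(s))$. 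Since $\nu$ has diameter at most $2$, every such $\nu$ lies at distance at least $l-\mathcal{O}(1)$ from $X$.

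Next I would contract the (unbounded) environment operators $\mathbb{A}_\nu^\alpha(s)$ with the environment state $\vecket{\rho_E(0)}$ hidden in $\rho(t')$ using Wick's theorem, exactly as was done in Sec.~\ref{sec:Detailed-Proof-ofLemma1}. The resulting expression is a sum of terms of the form $\vecbraket{O_X,I_E|\mathbb{U}_{SE,X_{[l]}}(t',s)\mathbb{L}_\nu\,\mathbb{U}_{SE}(s,t)|\phi_\nu}$ where $\mathbb{L}_\nu$ is either $\mathbb{H}_\nu$ or a Kernel-weighted $\mathbb{J}_\nu^\alpha$ (both supported on $\nu$ and both annihilating $\vecbra{I_S}$), and $\vecket{\phi_\nu}$ absorbs the right-acting evolutions plus the remaining integrated kernels. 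Using stationarity and the $L^1$ bound on the kernels in the form of Eq.~\eqref{eq:memory_kernel_bounded_trotter}, the family $\{\vecket{\phi_\nu}\}$ sits in some $S_\zeta(\rho_E)$ with $\zeta=\mathcal{O}(M)$. By the $Y=\nu$ case of Lemma \ref{lamma:LR_bound_non_mark1}, each such matrix element is bounded by $2\bigl(e^{v_{LR}^{m}|t-s|/a_0}-1\bigr)e^{-\mathrm{dist}(X,\nu)/a_0}$ (after redefining constants to absorb the backward $\mathbb{U}_{SE,X_{[l]}}(t',s)$ on the left, which only shifts the effective time window).

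Finally I would sum over the boundary $\nu\not\subseteq X_{[l]}$ and integrate over $s\in[t,t']$. The number of interaction terms whose support meets the spherical shell at distance $n$ from $X$ grows as $\mathcal{O}(l^{D-1})$ at $n=l$ and more generally as $\mathcal{O}((n+l)^{D-1})$, but the exponential factor $e^{-n/a_0}$ makes the sum convergent and dominated by the boundary $n\approx l$, giving an overall polynomial prefactor $f(l)=\mathcal{O}(l^{D-1})$. The $s$-integral just reproduces the $a_0\bigl(e^{v_{LR}^{m}|t-t'|/a_0}-1\bigr)$ factor, yielding the claimed bound. The main obstacle is the careful Wick-contraction bookkeeping that lets me fit the right-acting piece into the class $S_\zeta(\rho_E)$ required by Lemma \ref{lamma:LR_bound_non_mark1} with a $\zeta$ independent of $N$ and $l$; once that is done, the closed-system intuition of ``boundary commutators with a light-cone decay'' carries over unchanged, and the non-Markovian Lieb-Robinson velocity $v_{LR}^{m}=v_{LR}^{u}(1+56M)$ emerges from the $\lVert\zeta\rVert_\infty$ dependence in Lemma \ref{lamma:LR_bound_non_mark1}.
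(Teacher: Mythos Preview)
The paper does not prove this lemma: it is explicitly quoted as ``from Proposition 1 in Ref.~\cite{trivedi2024liebrobinsonboundopenquantum}'' and introduced with ``This is strictly proved in Ref.~\cite{trivedi2024liebrobinsonboundopenquantum} even for non-Markovian dynamics, which we summarize below.'' So there is no in-paper proof to compare against; the statement is treated as an imported black box.

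Your sketch is the natural derivation and is essentially correct in spirit. Two small points are worth tightening. First, you picked the Duhamel decomposition with the \emph{restricted} evolution $\mathbb{U}_{SE,X_{[l]}}$ on the observable side; Lemma~\ref{lamma:LR_bound_non_mark1} is stated with the \emph{full} evolution $\mathbb{U}_{SE}(t,t')$ next to $\vecbra{O_X,I_E}$, so the cleaner route is the other Duhamel form, $\mathbb{U}_{SE}(t,t')-\mathbb{U}_{SE,X_{[l]}}(t,t')=-i\int_{t'}^{t}\mathbb{U}_{SE}(t,s)\,\Delta\mathbb{H}_{SE}(s)\,\mathbb{U}_{SE,X_{[l]}}(s,t')\,ds$, after which $\vecket{\phi}=\mathbb{U}_{SE,X_{[l]}}(s,t')\mathbb{U}_{SE}(t',0)\vecket{\rho_S(0),\rho_E(0)}$ manifestly lies in $S_\zeta(\rho_E)$ with $\lVert\zeta\rVert_\infty\le M$ (two evolution factors, each contributing an $L^1$ kernel integral bounded by $M$, but with disjoint time windows), which via $v(\zeta)=a_1(1+16\lVert\zeta\rVert_\infty+40M)$ gives exactly $v_{LR}^{m}=v_{LR}^{u}(1+56M)$. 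Second, for the bosonic boundary terms you need that $\sum_\alpha\mathbb{J}_\nu^\alpha\mathbb{K}_{\alpha\beta}$ (not the individual $\mathbb{J}_\nu^\alpha$) annihilates $\vecbra{I_S}$; this follows because rows $1,3$ and rows $2,4$ of $\mathbb{K}$ coincide (Eq.~\eqref{eq:doubled_space_Kernel_function}) while $\mathbb{J}^1+\mathbb{J}^3$ and $\mathbb{J}^2+\mathbb{J}^4$ are commutator-type superoperators. With those two fixes your shell sum and time integral produce the claimed $f(l)e^{-l/a_0}\bigl(e^{v_{LR}^{m}|t-t'|/a_0}-1\bigr)$ with $f(l)=\mathcal{O}(l^{D-1})$.
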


Let us first analyze the discretization error. We view the segment width $\eta$ as a tunable
parameter which will depend on $t, \delta$.

We now introduce a length $l$ whose value will be determined later.
We further define 
\[
\begin{aligned}
O(t)=&\mathrm{Tr}(U_{SE}(0,t)O_{X}U_{SE}(t,0)\rho_{SE}(0)),
\\O_{X_{[l]}}(t)=&\mathrm{Tr}(U_{SE,X_{[l]}}(0,t)O_{X}U_{SE,X_{[l]}}(t,0)\rho_{SE}(0)),
\\O^{\eta}(t)=&\mathrm{Tr}(\tilde{U}_{SE}(0,t)O_{X}\tilde{U}_{SE}(t,0)\rho_{SE}(0)),
\\O_{X_{[l]}}^{\eta}(t)=&\mathrm{Tr}(\tilde{U}_{SE, X_{[l]}}(0,t)O_{X}\tilde{U}_{SE,X_{[l]}}(t,0)\rho_{SE}(0)),
\end{aligned}
\]
where $\rho_{SE}(0)=\rho_S(0)\otimes\rho_E(0)$ is the initial system-environment state, $\tilde{U}_{SE}$ is defined above Lemma \ref{Lemma: lemma_citation_from_functional_integral} in the main text, and $\tilde{U}_{SE,X_{[l]}}$ is the evolution operator restricted to the set $X_{[l]}$ after discretization, defined similarly as Eq. (\ref{eq:SM_detailed_definition_of_U_SE_X}) with $A_\nu(t)$ replaced by $\tilde{A}_\nu(t)$.
We
can write the discretization error for local observable as 
\begin{equation}
\begin{aligned}\delta_{\mathrm{loc,dis}} & =\max_{\lVert O_{X}\rVert\leq1,\lVert\rho_S\rVert_\mathrm{tr}=1}|O(t)-O^{\eta}(t)|,\\
 & \leq\max_{\lVert O_{X}\rVert\leq1,\lVert\rho_S\rVert_\mathrm{tr}=1}\left(|O(t)-O_{X_{[l]}}(t)|+|O_{X_{[l]}}(t)-O_{X_{[l]}}^{\eta}(t)|+|O_{X_{[l]}}^{\eta}(t)-O^{\eta}(t)|\right).
\end{aligned}
\label{eq:SM_discretization_error_for_local_ob}
\end{equation}
The first and the third term can be bounded by Lemma \ref{lemma:restricted_dy_LR}. Since the
first term corresponds to the original non-Markovian dynamics, its
Lieb-Robinson velocity $v_{LR}^{m}$ is directly given by $v_{LR}^{m}=v_{LR}^{u}(1+56M)$.
The third term corresponds to the non-Markovian dynamics after discretization,
therefore, its Lieb-Robinson velocity is given by 
\begin{equation}
v_{LR}^{m}=v_{LR}^{u}(1+56\max_{t,i}\int_{-\infty}^{\infty}|\tilde{K}_i(t,t')|dt').
\end{equation}
Here $\tilde{K}_i(t,t')=\max_{\sigma,\sigma'}|\tilde{K}_i^{(\sigma,\sigma')}(t,t')|$ with $\tilde{K}_i^{(\sigma,\sigma)}$ defined in Sec. \ref{sec:Extension-to-general}. It is obvious that
\begin{equation}
\tilde{K}_i(t,t')\leq K_i(t,t')+\sum_{\sigma,\sigma'}|\tilde{K}_i^{(\sigma,\sigma')}(t,t')-K_i^{(\sigma,\sigma')}(t,t')|,
\end{equation}
therefore
\begin{equation}
\int_{-\infty}^\infty|\tilde{K}_i(t,t')|dt'\leq \int_{-\infty}^\infty|K_i(t,t')|dt'+\sum_{\sigma,\sigma'}\int_{-\infty}^\infty|\tilde{K}_i^{(\sigma,\sigma')}(t,t')-K_i^{(\sigma,\sigma')}(t,t')|dt'.
\end{equation}
The last term on the right hand side of the above equation can be bounded by repeating the analysis in Eqs. (\ref{eq:SM_bounding_on_general_kernel}) and (\ref{eq:SM_bounding_on_general_kernel_integrals}). We take $\sigma=-,\sigma'=+$ as an example. We have
\begin{equation}
\begin{aligned} & \int_{-\infty}^{\infty}|\tilde{K}_{i}^{(-,+)}(t,t')-K_{i}^{(-,+)}(t,t')|dt'\\
&\leq  \int_{-\infty}^{\infty}dt'\int_{-\infty}^{\infty}d\tau\int_{-\infty}^{\infty}d\tau'|B(\tau,\tau')|\{|R_{t}(\tau)v^{*}(t'-\tau')|+|R_{t'}^{*}(\tau')v(t-\tau)|+|R_{t}(\tau)R_{t'}^{*}(\tau')|,\\
&\leq  \int_{-\infty}^{\infty}d\tau'|B(\tau,\tau')|\int_{t- r -\eta}^{t+ r +\eta}d\tau\int_{\tau'- r -\eta}^{\tau'+ r +\eta}dt'(2C_0\Gamma\eta^{j_{\mathrm{max}}+1}+\Gamma^{2}\eta^{2j_{\mathrm{max}}+2}),\\
&\leq  4( r +\eta)^{2}(2C_0\Gamma\eta^{j_{\mathrm{max}}+1}+\Gamma^{2}\eta^{2j_{\mathrm{max}}+2})\int_{-\infty}^{\infty}d\tau'|B(\tau,\tau'),\\
&\leq  \mathcal{O}\left(r^2\eta^{j_{\mathrm{max}}+1}\right).
\end{aligned}
\end{equation}
Other choices of $(\sigma,\sigma')$ can be upper bounded in a similar way and result in the same scaling. 
Thus, we obtain 
\begin{equation}
\max_{t,i}\int_{-\infty}^{\infty}\tilde{K}_i(t,t')dt' \leq M+\mathcal{O}(\eta^{j_{\mathrm{max}}+1}r^2).
\end{equation}
Since $\eta<1$, the Lieb-Robinson velocity for the third term is
$v_{LR}^{m}=v_{LR}^{u}(1+56M+\mathcal{O}(1))$. Therefore, the first
and the third term in Eq. (\ref{eq:SM_discretization_error_for_local_ob})
can be both bounded by 
$
\mathcal{O}\left(l^{D-1}\exp\left(\mathcal{O}\left((t-l)/a_{0}\right)\right)\right).
$

The second term in Eq. (\ref{eq:SM_discretization_error_for_local_ob})
is the discretization error investigated in Lemma \ref{Lemma:EB_discretization_total}, but restricted
to a system which only includes $\mathcal{O}(l^{D})$ sites. It can
be directly bounded by Lemma \ref{Lemma:EB_discretization_total} with $N$ replaced by $l^{D}$.
Therefore, we obtain 
\begin{equation}
\delta_{\mathrm{loc,dis}}\leq\mathcal{O}\left(l^{D-1}\exp\left(\mathcal{O}\left((t-l)/a_{0}\right)\right)\right)+\mathcal{O}\left(l^{D}t\eta^{j_{\mathrm{max}}+1}r^2\right).\label{eq:SM_final_result_on_discretized_local_op}
\end{equation}
We can choose $l$ to optimize the bound in Eq. (\ref{eq:SM_final_result_on_discretized_local_op}).
For example, if we choose $l=t+\mathcal{O}\log(1/\eta)$, we obtain
\begin{equation}
\begin{aligned}\delta_{\mathrm{loc,dis}} & \leq\mathcal{O}\left(t^{D+1}\eta^{j_{\mathrm{max}}+1}r^2\mathrm{polylog}(1/\eta)\right),\\
 & \leq\tilde{\mathcal{O}}\left(t^{D+1}\eta^{j_{\mathrm{max}+1}}r^2\right)=\tilde{\mathcal{O}}\left(t^{D+1}\eta^\mathrm{j_\mathrm{max}+1}\right).
\end{aligned}
\end{equation}
Here and throughout this section, we use the fact that $ r =\tilde{\mathcal{O}}(1)$. 

Similarly, we can analyze the quditization error for the local observables.
In this case, we will view the truncation level $d$ on the bosonic
spectrum as a tunable parameter and introduce another length scale
$l'$. By noting that the quditization is performed after Trotterzation, we further define 
\[
\begin{aligned}
O^{\eta}_\mathrm{tro}(t)=&\mathrm{Tr}(U_\mathrm{tro}^\dagger(t,T)O_{X}U_\mathrm{tro}(t,T)\rho_{SE}(0)),\\
O_{\mathrm{tro}, X_{[l']}}^{\eta}(t)=&\mathrm{Tr}(U_{\mathrm{tro},X_{[l']}}^\dagger(t,T)O_{X}U_{\mathrm{tro}, X_{[l']}}(t,T)\rho_{SE}(0)),\\
O^{\eta,d}_\mathrm{tro}(t)=&\mathrm{Tr}(\hat{U}_\mathrm{tro}^\dagger(t,T)O_{X}\hat{U}_\mathrm{tro}(t,T)\rho_{SE}(0)),\\
O_{\mathrm{tro},X_{[l']}}^{\eta,d}(t)=&\mathrm{Tr}(\hat{U}_{\mathrm{tro},X_{[l']}}^\dagger(t,T)O_{X}\hat{U}_{\mathrm{tro},X_{[l']}}(t,T)\rho_{SE}(0)),
\end{aligned}
\]
where $U_\mathrm{tro}(t,T)$ is obtained from the Trotterization formula in Lemma \ref{lemma1} applied to $\tilde{H}_{SE}$ as defined in the beginning of Subsec. \ref{subsec:Proof-of-Lemma5} and $\hat{U}_\mathrm{tro}(t,T)$ is obtained from the Trotterization formula in Lemma \ref{lemma1} applied to $P_d\tilde{H}_{SE}P_d$ as also defined in the beginning of Subsec. \ref{subsec:Proof-of-Lemma5}. $U_{\mathrm{tro},X_{[l']}}(t,T)$ is obtained from the Trotterization formula in Lemma \ref{lemma1} applied to Hamiltonian restricted to the region $X_{[l']}$
\[
\tilde{H}_{SE,X_{[l']}}=\sum_{\nu\in X_{[l']}}\left(H_\nu+(J_\nu A_\nu^\dagger(t)+\text{h.c.})\right),
\]
and $\hat{U}_{\mathrm{tro},X_{[l']}}$ is obtained from the Trotterization formmula in Lemma \ref{lemma1} applied to $P_d\tilde{H}_{SE,X_{[l']}}P_d$.

We can write the quditization error for local observable as 
\begin{equation}
\begin{aligned}\delta_{\mathrm{loc,qud}} & =\max_{\lVert O_{X}\rVert\leq1,\lVert\rho_S\rVert_\mathrm{tr}=1}|O^{\eta}_\mathrm{tro}(t)-O_\mathrm{tro}^{\eta,d}(t)|,\\
 & \leq\max_{\lVert O_{X}\rVert\leq1,\lVert\rho_S\rVert_\mathrm{tr}=1}\left(|O_\mathrm{tro}^{\eta}(t)-O_{\mathrm{tro},X_{[l']}}^{\eta}(t)|+|O_{\mathrm{tro},X_{[l']}}^{\eta}(t)-O_{\mathrm{tro},X_{[l']}}^{\eta,d}(t)|+|O_{\mathrm{tro},X_{[l']}}^{\eta,d}(t)-O_\mathrm{tro}^{\eta,d}(t)|\right).
\end{aligned}
\label{eq:SM_quditization_error_for_local_ob}
\end{equation}

We emphasize that the Trotterization procedure has the same effect as a purely forward evolution except that it extends the evolution time from $t$ to $s_Pt$.
In Eq. (\ref{eq:SM_quditization_error_for_local_ob}), the first term
and the third term can be bounded by Lemma \ref{lemma:restricted_dy_LR} with the non-Markovian
Lieb-Robinson velocity and usual unitary Lieb-Robinson velocity, respectively.
Since the Trotterization is performed before the quditization,
the non-Markovian Lieb-Robinson velocity corresponding to the first
term is thus given by $v_{LR}^{m}=v_{LR}^{u}(1+56s_PM)$, where the factor
$s_P$ is the number of stages in each time step.
Therefore, the first term can be bounded by 
$
\mathcal{O}\left(l'^{D-1}\exp\left(\mathcal{O}\left((t-l')/a_{0}\right)\right)\right).
$

For the third term, the Hamiltonian related to $P_d\tilde{H}_{SE}P_d$ has finite
dimension, thus, we can view it as a usual closed unitary dynamics
including both the system and ancillas. However, one should notice
the norm of each local term in this Hamiltonian is not $\mathcal{O}(1)$
but $\mathcal{O}(\sqrt{d/\eta})$. This corresponds
to rescale the evolution time. Thus, this term can be bounded by 
$
\mathcal{O}\left(l'^{D-1}\exp\left(\mathcal{O}\left((\sqrt{d/\eta}t-l')/a_{0}\right)\right)\right).
$

The second term in Eq. (\ref{eq:SM_quditization_error_for_local_ob})
is the quditization error investigated in Lemma \ref{Lemma:EB_quditization_total}, but restricted
to a system which only includes $\mathcal{O}(l'^{D})$ sites. It can
be directly bounded by Lemma \ref{Lemma:EB_quditization_total} with $N$ replaced by $l'{}^{D}$,
which reads as 
$
\mathcal{O}\left(\sqrt{d}e^{-d}\frac{l'{}^{2D}t^{2}}{\eta^\frac{3}{2}}\right).
$

$\delta_{\mathrm{loc,qud}}$ is obtained by summing
the above three error terms,
which gives us 
\begin{equation}
\delta_{\mathrm{loc,qud}}=\mathcal{O}\left(l'^{D-1}\left[\exp\left(\mathcal{O}\left((t-l')/a_{0}\right)\right)+\exp\left(\mathcal{O}\left((\sqrt{d/\eta}t-l')/a_{0}\right)\right)\right]\right)+\mathcal{O}\left(\sqrt{d}e^{-d}\frac{l'^{2D}t^{2}}{\eta^{\frac{3}{2}}}\right).
\end{equation}
If we choose $l'=\sqrt{d/\eta}t+da_{0}\leq\mathcal{O}(td/\sqrt{\eta})$, we
obtain 
\begin{equation}
\delta_{\mathrm{loc,qud}}\leq\mathcal{O}\left(d^{2D+1/2}e^{-d}\frac{t^{2+2D}}{\eta^{D+\frac{3}{2}}}\right).
\end{equation}

By counting the discretization, Trotterization, and quditization error
together, we obtain the final local error for the dissipative, non-Markovian
simulation algorithm as 
\begin{equation}
\delta_{\mathrm{loc}}\leq\tilde{\mathcal{O}}\left(t^{D+1}\eta^{j_{\mathrm{max}}+1}\right)+\mathcal{O}\left(t^{D+1}(\Delta t)^{P}\right)+\mathcal{O}\left(d^{2D+1/2}e^{-d}\frac{t^{2+2D}}{\eta^{D+\frac{3}{2}}}\right).
\end{equation}
The circuit depth is 
\begin{equation}
\text{Circuit Depth  }=\mathcal{O}\left(\frac{t}{\Delta t}\sqrt{\frac{d}{\eta}}\right).\label{Eq:SM_circuit_depth_local_di}
\end{equation}
Now let us fix $j_{\mathrm{max}}=P/2$, in the absence of experimental
noise, we can choose 
\begin{equation}
\eta=\tilde{\mathcal{O}}\left(\left(\frac{\delta_\mathrm{loc}}{t^{D+1}}\right)^{\frac{2}{P+2}}\right),\ \Delta t=\mathcal{O}\left(\left(\frac{\delta_\mathrm{loc}}{t^{D+1}}\right)^{\frac{1}{P}}\right),\ d=\mathcal{O}\left(\log\left(\frac{t^{2D+2}}{\delta_\mathrm{loc}}\times\left(\frac{t^{D+1}}{\delta_\mathrm{loc}}\right)^{\frac{2D+3}{P+2}}\right)\right),
\end{equation}
such that the total error for local observables is below $\delta_\mathrm{loc}$.
The circuit depth is 
\begin{equation}
\text{Circuit Depth  }=\tilde{\mathcal{O}}\left(t\left(\frac{t^{D+1}}{\delta_\mathrm{loc}}\right)^{\frac{2}{P}}\right),
\end{equation}
which is independent of the system size $N$. With recycling, the number of ancillary
qubits per system site is 
\begin{equation}
\text{Number of Ancillaries per System Site }=\mathcal{O}\left(\frac{\log(d)}{\eta}\right)=\tilde{\mathcal{O}}\left(\left(\frac{t^{D+1}}{\delta_\mathrm{loc}}\right)^{\frac{2}{P+2}}\right),
\end{equation}
which is also independent of the system size $N$.

In the presence of experimental noise, if the two-qubit gate error
rate is $\gamma$, the total error for local observable is given by
\begin{equation}
\delta_{\mathrm{loc}}\leq\tilde{\mathcal{O}}\left(t^{D+1}\eta^{P/2+1}\right)+\mathcal{O}\left(t^{D+1}(\Delta t)^{P}\right)+\mathcal{O}\left(d^{2D+1/2}e^{-d}\frac{t^{2+2D}}{\eta^{D+\frac{3}{2}}}\right)+\mathcal{O}\left(\gamma\left(\frac{t}{\Delta t}\sqrt{\frac{d}{\eta}}\right)^{D+1}\right),
\end{equation}
where we also take $j_{\mathrm{max}}=P/2$. The last term in $\delta_\mathrm{loc}$ comes from Ref. \citep{RahulUnpunlishedNoiseRobust} by noticing that the circuit depth is given by Eq. (\ref{Eq:SM_circuit_depth_local_di}). One may optimize the tunable
parameters $\eta$, $\Delta t$, $d$ to obtain the optimal performance.
Here, just as an example, we choose 
\begin{equation}
\eta=(\Delta t)^{2},\ d=\mathcal{O}\left(\log\left(\frac{t^{D+1}}{(\Delta t)^{P+2D+3}}\right)\right),
\end{equation}
such that 
\begin{equation}
\delta_{\mathrm{loc}}\leq\tilde{\mathcal{O}}(t^{D+1}(\Delta t)^{P})+\tilde{\mathcal{O}}\left(\gamma\left(\frac{t}{(\Delta t)^{2}}\right)^{D+1}\right).
\end{equation}
We can choose 
\begin{equation}
\Delta t=\tilde{\mathcal{O}}\left(\gamma^{1/(2D+P+2)}\right),
\end{equation}
which leads to 
\begin{equation}
\delta_{\mathrm{loc}}\leq\tilde{\mathcal{O}}\left(t^{D+1}\gamma^{P/(2D+P+2)}\right).
\end{equation}
Obviously, this local error is independent with the system size $N$. By choosing $P=2p$, we obtained the statement in the main text.

\section{Simulation of commuting, non-dissipative Lindbladian master equation\label{Sec:SM_commuting_nondissi}}
In this section, we provide proofs for several results we used in Sec. \ref{subsection:proof_markovian_non}.

\subsection{Generalizing Hamiltonian simulation algorithms (proof of Lemma \ref{Lemma:generalizing_time_dependent_simulation_closed})}
The only difference from Ref. \cite{Kieferov2019Simulating} is that in their case, the Hamiltonian is differentiable, but in our case, we only assume that the Hamiltonian is Holder-$\alpha$ continuous. In Ref. \cite{Kieferov2019Simulating}, the property that the Hamiltonian is differentiable is used to upper bound
the discretization error in the integration of time. Here we provide a similar error bound for Holder-$\alpha$ continuous Hamiltonian, which only increases the circuit depth by an $\mathcal{O}(\mathrm{polylog}(Nt/\delta))$ extra factor.  We assume that 
\[
\lVert H_{i,i+1}(s)-H_{i,i+1}(s')\rVert \leq D_\mathrm{Hol,\alpha}|s-s'|^\alpha 
\]
for some constant $D_\mathrm{Hol,\alpha}$. 
We can discretize the integration $\int_{s}^{s+\Delta t}H_{i,i+1}(\tau)d\tau$
with $\Delta t\leq\mathcal{O}(1)$ into $M$ pieces as 
\begin{equation}
 \int_{s}^{s+\Delta t}H_{i,i+1}(\tau)d\tau\approx\frac{\Delta t}{M}\sum_{j=1}^{M}H_{i,i+1}(s_{j}),\label{eq:discretization_time_integration}
\end{equation}
where $s_{j}=s+j\Delta t/M$.
The error can be bounded by
\begin{equation}
\begin{aligned} & \left\lVert\int_{s}^{s+\Delta t}H_{i,i+1}(\tau)d\tau-\frac{\Delta t}{M}\sum_{j=1}^{M}H_{i,i+1}\left(s_{j}\right)\right\rVert,\\
&\leq  \sum_{j=1}^{M}\int_{s_{j-1}}^{s_{j}}\lVert H_{i,i+1}(\tau)-H_{i,i+1}(s_{j})\rVert d\tau,\\
&\leq  D_\mathrm{Hol,\alpha}\sum_{j=1}^M \int_{s_{j-1}}^{s_j}|\tau - s_j|^\alpha d\tau, \\
&\leq  D_{\mathrm{Hol},\alpha}\frac{1}{1+\alpha} \sum_{j=1}^{M}\left(\frac{\Delta t}{M}\right)^{1+\alpha},\\
&\leq  \mathcal{O}(M^{-\alpha}).
\end{aligned}
\label{eq:CNM_error_bound_in_integration}
\end{equation}
 We can choose $M=\mathcal{O}(N^{1/\alpha}t^{1/\alpha}/\delta^{1/\alpha})=\mathrm{poly}(N,t,1/\delta)$ to ensure that this discretization error in time integration is less than
$\mathcal{O}(\delta/Nt)$.
This error bound is polynomially weaker than the one in Ref. \cite{Kieferov2019Simulating}, which is $\mathcal{O}(1/M)$ or equivalently $M=\mathcal{O}(Nt/\delta)$. 
However, the algorithm only scales logarithmically with $M$. Thus,
it only introduces an extra $\mathcal{O}(\mathrm{polylog}(Nt/\delta))$ factor into the gate complexity. With Eq. (\ref{eq:CNM_error_bound_in_integration}),
the algorithms in Ref. \cite{Kieferov2019Simulating}
can be directly applied. In summary, our algorithm is almost the same as the one in Ref. \cite{Kieferov2019Simulating} except that we use more numbers of discretization points $M$ in approximating integrations, which only costs an additional $\mathcal{O}(\mathrm{polylog}(Nt/\delta))$ factor in resource counting.
\subsection{Continuity of the Quantum Ito solver \label{sec:Continuity_of_Quantum_Ito_solver}}
In this subsection, we show the continuity of the quantum Ito solver which allows us to discretize the time integration.
For our purpose of simulating the Lindbladian dynamics, it is enough
to consider a discretized version of quantum Ito solver, which is
defined as 
\begin{equation}
V_{\mathrm{sto}}(\tau)=\frac{1}{\sqrt{\epsilon}}\sum_{n=0}^{\infty}\theta(\tau-n\epsilon)\theta\left((n+1)\epsilon-\tau\right)\sum_{i=1}^{N-1}(J_{i,i+1}a_{i,n}^{\dagger}+J_{i,i+1}^{\dagger}a_{i,n})
\end{equation}
with $[a_{i,n},a_{i',n'}^{\dagger}]=\delta_{i,i'}\delta_{n,n'}$,
$\theta(t)$ the step function, and $\epsilon\leq\mathcal{O}(\delta/(Nt))$ such that the total discretization error is below $\delta$.
In the following, we only consider the single body case, where the
subscript $i$ is ignored.

In the interaction picture, any intermediate system-environment state $\ket{\psi(\tau)}$
is given by 
\begin{equation}
\ket{\psi(\tau)}=\left(\mathcal{T}e^{-i\int_{0}^{\tau}V_{\mathrm{sto}}(s)ds}\right)^{\dagger}\mathcal{T}e^{-i\int_{0}^{\tau}(H_{S}+V_{\mathrm{sto}}(s))ds}\ket{\psi(0)},
\end{equation}
with the initial system-environment state $\ket{\psi(0)}$ a joint vacuum state of the bosonic bath. For
convenience, we define 
\begin{equation}
H_\mathrm{joint}(s)=\begin{cases}
H_{S}+V_{\mathrm{sto}}(\tau), & 0\leq s<\tau;\\
-V_{\mathrm{sto}}(2\tau -s), & \tau\leq s<2\tau,
\end{cases}
\end{equation}
such that 
\begin{equation}
\ket{\psi(\tau)}=\mathcal{T}e^{-i\int_{0}^{2\tau}H_\mathrm{joint}(s)ds}\ket{\psi(0)}.
\end{equation}
The corresponding equation of motion for each bosonic operator can
be integrated to 
\begin{equation}
\begin{aligned}a_{n}(\tau) & =a_{n}(0)+i\int_{0}^{2\tau}[H_\mathrm{joint}(s),a_{n}(s)]ds,\\
 & =a_{n}(0)-\frac{i}{\sqrt{\epsilon}}\int_{n\epsilon}^{(n+1)\epsilon}\theta(\tau-s)J(s)ds+\frac{i}{\sqrt{\epsilon}}\int_{2\tau-(n+1)\epsilon}^{2\tau-n\epsilon}\theta(2\tau-s)\theta(s-\tau)J(s)ds,
\end{aligned}
\end{equation}
with $J(s)=\left(\mathcal{T}e^{-i\int_{0}^{s}H_\mathrm{joint}(\tau')d\tau'}\right)^{\dagger}J\mathcal{T}e^{-i\int_{0}^{s}H_\mathrm{joint}(\tau')d\tau'}$.
Repeating the argument in Subsec. \ref{subsec:Proof-of-Lemma4}, we
can derive an exponential decay bound on the probability of the
occupation number for each bosonic mode $a_{n}$ at any time $\tau$,
which reads as 
\begin{equation}
\lVert (I-P_{n}(d))\ket{\psi(\tau)}\rVert \leq C_{1}e^{-d},\label{eq:probability_on_occupation_Quantum_Ito}
\end{equation}
where $P_{n}(d)=\sum_{i=0}^{d-1}\ket{i}_{n}\bra{i}$ is the projector
to the first $d$-level of the $n^\text{th}$ bosonic mode and $C_{1}$ is
a constant. Similar as in Subsec. \ref{subsec:Proof-of-Lemma5}, we can choose $d=\mathcal{O}(\mathrm{polylog}(N,t,\frac{1}{\delta}))=\mathcal{O}(\mathrm{polylog}(1/\epsilon))$
as a truncation parameter such that with high probability (almost
surely in the simulation procedure), no bosonic mode has occupation
number larger than $d$. Therefore, we can project $V_{\mathrm{sto}}(\tau)$ into
the subspace $\prod_{n}P_{n}(d)$ without losing accuracy. Consequently,
we can bound $\lVert W_{\mathrm{sto}}(t_2,0)-W_{\mathrm{sto}}(t_1,0)\rVert$ with
$n\epsilon\leq t_2-t_1<(n+1)\epsilon$ as 
\begin{equation}
\begin{aligned}\lVert W_{\mathrm{sto}}(t_2,0)-W_\mathrm{sto}(t_1,0)\rVert & \leq \int_{t_1}^{t_2}\left\lVert\prod_n P_n(d)V_\mathrm{sto}(\tau)\prod_n P_n(d)\right\rVert d\tau,\\
 & \leq\frac{2(t_2-t_1)}{\sqrt{\epsilon}}\sqrt{d+1},\\
 & \leq\mathcal{O}\left(\frac{t_2-t_1}{\sqrt{\epsilon}}\mathrm{polylog(1/\epsilon)}\right),\\
 & \leq\mathcal{O}\left(\frac{t_2-t_1}{\epsilon^{\frac{3}{4}}}\right),\\
 & \leq \mathcal{O}\left((t_2-t_1)^{\frac{1}{4}}\right),
\end{aligned}
\label{eq:continuity_bounding_on_the_quantum_ito}
\end{equation}
almost surely in the simulation procedure. Equation (\ref{eq:continuity_bounding_on_the_quantum_ito})
thus allows us to discretize the integration $\int_{s}^{s+\Delta s}W_{\mathrm{sto}}^{\dagger}(\tau,0)H_{S}W_{\mathrm{sto}}(\tau,0)d\tau$
into $M$ pieces with $M$ given below Eq. (\ref{eq:CNM_error_bound_in_integration}) and $\alpha=1/4$.

\section{Detailed proof of Theorem \ref{theorem3}\label{sec:Details-proof-ofObservation1}}

Here we consider an open quantum system of $N$ sites. Its Markovian
dynamics is described by 
\begin{equation}
\frac{d\rho_S}{dt}=\mathcal{L}(\rho_S)=-i[H_{S},\rho_S]+\sum_{i=1}^{N-1}J_{i,i+1}\rho_S J_{i,i+1}^{\dagger}-\frac{1}{2}\sum_{i=1}^{N-1}\{J_{i,i+1}^{\dagger}J_{i,i+1},\rho_S\}.\label{eq:SM_set_up}
\end{equation}
We further assume that each jump operator commutes with others, i.e.,
$[J_{i,i+1},J_{j,j+1}]=[J_{i,i+1},J_{j,j+1}^{\dagger}]=0$ for $i\neq j$.
Here $H_{S}=\sum_{i=1}^{N-1}H_{i,i+1}$ is a generic system Hamiltonian
which does not need to commute with $J_{i,i+1}$. Our target is to
construct a third-order dilated Hamiltonian $H_{\mathrm{dia}}^{(3)}$
acting on both the system and ancillas such that 

\begin{equation}
\mathcal{R}_\mathrm{dia}:=e^{\mathcal{L}\Delta t}-\mathcal{E}_\mathrm{dia}(\Delta t)=\mathcal{O}((\Delta t)^4).
\end{equation}
Throughout this section, the superscript on a Hamiltonian term denotes the order of dilation.
We also define the error remainders for the dilation at each order with respect to $\Delta t$, i.e. the superoperator $\mathcal{G}_\mathrm{dia}^{(p)}$ via
\begin{equation}
\mathcal{R}_\mathrm{dia}=\sum_{p=0}^{\infty}(\Delta t)^p\mathcal{G}_\mathrm{dia}^{(p)}.
\end{equation}
Our target is equivalent to require that $\mathcal{G}_\mathrm{dia}^{(p)}=0$ for $p\leq 3$.

\subsection{Single jump operator case}

Let's start from a simple case, where only one jump operator is involved,
i.e., 
\begin{equation}
\frac{d\rho_{S}}{dt}=-i[H_{S},\rho_{S}]+J\rho_{S} J^{\dagger}-\frac{1}{2}\{J^{\dagger}J,\rho_{S}\}.
\end{equation}
For this case, we only need to introduce a single ancilla. We assume that
the $p^\text{th}$ order dilated Hamiltonian $H_{\mathrm{dia}}^{(p)}$ has the following form:
\begin{equation}
H_{\mathrm{dia}}^{(p)}=H_{0}^{(p)}+\sum_{j}(S_{j}^{(p)}\otimes\ket{j}\bra{0}+S_{j}^{(p)}{}^{\dagger}\otimes\ket{0}\bra{j}),\label{eq:SM_form_of_single_dilated_Hamiltonian}
\end{equation}
where $\ket{j}$ denotes the $j^\text{th}$ excited level of the ancilla.
$S_{j}$ is a system operator.

At the first order $\mathcal{O}(\Delta t)$, we can choose 
\begin{equation}
H_{0}^{(1)}=H_{S};\ S_{1}^{(1)}=\frac{1}{\sqrt{\Delta t}}J;\ S_{j}^{(1)}=0\ \mathrm{for}\ j>1.\label{eq:SM_first_dilation_Hamiltonian}
\end{equation}
We can verify that the error remainder for the first order dilated
Hamiltonian is
\begin{equation}
\begin{aligned}
\mathcal{G}_\mathrm{dia}^{(0)}=\mathcal{G}_\mathrm{dia}^{(1)}= &0;\\
\mathcal{G}_{\mathrm{dia}}^{(2)}(\rho_S)= &-\frac{i\rho_{S} H_{S}J^{\dagger}J}{12}+\frac{i\rho_{S} J^{\dagger}H_{S}J}{6}-\frac{i\rho_{S} J^{\dagger}JH_{S}}{12}+\frac{\rho_{S} J^{\dagger}JJ^{\dagger}J}{12}+\\
 & \frac{iH_{S}J^{\dagger}J\rho_{S}}{12}-\frac{iJ^{\dagger}H_{S}J\rho_{S}}{6}-\frac{J^{\dagger}J^{2}\rho_{S} J^{\dagger}}{4}+\frac{iJ^{\dagger}JH_{S}\rho_{S}}{12}+\frac{J^{\dagger}JJ^{\dagger}J\rho_{S}}{12}+\\
 & \frac{J^{2}\rho_{S} J^{\dagger}{}^{2}}{2}-\frac{J\rho_{S} J^{\dagger}{}^{2}J}{4}-\frac{J\rho_{S} J^{\dagger}JJ^{\dagger}}{12}-\frac{JJ^{\dagger}J\rho_{S} J^{\dagger}}{12}.
\end{aligned}
\end{equation}
Now we are trying to introduce higher-order terms to cancel the error
component at order $\mathcal{O}((\Delta t)^{2})$ in $\mathcal{R}_{\mathrm{dia}}$, i.e. $\mathcal{G}_\mathrm{dia}^{(2)}$.
For this purpose, we modify Eq. (\ref{eq:SM_first_dilation_Hamiltonian})
to 
\begin{equation}
\begin{aligned}S_{1}^{(2)} & =\frac{1}{\sqrt{\Delta t}}J+\sqrt{\Delta t}\left(-\frac{J^{\dagger}J^{2}}{4}-\frac{JJ^{\dagger}J}{12}\right);\ \\
S_{2}^{(2)} & =\frac{J^{2}}{\sqrt{2}};\ S_{j}^{(2)}=0\ \mathrm{for}\ j>2.
\end{aligned}
\end{equation}

This can partially cancel the error component, leading to
\begin{equation}
\begin{aligned}\mathcal{G}_{\mathrm{dia}}^{(2)}(\rho_S)= & -\frac{i\rho_{S} H_{S}J^{\dagger}J}{12}+\frac{i\rho_{S} J^{\dagger}H_{S}J}{6}-\frac{i\rho_{S} J^{\dagger}JH_{S}}{12}+\frac{iH_{S}J^{\dagger}J\rho_{S}}{12}\\
 & -\frac{iJ^{\dagger}H_{S}J\rho_{S}}{6}+\frac{iJ^{\dagger}JH_{S}\rho_{S}}{12}.
\end{aligned}
\end{equation}
These remaining error components at order $\mathcal{O}((\Delta t)^{2})$
can be finally canceled by introducing additional term $\Delta t(-\frac{H_{S}J^{\dagger}J}{12}+\frac{J^{\dagger}H_{S}J}{6}-\frac{J^{\dagger}JH_{S}}{12})$
into $H_{0}$. Therefore, we obtain the second order dilated Hamiltonian
$H_{\mathrm{dia}}^{(2)}$ as

\begin{align}
\begin{aligned}H_{0}^{(2)} & =H_{S}+\Delta t\left(-\frac{H_{S}J^{\dagger}J}{12}+\frac{J^{\dagger}H_{S}J}{6}-\frac{J^{\dagger}JH_{S}}{12}\right);\\
S_{1}^{(2)} & =\frac{1}{\sqrt{\Delta t}}J+\sqrt{\Delta t}\left(-\frac{J^{\dagger}J^{2}}{4}-\frac{JJ^{\dagger}J}{12}\right);\ \\
 & S_{2}^{(2)}=\frac{J^{2}}{\sqrt{2}};\ S_{j}^{(2)}=0\ \mathrm{for}\ j>2.
\end{aligned}
\label{eq:SM_second_dilated_Hamiltonian_single}
\end{align}

With Eq. (\ref{eq:SM_second_dilated_Hamiltonian_single}), we can
compute the error remainders at the third order as 
\[
\begin{aligned}\mathcal{G}^{(3)}_{\mathrm{dia}}(\rho_S) & =\frac{\rho_{S} H_{S}J^{\dagger}H_{S}J}{24}-\frac{\rho_{S} H_{S}J^{\dagger}JH_{S}}{24}-\frac{i\rho_{S} H_{S}J^{\dagger}JJ^{\dagger}J}{120}+\frac{i\rho_{S} J^{\dagger}{}^{2}H_{S}J^{2}}{9}+\frac{\rho_{S} J^{\dagger}{}^{2}J^{2}J^{\dagger}J}{96}-\\
 & \frac{i\rho_{S} J^{\dagger}{}^{2}JH_{S}J}{18}+\frac{\rho_{S} J^{\dagger}{}^{2}JJ^{\dagger}J^{2}}{32}-\frac{\rho_{S} J^{\dagger}H_{S}^{2}J}{24}-\frac{i\rho_{S} J^{\dagger}H_{S}J^{\dagger}J^{2}}{18}+\frac{\rho_{S} J^{\dagger}H_{S}JH_{S}}{24}+\\
 & \frac{i\rho_{S} J^{\dagger}H_{S}JJ^{\dagger}J}{180}+\frac{i\rho_{S} J^{\dagger}JH_{S}J^{\dagger}J}{180}+\frac{\rho_{S} J^{\dagger}JJ^{\dagger}{}^{2}J^{2}}{96}+\frac{i\rho_{S} J^{\dagger}JJ^{\dagger}H_{S}J}{180}-\frac{i\rho_{S} J^{\dagger}JJ^{\dagger}JH_{S}}{120}-\\
 & \frac{\rho_{S} J^{\dagger}JJ^{\dagger}JJ^{\dagger}J}{480}+\frac{H_{S}J^{\dagger}H_{S}J\rho_{S}}{24}-\frac{iH_{S}J^{\dagger}J^{2}\rho_{S} J^{\dagger}}{12}-\frac{H_{S}J^{\dagger}JH_{S}\rho_{S}}{24}+\frac{iH_{S}J^{\dagger}JJ^{\dagger}J\rho_{S}}{120}+\\
 & \frac{iH_{S}J^{2}\rho_{S} J^{\dagger}{}^{2}}{12}-\frac{H_{S}J\rho_{S} H_{S}J^{\dagger}}{12}+\frac{iH_{S}J\rho_{S} J^{\dagger}{}^{2}J}{24}+\frac{H_{S}J\rho_{S} J^{\dagger}H_{S}}{12}-\frac{iH_{S}J\rho_{S} J^{\dagger}JJ^{\dagger}}{24}-\\
 & \frac{iJ^{\dagger}{}^{2}H_{S}J^{2}\rho_{S}}{9}+\frac{J^{\dagger}{}^{2}J^{2}J^{\dagger}J\rho_{S}}{96}+\frac{iJ^{\dagger}{}^{2}JH_{S}J\rho_{S}}{18}+\frac{J^{\dagger}{}^{2}JJ^{\dagger}J^{2}\rho_{S}}{32}-\frac{J^{\dagger}H_{S}^{2}J\rho_{S}}{24}+\\
 & \frac{iJ^{\dagger}H_{S}J^{\dagger}J^{2}\rho_{S}}{18}+\frac{iJ^{\dagger}H_{S}J^{2}\rho_{S} J^{\dagger}}{12}+\frac{J^{\dagger}H_{S}JH_{S}\rho_{S}}{24}-\frac{iJ^{\dagger}H_{S}JJ^{\dagger}J\rho_{S}}{180}-\frac{J^{\dagger}J^{3}\rho_{S} J^{\dagger}{}^{2}}{12}+\\
 & \frac{iJ^{\dagger}J^{2}\rho_{S} H_{S}J^{\dagger}}{24}+\frac{J^{\dagger}J^{2}\rho_{S} J^{\dagger}{}^{2}J}{48}-\frac{iJ^{\dagger}J^{2}\rho_{S} J^{\dagger}H_{S}}{24}-\frac{J^{\dagger}J^{2}\rho_{S} J^{\dagger}JJ^{\dagger}}{48}-\frac{iJ^{\dagger}J^{2}H_{S}\rho_{S} J^{\dagger}}{24}-\\
 & \frac{iJ^{\dagger}JH_{S}J^{\dagger}J\rho_{S}}{180}+\frac{iJ^{\dagger}JH_{S}J\rho_{S} J^{\dagger}}{24}+\frac{J^{\dagger}JJ^{\dagger}{}^{2}J^{2}\rho_{S}}{96}-\frac{iJ^{\dagger}JJ^{\dagger}H_{S}J\rho_{S}}{180}+\frac{J^{\dagger}JJ^{\dagger}J^{2}\rho_{S} J^{\dagger}}{24}+\\
 & \frac{iJ^{\dagger}JJ^{\dagger}JH_{S}\rho_{S}}{120}-\frac{J^{\dagger}JJ^{\dagger}JJ^{\dagger}J\rho_{S}}{480}+\frac{J^{3}\rho_{S} J^{\dagger}{}^{3}}{6}-\frac{iJ^{2}\rho_{S} H_{S}J^{\dagger}{}^{2}}{12}-\frac{J^{2}\rho_{S} J^{\dagger}{}^{3}J}{12}-\\
 & \frac{iJ^{2}\rho_{S} J^{\dagger}{}^{2}H_{S}}{12}-\frac{J^{2}\rho_{S} J^{\dagger}{}^{2}JJ^{\dagger}}{12}+\frac{iJ^{2}\rho_{S} J^{\dagger}H_{S}J^{\dagger}}{6}+\frac{iJ^{2}H_{S}\rho_{S} J^{\dagger}{}^{2}}{12}+\frac{iJ\rho_{S} H_{S}J^{\dagger}{}^{2}J}{24}+\\
 & \frac{iJ\rho_{S} H_{S}J^{\dagger}JJ^{\dagger}}{24}-\frac{iJ\rho_{S} J^{\dagger}{}^{2}H_{S}J}{12}+\frac{iJ\rho_{S} J^{\dagger}{}^{2}JH_{S}}{12}+\frac{J\rho_{S} J^{\dagger}{}^{2}JJ^{\dagger}J}{24}-\frac{iJ\rho_{S} J^{\dagger}H_{S}J^{\dagger}J}{24}-\\
 & \frac{iJ\rho_{S} J^{\dagger}H_{S}JJ^{\dagger}}{24}-\frac{J\rho_{S} J^{\dagger}JJ^{\dagger}JJ^{\dagger}}{120}+\frac{JH_{S}\rho_{S} H_{S}J^{\dagger}}{12}-\frac{iJH_{S}\rho_{S} J^{\dagger}{}^{2}J}{24}-\frac{JH_{S}\rho_{S} J^{\dagger}H_{S}}{12}+\\
 & \frac{iJH_{S}\rho_{S} J^{\dagger}JJ^{\dagger}}{24}-\frac{iJH_{S}J\rho_{S} J^{\dagger}{}^{2}}{6}+\frac{iJJ^{\dagger}H_{S}J\rho_{S} J^{\dagger}}{24}-\frac{JJ^{\dagger}J^{2}\rho_{S} J^{\dagger}{}^{2}}{12}-\frac{iJJ^{\dagger}J\rho_{S} H_{S}J^{\dagger}}{24}-\\
 & \frac{JJ^{\dagger}J\rho_{S} J^{\dagger}{}^{2}J}{48}+\frac{iJJ^{\dagger}J\rho_{S} J^{\dagger}H_{S}}{24}+\frac{JJ^{\dagger}J\rho_{S} J^{\dagger}JJ^{\dagger}}{48}-\frac{iJJ^{\dagger}JH_{S}\rho_{S} J^{\dagger}}{24}-\frac{JJ^{\dagger}JJ^{\dagger}J\rho_{S} J^{\dagger}}{120}.
\end{aligned}
\]
Similar as the previous procedure, we can first partially cancel the
error component by introducing higher order term to $S_{j}$. This
is done as follows

\begin{equation}
\begin{split}S_{1}^{(3)} & =\frac{1}{\sqrt{\Delta t}}\bigg\{ J-\frac{\Delta tJJ^{\dagger}J}{12}-\frac{\Delta tJ^{\dagger}J^{2}}{4}-\frac{(\Delta t)^{2}JJ^{\dagger}JJ^{\dagger}J}{120}+\frac{(\Delta t)^{2}J^{\dagger}JJ^{\dagger}J^{2}}{24}+\\
 & (\Delta t)^{2}\left(-\frac{iJJ^{\dagger}JH_{S}}{24}+\frac{iJJ^{\dagger}H_{S}J}{24}+\frac{iJ^{\dagger}JH_{S}J}{24}-\frac{iJ^{\dagger}J^{2}H_{S}}{24}+\frac{iJ^{\dagger}H_{S}J^{2}}{12}-\frac{iH_{S}J^{\dagger}J^{2}}{12}\right)\bigg\};\\
S_{2}^{(3)} & =\frac{\sqrt{2}\left(J^{2}-\frac{\Delta tJJ^{\dagger}J^{2}}{6}-\frac{\Delta tJ^{\dagger}J^{3}}{6}+2\Delta t\left(-\frac{iJH_{S}J}{6}+\frac{iJ^{2}H_{S}}{12}+\frac{iH_{S}J^{2}}{12}\right)\right)}{2};\\
S_{3}^{(3)} & =\frac{\sqrt{6}\sqrt{\Delta t}J^{3}}{6};\\
S_{4}^{(3)} & =-\frac{\sqrt{3}\sqrt{\Delta t}JJ^{\dagger}J}{12}+\frac{\sqrt{3}\sqrt{\Delta t}J^{\dagger}J^{2}}{12}+\frac{\sqrt{3}i\sqrt{\Delta t}\left(-JH_{S}+H_{S}J\right)}{6}.
\end{split}
\end{equation}

After this modification, the error remainders can be written as 
\begin{equation}
\begin{aligned}\mathcal{G}^{(3)}_{\mathrm{dia}}(\rho_S) & =-\frac{i\rho_{S} H_{S}J^{\dagger}JJ^{\dagger}J}{120}+\frac{i\rho_{S} J^{\dagger}{}^{2}H_{S}J^{2}}{9}-\frac{i\rho_{S} J^{\dagger}{}^{2}JH_{S}J}{18}-\\
 & \frac{i\rho_{S} J^{\dagger}H_{S}J^{\dagger}J^{2}}{18}+\frac{i\rho_{S} J^{\dagger}H_{S}JJ^{\dagger}J}{180}+\frac{i\rho_{S} J^{\dagger}JH_{S}J^{\dagger}J}{180}+\\
 & \frac{i\rho_{S} J^{\dagger}JJ^{\dagger}H_{S}J}{180}-\frac{i\rho_{S} J^{\dagger}JJ^{\dagger}JH_{S}}{120}+\frac{iH_{S}J^{\dagger}JJ^{\dagger}J\rho_{S}}{120}-\frac{iJ^{\dagger}{}^{2}H_{S}J^{2}\rho_{S}}{9}+\\
 & \frac{iJ^{\dagger}{}^{2}JH_{S}J\rho_{S}}{18}+\frac{iJ^{\dagger}H_{S}J^{\dagger}J^{2}\rho_{S}}{18}-\frac{iJ^{\dagger}H_{S}JJ^{\dagger}J\rho_{S}}{180}-\frac{iJ^{\dagger}JH_{S}J^{\dagger}J\rho_{S}}{180}-\\
 & \frac{iJ^{\dagger}JJ^{\dagger}H_{S}J\rho_{S}}{180}+\frac{iJ^{\dagger}JJ^{\dagger}JH_{S}\rho_{S}}{120}.
\end{aligned}
\end{equation}
This error component at order $\mathcal{O}((\Delta t)^{3})$
can be finally canceled by introducing new terms to $H_{0}$. Thus,
we obtain the third-order dilated Hamiltonian $H_{\mathrm{dia}}^{(3)}$
as 
\begin{equation}
\begin{split}H_{0}^{(3)} & =H_{S}+\Delta t\bigg(-\frac{[H_{S},J^{\dagger}]J}{12}-\frac{J^{\dagger}[J,H_{S}]}{12}\bigg)+(\Delta t)^{2}\bigg(-\frac{J^{\dagger}J[J^{\dagger}J,H_{S}]}{360}-\frac{[H_{S},J^{\dagger}J]J^{\dagger}J}{360}+\\
 & \frac{J^{\dagger}JJ^{\dagger}[H_{S},J]}{180}+\frac{[J^{\dagger},H_{S}]JJ^{\dagger}J}{180}-\frac{J^{\dagger}[H_{S},J^{\dagger}]J^{2}}{18}-\frac{J^{\dagger}{}^{2}[J,H_{S}]J}{18}\bigg);\\
S_{1}^{(3)} & =\frac{1}{\sqrt{\Delta t}}\bigg[J-\frac{\Delta tJJ^{\dagger}J}{12}-\frac{\Delta tJ^{\dagger}J^{2}}{4}-\frac{(\Delta t)^{2}JJ^{\dagger}JJ^{\dagger}J}{120}+\frac{(\Delta t)^{2}J^{\dagger}JJ^{\dagger}J^{2}}{24}+\\
 & (\Delta t)^{2}\left(-\frac{iJJ^{\dagger}[J,H_{S}]}{24}+\frac{iJ^{\dagger}J[H_{S},J]}{24}+\frac{i[J^{\dagger},H_{S}]J^{2}}{12}\right)\bigg];\\
S_{2}^{(3)} & =\sqrt{1/2}\left(J^{2}-\frac{\Delta tJJ^{\dagger}J^{2}}{6}-\frac{\Delta tJ^{\dagger}J^{3}}{6}+2\Delta t\left(\frac{iJ[J,H_{S}]}{12}+\frac{i[H_{S},J]J}{12}\right)\right);\\
S_{3}^{(3)} & =\frac{\sqrt{6}\sqrt{\Delta t}J^{3}}{6};\\
S_{4}^{(3)} & =-\frac{\sqrt{3}\sqrt{\Delta t}JJ^{\dagger}J}{12}+\frac{\sqrt{3}\sqrt{\Delta t}J^{\dagger}J^{2}}{12}+\frac{\sqrt{3}i\sqrt{\Delta t}\left(-[J,H_{S}]\right)}{6}.
\end{split}
\label{eq:SM_third_order_dilated_H_single}
\end{equation}
The dilation error $\mathcal{R}_{\mathrm{dia}}$ is at the order $\mathcal{O}((\Delta t)^{4})$, i.e., $\mathcal{G}_\mathrm{dia}^{(p)}=0$ for $p\leq 3$.

\subsection{Many-body case}

Now, we can go back to the dilated Hamiltonian for Eq.(\ref{eq:SM_set_up}),
i.e., the many-body case. Here, we introduce $N-1$ ancillas, i.e.,
we introduce a dedicated ancilla for each local jump operator $J_{i,i+1}$.
We assume that the dilated Hamiltonian $H_{\mathrm{dia}}$ has the
following form 
\begin{equation}
H_{\mathrm{dia}}=H_{0}+\sum_{i=1}^{N-1}(\sum_{j}S_{j,i}\otimes\ket{j}_{i}\bra{0}+\text{h.c.}),
\end{equation}
where we use $\ket{j}_{i}$ to denote the $j^\text{th}$ excited level of
the $i^\text{th}$ ancilla. $S_{j,i}$ is a system operator which should
be local and surrounds the physical site $i$. Inspired by Eq. (\ref{eq:SM_third_order_dilated_H_single}),
we first try this construction for the third-order dilated Hamiltonian
$H_{\mathrm{dia}}^{(3)}$ as
\begin{equation}
\begin{split}H_{0}^{(3)} & =H_{S}+\sum_{i=1}^{N-1}\bigg[\Delta t\bigg(-\frac{[H_{S},J_{i,i+1}^{\dagger}]J_{i,i+1}}{12}-\frac{J_{i,i+1}^{\dagger}[J_{i,i+1},H_{S}]}{12}\bigg)+\\
 & (\Delta t)^{2}\bigg(-\frac{J_{i,i+1}^{\dagger}J_{i,i+1}[J_{i,i+1}^{\dagger}J_{i,i+1},H_{S}]}{360}-\frac{[H_{S},J_{i,i+1}^{\dagger}J_{i,i+1}]J_{i,i+1}^{\dagger}J_{i,i+1}}{360}+\\
 & \frac{J_{i,i+1}^{\dagger}J_{i,i+1}J_{i,i+1}^{\dagger}[H_{S},J_{i,i+1}]}{180}+\frac{[J_{i,i+1}^{\dagger},H_{S}]J_{i,i+1}J_{i,i+1}^{\dagger}J_{i,i+1}}{180}-\\
 & \frac{J_{i,i+1}^{\dagger}[H_{S},J_{i,i+1}^{\dagger}]J_{i,i+1}^{2}}{18}-\frac{J_{i,i+1}^{\dagger}{}^{2}[J_{i,i+1},H_{S}]J_{i,i+1}}{18}\bigg)\bigg];\\
S_{j,i}^{(3)}= & S_{j}^{(3)}\ \mathrm{from\ Eq.(\ref{eq:SM_third_order_dilated_H_single})\ by\ replacing}\ J\to J_{i,i+1},J^{\dagger}\to J_{i,i+1}^{\dagger}.
\end{split}
\end{equation}
This construction leads to the following error remainder
\[
\begin{aligned}
\mathcal{G}^{(p)}_{\mathrm{dia}}= &0 \text{  for  }p\leq 2;\\
\mathcal{G}^{(3)}_{\mathrm{dia}}(\rho_S)= & i\sum_{i<j}\bigg(\frac{\rho_{S} H_{S}J_{i,i+1}^{\dagger}J_{i,i+1}J_{j,j+1}^{\dagger}J_{j,j+1}}{180}-\frac{\rho_{S} J_{j,j+1}^{\dagger}H_{S}J_{i,i+1}^{\dagger}J_{i,i+1}J_{j,j+1}}{90}+\\
 & \frac{\rho_{S} J_{j,j+1}^{\dagger}J_{j,j+1}H_{S}J_{i,i+1}^{\dagger}J_{i,i+1}}{180}-\frac{\rho_{S} J_{i,i+1}^{\dagger}H_{S}J_{i,i+1}J_{j,j+1}^{\dagger}J_{j,j+1}}{90}-\\
 & \frac{\rho_{S} J_{i,i+1}^{\dagger}J_{j,j+1}^{\dagger}J_{j,j+1}H_{S}J_{i,i+1}}{90}+\frac{\rho_{S} J_{i,i+1}^{\dagger}J_{i,i+1}H_{S}J_{j,j+1}^{\dagger}J_{j,j+1}}{180}-\\
 & \frac{\rho_{S} J_{i,i+1}^{\dagger}J_{i,i+1}J_{j,j+1}^{\dagger}H_{S}J_{j,j+1}}{90}+\frac{\rho_{S} J_{i,i+1}^{\dagger}J_{i,i+1}J_{j,j+1}^{\dagger}J_{j,j+1}H_{S}}{180}-\\
 & \frac{H_{S}J_{i,i+1}^{\dagger}J_{i,i+1}J_{j,j+1}^{\dagger}J_{j,j+1}\rho_{S}}{180}+\frac{J_{j,j+1}^{\dagger}H_{S}J_{i,i+1}^{\dagger}J_{i,i+1}J_{j,j+1}\rho_{S}}{90}-\\
 & \frac{J_{j,j+1}^{\dagger}J_{j,j+1}H_{S}J_{i,i+1}^{\dagger}J_{i,i+1}\rho_{S}}{180}+\frac{J_{i,i+1}^{\dagger}H_{i,i+1}J_{i,i+1}J_{j,j+1}^{\dagger}J_{j,j+1}\rho_{S}}{90}-\\
 & \frac{J_{i,i+1}^{\dagger}J_{j,j+1}^{\dagger}H_{S}J_{i,i+1}J_{j,j+1}\rho_{S}}{45}+\frac{J_{i,i+1}^{\dagger}J_{j,j+1}^{\dagger}J_{j,j+1}H_{S}J_{i,i+1}\rho_{S}}{90}-\\
 & \frac{J_{i,i+1}^{\dagger}J_{i,i+1}H_{S}J_{j,j+1}^{\dagger}J_{j,j+1}\rho_{S}}{180}+\frac{J_{i,i+1}^{\dagger}J_{i,i+1}J_{j,j+1}^{\dagger}H_{S}J_{j,j+1}\rho_{S}}{90}+\\
 & \frac{\rho_{S} J_{i,i+1}^{\dagger}J_{j,j+1}^{\dagger}H_{S}J_{i,i+1}J_{j,j+1}}{45}-\frac{J_{i,i+1}^{\dagger}J_{i,i+1}J_{j,j+1}^{\dagger}J_{j,j+1}H_{S}\rho_{S}}{180}\bigg),
\end{aligned}
\]
where the factor $i$ in front of $\sum$ is the imaginary
unity, which should not be confused with the summation index $i$.
These error components at order $\mathcal{O}((\Delta t)^{3})$
can be cancelled by introducing additional terms into $H_{0}$. Therefore,
we can obtain the final expression for the third-order dilated Hamiltonian
$H_{\mathrm{dia}}^{(3)}$ as
\begin{equation}
\begin{split}H_{0}^{(3)} & =H_{S}+\sum_{i=1}^{N-1}\bigg[\Delta t\bigg(-\frac{[H_{S},J_{i,i+1}^{\dagger}]J_{i,i+1}}{12}-\frac{J_{i,i+1}^{\dagger}[J_{i,i+1},H_{S}]}{12}\bigg)+\\
 & (\Delta t)^{2}\bigg(-\frac{J_{i,i+1}^{\dagger}J_{i,i+1}[J_{i,i+1}^{\dagger}J_{i,i+1},H_{S}]}{360}-\frac{[H_{S},J_{i,i+1}^{\dagger}J_{i,i+1}]J_{i,i+1}^{\dagger}J_{i,i+1}}{360}+\\
 & \frac{J_{i,i+1}^{\dagger}J_{i,i+1}J_{i,i+1}^{\dagger}[H_{S},J_{i,i+1}]}{180}+\frac{[J_{i,i+1}^{\dagger},H_{S}]J_{i,i+1}J_{i,i+1}^{\dagger}J_{i,i+1}}{180}-\\
 & \frac{J_{i,i+1}^{\dagger}[H_{S},J_{i,i+1}^{\dagger}]J_{i,i+1}^{2}}{18}-\frac{J_{i,i+1}^{\dagger}{}^{2}[J_{i,i+1},H_{S}]J_{i,i+1}}{18}\bigg)\bigg]+\\
 & (\Delta t)^{2}\sum_{i<j}\bigg(\frac{J_{i,i+1}^{\dagger}[[J_{i,i+1},H_{S}],J_{j,j+1}^{\dagger}]J_{j,j+1}}{180}+\frac{J_{j,j+1}^{\dagger}J_{i,i+1}^{\dagger}[J_{j,j+1},[J_{i,i+1},H_{S}]]}{180}+\\
 & \frac{J_{j,j+1}^{\dagger}[J_{j,j+1},[H_{S},J_{i,i+1}^{\dagger}]]J_{i,i+1}}{180}+\frac{[[H_{S},J_{i,i+1}^{\dagger}],J_{j,j+1}^{\dagger}]J_{i,i+1}J_{j,j+1}}{180}\bigg)\\
S_{j,i}^{(3)}= & S_{j}^{(3)}\ \mathrm{from\ Eq.(\ref{eq:SM_third_order_dilated_H_single})\ by\ replacing}\ J\to J_{i,i+1},J^{\dagger}\to J_{i,i+1}^{\dagger}.
\end{split}
\label{eq:SM_many_Body_third_order_dila}
\end{equation}
From the commutator form, it is obvious that all the terms in $H_{\mathrm{dia}}^{(3)}$
is geometrically local. At this stage, the dilation error satisfies
\begin{equation}
\mathcal{G}^{(p)}_\mathrm{dia}=0\text{  for  }p\leq 3,
\end{equation}
or equivalently
\[
\mathcal{R}_\mathrm{dia}=\mathcal{O}((\Delta t)^4).
\]

\subsection{Error scaling with $N$}

Now, we will further prove that $\mathcal{G}_\mathrm{dia}^{(4)}$ grows lineary in $N$, i.e., $\lVert\mathcal{G}_\mathrm{dia}^{(4)}(\rho_S)\rVert_\mathrm{tr}=\mathcal{O}(N)$.
To this end, we directly compute $\mathcal{G}_{\mathrm{dia}}^{(4)}$. We cannot
show the explicit form of $\mathcal{G}^{(4)}_{\mathrm{dia}}$ here because
there are thousands of terms appearing. Shortly speaking, we divide
the error components into five groups based on the number of $H_{S}$
they contain. We analyze each group separately by a brute-force calculation.

\textbf{Group 1: No $H_{0}$ contained.}

This group is equivalent to the setting where the simulated Lindbladian
only includes the jump operator. By the assumption that each jump
operator commute with others, the overall evolution factorizes into
a product of the individual channels generated by each jump operator.
This is true for both the Lindbladian dynamics itself or dynamics
from the dilated Hamiltonian. Thus, the total error is simply the
sum of the errors contributed by each jump operator, yielding an overall
scale $\mathcal{O}(N)$ in $\mathcal{G}_\mathrm{dia}^{(4)}$.

\textbf{Group 2: One $H_{0}$ contained.}

Each error component in this group is constructed from the product
of $\{J_{i,i+1},J_{j,j+1},J_{k,k+1},H_{m,m+1}\}$ as well as their
Hermitian conjugate. Here $H_{m,m+1}$ is the local Hamiltonian acting
on sites $m,m+1$. By a brute-force computation, we can show that
if any jump operator in $\{J_{i,i+1},J_{j,j+1},J_{k,k+1}\}$ commutes
with $H_{m,m+1}$, the contribution to $\mathcal{G}_{\mathrm{dia}}^{(4)}$
is $0$. Thus, the non-vanishing contribution is only from the pairs
of integers $\{i,j,k,m\}$ where $\{i,j,k\}$ are both close to $m$
such that $[J_{\alpha,\alpha+1},h_{m}]\neq0$ for $\forall\alpha\in\{i,j,k\}$.
Therefore, the error contribution to $\mathcal{G}_\mathrm{dia}^{(4)}$ from this group grows as $\mathcal{O}(N)$.

\textbf{Group 3: Two $H_{0}$ contained.}

Each error component in this group is constructed from the product
of $\{J_{i,i+1,},J_{j,j+1},H_{k,k+1},H_{m,m+1}\}$ as well as their
Hermitian conjugate. First, we can show that, by a brute-force calculation,
if any operator from the set $\{J_{i,i+1,},J_{j,j+1},H_{k,k+1},H_{m,m+1}\}$
commutes with the rest of the set except its Hermitian conjugate,
the contribution to $\mathcal{G}^{(4)}_{\mathrm{dia}}$ is $0$. This restricts
the form of pairs of integers $\{i,j,k,m\}$ for possibly non-vanishing
error contribution. Concretely, any integer from $\{i,j,k,m\}$ can
not be far away from the remaining simultaneously. Furthermore, if
$i$ is close to $j$ but far away from $k,m$, the contribution to $\mathcal{G}_\mathrm{dia}^{(4)}$ is
still $0$ as we always have $[J_{i,i+1},J_{j,j+1}]=0$. Thus, without
loss of generality, for non-vanishing contribution we can assume $i$
is close to $k$ such that $[J_{i,i+1},H_{k,k+1}]\neq0$. If $j$
is also close to $k$, the non-vanishing contribution requires $m$
also close to $k$. Therefore, the integers $\{i,j,k,m\}$ should
be all close to each other to get a non-vanishing contribution to $\mathcal{G}_\mathrm{dia}^{(4)}$, which
scales as $\mathcal{O}(N)$. On the other
hand, if $j$ is close to $m$ but far from $k$ such that $[J_{i,i+1},H_{m,m+1}]=[L_{j,j+1},H_{k,k+1}]=[H_{m,m+1},H_{k,k+1}]=0$,
we can show that the contribution to $\mathcal{G}_\mathrm{dia}^{(4)}$ is still $0$. Thus, $j$ must be
also close to $k$, so as $m$. In summary, the total contribution to $\mathcal{G}_\mathrm{dia}^{(4)}$
from this group grows as $\mathcal{O}(N)$ 

\textbf{Group 4: Three $H_{0}$ contained.}

For this group, we can directly write the error component into the
nested commutator as
\begin{equation}
\begin{aligned}\frac{i(\Delta t)^{4}}{720}\sum_{i=1}^{N-1}\bigg\{-[L_{i}^{\dagger},H_{S}][[L_{i},H_{S}],H_{S}]+[[L_{i}^{\dagger},H_{S}],H_{S}][L_{i},H_{S}]-\\
L_{i}^{\dagger}\bigg[[H_{S},[L_{i},H_{S}]],H_{S}\bigg]-\bigg[[H_{S},[H_{S},L_{i}^{\dagger}]],H_{S}\bigg]L_{i}\bigg\}\rho_{S}+\text{h.c.}.
\end{aligned}
\end{equation}
Due to the nested-commutator form, this error term grows as $\mathcal{O}(N(\Delta t)^{4})$. Its contribution to $\mathcal{G}_\mathrm{dia}^{(4)}$ is $\mathcal{O}(N)$.

\textbf{Group 5: Four $H_{0}$ contained.}

It is obvious this group does not contribute to error, as the dilated
Hamiltonian can exactly simulate the system dynamics if no jump operator
appears.

Notice that the above analysis also holds if $\rho_S$ is a joint density matrix. In conclusion, for the third-order dilated Hamiltonian $H_{\mathrm{dia}}^{(3)}$
in Eq. (\ref{eq:SM_many_Body_third_order_dila}), the dilation error
scales as 
\begin{equation}
\mathcal{G}_\mathrm{dia}^{(p)}=0\text{  for  }p\leq 3\text{  and  } \lVert\mathcal{G}_\mathrm{dia}^{(4)}\rVert_{\diamond}=\mathcal{O}(N).
\end{equation}
\end{document}